\title{Tight List-Sizes for Oblivious AVCs under Constraints}
\author{
\IEEEauthorblockN{
Yihan Zhang\IEEEauthorrefmark{1}, 
Sidharth Jaggi\IEEEauthorrefmark{2}\IEEEauthorrefmark{1},
Amitalok J. Budkuley\IEEEauthorrefmark{3}
}\\
\IEEEauthorblockA{
\IEEEauthorrefmark{1}Dept.\ of Information Engineering, The Chinese University of Hong Kong \\
\IEEEauthorrefmark{2}School of Mathematics, University of Bristol \\
\IEEEauthorrefmark{3}Dept.\ of Electronics and Electrical Communication Engineering, Indian Institute of Technology Kharagpur
}
}
\begin{document}
\maketitle


\begin{abstract}
We study list-decoding over adversarial channels governed by oblivious adversaries (a.k.a. oblivious Arbitrarily Varying Channels (AVCs)). This type of adversaries aims to maliciously corrupt the communication without knowing the actual transmission from the sender. For any oblivious AVCs potentially with constraints on the sender's transmitted sequence and the adversary's noise sequence, we determine the exact value of the minimum list-size that can support a reliable communication at positive rate. This generalizes a classical result by Hughes (IEEE Transactions on Information Theory, 1997) and answers an open question posed by Sarwate and Gastpar (IEEE Transactions on Information Theory, 2012). A lower bound on the list-decoding capacity (whenever positive) is presented. Under a certain combinatorial conjecture, we also prove a matching upper bound. 
En route to a tight characterization of the list-decoding capacity, we propose a method for subcode construction towards the resolution of the combinatorial conjecture.
\end{abstract}

\section{Introduction}
Coding against adversaries is one of the central subjects in coding theory and information theory. 
The canonical model of interest is as follows. 
Suppose a transmitter Alice would like to send a message to a receiver Bob through a noisy channel. 
The channel is governed by an adversary (also called a jammer) James whose aim is to \emph{maliciously} corrupt the data transmission from Alice to Bob. 
Coding is a technique to protect data from corruption and thus ensures a reliable communication.
Alice, instead of sending the message per se, introduces some redundancy to the message and sends an \emph{encoded} sequence (called \emph{codeword}). 
James  carefully designs a \emph{jamming sequence} (historically a.k.a. \emph{state sequence}) and transmits it to the channel. 
The hope is that even James adversarially distorted Alice's transmitted codeword via the channel, Bob receiving the noisy channel output is still able to reliably \emph{decode}. 
The goal is to find the maximum information \emph{throughput} (a.k.a. \emph{rate}) of a given channel, i.e., the largest number of bits that could be delivered with a low probability of reconstruction error. 
This paper is concerned with characterizing information-theoretic {fundamental limits} (called \emph{capacity}) of this kind assuming there is \emph{no} restriction on the computational power of Alice/Bob/James. 
See \Cref{sec:prelim} for formal definitions.

The aforementioned adversarial communication channel model is often referred to as the \emph{Arbitrarily Varying Channel} (AVC) model, first introduced by Blackwell, Breiman and Thomasian \cite{blackwell-1960-avc}.
It contains a large family of channels of interests and has received a lot of study since it was proposed. 
It turns out that in an adversarial communication problem, it is important to clarify the power of James. 
The knowledge that James possesses when designing the jamming vector is an important component of the problem setup. 
There are two natural and popular models that are well-studied in the literature.
The \emph{omniscient} model assumes that James knows precisely which particular codeword was transmitted by Alice.
Hence he could look at the transmission and tailor his jamming strategy for a specific instantiation of transmission. 
This is a very strong type of adversaries and the capacity of such AVCs is widely open. 
Another model called the \emph{oblivious} model instead assumes that the adversary does not know at all which codeword was transmitted by Alice. 
Said differently, one can think of James as choosing a jamming sequence \emph{before} Alice's transmission is instantiated. 
Much more capacity results are known in this setting. 
The oblivious model is the main focus of this paper. 
See \cite{lapidoth-narayan-it1998} for a 
survey on AVCs and see \Cref{sec:prior} for prior works. 

We now shift our attention from James to Bob.
The reliability of communication can be measured in various different ways. 
The metric that was alluded to in the first paragraph of this section is called \emph{unique-decodability}. 
As the name suggests, Bob is required to, based on his received vector, decode to a single message which is hopefully the correct one that Alice meant to deliver.
In some scenarios, this is too stringent a requirement to satisfy or it is much trickier to directly prove without intermediate steps. 
One possible notion of reliability that  relaxes this condition is called \emph{list-decodability}. 
This allows Bob to output a \emph{list} of messages which should contain the correct one. 
Of course, to avoid triviality, we would like the list to be as small as possible.
The notion of list-decoding was first proposed by Elias \cite{elias-list-dec} and Wozencraft \cite{wozencraft-list-dec}. 
It has since received a lot of study from various aspects and has become an important subject within and outside the scope of Coding \& Information Theory. 
In Computer Science, list-decoding is most well-studied under the omniscient bitflip channel model.
See \cite{guruswami-lncs,guruswami-fnt} for a comprehensive survey on combinatorial and algorithmic results on list-decoding of this flavour. 
List-decoding also serves as a primitive in complexity theory and cryptography, e.g., \cite{doron-2019-pseudorand-hard-listdec}. 
Recently, the idea of relaxing the problem by allowing the solver to return a list of answers rather than a unique answer also goes into the development of robust statistics \cite{diakonikolas-2018-listdec-estimation,diakonikolas-2020-listdec-estimation,mohanty-2020-listdec-estimation,kothari-2019-listdec-regression,bakshi-kothari-2020-listdec-recovery,raghavendra-yau-2020-listdec-recovery,raghavendra-yau-2020-listdec-learning}. 
In our context, given an adversarial channel, one of the fundamental questions is to pin down the smallest list-size $L^* $ (a.k.a. \emph{list-symmetrizability}) that supports a reliable communication at positive rate. 
Understanding capacity positivity is the first step towards understanding the exact value of the capacity of an AVC under list-decoding. 
In this paper, for any oblivious AVC, we give a tight characterization on the exact value of the minimum list-size such that the list-decoding capacity w.r.t. such a list-size is strictly positive. 
See \Cref{sec:overview_results_techniques} for an overview of our results. 

Finally, as a technical note, we emphasize that in this paper we use the \emph{average  probability of error}\footnote{The average refers to averaging over messages that Alice can potentially transmit.
They are assumed to be uniformly distributed.} criterion, i.e., a $ 1-o(1) $ (but not exactly 1) fraction of messages are required to be correctly decoded.\footnote{There exist other notions of error criteria which are also interesting and have been studied in the literature. 
One other important one is the \emph{maximum} probability of error criterion. 
Again, the maximization is over messages. 
Under  deterministic encoders, the problem is equivalent to the omniscient case since we can assume that the adversary knows the transmitted message and hence the transmitted codeword. 
Under stochastic encoders (which is possible if Alice has access to private randomness), it can be shown that the capacity remains the same as that under the average probability of error criterion.}
Also, we do not assume common randomness available to Alice and Bob (but secret to James). 
AVC problems under other probability of error criteria and/or in the presence of common randomness are also interesting. 
See \Cref{sec:prior} for related works.

\section{Prior work}
\label{sec:prior}
\subsection{Oblivious AVCs}
\label{sec:prior_obli}
For AVCs, one of the challenges of characterizing the capacity is perhaps to first determine whether the capacity is zero or strictly positive. 
For general AVCs, finding a sufficient and necessary condition for achieving positive rate turns out to be highly nontrivial. 
Such conditions are usually called \emph{symmetrizability} in the literature. 
We explain the underlying intuition using the oblivious \emph{bitflip} channels which are  perhaps the simplest nontrivial example of oblivious adversarial channels. 
\begin{figure}[htbp]
	\centering
	\includegraphics[width=0.7\textwidth]{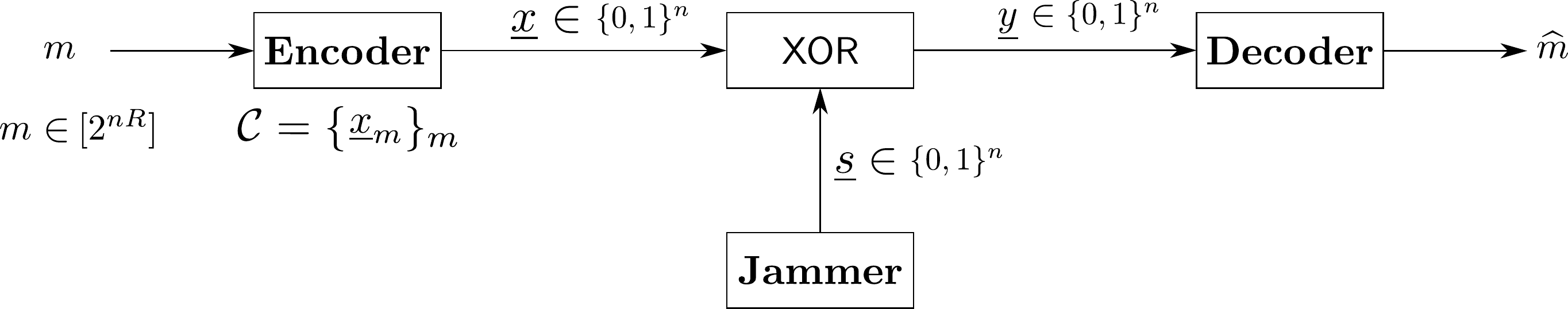}
	\caption{Block diagram of an oblivious bitflip channel.}
	\label{fig:diag_obli_bitflip}
\end{figure}
As shown in \Cref{fig:diag_obli_bitflip},
both the transmitted codeword $ \vbfx $ and the jamming sequence $ \vbfs $ are $ \zon $-valued. 
The channel in this case is simply a deterministic function that adds up $ \vbfx $ and $ \vbfs $ component-wise modulo 2. 
The channel output is therefore $ \vbfy = \vbfx\oplus\vbfs $. 
Though James does not know the value of $ \vbfx $, he could transmit a uniformly random codeword $ \vbfx'\sim\cC $ from the codebook (which is assumed to be known to everyone). 
If $ \vbfx'\ne\vbfx $ (which does happen with high probability as long as the code is large), Bob receives $ \vbfy = \vbfx\oplus\vbfx' $ and could not distinguish whether Alice transmitted $ \vbfx $ or $ \vbfx' $. 
In other words, the scenario where Alice transmits $ \vbfx $ and James transmits $ \vbfx' $ and the scenario where Alice transmits $ \vbfx' $ and James transmits $ \vbfx $ result in the same statistics at the output end of the channel, i.e., $ P_{\vbfy|\vbfx,\vbfx'} = P_{\vbfy|\vbfx',\vbfx} $\footnote{We use boldface lowercase letters to denote (scalar) random variables, plain lowercase letters to denote their realizations and underlines to denote vectors of length $n$, where $n$ is the blocklength of the code unless otherwise specified. See \Cref{sec:notation} for the notational convention  followed in this paper.}.
We then say that in this case James \emph{symmetrizes} the channel. 
This forces Bob to fail to decode  the correct message with constant probability and hence forces the capacity to be zero.  
Symmetrizability is a fundamental characteristic associated to an AVC.
Given a generic AVC, 
it is not obvious how to capture this kind of phenomena using a precise notion of symmetrizability.
The seminal work by \csiszar and Narayan \cite{csiszar-narayan-it1988-obliviousavc} provided the right notion of symmetrizability of a general oblivious AVC and used it to give a characterization of the capacity. 
\cite{csiszar-narayan-it1988-obliviousavc}'s notion of symmetrizability  is a certain linear algebraic condition that can be easily verified for any given oblivious AVC. 
Remarkably, such a condition was not available previously even for AVCs \emph{without} input or state constraints \cite{ahlswede-1978-unconstr-oblivious-avc-common-rand}, i.e., Alice and James are allowed to transmit \emph{any} length-$n$ sequence over their alphabets. 

The quadratically constrained version of the oblivious AVC problem is also well-studied. 
In this variant,  vectors in the communication system are $ \bR^n $-valued  subject to $ \ell^2 $-norm constraints. 
The capacity of such channels (a.k.a. \emph{Gaussian AVCs}) was obtained in \cite{csiszar-narayan-it1991} via a geometric approach.

For list-decoding in the oblivious setting, without input and state constraints, the notion of list-symmetrizability (formally defined in \Cref{def:list_symm}) was given in a paper by Hughes \cite{hughes-1997-list-avc} and the $L$-list-decoding capacity (formally defined in \Cref{def:ach_rate_listdec_cap}) for any list-size $ L\in\bZ_{\ge1} $ was characterized therein. 
As we shall see in more details in \Cref{sec:overview_results_techniques}, incorporating constraints (especially state constraints) into the definition of symmetrizability (and list-symmetrizability) is nontrivial. 
Intuitively, this is because symmetrizability is a symbol-wise notion.
For simplicity, we explain the effect of constraints using unique-decoding as an example.
If a pair of input symbols $ (x,x')\in\cX^2 $ are ``confusable'' in the symmetrizability sense that $ P_{\bfy|x,x'} = P_{\bfy|x',x} $, then the encoder should avoid using them,
since $ x $ and $ x' $ will cause confusion to Bob and receiving symbol $\bfy$ in certain locations, he not could tell if the value of the original transmission was $x$ or $x'$ in these locations.
If all pairs of input symbols are confusable, then Alice has no effective symbols that can be used for communication without causing confusion at the decoder end. 
Hence the symbol-wise notion of symmetrizability translates to an operational notion of confusability at the level of \emph{vectors} ($ P_{\vbfy|\vbfx,\vbfx'} = P_{\vbfy|\vbfx',\vbfx} $). 
However, in the presence of state constraints,  even if all pairs of symbols are confusable, it does not immediately mean that James can always confuse Bob w.r.t two different codewords.
In fact,  this is known to be not true \cite{csiszar-narayan-it1988-obliviousavc}.
In the oblivious bitflip example, if we further impose constraints on $ \vx $ and $ \vs $ such that $ \wth{\vx}\le w $ and $ \wth{\vs}\le p $ for some $ w,p\in[0,1/2] $, where $ \wth{\cdot} $  denotes the Hamming weight of a vector, then the capacity is positive if (and only if) $ w>p $.
However,  the only two input symbols 0 and 1 are apparently confusable. 
This phenomenon is precisely due to the effect of state constraints.
If James is only allowed to jam using \emph{some} rather than \emph{arbitrary} sequences, then symbol-wise symmetrizability does not rule out the possibility that Alice could still communicate to Bob at a positive rate while being robust to all James's \emph{feasible} jamming sequences. 
Indeed, it was first observed by Ericson \cite{ericson-1985-unconstrained-impractical} that unconstrained AVCs are impractical models since many channels of interests are symmetric (e.g., the bitflip channels discussed above) and hence by definition is symmetrizable and has zero capacity. 
Constrained oblivious AVCs under list-decoding were studied by Sarwate and Gastpar in \cite{sarwate-gastpar-2012-list-dec-avc-state-constr}. 
They gave upper and lower  bounds on $ L^* $ -- the smallest list-size $L$ such that the $L$-list-decoding capacity is positive. 
In particular, they defined the notions of \emph{weak} and \emph{strong} list-symmetrizability such that $ L_\strong^*\le L^*\le L_\weak^* $. 
Using these notions, they prove natural lower  and upper bounds on the $L$-list-decoding capacity. 

The quadratically constrained version of the oblivious AVC problem under list-decoding turns out to be easier due to the Euclidean nature.
Hosseinigoki and Kosut recently \cite{hosseinigoki-kosut-2018-oblivious-gaussian-avc-ld} characterized the $L$-list-decoding capacity of such channels. 
The achievability uses typicality methods for real-valued vectors. 
A natural list-symmetrization strategy was analyzed using a certain novel bounding technique.

\subsection{Omniscient AVCs}
\label{sec:prior_omni}
Much less is known in the omniscient setting.
Let us first look at capacity positivity. 
For omniscient AVCs without input/state constraints, it is well-known that the capacity is zero if and only if the channel is symmetrizable in the sense of Kiefer and Wolfowitz \cite{kiefer-wolfowitz-1962-omniscient_confusability_unconstrained}. 
Again, incorporating constraints is nontrivial in the omniscient case as well.
Until very recently, Wang, Budkuley, Bogdanov and Jaggi \cite{wbbj-2019-gen_plotkin} managed to get the right notion of symmetrizability (a.k.a. \emph{confusability} in their paper) for general omniscient AVCs with input \& state constraints whose channel transition  distribution is a 0-1 law, i.e., the channel output is a deterministic function of Alice's and James's inputs. 
They showed that such channels have zero capacity if and only if they are symmetrizable in \cite{wbbj-2019-gen_plotkin}'s sense. 
Their results are a significant generalization of the Plotkin bound in classical coding theory. 
Remarkably, the connection between the \emph{complete positivity} of joint distributions and the structure of codes was first introduced in \cite{wbbj-2019-gen_plotkin}. 
With additional work, their techniques can potentially carry over to the case where $ W_{\bfy|\bfx,\bfs} $ is an arbitrary distribution. 
However, a formal proof has not been presented yet. 

We now turn to bounds on the capacity. 
As we shall see below,  the situation in this direction is rather sad and 
essentially no capacity result is known even for very simple channels.
For general omniscient AVCs without input/state constraints, \csiszar and \korner \cite{csiszar-korner-1981} proved a lower bound on the capacity using techniques similar to those used in \cite{csiszar-narayan-it1988-obliviousavc} for  oblivious channels. 
For channels with input \& state constraints and with 0-1 transition distributions, a Gilbert--Varshamov-type lower bound and its natural generalization using time-sharing were presented in \cite{wbbj-2019-gen_plotkin}; 
an Elias--Bassalygo-type upper bound will be presented in the full version of \cite{wbbj-2019-gen_plotkin} which is not yet available when the present paper is written.
In classical coding theory, the most commonly considered model is the omniscient bitflip channel.
The capacity of such a channel is equivalent to the largest sphere packing density in Hamming space.
The best lower and upper bounds so far are the Gilbert--Varshamov bound \cite{gilbert-gv,varshamov-gv} and the second MRRW (Linear Programming-type) bound \cite{mrrw2}, respectively.
They match nowhere except when the channel is completely noiseless or noisy, which are trivial cases. 
The exact value of the capacity is perhaps \emph{the} most central open question in coding theory. 

The quadratically constrained version of the channel coding problem for omniscient AVCs is equivalent to packing caps on a sphere. 
In the latter problem, one would like to pack as many as possible mutually disjoint spherical caps of radii $ \sqrt{nN} $ on a sphere of radius $ \sqrt{nP} $. 
It is well-known that one can achieve positive packing density whenever $ N/P<1/2 $. 
The current best lower and upper bounds on the packing density (or the capacity of a quadratically constrained omniscient AVC) are the Gilbert--Varshamov-type lower bound due to Blachman \cite{blachman-1962} and the Linear Programming-type upper bound due to Kabatiansky and Levenshtein \cite{kabatiansky-1978}, respectively. 
Narrowing the gap from either side is regarded as a big breakthrough.

For list-decoding against omniscient adversaries, the situation is similar. 
For general channels with input \& state constraints and with 0-1 transition distributions, Zhang, Budkuley and Jaggi \cite{zbj-2019-generalized-ld} recently extended the results of \cite{wbbj-2019-gen_plotkin} to list-decoding and provided a sufficient and necessary condition for positivity of the $L$-list-decoding capacity for any given list-size $ L\in\bZ_{\ge1} $. 
In the same paper, the authors also gave lower bounds on the $L$-list-decoding capacity using random coding with expurgation. 
An Elias--Bassalygo-/Johnson-type upper bound will also be presented in the full version of \cite{zbj-2019-generalized-ld} (which is not yet available when the present paper is written). 
These lower and upper bounds for general omniscient AVCs are generalizations of similar bounds for the bitflip channels due to a sequence of beautiful works by Blinovsky \cite{blinovsky-1986-ls-lb-binary,blinovsky-2005-ls-lb-qary,blinovsky-2008-ls-lb-qary-supplementary}. 
Blinovsky's upper bounds were improved by Polyanskiy \cite{polyanskiy-2016-list_dec_ub} in the high-rate regime for odd list-sizes at least 3. 

The quadratically constrained version of list-decoding is equivalent to  packing caps on a sphere with bounded multiplicity of overlap (a.k.a. \emph{multiple packing}).
To the best of the authors' knowledge, the current best lower and upper bounds are due to Blinovsky \cite{blinovsky-1997-multiple_packing_lb} and Blinovsky--Litsyn \cite{blinovsky-litsyn-2009-multiple_packing_ub}, respectively. 
The largest multiple packing density also remains elusive.

\subsection{Zero-error information theory}
\label{sec:prior_zero_error}
The omniscient model is intimately related to ``noiseless'' channels under zero-error criterion.
It can be shown that for omniscient channels, without loss of rate, all error criteria are equivalent to the zero-error criterion.
That is, the capacity under average/maximum error criterion is the same as that with the requirement that the decoder makes an error with probability \emph{precisely} (rather than asymptotically) zero.
The problem of characterizing the capacity is hence of combinatorial nature.
If the channel is ``noiseless'' in the sense that the adversary is absent, then the problem falls into the realm of zero-error information theory \cite{shannon-1956-zero_error}.
Though without noise, the zero-error capacity of a  channel $ W_{\bfy|\bfx} $ is still widely open.
It is well known to be equal to $ \lim_{n\to\infty}\alpha\paren{\cG(W_{\bfy|\bfx})^{\boxtimes n}}^{1/n} $, where $ \cG(W_{\bfy|\bfx}) $ is the confusability graph of $ W_{\bfy|\bfx} $, $ \cG^{\boxtimes n} $ denotes the $n$-fold strong product of $\cG$ and $ \alpha(\cG) $ denotes the independence number of $\cG$. 
This characterization is not single-letter and hence is not computable (since it involves a limit as the blocklength $n$ of the code grows). 
This formula is only successfully evaluated for sporadic nontrivial channels, e.g., the noisy typewriter channel with alphabet size five \cite{lovasz-1979-shannon_capacity_graph}.
Determination of the zero-error capacity of even the noisy typewriter channel with alphabet size seven remains formidable, let alone general channels.
Nevertheless, we emphasize that the capacity positivity of this problem \emph{is} understood.
Indeed, it is well-known and not hard to see that the capacity is positive if and only if the confusability graph of the channel is not a complete graph.
See \cite{korner-orlitsky-1998-zeroerror} for a survey on zero-error information theory.

\subsection{Other exotic models}
\label{sec:prior_other}
Other types of adversaries such as myopic adversaries \cite{dey-sufficiently-2015,zhang-2018-myopic,bdjlsw-2020-myopic_symm,djlsw-2019-myop_causal}, 
causal/online adversaries \cite{djls-causal-it2013,chen-2015-online,ldjls-2018-quadratic-causal}, 
adversaries with delay, adversaries with lookahead \cite{djls-2016-delay-lookahead}; 
and other types of channels such as channels with feedback \cite{berlekamp-thesis,zigangirov-feedback,ahlswede-cai-feedback,hkv-partial-feedback}, 
channels with common randomness \cite{csiszar-narayan-1988-oblivioius-avc-common-rand,ahlswede-1978-unconstr-oblivious-avc-common-rand}, 
two-way adversarial channels \cite{jaggi-langberg-2017-two-way,zhang-2020-twoway}, 
adversarial broadcast channels \cite{pereg-steinberg-2017-broadcast,hosseinigoki-kosut-2020-broadcast}, 
adversarial interference channels \cite{hosseinigoki-kosut-2016-gaussian-interference}, 
adversarial relay channels \cite{pereg-steinberg-2019-relay,pereg-steinberg-2018-gaussian-relay}, 
adversarial Multiple Access Channels (MACs) \cite{jahn-1981-avmac,ahlswedecai-1999-obli-avmac-no-constr,pereg-steinberg-2019-mac}, 
adversarial fading channels \cite{hosseinigoki-kosut-2019-fading} etc. were also studied in the literature.
In each of these models, the adversaries may exhibit starkly contrasting behaviours. 
We do not intend to provide an exhaustive list of prior works.

\section{Overview of our results and techniques}
\label{sec:overview_results_techniques}
In this work, we provide the correct notion of list-symmetrizability $ L^* $ which we call $\cp$-symmetrizability, denoted by $ L_\cp^* $. 
We show  $ L^* = L_\cp^* $ and prove bounds on the $L$-list-decoding capacity for any $ L\in\bZ_{\ge1} $ using this new notion of list-symmetrizability.
(See \Cref{thm:cap_obli_avc_list_dec} for formal statements.)
Specifically, given any oblivious AVC with input \& state constraints, we show the following.
\begin{enumerate}
	\item When a given target list-size $ L $ is at most $ L_\cp^* $, then the channel is $L$-symmetrizable and the $L$-list-decoding capacity is zero.
	See \Cref{thm:symm_converse} for a formal statement.
	\item When $ L$ is strictly greater than $L_\cp^* $, we prove a natural lower bound on the $L$-list-decoding capacity using techniques that slightly extends those in \cite{csiszar-narayan-it1988-obliviousavc,hughes-1997-list-avc,sarwate-gastpar-2012-list-dec-avc-state-constr}. 
	In particular, the capacity is positive in this case. 
	See \Cref{thm:achievability} for a formal statement.
	\item When $L$ is strictly greater than $ L_\cp^* $, we did not manage to prove a matching upper bound on capacity. 
	En route to a tight characterization, we propose a conjecture conditioned on which we show that our lower bound is tight. 
	See \Cref{thm:converse_rate_ub} for a formal statement.
	The conjecture (\Cref{conj:comb_conj}) is concerned with basic structures of sets of vectors (over finite alphabets).
	It is of combinatorial nature and does not require backgrounds in AVCs. 
	We propose a natural subcode construction (\Cref{sec:subcode_extraction_towards_conj}) towards the resolution of this conjecture. 
\end{enumerate}

Curiously, our proof techniques crucially hinge on the recent development in the study of \emph{omniscient} AVCs. 
As alluded to in \Cref{sec:prior_obli}, incorporating constraints (especially state constraints) into Hughes's  \cite{hughes-1997-list-avc} definition of list-symmetrizability is a challenging task.
Sarwate and Gastpar \cite{sarwate-gastpar-2012-list-dec-avc-state-constr} made the first attempt by providing two candidates (strong and weak list-symmetrizability, denoted by $ L_\strong^* $ and $ L_\weak^* $, respectively) of extension of Hughes's notion to the constrained case. 
However, these extended notions are not tight in the sense of dichotomy, i.e., for any given list-size $ L\in\bZ_{\ge1} $, the $ L $-list-decoding capacity is zero if $ L\le L^* $ and is positive otherwise. 

We now explain how we close the gap between $ L_\strong^* $ and $ L_\weak^* $ using the notion of $ \cp $-symmetrizability $ L_\cp^* $ (where $\cp $ stands for \emph{completely positive}). 
We first equip James with an improved jamming strategy (called \emph{$\cp$-symmetrization}) which allows him to enforce a zero communication rate.
Fix a list-size $ L\in\bZ_{\ge2} $\footnote{When  $ L=1 $, the problem at hand collapses to the unique-decoding problem which was solved in \cite{csiszar-narayan-it1988-obliviousavc}.} and a code $ \cC\subseteq\cX^n $ of positive rate\footnote{Positive rate of a code $ \cC\subseteq\cX^n $ simply means the code size is exponentially large in $n$, i.e., $ \cardC = \cardX^{nR} $ for some constant $ R\in(0,1] $.} satisfying input constraints. 
Suppose Alice transmitted a  codeword $ \vbfx_{i_0}\sim\cC $ corresponding to a random message $ i_0 $. 
As suggested by the intuition mentioned in \Cref{sec:prior_obli}, a natural way to ``symmetrize''  the channel is to let James sample a ``spoofing'' list $ \cL\coloneqq (\vbfx_{i_1},\cdots,\vbfx_{i_L}) $ of $L$ codewords uniformly from $\cC$ such that the output distribution looks identical if the $ (L+1) $-list $ \cL' \coloneqq (\vbfx_{i_0},\vbfx_{i_1},\cdots,\vbfx_{i_L}) $ is permuted arbitrarily, i.e., $ P_{\vbfy|\vbfx_{i_0}, \vbfx_{i_1},\cdots,\vbfx_{i_L}} = P_{\vbfy|\vbfx_{\pi(i_0)},\vbfx_{\pi(i_1)},\cdots,\vbfx_{\pi(i_L)}} $ for all permutations $ \pi $ on $ (L+1) $ elements.
Specifically, if James adopts the following strategy, then we will argue that the $L$-list-decoder of Bob must make an error (i.e., decode to a list which does not contain the truly transmitted message) with nonvanishing probability.
If we provide James with  a (discrete memoryless) ``jamming channel'' $ U_{\bfs|\bfx_{1},\cdots,\bfx_{L}} $, then he can use it to generate a random jamming sequence $ \vbfs_\cL $ once the spoofing list $\cL$ is fed into it. 
Suppose we can find a $ U_{\bfs|\bfx_1,\cdots,\bfx_L} $ satisfying the following property.
The distribution of the channel output $ \vbfy $ obtained from $ (\vbfx_{i_0},\vbfs_\cL)\xrightarrow{W_{\bfy|\bfx,\bfs}^\tn}\vbfy $ (where $ \cL \xrightarrow{U_{\bfx_1,\cdots,\bfx_L}^\tn}\vbfs_\cL $) remains the same if $ \vbfy $ is obtained from $ (\vbfx_{\pi(i_0)},\vbfs_{\pi(\cL)})\xrightarrow{W_{\bfy|\bfx,\bfs}^\tn}\vbfy $
(where $ \pi(\cL)\xrightarrow{ U_{\bfs|\bfx_1,\cdots,\bfx_L}^\tn }\vbfs_{\pi(\cL)} $ and $ \pi(\cL)\coloneqq(\vbfx_{\pi(i_1)}, \cdots,\vbfx_{\pi(i_L)}) $) for any permutation $ \pi $ on $ (L+1) $ elements. 
For example, if $L=2$, this property guarantees that $ \vbfy_1,\cdots,\vbfy_6 $ (where $ 6 = (2+1)! $) are statistically identical, where $ \vbfy_1,\cdots,\vbfy_6 $ follow the Bayesian networks shown in \Cref{fig:list_symm_eg}, respectively. 
\begin{figure}[htbp]
	\centering
	\includegraphics[width=0.8\textwidth]{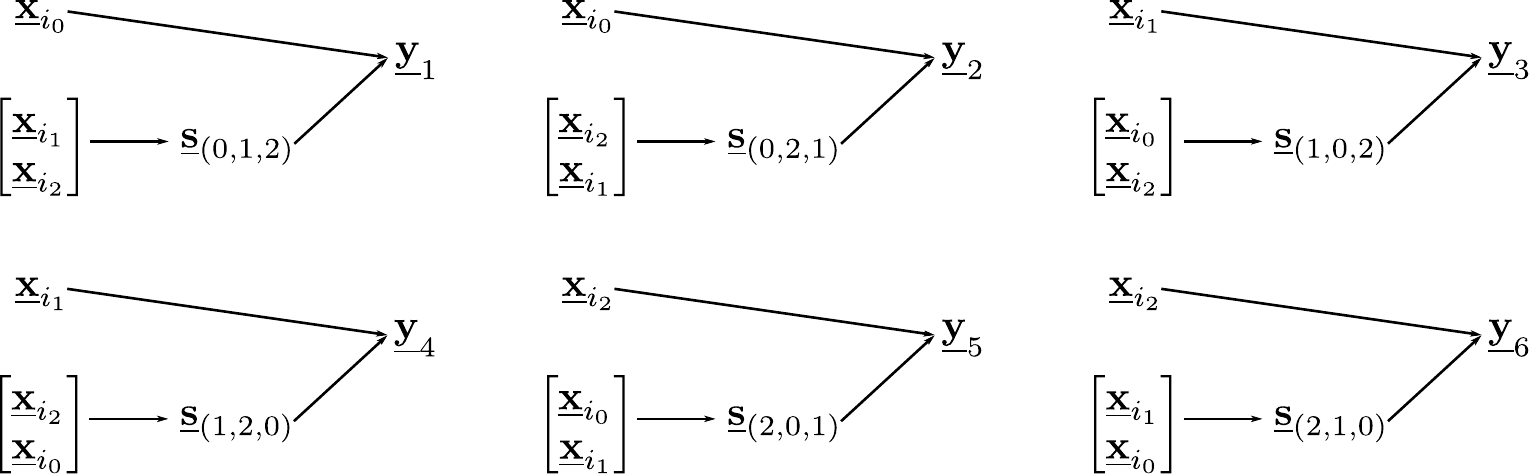}
	\caption{Example of list-symmetrizability.}
	\label{fig:list_symm_eg}
\end{figure}
As a consequence of such a property of $ U_{\bfs|\bfx_1,\cdots,\bfx_L} $, every $L$-sized sublist of $\cL'$ appears to be equally likely a posteriori. 
Bob's best list-decoder is essentially to output a random $ L $-sublist $ \wh\cL $ in $ \cL' $. 
The probability that Alice's transmitted message $ {i_0} $ falls outside $ \wh \cL $ is about $ \frac{1}{L+1} $.
For constant list-size $L$, the probability of error is bounded away from zero.
In the above example, among messages in the the candidate list $ \curbrkt{i_0,i_1,i_2} $, Bob randomly outputs two messages. 
The probability that he chooses $ \curbrkt{i_1,i_2} $ (which does not contain message $i_0$) instead of $ \curbrkt{i_0,i_1} $ or $ \curbrkt{i_0,i_2} $ (which are desirable output lists) is $1/3>0$. 
This  means that James has successfully jammed the communication and no positive rate of $L$-list-decodable codes with vanishing average probability of error can be achieved\footnote{The word ``achieve'' will be formally defined in \Cref{def:ach_rate_listdec_cap}. } under this type of jamming strategy.
It turns out the above heuristic strategy can be formalized to give a sufficient condition for zero list-decoding capacity in the absence of constraints \cite{hughes-1997-list-avc}.

For general  oblivious AVCs under constraints, one caveat in the above heuristics is that we also need to ensure that $ {\vbfs_{\pi(\cL)}} $ satisfies the state constraints for every permutation $ \pi $ on $(L+1)$ elements. 
Note that the distribution  of $ \vbfs_{\pi(\cL)} $ is essentially (with high probability tightly concentrated around) $ P_{\vbfx_{\pi(i_0)}, \vbfx_{\pi(i_1)},\cdots,\vbfx_{\pi(i_L)}} U_{\bfs|\bfx_1,\cdots,\bfx_L}^\tn $ marginalized to $ \vbfs $. 
A priori, it is  unclear whether this distribution remains (approximately) the same under different permutations. 
Thus it is challenging to bound the distribution of the jamming sequence. 
Therefore, in the presence of state constraints, we cannot yet claim that the above jamming strategy works. 
We bypass this obstacle by invoking a list-decoding version of generalized Plotkin bound recently developed by Zhang, Budkuley and Jaggi \cite{zbj-2019-generalized-ld} (which in turn built upon \cite{wbbj-2019-gen_plotkin}). 
This theorem (informally stated below) is concerned with basic structures of sets of vectors over finite alphabets (in particular, codes). 
\begin{theorem}[(A corollary of) generalized Plotkin bound for list-decoding, informal, \cite{zbj-2019-generalized-ld}]
\label{thm:gen_plotkin_listdec_informal}
Let $ \cX $ be a finite alphabet.
In any sufficiently large\footnote{The condition ``sufficiently large'' here means larger than some constant independent of $n$. In our applications, the positive rate of a code is more than enough to certify the ``sufficiently large'' criterion.}  set $ \cC = \curbrkt{\vx_i}_{i} $ of vectors in $ \cX^n $,
there is a completely positive distribution $ P_{\bfx_1,\cdots,\bfx_L} $ such that,
with constant (independent of $n$) probability, a uniformly random (ordered) list $ (\vbfx_{i_1},\cdots,\vbfx_{i_L}) $ ($ i_1<\cdots<i_L $) has empirical distribution\footnote{Given a list $ (\vx_1,\cdots,\vx_L) $ of vectors in $ \cX^n $ (where $ \cX $ is a finite set), the \emph{empirical distribution}, a.k.a. the \emph{type}/\emph{histogram}, of $ (\vx_1,\cdots,\vx_L) $ is the distribution $ P_{\bfx_1,\cdots,\bfx_L} $ defined as $ P_{\bfx_1,\cdots,\bfx_L}(x_1,\cdots,x_L) = \frac{1}{n}\card{\curbrkt{i\in[n]\colon \vx_1(i) = x_1,\cdots,\vx_L(i) = x_L}} $ for all $ (x_1,\cdots,x_L)\in\cX^L $. 
See \Cref{def:type}}  approximately equaling $ P_{\bfx_1,\cdots,\bfx_L} $. 
\end{theorem}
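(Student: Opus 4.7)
The plan is to construct the target completely positive distribution directly from the empirical column statistics of $\cC$. For each position $j \in [n]$, let $D_j$ denote the empirical distribution of the $j$-th coordinate of $\cC$, i.e., $D_j(x) := |\{i : \vx_i(j) = x\}|/|\cC|$. I would take as the candidate
\[
P^*_{\bfx_1,\cdots,\bfx_L}(x_1,\cdots,x_L) := \frac{1}{n}\sum_{j=1}^n \prod_{\ell=1}^{L} D_j(x_\ell),
\]
which, being a convex combination of product distributions with the uniform column index $j$ playing the role of the mixing variable, is completely positive essentially by definition.

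Linearity of expectation then identifies $P^*$ with the expected empirical joint type of a uniformly random ordered $L$-tuple $(\vbfx_{i_1},\cdots,\vbfx_{i_L})$ drawn from $\cC$, up to an $O(L^2/|\cC|)$ correction accounting for sampling without replacement (negligible for sufficiently large $\cC$). The remaining task is to upgrade this in-expectation statement to the desired constant-probability concentration statement: with probability bounded below by a constant depending only on $L$, $|\cX|$ and the approximation tolerance (but not on $n$), the random empirical joint type is close to $P^*$ in total variation.

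I would approach the concentration step through a second-moment bound on the empirical frequency of each symbol tuple $(x_1,\cdots,x_L) \in \cX^L$, computed via pairwise column statistics of $\cC$, followed by Chebyshev's inequality and a union bound over the (constant-size) alphabet $\cX^L$. The hard part will be controlling the column-to-column correlations, which enter the variance and could in principle be large: here the ``sufficiently large'' hypothesis on $|\cC|$ must be used, presumably via a Plotkin-type extremal argument ruling out pathological pairwise column distributions. An alternative, more purely combinatorial route is iterated pigeonholing over the $1$-wise, $2$-wise, $\cdots$, $L$-wise empirical joint types of codeword tuples, successively extracting subcodes on which each corresponding joint type is near-constant; in the final subcode every $L$-tuple realizes essentially a single exchangeable joint type $\tilde P$, which one then recognizes as the column-average of the refined subcode and therefore as completely positive. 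In either route the principal bookkeeping challenge is tracking how the ``sufficiently large'' threshold degrades through the $L-1$ refinements (respectively, through the variance estimates), and verifying that the threshold can be absorbed into a single constant independent of $n$.
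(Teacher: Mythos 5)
Your expectation identity is fine --- the column-average mixture $P^*=\frac1n\sum_j\prod_\ell D_j$ is completely positive and is, up to the $O(L^2/|\cC|)$ without-replacement correction, the \emph{mean} of the random joint type --- but the concentration step you defer is not bookkeeping: it is false, and no largeness hypothesis on $|\cC|$ can repair it. Take $\cX=\{0,1\}$, $L=2$, and let $\cC$ consist of all vectors within Hamming distance $\epsilon n$ of $0^{n/2}1^{n/2}$ (cluster $A$) together with all vectors within $\epsilon n$ of $1^{n/2}0^{n/2}$ (cluster $B$). This code is exponentially large and essentially constant-composition, every column distribution is $\approx(1/2,1/2)$, so $P^*\approx$ the uniform distribution on $\{0,1\}^2$; yet the type of a random pair is, up to $O(\epsilon)$, either $\tfrac12\delta_{(0,0)}+\tfrac12\delta_{(1,1)}$ (within-cluster) or $\tfrac12\delta_{(0,1)}+\tfrac12\delta_{(1,0)}$ (cross-cluster), each at $\ell^1$-distance about $1$ from $P^*$. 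The random type \emph{never} lands near its mean, the variance is of constant order, and Chebyshev plus any ``Plotkin-type extremal argument on pairwise column statistics'' cannot change this. The theorem is still true here, but the correct witness is $\tfrac12\delta_0^{\otimes2}+\tfrac12\delta_1^{\otimes2}$, realized by within-cluster pairs with probability $\approx1/4$; i.e.\ the right cp distribution is not the column average of $\cC$, and locating it is where all the work lies.

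Your second route runs into the same gap in a different guise. To obtain a subcode in which \emph{every} $L$-tuple has nearly the same joint type you need hypergraph Ramsey, which only yields cliques of size $o(|\cC|)$ (iterated-logarithmic for $L\ge3$), so the ``constant probability over a uniformly random tuple from $\cC$'' conclusion is lost; if instead by pigeonholing you mean keeping the tuples in the most popular quantized type cell, those tuples do \emph{not} form a subcode, so the step ``recognize $\tilde P$ as the column-average of the refined subcode, hence cp'' is unjustified --- and in the example above the plurality cell is the cross-cluster type $\tfrac12\delta_{(0,1)}+\tfrac12\delta_{(1,0)}$, whose matrix is not positive semidefinite and hence not cp. What is missing is precisely the paper's two ingredients: after the constant-composition reduction (\Cref{lem:apx_cc_reduction}), the robust generalized Plotkin bound of \cite{zbj-2019-generalized-ld} (\Cref{thm:robust_plotkin}) shows that any code all of whose $L$-tuples have type $\epsilon$-far from $\cp^{\otimes L}(\wh P_\bfx)$ has size at most a constant $K$ --- this is where your averaging observation enters rigorously, via an equicoupled-subcode extraction and a double count against a copositive witness --- and then hypergraph Tur\'an (\Cref{thm:hypergraph_turan}) converts ``no far clique of size $>K$'' into ``at most a $1-\binom{K}{L}^{-1}$ fraction of tuples are far,'' i.e.\ a \emph{constant fraction} of tuples are $\epsilon$-close to cp (\Cref{lem:apply_turan}). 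Only after that does your final step --- pigeonholing over a constant-size net of joint types and projecting back onto $\cp^{\otimes L}(\wh P_\bfx)$, which does match the paper --- legitimately pin down a single cp distribution. Without the Plotkin--Tur\'an combination neither of your routes reaches the stated constant-probability conclusion.
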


\begin{remark}
What was actually proved in \cite{zbj-2019-generalized-ld} is a Plotkin-type bound giving rise to a sufficient and necessary condition for list-decoding capacity positivity of general \emph{omniscient} AVCs.
It can be viewed as a characterization of the phase transition threshold of the sizes of multiple packings using general shapes in a finite product space. 
Combining it with the hypergraph \turan's theorem (\Cref{thm:hypergraph_turan}) allows us to prove \Cref{thm:gen_plotkin_listdec_informal}. 
The formal version of \Cref{thm:gen_plotkin_listdec_informal} is stated and proved in \Cref{lem:apply_turan}. 
\end{remark}

\emph{Completely positive} ($\cp$) distributions are joint distributions of tuples of random variables that can be written as a convex combination of tensor products of identical distributions, e.g., $ P_{\bfx_1,\cdots,\bfx_L} = \sum_{i} \lambda_i P_{\bfx_i}^\tl $ for some coefficients $ \curbrkt{\lambda_i}_i $ and distributions $ \curbrkt{P_{\bfx_i}}_i $. 
The above theorem allows us to bound the the distribution of $ \vbfs $ and to justify the validity of the previously described  jamming strategy.
This is because, $ \cp $ distributions are invariant under permutations by definition, i.e., $ P_{\bfx_1,\cdots,\bfx_L} = P_{\bfx_{\pi(1)},\cdots,\bfx_{\pi(L)}} $ for any $\pi$.
By \Cref{thm:gen_plotkin_listdec_informal} (with $L$ in the theorem being $L+1$), 
the event that $ \cL' = (\vbfx_0,\cL) $ has empirical distribution approximately $ P_{\bfx_0,\bfx_1,\cdots,\bfx_L} $ for some order-$(L+1)$ $\cp$ distribution $ P_{\bfx_0,\bfx_1,\cdots,\bfx_L} $ happens with constant probability.
Conditioned on this event, the distribution of $ \vbfs_{\pi(\cL)} $ for any permutation $ \pi $ is with high probability tightly concentrated around $ \sqrbrkt{P_{\bfx_0,\bfx_1,\cdots,\bfx_L}U_{\bfs|\bfx_1,\cdots,\bfx_L}}_\bfs^\tn $\footnote{Here the notation $ \sqrbrkt{\cdot}_\bfs $ refers to the marginal on the variable $\bfs$  of the joint distribution in the bracket.}.
This is  a product distribution independent of the particular realization of James's spoofing list $ \cL $ and the permutation $\pi$. 
The above argument hence basically  justifies the effectiveness of the translation from symbol-wise list-symmetrizability to vector-wise list-symmetrizability.
Indeed, even in the presence of state constraints,  such a translation can be operationally realized by the  previously described jamming strategy which we call $\cp$-symmetrization. 
(See \Cref{sec:cp_symm} for the precise description of $\cp$-symmetrization). 
The rest of the proof (which is deferred to \Cref{app:converse_rate_zero}) can be finished using relatively ``standard'' techniques (with some careful tweaks) along the lines of \cite{csiszar-narayan-it1988-obliviousavc,hughes-1997-list-avc,sarwate-gastpar-2012-list-dec-avc-state-constr}.

The above argument shows that if a channel is $ L $-$\cp$-symmetrizable (see \Cref{def:cp_symm} for the formal definition) then the $L$-list-decoding capacity is zero.
That is, $L$-$\cp$-symmetrizability is a sufficient condition of zero $L$-list-decoding capacity.
We then sketch a matching achievability argument showing that this condition is also necessary.
This turns out to be a relatively straightforward extension to the classical results by \cite{csiszar-narayan-it1988-obliviousavc,hughes-1997-list-avc,sarwate-gastpar-2012-list-dec-avc-state-constr}.
Specifically, by non-list-symmetrizability, James could not find a jamming channel $ U_{\bfs|\bfx_1,\cdots,\bfx_L} $ satisfying the aforementioned property. 
This means for some $\cp$ distributions, no matter which $ U_{\bfs|\bfx_1,\cdots,\bfx_L} $ James uses, the aforementioned jamming strategy fails since the jamming sequence $ \vbfs_\cL $ generated from a spoofing list $ \cL $ of a $\cp$ type $ P_{\bfx_1,\cdots,\bfx_L} $ violates some state constraints.
Alice could leverage that particular completely positive distribution $ P_{\bfx_1,\cdots,\bfx_L} $ to  construct a ``good'' code $\cC$. 
By complete positivity, assume $ P_{\bfx_1,\cdots,\bfx_L} $ can be decomposed as $ \sum_{i = 1}^k\lambda_iP_{\bfx_i}^\tl $ for some coefficients $ \curbrkt{\lambda_i}_{i = 1}^k $ and some distributions $ \curbrkt{P_{\bfx_i}}_{i = 1}^k $. 
Alice simply samples $ \cardX^{nR} $ codewords for some constant $ R\in(0,1] $ each of which is independently generated using the following distribution.
To sample a codeword $ \vbfx $, sample the first $ n\lambda_1 $ components independently from distribution $ P_{\bfx_1} $, sample the next $ n\lambda_2 $ components independently from distribution $ P_{\bfx_2} $, ..., sample the last $ n\lambda_k $ components independently from distribution $ P_{\bfx_k} $. 
By measure concentration, with high probability every size-$L$ list in $\cC$ has joint type approximately $ \sum_{i = 1}^k\lambda_iP_{\bfx_i}^\tl $. 
Following the classical techniques by \csiszar and Narayan \cite{csiszar-narayan-it1988-obliviousavc}, this further implies that with high probability such a random code is resilient to any feasible jamming strategy.
This claim may not be immediately clear to readers who are not familiar with the AVC literature, since we are claiming the possibility of reliable communication robust to \emph{any}  jamming strategy, not necessarily of the form of $\cp$-symmetrization introduced before.
However, thanks to \csiszar and Narayan \cite{csiszar-narayan-it1988-obliviousavc}, this claim does hold and the proof is nowadays standard.

\section{Organization of this paper}
The rest of the paper is organized as follows. 
The notational convention followed in this paper is introduced in \Cref{sec:notation}. 
Preliminaries on probability theory,  oblivious AVC model, channel coding, list-decodable codes and information measures are given in \Cref{sec:prelim}. 
Formal presentation starts from \Cref{sec:obli_avc_and_cp_symm} onwards. 
The core definition of $ \cp $-symmetrizability is introduced in \Cref{sec:obli_avc_and_cp_symm}. 
Given this, formal statements of our main theorems are stated and compared with those in \cite{sarwate-gastpar-2012-list-dec-avc-state-constr} in \Cref{sec:results}. 
Technical proofs start from \Cref{sec:cp_symm} onwards. 
We describe and (partly) analyze $ \cp $-symmetrization, the most conceptually novel and technically challenging part of this work, in \Cref{sec:cp_symm}. 
Part of the proof is delegated to \Cref{app:robust_plotkin,app:converse_rate_zero}.
The above three sections jointly prove \Cref{thm:symm_converse}.
We prove the capacity lower bound in \Cref{sec:achievability}  by designing and analyzing a coding scheme.
Part of the proof is deferred to \Cref{app:cw_select,app:unambiguity_pf}. 
The above three sections jointly prove \Cref{thm:achievability}.
We prove, conditioned on \Cref{conj:comb_conj}, a matching capacity upper bound in \Cref{sec:converse_rate_ub} with part of the proof left for \Cref{app:strong_conv_fading_pf}. 
The above two sections jointly prove \Cref{thm:converse_rate_ub}.

\section{Notation}
\label{sec:notation}
Sets are denoted by capital letters in calligraphic typeface, e.g., $ \cX,\cS,\cY $, etc. 
All alphabets in this paper are finite sized. 
For a positive integer $ M $, we use $ [M] $ to denote $ \curbrkt{1,\cdots, M} $. 
Let $\cX$ be a finite set. 
For an integer $ k\le\cardX $, we use $ \binom{\cX}{k} $ to denote $ \curbrkt{\cX'\subseteq\cX\colon \card{\cX'}= k} $. 
When we write $ \binom{[M]}{L} $, we think of an element $ \curbrkt{i_1,\cdots,i_L} $ in it as in ascending order, i.e., $ i_1<\cdots<i_L $. 
Similarly, we define $ \binom{\cX}{\le k}\coloneqq\curbrkt{ \cX'\subseteq\cX\colon \card{\cX'}\le k } $. 

Random variables are denoted by lowercase letters in boldface, e.g., $\bfx,\bfs,\bfy $, etc. 
Their realizations are denoted by corresponding lowercase letters in plain typeface, e.g., $x, s, y$, etc. Vectors (random or fixed) of length $n$, where $n$ is the blocklength of the code without further specification, are denoted by lowercase letters with  underlines, e.g., $\vbfx,\vbfs,\vbfy,\vx,\vs,\vy$, etc. 
The $i$-th entry of a vector $\vx\in\cX^n$ (resp. $\vbfx\in\cX^n$) is denoted by $\vx(i)$ (resp. $\vbfx(i) $). 
Let $ \cL = \curbrkt{i_1,\cdots,i_L} $ be a finite set of nonnegative integers such that $ i_1<\cdots<i_L $. 
We use $ x_\cL $ (resp. $ \bfx_\cL $) to denote $ (x_{i_1},\cdots,x_{i_L}) $ (resp. $ (\bfx_{i_1},\cdots,\bfx_{i_L}) $). 
Matrices are denoted by capital letters in boldface, e.g., $\bfA,\bfB$, etc. 
Similarly, the $(i,j)$-th entry of a matrix $\bfG\in\bF^{n\times m}$ is denoted by $\bfG(i,j)$. 
We sometimes write $\bfG_{n\times m}$ to explicitly specify its dimension. 
For square matrices, we write $\bfG_n$ for short. 
The letter $\bfI$ is reserved for the identity matrix.

We use the standard Bachmann--Landau (Big-Oh) notation. 
For $x\in\bR$, let $[x]^+\coloneqq\max\curbrkt{x,0}$.
For two real-valued functions $f(n),g(n)$ of positive integers, we say that $f(n)$ \emph{asymptotically equals} $g(n)$, denoted by $f(n)\asymp g(n)$, if 
$\lim_{n\to\infty}{f(n)}/{g(n)} = 1$.
We write $f(n)\doteq g(n)$ (read $f(n)$ \emph{dot equals} $g(n)$) if
$\lim_{n\to\infty}\paren{\log f(n)}/\paren{\log g(n)} = 1$.
Note that $f(n)\asymp g(n)$ implies $f(n)\doteq g(n)$, but the converse is not true.
Related notations such as dot less/larger than (or equal to), denoted by $ \dotl/\dotg $ (or $ \dotle/\dotge $) can be similarly defined.
For any $\cA\subseteq\cX$, the indicator function of $\cA$ is defined as, for any   $x\in\cX$, 
\[\one_{\cA}(x)\coloneqq\begin{cases}1,&x\in \cA\\0,&x\notin \cA\end{cases}.\]
At times, we will slightly abuse notation by saying that $ \indicator{\sfA} $ is $1$ when event $\sfA$ happens and $0$ otherwise. 
Note that $\one_{\cA}(\cdot)=\indicator{\cdot\in\cA}$.
In this paper, all logarithms are to the base 2. 

We use $ \Delta(\cX) $ to denote the probability simplex on $\cX$.
Related notations such as $ \Delta(\cX\times\cY) $ and $ \Delta(\cY|\cX) $ are similarly defined. 
For a distribution $ P_{\bfx,\bfy|\bfu}\in\Delta(\cX\times\cY|\cU) $, we use $ \sqrbrkt{P_{\bfx,\bfy|\bfu}}_{\bfx|\bfu}\in\Delta(\cX|\cU) $ to denote the marginal distribution onto $ \bfx $ given $ \bfu $, i.e., for every $ x\in\cX,u\in\cU $, $ \sqrbrkt{P_{\bfx,\bfy|\bfu}}_{\bfx|\bfu}(x|u) = \sum_{y\in\cY}P_{\bfx,\bfy|\bfu}(x,y|u) $. 
We use $ \Delta^{(n)}(\cX) $ to denote the set of types (i.e., empirical distributions/histograms, see \Cref{def:type} for formal definitions) of length-$n$ vectors over alphabet $ \cX $.
That is, $ \Delta^{(n)}(\cX) $ consists of all distributions $ P_\bfx\in\Delta(\cX) $ that can be induced by $ \cX^n $-valued vectors. 
Other notations such as $ \Delta^{(n)}(\cX\times\cY) $ and $ \Delta^{(n)}(\cY|\cX) $ are similarly defined. 
The notation $ \bfx\sim P_\bfx $ (resp. $ \vbfx\sim P_{\vbfx} $) means that the p.m.f. of a random variable (resp. vector) $ \bfx $ (resp. $ \vbfx $) is $ P_\bfx $ (resp. $ P_\vbfx $). 
If $ \bfx $ is uniformly distributed in $ \cX $, then we write $ \bfx\sim\cX $. 
The symmetric  group of degree $N\in\bZ_{\ge1} $ is denoted by $ S_N $. 
It consists of all permutations, typically denoted by lowercase Greek letters,  on a set of $N$ elements. 
When the set is a subset of nonnegative integers, we think of its elements as listed in ascending order. 
If $ \cN = \curbrkt{i_1,\cdots,i_N}\subset\bZ_{\ge1} $ such that $ i_1<\cdots<i_N $, then for any $ \pi\in S_N $, we define $ \pi(\cN) \coloneqq (\pi(i_1),\cdots,\pi(i_N)) $. 
Throughout this paper, we use $ d(\cdot,\cdot) $ to denote the $\ell^1 $ distance between two distributions. 
Specifically, for $ P,Q\in\Delta(\cX) $, $ d(P,Q)\coloneqq \sum_{x\in\cX}\abs{P(x) - Q(x)} $. 
For a subset $ \cA\subseteq\Delta(\cX) $, the distance between $ P $ and $\cA$ is defined as $ d(P,\cA)\coloneqq\min_{Q\in\cA}d(P,Q) $. 
The inner product between $ P $ and $ Q $ is defined as $ \inprod{P}{Q}\coloneqq\sum_{x\in\cX}P(x)Q(x) $. 
The $ \ell^p $-norm of a vector is denoted by $ \norm{p}{\cdot} $.


\section{Preliminary}
\label{sec:prelim}

\subsection{Probability}
\begin{lemma}[Markov]
\label{lem:markov}
If $X$ is a nonnegative random variable, then for any $a>0$, $ \prob{X\ge a}\le \expt{X}/a $. 
\end{lemma}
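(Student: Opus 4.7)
The plan is to reduce Markov's bound to a pointwise inequality between random variables combined with the monotonicity of expectation. First I would establish the deterministic bound
\[
a \cdot \indicator{X \ge a} \;\le\; X,
\]
interpreted as an almost-sure inequality: on the event $\{X \ge a\}$ the left side equals exactly $a$ while the right side is at least $a$, and on the complementary event $\{X < a\}$ the left side vanishes while the right side is still nonnegative by the hypothesis on $X$. This is the sole place where the nonnegativity of $X$ enters the argument.

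Next I would take expectations of both sides. By linearity together with the identity $\expt{\indicator{X \ge a}} = \prob{X \ge a}$, the left side becomes $a \cdot \prob{X \ge a}$; monotonicity of expectation preserves the inequality, yielding $a \cdot \prob{X \ge a} \le \expt{X}$. Dividing through by $a > 0$ produces the claimed bound $\prob{X \ge a} \le \expt{X}/a$.

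There is no substantive obstacle in this argument. The only nontrivial ingredient is the pointwise indicator inequality, and the only hypothesis it actually uses is that $X$ is nonnegative on the event $\{X < a\}$. Without this hypothesis the conclusion can fail outright, so the assumption is essential rather than cosmetic.
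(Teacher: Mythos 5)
Your proof is correct: the pointwise bound $a\cdot\indicator{X\ge a}\le X$ (using nonnegativity only on $\{X<a\}$), followed by monotonicity and linearity of expectation and division by $a>0$, is the canonical argument for Markov's inequality. The paper states \Cref{lem:markov} as a standard preliminary without proof, so there is nothing to compare against; your argument is complete and matches the textbook proof one would supply.
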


\begin{lemma}[Chebyshev]
\label{lem:chebyshev}
If $X$ is an integrable random variable with finite expectation and finite nonzero variance, then for any $a>0$, $ \prob{\abs{X-\expt{X}}\ge a}\le\var{X}/a^2 $.
\end{lemma}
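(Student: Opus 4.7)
The plan is to reduce Chebyshev's inequality to Markov's inequality (\Cref{lem:markov}), which has just been stated and which I am allowed to invoke. The key observation is that the event $\curbrkt{\abs{X - \expt{X}} \ge a}$ coincides with the event $\curbrkt{(X - \expt{X})^2 \ge a^2}$, since squaring is a monotone operation on nonnegative reals and $a > 0$. This lets us pass from a two-sided tail bound on a signed random variable to a one-sided tail bound on a nonnegative random variable, which is precisely what Markov handles.

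Concretely, I would first introduce the auxiliary random variable $Y \coloneqq (X - \expt{X})^2$. The hypotheses on $X$ (integrability with finite expectation and finite variance) ensure that $Y$ is well-defined, nonnegative, and has finite expectation equal to $\var(X)$. Then I would rewrite
\[
\prob{\abs{X - \expt{X}} \ge a} = \prob{Y \ge a^2}
\]
using the equivalence of events noted above. Applying \Cref{lem:markov} to the nonnegative random variable $Y$ with threshold $a^2 > 0$ yields
\[
\prob{Y \ge a^2} \le \frac{\expt{Y}}{a^2} = \frac{\var(X)}{a^2},
\]
which is the desired conclusion.

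There is essentially no obstacle here: the only thing to be mindful of is that the hypothesis ``finite nonzero variance'' guarantees $\expt{Y}$ is a finite nonnegative real so that Markov applies without issue, and that $a > 0$ (hence $a^2 > 0$) is needed to invoke Markov at the threshold $a^2$. No additional ingredients beyond \Cref{lem:markov} and elementary properties of expectation are required.
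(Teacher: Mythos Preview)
Your proposal is correct and is the standard derivation of Chebyshev from Markov. The paper itself states \Cref{lem:chebyshev} as a preliminary without proof, so there is nothing to compare against; your argument is exactly the expected one.
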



\begin{lemma}[Sanov]
\label{lem:sanov}
Let $\cQ\subset\Delta\paren{\cX}$ be a subset of distributions such that it is equal to the closure of its interior. 
Let $\vbfx\sim P_\bfx^{\otimes n}$ for some $P_\bfx\in\Delta(\cX)$. 
Then
\begin{align}
\prob{\tau_{\vbfx}\in\cQ}\doteq 2^{ -n\inf_{Q_\bfx\in\cQ}\kl{Q_\bfx}{P_\bfx}  }, \notag
\end{align}
where the {Kullback--Leibler divergence} $ \kl{\cdot}{\cdot} $ between two distributions is defined in \Cref{def:info_mes}. 
\end{lemma}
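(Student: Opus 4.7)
The plan is the standard method-of-types argument. First I would invoke the two basic estimates of the method of types: the number of types $|\Delta^{(n)}(\cX)|$ is polynomial in $n$ (specifically at most $(n+1)^{|\cX|}$), and for any type $Q_\bfx \in \Delta^{(n)}(\cX)$ the probability that $\vbfx$ has type exactly $Q_\bfx$ satisfies
\[
\prob{\tau_\vbfx = Q_\bfx} \doteq 2^{-n \kl{Q_\bfx}{P_\bfx}},
\]
which follows from the type-class size estimate $|T^n_{Q_\bfx}| \doteq 2^{nH(Q_\bfx)}$ together with the fact that every individual string of type $Q_\bfx$ has probability $2^{-n(H(Q_\bfx) + \kl{Q_\bfx}{P_\bfx})}$ under $P_\bfx^{\otimes n}$.

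For the upper bound, I would write
\[
\prob{\tau_\vbfx \in \cQ} \;=\; \sum_{Q_\bfx \in \cQ \cap \Delta^{(n)}(\cX)} \prob{\tau_\vbfx = Q_\bfx} \;\dotle\; \mathrm{poly}(n) \cdot 2^{-n \inf_{Q_\bfx \in \cQ \cap \Delta^{(n)}(\cX)} \kl{Q_\bfx}{P_\bfx}},
\]
and then observe that the infimum over empirical distributions is at least the infimum over all of $\cQ$, so after taking logarithms and dividing by $n$ the polynomial factor is absorbed into $\doteq$.

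For the lower bound, I would select a single type $Q_\bfx^{(n)} \in \cQ \cap \Delta^{(n)}(\cX)$ whose divergence approaches $\inf_{Q_\bfx \in \cQ} \kl{Q_\bfx}{P_\bfx}$ as $n \to \infty$, and then use the single-type lower bound
\[
\prob{\tau_\vbfx \in \cQ} \;\ge\; \prob{\tau_\vbfx = Q_\bfx^{(n)}} \;\dotge\; 2^{-n \kl{Q_\bfx^{(n)}}{P_\bfx}}.
\]
Here is where the hypothesis that $\cQ$ equals the closure of its interior does the work: it guarantees that an infimizing sequence can be chosen inside the interior of $\cQ$, and then by density of $\Delta^{(n)}(\cX)$ in $\Delta(\cX)$ (every distribution can be approximated in $\ell^1$ to within $|\cX|/n$ by a type) we can round each infimizing distribution to an honest type that still lies in $\cQ$ for all sufficiently large $n$; continuity of $\kl{\cdot}{P_\bfx}$ on the support of $P_\bfx$ then lets us pass the limit through the divergence.

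The main obstacle is this last rounding-and-continuity step in the lower bound. Without the closure-of-interior hypothesis the infimizing $Q_\bfx$ could sit on the boundary of $\cQ$, and an arbitrarily close empirical distribution may fall outside $\cQ$; moreover if $P_\bfx$ is not fully supported, rounding to a type could push the approximant off the support of $P_\bfx$ and send the divergence to infinity. Both issues are handled by the hypothesis, which lets me first pull the infimizer into the interior and then round, rather than the other way around. Everything else is bookkeeping.
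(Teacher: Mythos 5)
Your argument is the standard method-of-types proof of Sanov's theorem (polynomially many types, per-type probability $\doteq 2^{-n\kl{Q_\bfx}{P_\bfx}}$, union bound for the upper direction, and rounding an interior near-infimizer to a type for the lower direction), and it is correct at the level of rigor intended here; the closure-of-interior hypothesis is indeed exactly what licenses the lower-bound rounding step, provided you round within $\supp(P_\bfx)$ as you indicate. The paper itself states this lemma as a classical preliminary without proof, so there is nothing to diverge from: your proof is the canonical one the paper implicitly relies on.
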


\begin{lemma}[Pinsker]
\label{lem:pinsker}
Let $ P, Q\in\Delta(\cX) $. 
Then $ \kl{P}{Q} \ge\frac{1}{2\ln2}\normone{P-Q}^2 $. 
\end{lemma}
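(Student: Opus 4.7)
The plan is to establish Pinsker's inequality by the standard two-step route: reduce the comparison between arbitrary distributions $P,Q$ on $\cX$ to a comparison between two Bernoulli distributions via the data processing inequality for KL divergence, and then verify the resulting scalar inequality by a one-variable calculus argument.

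For the reduction, I would set $\cA \coloneqq \curbrkt{x \in \cX : P(x) \ge Q(x)}$ and define $p \coloneqq P(\cA)$, $q \coloneqq Q(\cA)$. Since $P$ and $Q$ are both probability distributions, $\sum_{x \in \cA}(P(x) - Q(x)) = \sum_{x \notin \cA}(Q(x) - P(x)) = p - q$, so breaking $\normone{P - Q}$ at $\cA$ gives $\normone{P-Q} = 2(p-q)$. The indicator map $x \mapsto \indicator{x \in \cA}$ is a (deterministic) stochastic kernel that pushes $P$ to $\mathrm{Ber}(p)$ and $Q$ to $\mathrm{Ber}(q)$; by the data processing inequality for KL divergence, $\kl{P}{Q} \ge \kl{\mathrm{Ber}(p)}{\mathrm{Ber}(q)}$. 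It therefore suffices to prove the scalar inequality $\kl{\mathrm{Ber}(p)}{\mathrm{Ber}(q)} \ge \tfrac{2}{\ln 2}(p - q)^2$ for all $p, q \in [0, 1]$.

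For the scalar step, fix $p$ and set $\phi(q) \coloneqq \kl{\mathrm{Ber}(p)}{\mathrm{Ber}(q)} - \tfrac{2}{\ln 2}(p - q)^2$. A short differentiation on $(0, 1)$ yields
\[
\phi'(q) = \frac{q - p}{\ln 2}\left(\frac{1}{q(1 - q)} - 4\right),
\]
and since $q(1-q) \le \tfrac{1}{4}$ throughout $(0, 1)$, the bracketed factor is non-negative; hence $\phi'(q)$ has the same sign as $q - p$. Therefore $\phi$ attains its minimum on $[0, 1]$ at $q = p$, where $\phi(p) = 0$, and so $\phi \ge 0$ everywhere (boundary cases $q \in \curbrkt{0, 1}$ either make the divergence infinite, in which case the bound is trivial, or collapse both sides to zero). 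Chaining the scalar bound with the reduction gives $\kl{P}{Q} \ge \tfrac{2}{\ln 2}(p - q)^2 = \tfrac{1}{2\ln 2}\normone{P - Q}^2$, as required. I do not anticipate a serious obstacle: the only fussy step is the derivative computation, where keeping track of the $\ln 2$ factor (the divergence is in bits) and invoking $q(1-q) \le \tfrac{1}{4}$ is precisely what produces the sharp constant $\tfrac{1}{2\ln 2}$.
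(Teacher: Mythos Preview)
Your argument is correct and is the standard proof of Pinsker's inequality. The paper, however, does not prove this lemma at all: it is listed among the preliminaries as a known fact and is simply cited later (in \Cref{app:unambiguity_pf}) without justification. So there is nothing to compare against; your proposal supplies a complete proof where the paper gives none.
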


\subsection{Oblivious AVCs and list-decoding}
We first formally define the oblivious adversarial channel model concerned with in this paper. 
Oblivious AVCs are communication channels governed by an \emph{oblivious} adversary in the following sense.
The adversary, referred to as James, aims to prevent reliable transmission from Alice to Bob from happening by injecting carefully designed noise to the channel. 
Importantly, we assume that an oblivious adversary does \emph{not} get to see Alice's transmission (though  he \emph{does} know the codebook which is always assumed to be revealed to everyone before communication happens). 
\begin{definition}[Oblivious AVCs]
\label{def:obli_avc}
An oblivious AVC $ \obliavc $ consists of three alphabets $ \cX,\cS,\cY $ for the input, jamming and output sequences, respectively; 
input constraints $ \lambda_\bfx\subseteq\Delta(\cX) $ and state constraints $ \lambda_\bfs\subseteq\Delta(\cS) $; 
and an adversarial channel $ W_{\bfy|\bfx,\bfs} $ from Alice to Bob governed by James.
To avoid peculiar behaviours, we assume that both $ \lambda_\bfx $ and $ \lambda_\bfs $ are convex. 

Codewords of types from $ \lambda_\bfx $ are allowed to be input to the channel. 
Knowing the codebook $ \cC $, Alice's encoder $ \phi $ and Bob's decoder $\psi$ (both of which are formally defined in \Cref{def:enc_list_dec}),  James generates a jamming sequence $ \vbfs $ and sends it through the channel $ W_{\bfy|\bfx,\bfs} $. 
The channel generates $ \vbfy $  in a memoryless manner: $ \prob{\vbfy = \vy\condon\vbfx = \vx,\vbfs = \vs} = W_{\bfy|\bfx,\bfs}^\tn(\vy|\vx,\vs) = \prod_{i = 1}^nW_{\bfy|\bfx,\bfs}(\vy(i)|\vx(i),\vs(i)) $. 
The channel output $ \vbfy $ is then received by Bob. 
See \Cref{fig:diag_obli_listdec} for a block diagram of the oblivious AVC model under list-decoding (see \Cref{def:enc_list_dec} for the definition of list-decodability). 
\end{definition}
\begin{figure}[htbp]
	\centering
	\includegraphics[width=0.7\textwidth]{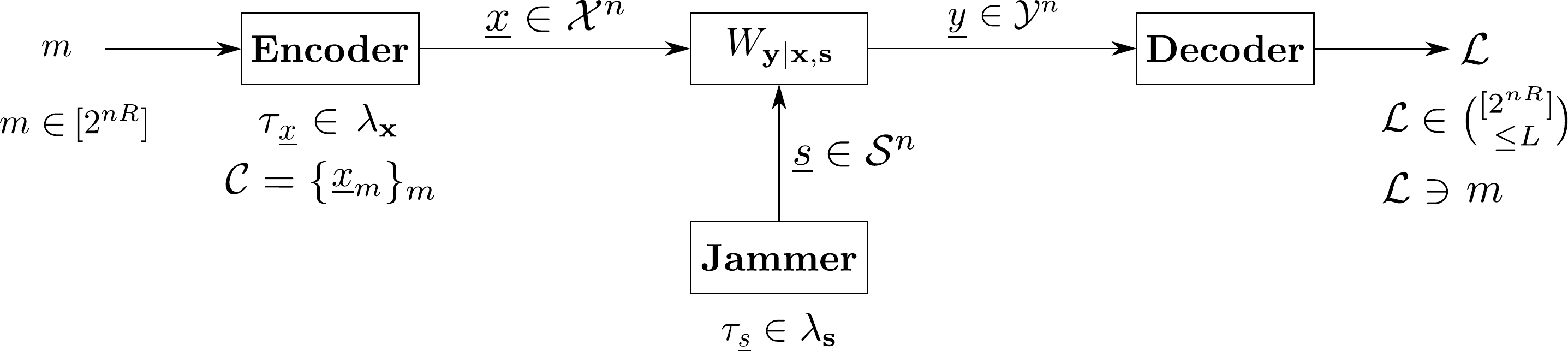}
	\caption{Block diagram of an oblivious AVC under list-decoding.}
	\label{fig:diag_obli_listdec}
\end{figure}
\begin{remark}
Though the channel law is a product distribution, the jamming sequence may not follow a product distribution. 
Indeed, they do not necessarily follow any fixed distribution. 
This makes robust communication against adversaries challenging. 
\end{remark}

\begin{remark}
We have already seen an example of oblivious AVC --- the oblivious bitflip channel --- in \Cref{sec:prior_obli}. 
\end{remark}

\begin{definition}[$L$-list-decodable codes for oblivious AVCs]\label{def:enc_list_dec}
An $L$-list-decodable code $\cC$ for an oblivious AVC $$ \obliavc $$ consists of an encoder $ \phi\colon [M]\to\cX^n $ mapping any $ m\in[M] $ to $ \phi(m) = \vx_m $ satisfying $ \tau_{\vx_m}\in\lambda_\bfx $; and an $L$-list-decoder $ \psi\colon\cY^n\to\binom{[M]}{\le L} $ such that $ \card{\psi(\vy)}\le L $ for all $ \vy\in\cY^n $.
We  call the image of $ \phi $ the \emph{codebook}, denoted by the same symbol $\cC$ (or simply a code, with slight abuse of terminology), i.e., $ \cC\coloneqq\curbrkt{\vx_i}_{i\in[M]} $. 
The length $n$ of each codeword is called the \emph{blocklength} of $\cC$. 
The \emph{rate} of $\cC$ is defined as $ R(\cC)\coloneqq\frac{\log M}{n\log\cardX} $. 
\end{definition}

\begin{remark}
In this paper, when we talk about ``a code'', we always mean a sequence of codes of increasing blocklengths, i.e.,  $ \curbrkt{\cC_{i}}_{i\ge1} $ each of blocklength $ n_i $, where $ n_1<n_2<\cdots\in\bZ_{\ge1} $ is an infinite sequence of increasing integers. 
\end{remark}

\begin{definition}[Average  probability of error]\label{def:avg_error_prob}
The \emph{average probability of $L$-list-decoding error} of a code $ \cC $ when used over an oblivious channel $\cA_\obli $ is defined as 
\begin{align}
P_{\e,\avg}(\cC)\coloneqq& \max_{\vs\in\cS^n} \frac{1}{M} \sum_{i\in[M]} \prob{ \psi(\vbfy)\not\ni i\condon \bfi = i,\vbfs = \vs } = \max_{\vs\in\cS^n}\frac{1}{M}\sum_{i\in[M]} \sum_{\vy\in\cY^n\colon \psi(\vy)\not\ni i} W_{\bfy|\bfx,\bfs}^\tn(\vy|\vx_i,\vs) . \notag 
\end{align}
\end{definition}

\begin{definition}[Achievable rates and $L$-list-decoding capacity]
\label{def:ach_rate_listdec_cap}
A rate $R$ is said to be  \emph{achievable} for an oblivious AVC $ \cA_\obli $ under $L$-list-decoding if there is an infinite sequence of $L$-list-decodable codes $ \curbrkt{\cC_i}_{i\ge1} $ for $ \cA_\obli $ of increasing blocklengths such that $ R(\cC_i)\ge R $ for all $i$ and $ P_{\e,\avg}(\cC_i)\xrightarrow{i\to\infty}0 $. 
The supremum of all achievable rates is called the \emph{$L$-list-decoding capacity} of $\cA_\obli $, denoted by $ C_L(\cA_\obli) $.
\end{definition}

\begin{definition}[List-symmetrizability]
\label{def:list_symm}
For any oblivious AVC $ \obliavc $, define the \emph{list-symmetrizability} of $ \cA_\obli $, denoted by $ L^* $, as the minimum $L$ such that $ C_L(\cA_\obli)>0 $. 
\end{definition}


\subsection{Approximate constant-composition codes}
\begin{definition}[Approximate constant-composition codes]
\label{def:apx_cc_code}
A code $ \cC\subseteq\cX^n $ is said to be $ (\lambda, P_\bfx) $-constant-composition for some $  P_\bfx\in\Delta(\cX) $ if every codeword $ \vx\in\cC $ has type $ \tau_\vx $ satisfying $ d\paren{\tau_{\vx}, P_\bfx}\le\lambda $.
\end{definition}

\begin{definition}[Approximate constant-composition codes with time-sharing]
\label{def:apx_cc_code_time_sharing}
A code $ \cC $ is said to be $ (\lambda,\vu, P_{\bfx|\bfu}) $-constant-composition for  some $ \vu\in\cU^n $ and some $  P_{\bfx|\bfu}\in\Delta(\cX|\cU) $  if every codeword $ \vx\in\cC $ satisfies $ d\paren{\tau_{\vu,\vx} , \tau_{\vu}P_{\bfx|\bfu}}\le\lambda $. 
\end{definition}

\begin{definition}[Quantization/net]\label{def:quant}
Given a metric space $ (\cX,\dist) $ and a constant $ \eta>0 $, an \emph{$ \eta $-net} or an \emph{$\eta$-quantization} $\cN$ of $ \cX $ w.r.t. the metric $\dist$ is a subset $ \cN\subset\cX $ satisfying: for every $ x\in\cX $, there is an $ x'\in\cN $ such that $ \dist(x,x')\le\eta $. 
\end{definition}

Taking a simple coordinate-wise quantization allows us to get an upper bound on the size of a net of a probability simplex. 
A proof of the following lemma can be found in \cite{zbj-2019-generalized-ld}. 
\begin{lemma}[Bounds on the size of nets]\label{lem:quant}
Let $ \cX $ be a finite set.
For any constant $ \eta>0 $, there is an $ \eta $-net of $ (\Delta(\cX), d) $ of size at most $ \ceil{ \frac{\cardX}{2\eta} }^{\cardX} \le\paren{ \frac{\cardX}{2\eta}+1 }^\cardX $. 
\end{lemma}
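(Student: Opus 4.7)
The plan is to build the $\eta$-net explicitly by coordinate-wise quantization of $\Delta(\cX)$ to a rational grid. Set $k \coloneqq \ceil{\cardX/(2\eta)}$ and define
\[
\cN \coloneqq \curbrkt{Q \in \Delta(\cX) \colon kQ(x) \in \bZ_{\ge 0}\text{ for every } x\in\cX},
\]
the set of distributions whose entries are integer multiples of $1/k$. Each $Q\in\cN$ is specified by the tuple $(kQ(x))_{x\in\cX}$ of nonnegative integers summing to $k$, and a routine count of such tuples (e.g.\ stars-and-bars on the constraint $\sum_x kQ(x)=k$) bounds $|\cN|$ by $k^\cardX$, which is the size bound claimed.

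For the covering property I will show that for every $P\in\Delta(\cX)$ there is some $Q\in\cN$ with $d(P,Q)\le\eta$, via a round-then-repair procedure. Set $a_x\coloneqq\lfloor kP(x)\rfloor$ and let $f_x\coloneqq kP(x)-a_x\in[0,1)$ denote the fractional part. Since $\sum_x kP(x)=k$ is an integer and each $a_x$ is an integer, $F\coloneqq\sum_x f_x$ is a nonnegative integer (in fact at most $\cardX-1$). I will ``repair'' by choosing a set $T\subseteq\cX$ of exactly $F$ coordinates and defining $kQ(x)\coloneqq a_x+\one_T(x)$; then $\sum_x kQ(x)=k$, so that $Q\in\cN$. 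The critical choice is to take $T$ to be the $F$ coordinates with the \emph{largest} fractional parts $f_x$.

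The main work, and the only real obstacle, is the sharp $\ell^1$ error analysis. Using $|T|=F=\sum_x f_x$, one computes
\[
\sum_x\abs{kP(x)-kQ(x)}=\sum_{x\in T}(1-f_x)+\sum_{x\notin T}f_x=2\sum_{x\notin T}f_x.
\]
With $T$ chosen greedily to take the largest $f_x$, the right-hand side is extremised when all $f_x$ are equal to $F/\cardX$, which yields $\sum_{x\notin T}f_x\le(\cardX-F)F/\cardX\le\cardX/4$. Therefore $d(P,Q)\le\cardX/(2k)\le\eta$ by the choice of $k$, so $\cN$ is an $\eta$-net. A naive per-coordinate round-to-nearest would cost a factor of two, so routing the repair mass to the coordinates with the largest $f_x$ is what recovers the sharp $\ceil{\cardX/(2\eta)}$ denominator in the bound.
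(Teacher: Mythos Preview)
Your approach is exactly the ``simple coordinate-wise quantization'' the paper indicates (the paper itself defers the proof to \cite{zbj-2019-generalized-ld}). The covering argument is correct and sharp: the identity $\sum_x|kP(x)-kQ(x)|=2\sum_{x\notin T}f_x$ together with the greedy choice of $T$ gives $d(P,Q)\le|\cX|/(2k)\le\eta$ as claimed.

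There is one small gap in the size bound. You assert that stars-and-bars ``bounds $|\cN|$ by $k^{|\cX|}$'', but stars-and-bars gives $\binom{k+|\cX|-1}{|\cX|-1}$, and this exceeds $k^{|\cX|}$ when $k=1$ and $|\cX|\ge2$ (e.g.\ $|\cX|=2$ gives $2>1$). For $k\ge2$ the inequality $\binom{k+|\cX|-1}{|\cX|-1}\le k^{|\cX|}$ does hold --- check $|\cX|=1,2$ directly and induct on $|\cX|\ge2$ using $(k-1)(|\cX|-1)\ge1$ --- but you should supply that argument. The residual case $k=1$ (equivalently $\eta\ge|\cX|/2$) is trivial: the singleton $\{\mathrm{unif}\}$ is already an $\eta$-net since $\max_P d(P,\mathrm{unif})=2(1-1/|\cX|)\le|\cX|/2\le\eta$, the first inequality being $(|\cX|-2)^2\ge0$.
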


A straightforward quantization argument allows us to reduce a general code to an approximate constant-composition code with only a constant multiplicative factor loss in the code size. 
\begin{lemma}[Approximate constant-composition reduction]\label{lem:apx_cc_reduction}
For any $ \lambda\in(0,1) $,  any code $ \cC\subseteq\cX^n $ contains a $ (\lambda,P_\bfx) $-constant-composition subcode $ \cC'\subseteq\cC $ for some $ P_\bfx\in\Delta(\cX) $ of size at least $ \cardC/N $, where $ N \le \paren{\frac{\cardX}{2\lambda} + 1}^{\cardX} $.
In particular, $ R(\cC)\stackrel{n\to\infty}{\asymp}R(\cC') $. 
\end{lemma}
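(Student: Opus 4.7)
The plan is a direct pigeonhole argument using the quantization bound of \Cref{lem:quant}. First I would invoke \Cref{lem:quant} with parameter $\eta = \lambda$ to obtain an $\lambda$-net $\cN \subseteq \Delta(\cX)$ of the probability simplex of size $|\cN| \leq N$, where $N = \paren{\cardX/(2\lambda) + 1}^{\cardX}$.

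Next I would define a partition of $\cC$ indexed by $\cN$. For each codeword $\vx \in \cC$, its empirical type $\tau_\vx$ lies in $\Delta(\cX)$, so by the defining property of a net there exists some $P_\bfx^{(\vx)} \in \cN$ with $d(\tau_\vx, P_\bfx^{(\vx)}) \leq \lambda$. Breaking ties arbitrarily (say, lexicographically on $\cN$), I assign each $\vx$ to exactly one element of $\cN$. This yields a partition $\cC = \bigsqcup_{P \in \cN} \cC_P$ into at most $N$ classes, where $\cC_P$ consists of all codewords assigned to $P$. By pigeonhole, at least one class $\cC' \coloneqq \cC_{P_\bfx}$ satisfies $|\cC'| \geq |\cC|/N$. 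Every codeword $\vx \in \cC'$ satisfies $d(\tau_\vx, P_\bfx) \leq \lambda$ by construction, so $\cC'$ is $(\lambda, P_\bfx)$-constant-composition in the sense of \Cref{def:apx_cc_code}.

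For the asymptotic rate statement, note that $N$ depends only on $\cardX$ and $\lambda$, both of which are held fixed as $n \to \infty$. From $|\cC'| \geq |\cC|/N$ we obtain
\begin{align}
R(\cC') = \frac{\log|\cC'|}{n \log\cardX} \geq \frac{\log|\cC| - \log N}{n \log\cardX} = R(\cC) - \frac{\log N}{n \log\cardX}. \notag
\end{align}
Since $\cC' \subseteq \cC$ also gives $R(\cC') \leq R(\cC)$, and the additive gap $\log N /(n \log\cardX) = o(1)$, we conclude that $R(\cC) \asymp R(\cC')$ as $n \to \infty$ whenever $R(\cC)$ is bounded away from zero (which is the regime of interest for codes of positive rate).

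The argument contains no real obstacle; the only modestly subtle point is choosing the correct net granularity so that the guaranteed distance from $\tau_\vx$ to its assigned net point is at most $\lambda$ rather than $2\lambda$ or the like, which is why I apply \Cref{lem:quant} with $\eta = \lambda$ directly rather than with a smaller parameter. Everything else is bookkeeping: a pigeonhole on $N$ classes and an elementary estimate showing that a multiplicative loss by a constant factor in code size is asymptotically negligible at the level of rates.
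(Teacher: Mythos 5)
Your proof is correct and follows exactly the argument the paper intends: the lemma is stated as following from a ``straightforward quantization argument'' via \Cref{lem:quant}, and your $\lambda$-net plus pigeonhole partition, with the $o(1)$ rate-loss bookkeeping, is precisely that argument. Your explicit caveat that the asymptotic equivalence $R(\cC)\asymp R(\cC')$ requires the rate not to vanish too fast is a reasonable (and slightly more careful) reading of the implicit positive-rate regime in the paper.
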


\subsection{Information measures and method of types}
\begin{definition}[Information measures]
\label{def:info_mes}
Let $ \cX,\cY $ be two finite sets and $ P_\bfx\in\Delta(\cX) $. 
The \emph{Shannon entropy} of $ P_\bfx $ is defined as $ H(P_\bfx)\coloneqq\sum_{x\in\cX}P_\bfx(x)\log\frac{1}{P_\bfx(x)} $. 
It is alternatively written as $ H(\bfx) $ where $\bfx$ is an $ \cX $-valued random variable whose p.m.f. is $ P_\bfx $. 

Let $ \bfy $ be a $\cY$-valued random variable with joint p.m.f. $ P_{\bfx,\bfy} $ with $\bfx$. 
The \emph{conditional entropy} of $\bfx$ given $\bfy$ is defined as $ H(\bfx|\bfy) \coloneqq\sum_{(x,y)\in\cX\times\cY}P_{\bfx,\bfy}(x,y)\log\frac{1}{P_{\bfx|\bfy}(x|y)} $. 
The \emph{mutual information} between $ \bfx $ and $ \bfy $ is defined as $ I(\bfx;\bfy) \coloneqq\sum_{(x,y)\in\cX\times\cY}P_{\bfx,\bfy}(x,y)\log\frac{P_{\bfx,\bfy}(x,y)}{P_\bfx(x)P_\bfy(y)} $. 

If $\bfx$ and $\bfy$ together with some $ \cZ $-valued random variable $\bfz$ have joint distribution $ P_{\bfx,\bfy,\bfz} $, then the \emph{conditional mutual information} between $ \bfx $ and $ \bfy $ given $ \bfz $ is defined as $$ I(\bfx;\bfy|\bfz)\coloneqq\sum_{z\in\cZ}P_\bfz(z)\sum_{(x,y)\in\cX\times\cY} P_{\bfx,\bfy|\bfz}(x,y|z)\log\frac{P_{\bfx,\bfy|\bfz}(x,y|z)}{P_{\bfx|\bfz}(x|z)P_{\bfy|\bfz}(y|z)} .$$ 

If $ P,Q\in\Delta(\cX) $ and $P$ is absolutely continuous w.r.t. $Q$ (i.e., $ \supp(P)\subseteq\supp(Q) $), then the \emph{Kullback--Leibler (KL) divergence} between $P$ and $Q$ is defined as $ \kl{P}{Q}\coloneqq\sum_{x\in\cX}P(x)\log\frac{P(x)}{Q(x)} $. 
\end{definition}

Readers who are not familiar with the basics of information measures are encouraged to refer to, e.g., \cite{cover-thomas}. 
We list below several basic and well-known properties of information measures that will be frequently used throughout the paper.
\begin{lemma}[Properties of information measures]
\label{lem:properties_info_mes}
The information measures defined in \Cref{def:info_mes} satisfy the following properties. 
\begin{enumerate}
	\item Entropy, mutual information, KL divergence and their conditional versions are all nonnegative. 
	\item Conditioning reduces entropy: $ H(\bfx|\bfy)\le H(\bfx) $. 
	\item Alternative definitions: $ I(\bfx;\bfy|\bfz) = H(\bfx|\bfz) - H(\bfx|\bfz,\bfy) $; $ I(\bfx;\bfy) = \kl{P_{\bfx,\bfy}}{P_\bfx P_\bfy} $. 
	\item Chain rule: $ I(\bfx;\bfy_1,\bfy_2|\bfz) = I(\bfx;\bfy_1|\bfz) + I(\bfx;\bfy_2|\bfz,\bfy_1) $. 
\end{enumerate}
\end{lemma}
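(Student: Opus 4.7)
The plan is to prove each of the four items directly from the definitions in \Cref{def:info_mes}, using Jensen's inequality (equivalently the log-sum inequality) for every nonnegativity claim and only routine algebra for every identity. I would order the arguments so the identities in (3) are derived first, since items (1), (2) and (4) all reuse them.

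For (3), I would expand $H(\bfx|\bfz) - H(\bfx|\bfz,\bfy)$ using the factorization $P_{\bfx,\bfy,\bfz} = P_\bfz P_{\bfy|\bfz} P_{\bfx|\bfz,\bfy}$ and regroup the logarithm as $\log\frac{P_{\bfx|\bfz,\bfy}(x|z,y)}{P_{\bfx|\bfz}(x|z)} = \log\frac{P_{\bfx,\bfy|\bfz}(x,y|z)}{P_{\bfx|\bfz}(x|z)P_{\bfy|\bfz}(y|z)}$, which matches the definition of $I(\bfx;\bfy|\bfz)$; the unconditional identity $I(\bfx;\bfy) = \kl{P_{\bfx,\bfy}}{P_\bfx P_\bfy}$ is immediate on inspection of the two defining sums. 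For (1), nonnegativity of $H(P_\bfx)$ is term-by-term since $P_\bfx(x)\in[0,1]$, and conditional entropy inherits nonnegativity by being an average of entropies. For KL divergence, under the absolute continuity hypothesis $\supp(P)\subseteq\supp(Q)$, Jensen applied to the convex function $-\log$ gives
\begin{align}
\kl{P}{Q} = \sum_{x}P(x)\log\tfrac{P(x)}{Q(x)} \geq -\log\sum_{x}Q(x) = 0. \notag
\end{align}
Nonnegativity of $I(\bfx;\bfy)$ then follows from the identity proven in (3), and the conditional versions follow by averaging pointwise inequalities over the conditioning variable.

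For (2), the identity $H(\bfx) - H(\bfx|\bfy) = I(\bfx;\bfy)$ (a special case of the first half of (3) with $\bfz$ trivial) combined with $I(\bfx;\bfy)\geq 0$ from (1) yields the claim. For (4), I would apply a symmetric rewriting of (3), namely $I(\bfx;\bfu|\bfz) = H(\bfu|\bfz) - H(\bfu|\bfz,\bfx)$, to $\bfu = (\bfy_1,\bfy_2)$, and then invoke the entropy chain rule $H(\bfy_1,\bfy_2|\bfz) = H(\bfy_1|\bfz) + H(\bfy_2|\bfz,\bfy_1)$ (immediate from $P_{\bfy_1,\bfy_2|\bfz} = P_{\bfy_1|\bfz}P_{\bfy_2|\bfz,\bfy_1}$) both with and without the extra conditioning on $\bfx$; subtracting recovers $I(\bfx;\bfy_1|\bfz) + I(\bfx;\bfy_2|\bfz,\bfy_1)$.

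There is no real obstacle here, since every item is a textbook fact. The only points that require care are the convention $0\log 0 = 0$ when summing over zero-probability atoms, and the absolute-continuity hypothesis needed to invoke Jensen for KL divergence; both are standard and already implicit in \Cref{def:info_mes}.
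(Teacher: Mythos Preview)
Your proof plan is correct and follows the standard textbook route (Jensen/log-sum for nonnegativity, algebraic regrouping for the identities, entropy chain rule for item (4)). Note, however, that the paper does not actually prove \Cref{lem:properties_info_mes}: it simply states the properties as well-known and refers the reader to \cite{cover-thomas}. So there is nothing to compare against beyond observing that your argument is exactly the one found in that reference.
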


\begin{definition}[Types, joint types and conditional types]
\label{def:type}
Let $ \cX $ be a finite set and $ n\in\bZ_{\ge1} $. 
The \emph{type} of a vector $ \vx\in\cX^n $, denoted by $ \tau_\vx\in\Delta(\cX) $, is the empirical distribution/histogram of $\vx$ defined as: for every $ x\in\cX $, $ \tau_\vx(x) = \frac{1}{n}\card{\curbrkt{i\in[n]\colon \vx(i) = x}} $. 
The set of all types of $ \cX^n $-valued vectors is denoted by $ \Delta^{(n)}(\cX) $. 
Let $ \cY $ be another finite set and $ \vy\in\cY^n $. 
The \emph{joint type} $ \tau_{\vx,\vy} $ (and $ \Delta^{(n)}(\cX\times\cY) $ correspondingly) and the \emph{conditional type} $ \tau_{\vx|\vy} $ (and $ \Delta^{(n)}(\cX|\cY) $ correspondingly) are defined in a similar manner. 
Furthermore, these definitions can be extended to tuples of vectors in the canonical way.
\end{definition}

\begin{lemma}[Size of typical sets]\label{lem:aep}
Let $ \vx\in\cX^n $ and let $ P_{\bfy,\bfx}\in\Delta(\cY\times\cX) $ be such that $ \tau_\vx = \sqrbrkt{P_{\bfy,\bfx}}_\bfx $. 
Define the {$\eps$-conditionally typical set} $ \cA_{\vbfy|\vx}^\eps(P_{\bfy,\bfx})  $ of $ \cY^n $-valued sequences given $ \vx $ w.r.t. $ P_{\bfy,\bfx} $ as 
\begin{align}
\cA_{\vbfy|\vx}^\eps(P_{\bfy,\bfx}) \coloneqq\curbrkt{\vy\in\cY^n \colon \forall(y,x)\in\cY\times\cX,\; \frac{\tau_{\vy,\vx}(y,x)}{P_{\bfy,\bfx}(y,x)} \in[1-\eps,1+\eps] }. \notag 
\end{align}
Then  $ 2^{n\paren{H(\bfy|\bfx) - f(\eps)}}\le \card{\cA_{\vbfy|\vx}^\eps(P_{\bfy,\bfx})} \le 2^{n\paren{H(\bfy|\bfx) + f(\eps)}} $
for some constant $ f(\eps)>0 $ such that $ f(\eps)\xrightarrow{\eps\to0}0 $. 
\end{lemma}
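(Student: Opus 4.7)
The plan is to apply the standard method of types by partitioning $\cA_{\vbfy|\vx}^\eps(P_{\bfy,\bfx})$ into conditional type classes given $\vx$. For each $V_{\bfy|\bfx}\in\Delta^{(n)}(\cY|\cX)$ compatible with $\tau_\vx$, let $\cT_V(\vx)\coloneqq\curbrkt{\vy\in\cY^n\colon \tau_{\vy,\vx}(y,x) = \tau_\vx(x) V_{\bfy|\bfx}(y|x)\;\forall (x,y)}$ denote the $V$-shell. Then $\cA_{\vbfy|\vx}^\eps(P_{\bfy,\bfx}) = \bigcup_{V\in\cV_\eps}\cT_V(\vx)$, where $\cV_\eps$ is the collection of all admissible conditional types satisfying $\tau_\vx(x)V_{\bfy|\bfx}(y|x)\in[(1-\eps)P_{\bfy,\bfx}(y,x),(1+\eps)P_{\bfy,\bfx}(y,x)]$ for every $(y,x)$.

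For any single $V\in\cV_\eps$, counting the sequences in $\cT_V(\vx)$ amounts to evaluating a product of multinomial coefficients, one for each $x\in\cX$, and a standard Stirling estimate yields $(n+1)^{-\cardX\cardY}\,2^{nH_V(\bfy|\bfx)}\le \card{\cT_V(\vx)}\le 2^{nH_V(\bfy|\bfx)}$, where $H_V(\bfy|\bfx)\coloneqq\sum_{x\in\cX}\tau_\vx(x)H(V_{\bfy|\bfx}(\cdot|x))$ is the conditional entropy under the joint distribution $\tau_\vx V_{\bfy|\bfx}$. Since any $V\in\cV_\eps$ makes $\tau_\vx V_{\bfy|\bfx}$ lie within $\ell^1$-distance $O(\eps)$ of $P_{\bfy,\bfx}$, standard continuity of Shannon entropy (via \Cref{lem:pinsker} combined with a routine Fannes-type bound) gives $\abs{H_V(\bfy|\bfx) - H(\bfy|\bfx)}\le g(\eps)$ uniformly over $V\in\cV_\eps$, for some function $g(\eps)$ that tends to $0$ as $\eps\to 0$.

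For the upper bound, I would sum the shell sizes over $V\in\cV_\eps\subseteq\Delta^{(n)}(\cY|\cX)$ and use the polynomial count $\card{\Delta^{(n)}(\cY|\cX)}\le (n+1)^{\cardX\cardY}$ to obtain $\card{\cA_{\vbfy|\vx}^\eps(P_{\bfy,\bfx})}\le (n+1)^{\cardX\cardY}\cdot 2^{n(H(\bfy|\bfx)+g(\eps))}\le 2^{n(H(\bfy|\bfx)+f(\eps))}$ for $n$ larger than a threshold depending on $\eps$, the subexponential polynomial factor being absorbed into a slightly larger $f(\eps)$ that still vanishes as $\eps\to 0$. For the lower bound, it suffices to exhibit a single $V^\ast\in\cV_\eps$ of near-maximal entropy: I would take $V^\ast$ to be the nearest admissible conditional type to $P_{\bfy|\bfx}$ in $\ell^1$. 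By \Cref{lem:quant}, $\Delta^{(n)}(\cY|\cX)$ forms an $O(1/n)$-net of $\Delta(\cY|\cX)$, so this $V^\ast$ lies in $\cV_\eps$ for all sufficiently large $n$, whence $\card{\cA_{\vbfy|\vx}^\eps(P_{\bfy,\bfx})}\ge \card{\cT_{V^\ast}(\vx)}\ge 2^{n(H(\bfy|\bfx)-f(\eps))}$.

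The only subtlety worth flagging is the zero-support edge case: if $P_{\bfy,\bfx}(y,x) = 0$ for some pair, the defining ratio is indeterminate, and the standard convention is to read the constraint as $\tau_{\vy,\vx}(y,x) = 0$ for such pairs. This must be tracked carefully when invoking continuity of entropy but causes no genuine obstruction. Beyond this, the argument is entirely routine method-of-types bookkeeping, and I anticipate no substantive technical hurdles.
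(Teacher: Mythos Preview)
Your proposal is correct and follows the standard method-of-types argument. The paper itself does not provide a proof of this lemma; it is stated as a preliminary fact (alongside Sanov's theorem, Pinsker's inequality, etc.) and treated as a well-known result from, e.g., \cite{cover-thomas} or Csisz\'ar--K\"orner. Your sketch is exactly the textbook derivation one would supply if asked to fill in the details, including the correct handling of the zero-support convention.
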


\begin{lemma}\label{lem:prob_typ}
Fix a channel $ W_{\bfy|\bfx}\in\Delta(\cY|\cX) $. 
Let $ P_{\bfx,\bfs,\bfy}\in\Delta(\cX\times\cS\times\cY) $ and $ P_{\bfx,\bfs}\coloneqq\sqrbrkt{P_{\bfx,\bfs,\bfy}}_{\bfx,\bfs} $. 
Then for any $ (\vx,\vs)\in\cX^n\times\cS^n $ such that $ \tau_{\vx,\vs} = P_{\bfx,\bfs} $, 
\begin{align}
\sum_{\vy\in\cY^n\colon \tau_{\vx,\vs,\vy} = P_{\bfx,\bfs,\bfy}}W_{\bfy|\bfx}(\vy|\vx) \le 2^{-n\kl{P_{\bfx,\bfs,\bfy}}{{P_{\bfx,\bfs}W_{\bfy|\bfx}}}}. \notag 
\end{align}
\end{lemma}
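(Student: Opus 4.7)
The plan is a direct method-of-types calculation, treating the sum as (number of $\vy$'s of the required joint type given $\vx,\vs$) times (the common probability of any such $\vy$ under the memoryless channel $W_{\bfy|\bfx}^{\otimes n}$). The channel probability depends on $\vy$ only through its joint type with $\vx$, so all $\vy$ in the sum contribute the same weight; once the size of the conditional type class is bounded, the rest is an algebraic identity relating conditional entropy to the KL divergence appearing on the right-hand side.

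Concretely, I would first observe that for any $\vy$ satisfying $\tau_{\vx,\vs,\vy} = P_{\bfx,\bfs,\bfy}$, the induced pair-type satisfies $\tau_{\vx,\vy} = P_{\bfx,\bfy} \coloneqq [P_{\bfx,\bfs,\bfy}]_{\bfx,\bfy}$. Then
\begin{align*}
W_{\bfy|\bfx}^{\otimes n}(\vy|\vx) \;=\; \prod_{(x,y) \in \cX\times\cY} W_{\bfy|\bfx}(y|x)^{\,n\,P_{\bfx,\bfy}(x,y)} \;=\; 2^{\,n\sum_{x,y} P_{\bfx,\bfy}(x,y)\log W_{\bfy|\bfx}(y|x)}.
\end{align*}
Next, I would bound the number of $\vy \in \cY^n$ of the prescribed conditional type given $(\vx,\vs)$ by $2^{nH(\bfy|\bfx,\bfs)}$, where the conditional entropy is computed under $P_{\bfx,\bfs,\bfy}$. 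This is exactly the conditional type-class bound supplied by \Cref{lem:aep} (applied with the ``conditioning vector'' being the concatenation $(\vx,\vs)$ of type $P_{\bfx,\bfs}$, which is compatible with $P_{\bfx,\bfs,\bfy}$ by hypothesis).

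Multiplying the two factors gives the exponent $n\bigl[H(\bfy|\bfx,\bfs) + \sum_{x,y} P_{\bfx,\bfy}(x,y)\log W_{\bfy|\bfx}(y|x)\bigr]$, and the final step is to recognize this quantity as $-n\kl{P_{\bfx,\bfs,\bfy}}{P_{\bfx,\bfs} W_{\bfy|\bfx}}$. Indeed, expanding the divergence,
\begin{align*}
\kl{P_{\bfx,\bfs,\bfy}}{P_{\bfx,\bfs} W_{\bfy|\bfx}}
\;=\; \sum_{x,s,y} P_{\bfx,\bfs,\bfy}(x,s,y) \log \frac{P_{\bfy|\bfx,\bfs}(y|x,s)}{W_{\bfy|\bfx}(y|x)}
\;=\; -H(\bfy|\bfx,\bfs) - \sum_{x,y} P_{\bfx,\bfy}(x,y) \log W_{\bfy|\bfx}(y|x),
\end{align*}
where in the last step I used $\sum_{s} P_{\bfx,\bfs,\bfy}(x,s,y) = P_{\bfx,\bfy}(x,y)$ since $W_{\bfy|\bfx}(y|x)$ does not depend on $s$. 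This identity closes the argument.

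There is no real obstacle here: the only subtlety worth noting is the mismatch between the channel (conditioned only on $\bfx$) and the target joint type (which also involves $\bfs$), but this is precisely what makes the KL divergence nonnegative and the right-hand side a meaningful bound; the ``extra'' variable $\bfs$ is absorbed by marginalizing $P_{\bfx,\bfs,\bfy}$ down to $P_{\bfx,\bfy}$ in the log-probability computation. All steps are standard and the bound is tight up to polynomial factors in $n$, which is exactly the flavour of estimate needed in the error-exponent computations that follow.
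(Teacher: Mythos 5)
Your proof is correct and is exactly the standard method-of-types argument that this lemma rests on (the paper states it without proof as a preliminary fact): fix the joint type, note the channel probability depends on $\vy$ only through $\tau_{\vx,\vy}$, bound the conditional type class by $2^{nH(\bfy|\bfx,\bfs)}$, and identify the resulting exponent with $-n\kl{P_{\bfx,\bfs,\bfy}}{P_{\bfx,\bfs}W_{\bfy|\bfx}}$. The only cosmetic point is that the cleanest citation for the counting step is the exact conditional type-class bound $\card{\curbrkt{\vy\colon\tau_{\vx,\vs,\vy}=P_{\bfx,\bfs,\bfy}}}\le 2^{nH(\bfy|\bfx,\bfs)}$ (with no slack term) rather than the $\eps$-typical-set version of \Cref{lem:aep}, though your bound also follows from the latter by letting $\eps\to0$ at fixed $n$.
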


\section{$\cp$-symmetrizability: a refined notion of symmetrizability}
\label{sec:obli_avc_and_cp_symm}
In this section, we introduce the notion of \emph{$\cp$-symmetrizability}  which is one of the core definitions of this work. 

Consider an oblivious AVC $ \obliavc $. 
Without loss of generality, we assume that $ \lambda_\bfx\subseteq\Delta(\cX) $ and $ \lambda_\bfx\subseteq\Delta(\cS) $ are convex polytopes defined by a finite number of linear inequalities and/or equalities, respectively.
Specifically, the \emph{input constraints} are given by
\begin{align}
\lambda_\bfx \coloneqq& \curbrkt{ P_\bfx\in\Delta(\cX) \colon \forall j\le\alpha,\;\sum_{x\in\cX} A_j(x)P_\bfx(x)\le \Gamma_j } = \curbrkt{ P_\bfx\in\Delta(\cX)\colon \bfA P_\bfx\le\vGamma } , \notag
\end{align}
where $\bfA\in\bR^{\alpha\times|\cX|} $ is a matrix whose $(j,x)$-th entry is $ A_{j}(x) $ and $\vGamma\in\bR^{\alpha} $ is a vector whose $j$-th component is $\Gamma_j $.
Similarly, the \emph{state constraints} are
\begin{align}
\lambda_\bfs \coloneqq& \curbrkt{ P_\bfs\in\Delta(\cS)\colon \forall j\le\beta,\;\sum_{s\in\cS} B_j(s)P_\bfs(s)\le \Lambda_j } = \curbrkt{ P_\bfs\in\Delta(\cS)\colon \bfB P_\bfs\le\vLambda } ,\notag
\end{align}
where $\bfB\in\bR^{\beta\times|\cS|} $ is a matrix whose $(j,s)$-th entry is $ B_{j}(s) $  and $\vLambda\in\bR^{\beta} $ is a vector whose $j$-th component is $\Lambda_j $.
Define the corresponding sets of length-$n$ $ \cX $-/$\cS$-sequences satisfying constraints $ \lambda_\bfx $/$\lambda_\bfs $ as
\begin{align}
\Lambda_\bfx\coloneqq&\curbrkt{ \vx\in\cX^n\colon \tau_\vx \in \lambda_\bfx }, \quad \Lambda_\bfs \coloneqq\curbrkt{\vs\in\cS^n\colon \tau_\vs\in\lambda_\bfs}. \notag 
\end{align}

To define $\cp$-symmetrizability, we need several preliminary definitions.

\begin{definition}[Obliviously $L$-symmetrizing distributions]
\label{def:def_symm_distr} 
Fix $ L\in\bZ_{\ge1} $.
Define the set of \emph{obliviously symmetrizing distributions} as 
\begin{align}
\cU_{\obli,\symml{L}} \coloneqq& \curbrkt{ U_{\bfs|\bfu,\bfx_{[L]}}\in\Delta(\cS|\cU\times\cX^L) \colon 
\begin{array}{rl}
&\forall u\in\cU, x_0\in\cX, x_{[L]}\in\cX^L, y\in\cY,\pi\in S_{ L+1 } , \\
&\displaystyle
\sum_{s\in\cS} W_{\bfy|\bfx,\bfs}(y|x_0,s)U_{\bfs|\bfu,\bfx_{[L]}}(s|u,x_{[L]}) \\
=&\displaystyle \sum_{s\in\cS} W_{\bfy|\bfx,\bfs}(y|x_{\pi(0)},s)U_{\bfs|\bfu,\bfx_{[L]}}(s|u,x_{\pi({[L]})})
\end{array}
}. \notag 
\end{align}
\end{definition}

\begin{remark}
For technical reasons that will be clear momentarily, the symmetrizing distributions we need take the slightly more complicated form of $ U_{\bfs|\bfu,\bfx_1,\cdots,\bfx_L} $  than $ U_{\bfs|\bfx_1,\cdots,\bfx_L} $ mentioned in the overview section (\Cref{sec:overview_results_techniques}). 
One should think of $ \bfu $ as a time-sharing variable used by Alice as part of the code design.
We assume that the coding scheme, in particular $\bfu$, is also known to James.
This is the reason why we allow the symmetrizing distribution to be conditioned on $\bfu$. 
\end{remark}

\begin{definition}[Self-couplings]
\label{def:def_self_coupling} 
Let $ P_\bfx\in\Delta(\cX) $ and $ L\in\bZ_{\ge2} $.  
Define the set of order-$L$ \emph{$ P_\bfx $-self-couplings} over $\cX$ as 
\begin{align}
\cJ^\tl(P_\bfx)\coloneqq& \curbrkt{ P_{\bfx_1,\cdots,\bfx_L}\in\Delta(\cX^L)\colon \forall i\in[L],\;\sqrbrkt{P_{\bfx_1,\cdots,\bfx_L}}_{\bfx_i} = P_\bfx }. \notag 
\end{align}
\end{definition}

\begin{definition}[Complete positivity ($\cp$)]\label{def:cp}
Let $ P_\bfx\in\Delta(\cX) $ and  $ L\in\bZ_{\ge2} $.
A distribution $ P_{\bfx_1,\cdots,\bfx_L}\in\cJ^\tl(P_\bfx) $ is called \emph{$ P_\bfx $-completely positive} ($\cp$) if it can be written as a convex combination of tensor products of identical distributions.
That is, there exist $ k\in\bZ_{>0} $, $ \lambda_1,\cdots,\lambda_k\in[0,1] $ satisfying $ \sum_{i=1}^k\lambda_i = 1 $, and $ P_{\bfx_1},\cdots,P_{\bfx_k}\in\Delta(\cX) $ such that
\begin{align}
P_{\bfx_1,\cdots,\bfx_L} =& \sum_{i = 1}^k\lambda_iP_{\bfx_i}^\tl. \label{eqn:cp_def_conv_comb}
\end{align}
Or equivalently, there exist $ k\in\bZ_{>0} $, a time-sharing variable $ \bfu\in\Delta(\cU)  $ (where $ \cU\coloneqq [k] $) and a conditional distribution $ P_{\bfx|\bfu}\in\Delta(\cX|\cU) $ such that
\begin{align}
P_{\bfx_1,\cdots,\bfx_L} = & \sqrbrkt{P_\bfu P_{\bfx|\bfu}^\tl}_{\bfx_1,\cdots,\bfx_L}. \label{eqn:cp_def_time_shar}
\end{align}
The set of all order-$L$ $ P_\bfx $-completely positive distributions  is denoted by $ \cp^\tl(P_\bfx) $. 
For any $ P_{\bfx_1,\cdots,\bfx_L}\in\cp^\tl(P_\bfx) $, we call a decomposition of the form \Cref{eqn:cp_def_conv_comb} or \Cref{eqn:cp_def_time_shar} a \emph{$\cp$-decomposition}. 
\end{definition}
\begin{remark}
Since a $ P_\bfx $-$\cp$-distribution should be a $ P_\bfx $-self-coupling in the first place, any $ (P_\bfu, P_{\bfx|\bfu}) $ given by a $\cp$-decomposition satisfies $ \sqrbrkt{P_\bfu P_{\bfx|\bfu}}_\bfx = P_\bfx $. 
\end{remark}
\begin{remark}
\label{rk:cp-decomp-non-unique}
The $\cp$-decomposition of a $\cp$-distribution is not necessarily unique. 
In particular, the number $k$ of components in a $\cp$-decomposition (\Cref{eqn:cp_def_conv_comb}) of $ P_{\bfx_1,\cdots,\bfx_L}\in\cp^\tl(P_\bfx) $ is  not necessarily unique.
Among all $\cp$-decompositions, the smallest $k =|\cU| $ given by a $ \cp $-decomposition is called the \emph{$\cp$-rank}, denoted by $ \cprk $, of $ P_{\bfx_1,\cdots,\bfx_k}\in\cp^\tl(P_\bfx) $.
In fact, even if  a $\cp$-decomposition is required to contain the same number of terms as  $\cp$-rank, it still may not be  unique. 
\end{remark}

\begin{definition}[$\cp$-symmetrizability]
\label{def:cp_symm}
Fix an oblivious AVC $ \obliavc $. 
For $ L\in\bZ_{\ge1} $, an input distribution $ P_\bfx\in\lambda_\bfx $ is said to be \emph{obliviously $\cp$-$L$-symmetrizable} (or $L$-symmetrizable for short) if for every $ P_{\bfx_1,\cdots,\bfx_L}\in\cp^\tl(P_\bfx) $ and every its $\cp$-decomposition $ P_{\bfx_1,\cdots,\bfx_L} = \sqrbrkt{P_\bfu P_{\bfx|\bfu}}^\tl $ where $ P_\bfu\in\Delta(\cU),P_{\bfx|\bfu}\in\Delta(\cX|\cU) $, there is a symmetrizing distribution $ U_{\bfs|\bfu,\bfx_1,\cdots,\bfx_L}\in\cU_{\obli,\symml{L}} $ such that 
\begin{align}
P_\bfs = \sqrbrkt{ P_\bfu P_{\bfx|\bfu}^\tl U_{\bfs|\bfu,\bfx_1,\cdots,\bfx_L} }_\bfs \in \lambda_\bfs,\label{eqn:l_symm_cond1} 
\end{align}
or equivalently, for each $ i\in[\beta] $, the ``jamming cost'' does not exceed the $i$-th constraint, i.e., 
\begin{align}
\cost_i(P_{\bfu,\bfx_{[L]}}, U_{\bfs|\bfu,\bfx_{[L]}}) = \cost_i((P_\bfu,P_{\bfx|\bfu}),U_{\bfs|\bfu,\bfx_{[L]}}) \coloneqq& \sum_{(u,x_{[L]},s)\in\cU\times\cX^L\times\cS}P_{\bfu,\bfx_{[L]}}(u,x_{[L]})U_{\bfs|\bfu,\bfx_{[L]}}(s|u,x_{[L]}) B_i(s) \le \Lambda_i. \label{eqn:l_symm_cond2} 
\end{align}
For an input distribution $ P_\bfx\in\lambda_\bfx $, define the \emph{oblivious $ P_\bfx $-$\cp$-symmetrizability} (or $ P_\bfx $-symmetrizability for short) of $ \cA_\obli $ as $$ L_\cp^*(P_\bfx) \coloneqq \max\curbrkt{ L\in\bZ_{\ge1}\colon P_\bfx\text{ is obliviously $\cp$-$L$-symmetrizable} }. $$
Define the \emph{oblivious $\cp$-symmetrizability} of $ \cA_\obli $ as $ L_\cp^* \coloneqq \min_{P_\bfx\in\lambda_\bfx}L_\cp^*(P_\bfx) $. 
\end{definition}

\begin{remark}
With slight abuse of terminology, we interchangeably call $ (P_\bfu, P_{\bfx|\bfu}) $ a $\cp$-decomposition.
We at times call $ P_{\bfx_1,\cdots,\bfx_L}\in\cp^\tl(P_\bfx) $ \emph{$L$-symmetrizable} if for every $\cp$-decomposition $ (P_\bfu,P_{\bfx|\bfu}) $,  \Cref{eqn:l_symm_cond1} or \Cref{eqn:l_symm_cond2} holds for some $ U_{\bfs|\bfu,\bfx_1,\cdots,\bfx_L}\in\cU_{\obli,\symml{L}} $. 
Also, we may call $ (P_\bfu,P_{\bfx|\bfu})\in\Delta(\cU)\times\Delta(\cX|\cU) $ \emph{$L$-symmetrizable} if \Cref{eqn:l_symm_cond1} or \Cref{eqn:l_symm_cond2} holds for some $ U_{\bfs|\bfu,\bfx_1,\cdots,\bfx_L}\in\cU_{\obli,\symml{L}} $.
\end{remark}

\begin{remark}
In the absence of constraints, \Cref{def:cp_symm} collapses to Hughes's \cite{hughes-1997-list-avc} notion of list-symmetrizability. 
When $ L = 1 $, \Cref{def:cp_symm} collapses to \csiszar--Narayan's \cite{csiszar-narayan-it1988-obliviousavc} notion of symmetrizability.
See \Cref{sec:myop_obli_symm} for more details on this reduction. 
However, \Cref{def:cp_symm} does not specialize to either the strong or weak list-symmetrizability (denoted by $ L_\strong^* $ and $ L_\weak^* $, respectively) due to Sarwate and Gastpar \cite{sarwate-gastpar-2012-list-dec-avc-state-constr}. 
See below (\Cref{def:strong_symm} and \Cref{def:weak_symm}) for the definition of $ L_\strong^*,L_\weak^* $ and see \Cref{sec:comparison_our_sg} for a proper comparison of $ L_\cp^* $, $ L_\strong^* $ and $ L_\weak^* $. 
In fact, one can find examples of oblivious AVCs for which the values of $ L_\strong^* $, $ L_\cp^* $ and $ L_\weak^* $ are \emph{strictly} different. 
We will do this by developing a machinery that we call \emph{canonical constructions} of oblivious channels. 
Taking proper canonical constructions allows us to show that there are channels for which $ L_\strong^*<L_\cp^* $ and there are channels for which $ L_\cp^*<L_\weak^* $. 
These constructions and their analysis will be presented in \Cref{sec:canonical_constr}. 
\end{remark}

In \cite{sarwate-gastpar-2012-list-dec-avc-state-constr}, the authors defined two notions  of list-symmetrizability, known as the \emph{strong} and \emph{weak} list-symmetrizability.
They were used for giving outer and inner bounds, respectively, on the $L$-list-decoding capacity of oblivious AVCs. 
Their definitions read as follows.
Weak symmetrizability of an input distribution $ P_\bfx $ replaces the quantifier ``$ \forall P_{\bfx_1,\cdots,\bfx_L}\in\cp^\tl(P_\bfx) $'' by a fixed distribution $ P_\bfx^\tl $. Strong symmetrizability instead replaces that by all $ P_\bfx $-self-couplings, i.e., the quantifier becomes ``$ \forall P_{\bfx_1,\cdots,\bfx_L}\in\cJ^\tl(P_\bfx) $''. Apparently, for any $ L\in\bZ_{\ge1} $, if $ P_\bfx $ is strongly $L$-symmetrizable, then it is $\cp$-$L$-symmetrizable; if $ P_\bfx $ is $\cp$-$L$-symmetrizable, then it is weakly $L$-symmetrizable. The notions of $ L_{\strong}^* $ and $ L_{\weak}^* $ are defined in the same way. We have the obvious relation $ L_{\strong}^*\le L_{\cp}^*\le L_{\weak}^* $.

We state below the formal definitions of strong and weak list-symmetrizability for completeness. 
\begin{definition}[Strong symmetrizability]
\label{def:strong_symm}
Fix an oblivious AVC $ \obliavc $.
For $ L\in\bZ_{\ge1} $,
an input distribution $ P_\bfx\in\lambda_\bfx $ is said to be \emph{obliviously strongly $L$-symmetrizable} if for every $ P_{\bfx_{[L]}}\in\cJ^\tl(P_\bfx) $, there is a symmetrizing distribution $ U_{\bfs|\bfx_{[L]}}\in\cU_{\obli,\symml{L}} $\footnote{Here and later in \Cref{def:weak_symm}, since there is no $\cp$-distributions, the time-sharing variable $\bfu$ is absent.
Hence the definition of $ \cU_{\obli,\symml{L}} $ should be slightly changed to be the set of distributions $ U_{\bfs|\bfx_{[L]}}\in\Delta(\cS|\cX^L) $ satisfying a certain system of identities.} such that $ \sqrbrkt{ P_{\bfx_{[L]}}U_{\bfs|\bfx_{[L]}} }_\bfs\in\lambda_\bfs $. 
The \emph{oblivious strong symmetrizability} of $ \cA_\obli $ is defined as $ L_\strong^* \coloneqq \min_{P_\bfx\in\lambda_\bfx}\max\curbrkt{L\in\bZ_{\ge1}\colon P_\bfx\text{ is obliviously strongly $L$-symmetrizable}} $. 
\end{definition}

\begin{definition}[Weak symmetrizability]
\label{def:weak_symm}
Fix an oblivious AVC $ \obliavc $.
For $ L\in\bZ_{\ge1} $,
an input distribution $ P_\bfx\in\lambda_\bfx $ is said to be \emph{obliviously weakly $L$-symmetrizable} if  there is a symmetrizing distribution $ U_{\bfs|\bfx_{[L]}}\in\cU_{\obli,\symml{L}} $ such that $ \sqrbrkt{ P_\bfx^\tl U_{\bfs|\bfx_{[L]}} }_\bfs\in\lambda_\bfs $. 
The \emph{oblivious weak symmetrizability} of $ \cA_\obli $ is defined as $$ L_\weak^* \coloneqq \min_{P_\bfx\in\lambda_\bfx}\max\curbrkt{L\in\bZ_{\ge1}\colon P_\bfx\text{ is obliviously weakly $L$-symmetrizable}} .$$
\end{definition}


\section{Our results}\label{sec:results}
In this section, we give formal statements of our results and compare them in details with the  closely related work by Sarwate and Gastpar \cite{sarwate-gastpar-2012-list-dec-avc-state-constr}. 

\begin{definition}[Capacity expression]\label{def:cap_expr}
Let $ \cA_\obli = (\cX,\cS,\cY, \lambda_\bfx,\lambda_\bfs, W_{\bfy|\bfx,\bfs}) $ be an oblivious AVC. 
Let $ L\ge1 $ be a certain list-size. 
Define 
\begin{align}
C_L =& \max_{{P_\bfx\in\lambda_\bfx\colon P_\bfx\text{ non-$L$-symmetrizable}}} \min_{U_{\bfs|\bfu}\in\Delta(\cS|\cU)\colon \sqrbrkt{P_\bfu U_{\bfs|\bfu}}_\bfs\in\lambda_\bfs} I(\bfx;\bfy|\bfu). \label{eqn:cap_obli_avc_list_dec}
\end{align}
Here the maximization is taken over all feasible input distributions $ P_\bfx $ that are non-$L$-symmetrizable.
By \Cref{def:cp_symm}, each such $ P_\bfx $ induces at least one non-$L$-symmetrizable $\cp$-distribution $ P_{\bfx_1,\cdots,\bfx_L} $.
By \Cref{def:cp}, each non-$L$-symmetrizable $\cp$-distribution $ P_{\bfx_1,\cdots,\bfx_L} $ further induces a non-$L$-symmetrizable pair $ (P_\bfx , P_{\bfx|\bfu}) $ through \Cref{eqn:cp_def_time_shar}. 
Therefore for each $ P_\bfx\in\lambda_\bfx $, the maximization is also implicitly taken over non-$L$-symmetrizable $\cp$-distributions $ P_{\bfx_1,\cdots,\bfx_L} $  and a pair $ (P_\bfx,P_{\bfx|\bfu}) $ induced by a non-$L$-symmetrizable $\cp$-decomposition of $ P_{\bfx_1,\cdots,\bfx_L} $. 
In the minimization, the distribution $ P_\bfu $ is the one given by the non-$L$-symmetrizable $\cp$-decomposition in the maximization. 
The mutual information is evaluated using the joint distribution 
\begin{align}
P_{\bfu,\bfx,\bfy} =& \sqrbrkt{ P_\bfu P_{\bfx|\bfu}U_{\bfs|\bfu}W_{\bfy|\bfx,\bfs} }_{\bfu,\bfx,\bfy}, \notag
\end{align}
where $ (P_\bfu,P_{\bfx|\bfu}) $ is the $\cp$-decomposition in the maximization. 
\end{definition}

\begin{remark}
\label{rk:seemingly_natural}
Note that the minimization is over distributions of the form $ U_{\bfs|\bfu} $ such that the jamming distribution $ P_\bfs $ satisfies the state constraints.
Operationally, this corresponds to James jamming the channel using a noise sequence $ \vbfs $ sampled from $ U_{\bfs|\bfu}^\tn $. 
Intuitively, given the definition of $L$-symmetrizability (\Cref{def:cp_symm}), it might seem more natural to let James minimize the mutual information over distributions of the form $ U_{\bfs|\bfu,\bfx_{[L]}} $ such that $ \sqrbrkt{P_\bfu P_{\bfx}^\tl U_{\bfs|\bfu,\bfx_{[L]}}}_\bfs\in\lambda_\bfs $. 
Operationally, this corresponds to James first  sampling a list $ (\vbfx_{i_1},\cdots,\vbfx_{i_L}) $ of $L$ codewords from the codebook and then sampling the noise sequence $ \vbfs $ from $ U_{\vbfs|\vbfx_{i_1},\cdots,\vbfx_{i_L}} $. 
However, somewhat counterintuitively, we show in \Cref{sec:seemingly_natural} that the value of the resulting capacity expression under such a minimization is no smaller than \Cref{eqn:cap_obli_avc_list_dec}. 
That is, this seemingly more natural jamming strategy does not give a better converse bound. 
Hence, for the purpose of upper bounding the \emph{value} of the capacity (whenever it is positive), we may restrict James to the less general type of jamming distributions $ U_{\bfs|\bfu} $. 
To avoid confusion, we stress that: when James would like to test the \emph{positivity} of the capacity, he \emph{does} employ jamming distributions of $ U_{\bfs|\bfu,\bfx_{[L]}} $ kind. 
Indeed, this is basically the $\cp$-symmetrization attack that will be described and (partly) analyzed in \Cref{sec:cp_symm}. 
\end{remark}

The main results of this paper read as follows.
\begin{theorem}[List-decoding capacity of general oblivious AVCs]\label{thm:cap_obli_avc_list_dec}
Let $ \obliavc $ be an oblivious AVC. 
Let $L\in \bZ_{\ge1} $ be the list-size. 
Let $ C_L(\cA_\obli ) $ denote the $L$-list-decoding capacity of $\cA_\obli $. 
Conditioned on \Cref{conj:comb_conj}, we have the following capacity results. 
If $ L  > L_\cp^* $, then $ C_L(\cA_\obli ) = C_L $ where $ L_\cp^* $ and $C_L $ were defined in \Cref{def:cp_symm} and \Cref{def:cap_expr}, respectively.
If $ L\le L_\cp^* $, then $ C_L(\cA_\obli )= 0 $. 
\end{theorem}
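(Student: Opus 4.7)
The plan is to split \Cref{thm:cap_obli_avc_list_dec} into three pieces and invoke the three technical theorems stated later in the paper. Specifically, the converse direction when $L\le L_\cp^*$ reduces to \Cref{thm:symm_converse}; the lower bound $C_L(\cA_\obli)\ge C_L$ when $L>L_\cp^*$ reduces to \Cref{thm:achievability}; and the matching upper bound $C_L(\cA_\obli)\le C_L$, conditional on \Cref{conj:comb_conj}, reduces to \Cref{thm:converse_rate_ub}. I describe the strategy for each below.

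For the zero-capacity regime ($L\le L_\cp^*$), I would formalize the $\cp$-symmetrization attack sketched in \Cref{sec:overview_results_techniques}. Fix any code $\cC$ of positive rate satisfying the input constraint $\lambda_\bfx$. By \Cref{lem:apx_cc_reduction}, without loss of rate we may assume $\cC$ is approximately $(\lambda,P_\bfx)$-constant-composition for some $P_\bfx\in\lambda_\bfx$. Because $L\le L_\cp^*(P_\bfx)$, every $\cp$-distribution in $\cp^{\otimes L}(P_\bfx)$ is $L$-symmetrizable, so for the $\cp$-decomposition $(P_\bfu,P_{\bfx|\bfu})$ to be used there exists $U_{\bfs|\bfu,\bfx_{[L]}}\in\cU_{\obli,\symml{L}}$ whose marginal lies in $\lambda_\bfs$. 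James samples an ordered $L$-list $\cL=(\vbfx_{i_1},\ldots,\vbfx_{i_L})$ uniformly from $\cC$ and generates $\vbfs_\cL$ through $U_{\bfs|\bfu,\bfx_{[L]}}^{\otimes n}$. The key invocation is the generalized Plotkin bound (\Cref{thm:gen_plotkin_listdec_informal}) applied with list-size $L+1$: with constant probability the $(L+1)$-list $(\vbfx_{i_0},\cL)$ has joint type approximately a $\cp$-distribution, which is permutation-invariant. Conditioned on this event, the distribution of $\vbfs_{\pi(\cL)}$ concentrates, uniformly in $\pi\in S_{L+1}$, around a single product distribution in $\lambda_\bfs$, so the state-constraint is met simultaneously for every permutation. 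The defining identity of $\cU_{\obli,\symml{L}}$ then forces the channel output to be (nearly) identically distributed under every reassignment of the role of ``true codeword,'' so Bob's best list decoder errs on the true message $i_0$ with probability at least roughly $1/(L+1)$, a constant; this yields $C_L(\cA_\obli)=0$.

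For achievability when $L>L_\cp^*$, I would follow the coding scheme in \Cref{thm:achievability}. Choose $P_\bfx\in\lambda_\bfx$ achieving the outer maximum in \Cref{eqn:cap_obli_avc_list_dec}; non-$L$-symmetrizability provides a $\cp$-distribution $P_{\bfx_1,\cdots,\bfx_L}=\sqrbrkt{P_\bfu P_{\bfx|\bfu}^{\otimes L}}$ with no admissible $U_{\bfs|\bfu,\bfx_{[L]}}$. Fix a time-sharing sequence $\vu$ of type $\approx P_\bfu$ and sample $M=2^{nR\log|\cX|}$ codewords i.i.d.\ from $\prod_i P_{\bfx|\bfu}(\cdot|\vu(i))$. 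Standard type-class concentration (and \Cref{lem:apx_cc_reduction}/\Cref{lem:quant}) gives that, with high probability, every $L$-tuple of codewords has joint conditional type close to $P_\bfu P_{\bfx|\bfu}^{\otimes L}$. The decoder is the classical \csiszar--Narayan minimum-conditional-entropy (or typicality) decoder over $\vu$; the analysis partitions error events by the joint type of the transmitted codeword, the decoded list, and the state sequence, and in each type class one uses \Cref{lem:prob_typ} together with the non-$L$-symmetrizability to exclude the ``bad'' configurations that would otherwise cause list-decoding failure. This yields achievable rate $I(\bfx;\bfy|\bfu)$ for every admissible $U_{\bfs|\bfu}$, hence the inner minimum in \Cref{eqn:cap_obli_avc_list_dec}, and maximizing over feasible $P_\bfx$ gives $C_L$.

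The matching upper bound is the step I expect to be the main obstacle, which is why the paper states it conditionally on \Cref{conj:comb_conj}. The strategy would be a strong-converse/fading argument: given a rate-$R$ code with $R>C_L$, first reduce to an approximate constant-composition subcode by \Cref{lem:apx_cc_reduction} with input type $P_\bfx$, and then extract via \Cref{conj:comb_conj} a large subcode in which every $L$-tuple of codewords has joint type close to a \emph{fixed} $\cp$-distribution $P_{\bfx_1,\cdots,\bfx_L}$ with a prescribed $\cp$-decomposition $(P_\bfu,P_{\bfx|\bfu})$. Once this structural regularity is in place, James sampling $\vbfs\sim U_{\bfs|\bfu}^{\otimes n}$ (the minimizer in \Cref{eqn:cap_obli_avc_list_dec}, whose marginal lies in $\lambda_\bfs$ by assumption on $(P_\bfu,P_{\bfx|\bfu})$) reduces the problem to a discrete memoryless channel $P_{\bfy|\bfx,\bfu}$, whose $L$-list-decoding capacity is $I(\bfx;\bfy|\bfu)$ by a standard Fano-type argument; exceeding this gives a non-vanishing error probability, contradicting the achievability assumption. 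The nontrivial content is exactly the subcode-extraction step, where \Cref{conj:comb_conj} is needed to guarantee a $\cp$-regular subcode of comparable exponential size. Combining the three parts yields \Cref{thm:cap_obli_avc_list_dec}.
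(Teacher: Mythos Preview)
Your decomposition into the three pieces (\Cref{thm:symm_converse}, \Cref{thm:achievability}, \Cref{thm:converse_rate_ub}) is exactly how the paper proceeds, and your sketches of the $\cp$-symmetrization converse and the achievability match the paper's arguments closely.

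One correction on the upper-bound part: you mischaracterize what \Cref{conj:comb_conj} provides. The existence of a $\cp$-distribution $\wt P_{\bfx_1,\cdots,\bfx_L}$ near which a constant fraction of ordered $L$-tuples land is obtained \emph{unconditionally} from \Cref{lem:apply_turan} plus a quantization of $(\cp^{\otimes L}(\wh P_\bfx))_\eps$; that step does not need the conjecture. What \Cref{conj:comb_conj} actually posits is the existence of a large subcode $\cC''\subseteq\cC'$ together with a \emph{universal time-sharing sequence} $\vu\in\cU^n$ such that each \emph{individual} codeword $\vx\in\cC''$ satisfies $d(\tau_{\vu,\vx},\wt P_\bfu\wt P_{\bfx|\bfu})\le\gamma$. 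This sequence $\vu$ is exactly what James needs in order to sample $\vbfs\sim\prod_j U_{\bfs|\bfu=\vu(j)}$ and thereby reduce $\cC''$ to a code over the fading channel $\wt W_{\bfy|\bfx,\bfu}$, to which the strong converse (\Cref{thm:strong_conv_fading}) is then applied. Knowing only that $L$-tuples have approximately $\cp$ joint type, without a concrete $\vu$ compatible with every codeword, is not enough to run your DMC reduction. With that caveat, your plan is the paper's plan.
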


More specifically, we prove the following three theorems.

\begin{theorem}[Converse: $\cp$-symmetrization]
\label{thm:symm_converse}
Fix an oblivious AVC $ \obliavc $ and a list-size $ L\in\bZ_{\ge1} $.
If $ L\le L_\cp^* $, then $ C_L(\cA_\obli) = 0 $.
That is, there is a jamming strategy and a constant $ c_1>0 $ such that under this jamming strategy, any code $ \cC $ satisfying input constraints of rate $ R(\cC)>0 $ has average probability of error $ P_{\e,\avg}(\cC)\ge c_1 $. 
\end{theorem}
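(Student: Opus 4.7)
The plan is to carry out the $\cp$-symmetrization attack outlined in Section \ref{sec:overview_results_techniques}. Fix an arbitrary $L$-list-decodable code $\cC \subseteq \Lambda_\bfx$ with rate $R(\cC) > 0$. By Lemma \ref{lem:apx_cc_reduction}, I would first pass to an approximately constant-composition subcode of the same asymptotic rate and input type $P_\bfx \in \lambda_\bfx$. Since $L \le L_\cp^* \le L_\cp^*(P_\bfx)$, this $P_\bfx$ is $\cp$-$L$-symmetrizable. Next, invoke the list-decoding generalized Plotkin bound (Theorem \ref{thm:gen_plotkin_listdec_informal}, formalized in Lemma \ref{lem:apply_turan}) with list-size $L+1$ on this subcode to extract an order-$(L+1)$ $\cp$-distribution $P_{\bfx_0, \bfx_1, \ldots, \bfx_L}$ together with a $\cp$-decomposition $P_{\bfx_0, \ldots, \bfx_L} = \sqrbrkt{P_\bfu P_{\bfx|\bfu}^{\otimes(L+1)}}_{\bfx_0, \ldots, \bfx_L}$, having the property that with probability at least some absolute constant $p > 0$ (independent of $n$) a uniformly random $(L+1)$-sublist of the subcode has joint type close to $P_{\bfx_0, \ldots, \bfx_L}$.

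I would then define James's attack. Apply $\cp$-$L$-symmetrizability to the order-$L$ marginal of $P_{\bfx_0, \ldots, \bfx_L}$, which carries the same $\cp$-decomposition $(P_\bfu, P_{\bfx|\bfu})$, to obtain a symmetrizing channel $U_{\bfs|\bfu, \bfx_{[L]}} \in \cU_{\obli, \symml{L}}$ satisfying $\sqrbrkt{P_\bfu P_{\bfx|\bfu}^\tl U_{\bfs|\bfu, \bfx_{[L]}}}_\bfs \in \lambda_\bfs$. Let $\vu \in \cU^n$ be a deterministic time-sharing sequence with $\tau_\vu$ close to $P_\bfu$, known publicly. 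James then samples a uniformly random spoofing list $\cL = (\vbfx_{i_1}, \ldots, \vbfx_{i_L}) \in \binom{\cC}{L}$ and draws $\vbfs_\cL$ componentwise via $U_{\bfs|\bfu, \bfx_{[L]}}^\tn(\cdot \mid \vu, \cL)$.

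The analysis now splits into two parts. \emph{Feasibility:} conditioned on the $\cp$-typicality event, the empirical state distribution of $\vbfs_{\pi(\cL)}$ concentrates around $\sqrbrkt{P_\bfu P_{\bfx|\bfu}^\tl U_{\bfs|\bfu, \bfx_{[L]}}}_\bfs \in \lambda_\bfs$ \emph{uniformly} over all $\pi \in S_{L+1}$, because the complete positivity of $P_{\bfx_0, \ldots, \bfx_L}$ makes the typical joint type exchangeable in all $L+1$ coordinates; hence $\vbfs_{\pi(\cL)} \in \Lambda_\bfs$ with high probability for every such $\pi$. \emph{Indistinguishability:} the single-letter symmetry identity defining $\cU_{\obli, \symml{L}}$, combined with the memoryless product form of $W_{\bfy|\bfx,\bfs}^\tn$ and the $\cp$-type of the $(L+1)$-list, yields that the channel output distribution is $o(1)$-close to being invariant under any $\pi \in S_{L+1}$ acting on $(i_0, i_1, \ldots, i_L)$. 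A standard Fano-type argument then shows that any $L$-list decoder must miss $i_0$ with probability at least $1/(L+1) - o(1)$ conditional on the constant-probability typicality event, giving $P_{\e, \avg}(\cC) \ge c_1 > 0$ for some absolute constant $c_1$.

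The principal obstacle, as signposted in Section \ref{sec:overview_results_techniques}, is exactly this uniform feasibility across all $(L+1)!$ permutations of $(\vbfx_{i_0}, \cL)$: a priori the empirical state distribution of $\vbfs_{\pi(\cL)}$ could depend sensitively on $\pi$ through the joint correlations between $\vu$ and the individual spoofed codewords, and could violate $\lambda_\bfs$ for some permutations even while satisfying them for others. The combinatorial Plotkin-type lemma dissolves this obstacle by forcing the typical joint type to lie in $\cp^{\otimes(L+1)}(P_\bfx)$, making all $(L+1)!$ permutations ensemble-equivalent and thereby translating the \emph{symbol-wise} $\cp$-$L$-symmetrizability of $P_\bfx$ into a \emph{vector-wise} symmetrization of the code. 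The remaining bookkeeping (measure concentration of the state empirical, the quantization loss from Lemma \ref{lem:apx_cc_reduction}, and averaging over messages) is deferred to Appendices \ref{app:robust_plotkin} and \ref{app:converse_rate_zero}.
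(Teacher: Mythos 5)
Your proposal is correct and follows the paper's proof essentially step for step: constant-composition reduction via Lemma~\ref{lem:apx_cc_reduction}, $\cp$-type extraction via Lemma~\ref{lem:apply_turan}, a symmetrizing kernel $U_{\bfs|\bfu,\bfx_{[L]}}$ supplied by $\cp$-$L$-symmetrizability, James sampling a spoofing $L$-list and then generating $\vbfs$ componentwise conditioned on $(\vu,\vbfx_\cL)$, and the two-part analysis (state feasibility by concentration around $\sqrbrkt{\wt P_\bfu\wt P_{\bfx|\bfu}^\tl U_{\bfs|\bfu,\bfx_{[L]}}}_\bfs$, indistinguishability by the single-letter symmetrization identity plus a counting argument giving error $\ge 1/(L+1)-o(1)$). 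The only cosmetic difference is that you invoke the Tur\'an/Plotkin extraction directly at order $L+1$ and take its $L$-marginal, whereas the paper's main text does the extraction at order $L$ and lifts it to $L+1$ in Appendix~\ref{app:converse_rate_zero} via Lemma~\ref{lem:relation_g_gprime}; these are equivalent bookkeeping choices, and yours is arguably the cleaner of the two.
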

\begin{proof}
See \Cref{sec:cp_symm}. 
\end{proof}

\begin{theorem}[Achievability]
\label{thm:achievability}
Fix an oblivious AVC $ \obliavc $ and a list-size $ L\in\bZ_{\ge1} $.
If $ L>L_\cp^* $, then $ C_L(\cA_\obli)\ge C_L $ where $ C_L $ was defined in \Cref{eqn:cap_obli_avc_list_dec}. 
That is, for any $ \delta>0 $, there is a distribution over input-feasible codes  of rate $ R\le C_L -\delta $ such that, with high probability, a random code $\cC$ sampled from this ensemble has average probability of error $ P_{\e,\avg}(\cC)=o(1) $. 
\end{theorem}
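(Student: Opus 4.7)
The plan is to run a random-coding-with-time-sharing argument in the style of \csiszar--Narayan \cite{csiszar-narayan-it1988-obliviousavc} and Hughes \cite{hughes-1997-list-avc}, adapted so that the code's empirical structure realizes a specific non-$L$-symmetrizable $\cp$-distribution. Since $L>L_\cp^*$, by \Cref{def:cp_symm} we can fix $P_\bfx^*\in\lambda_\bfx$ and a $\cp$-decomposition $(P_\bfu^*,P_{\bfx|\bfu}^*)$ that is not $L$-symmetrizable and that attains the outer maximum in \Cref{eqn:cap_obli_avc_list_dec} up to $\delta/2$; set $R=C_L-\delta$ and $M=\lceil 2^{nR}\rceil$.

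The code is built as follows. Pin down a deterministic time-sharing sequence $\vu\in\cU^n$ with $\tau_\vu\approx P_\bfu^*$, and sample $M$ codewords $\vbfx_1,\ldots,\vbfx_M$ independently, with $\vbfx_m(j)\sim P_{\bfx|\bfu}^*(\cdot|\vu(j))$ coordinate-wise. Standard concentration (\Cref{lem:sanov},\,\Cref{lem:pinsker}) together with \Cref{lem:apx_cc_reduction} ensure that, after discarding a vanishing fraction, every codeword has type close to $P_\bfx^*$ and hence lies in the (slightly inflated) input constraint; moreover, the joint type of any $L$-sublist concentrates around $\sqrbrkt{P_\bfu^*(P_{\bfx|\bfu}^*)^\tl}_{\bfx_{[L]}}$, which is exactly the non-$L$-symmetrizable $\cp$-distribution we selected.

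The decoder $\psi$ I would use is a typicality decoder: $m\in\psi(\vy)$ iff there exist $U_{\bfs|\bfu}\in\Delta(\cS|\cU)$ with $\sqrbrkt{P_\bfu^* U_{\bfs|\bfu}}_\bfs\in\lambda_\bfs$ and $\vs\in\cS^n$ such that $(\vu,\phi(m),\vs,\vy)$ is jointly $\eps$-typical with $P_\bfu^* P_{\bfx|\bfu}^* U_{\bfs|\bfu} W_{\bfy|\bfx,\bfs}$. For any fixed adversarial $\vs$ with $\tau_\vs\in\lambda_\bfs$, correctness splits into (i) the transmitted codeword lies in $\psi(\vy)$ with high probability, and (ii) $|\psi(\vy)|\le L$ with high probability. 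Part (i) is the standard typicality argument using \Cref{lem:aep} and \Cref{lem:prob_typ}, and introduces the rate constraint $R<I(\bfx;\bfy|\bfu)$ minimized over feasible $U_{\bfs|\bfu}$, consistent with \Cref{rk:seemingly_natural}. Part (ii) is the list-decoding heart: if $L$ non-transmitted codewords also pass the test, then their empirical joint distribution with $\phi(m_0)$ and $\vs$ exhibits a kernel $U_{\bfs|\bfu,\bfx_{[L]}}\in\cU_{\obli,\symml{L}}$ satisfying $\sqrbrkt{P_\bfu^*(P_{\bfx|\bfu}^*)^\tl U_{\bfs|\bfu,\bfx_{[L]}}}_\bfs\in\lambda_\bfs$, which directly contradicts the non-$\cp$-$L$-symmetrizability of $(P_\bfu^*,P_{\bfx|\bfu}^*)$.

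The hard part will be converting the combinatorial event ``more than $L$ messages pass clause~(i)'' into this analytic contradiction with only $o(1)$ slack. Concretely, one must union-bound over the empirical joint types of $(L{+}1)$-tuples of codewords together with $\vs$ and $\vy$, argue that the overwhelmingly likely such tuples have type near $\sqrbrkt{P_\bfu^*(P_{\bfx|\bfu}^*)^{\otimes(L+1)}}$ (so that their contribution falls into the symmetrization contradiction above), and bound the remaining atypical tuples exponentially via \Cref{lem:sanov} and \Cref{lem:pinsker}. Getting the ``only $\cp$-type sublists are feasible'' reduction is what enables going beyond the weak-symmetrizability achievability of Sarwate--Gastpar \cite{sarwate-gastpar-2012-list-dec-avc-state-constr}; the finer bookkeeping is to be carried out in \Cref{app:cw_select,app:unambiguity_pf}. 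A final Markov-type expurgation then converts the expected-error bound into the high-probability guarantee in the statement.
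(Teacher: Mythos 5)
Your overall architecture (pick a non-$L$-symmetrizable $\cp$-decomposition $(P_\bfu^*,P_{\bfx|\bfu}^*)$ achieving \Cref{eqn:cap_obli_avc_list_dec}, fix a time-sharing sequence $\vu$, sample codewords conditionally on $\vu$, analyze against every feasible $\vs$) matches the paper. The genuine gap is in your decoder and in the step you call "the list-decoding heart." You use a pure typicality decoder: $m\in\psi(\vy)$ iff some feasible $\vs$ makes $(\vu,\phi(m),\vs,\vy)$ typical. The paper's decoder (\Cref{sec:dec_rules}) has a second, essential clause: a candidate survives only if for every competing plausible $L$-list the empirical mutual information $I(\bfx,\bfy;\bfx_{[L]}|\bfu,\bfs)$ is at most $\eta$. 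Your claim that "$L$ spurious codewords passing the typicality test exhibits a kernel $U_{\bfs|\bfu,\bfx_{[L]}}\in\cU_{\obli,\symml{L}}$ meeting the state constraint" is exactly \Cref{lem:unambiguity}, but its proof crucially uses those mutual-information conditions: they give the approximate factorization $P_{\bfu,\bfx_i,\bfx_{[L+1]\setminus\{i\}},\bfy}\approx P_\bfu P_{\bfx_i|\bfu}V_{\bfx_{[L+1]\setminus\{i\}},\bfy|\bfu,\bfx_i}$ for every $i$, which is then symmetrized over permutations and fed into \Cref{lem:prod_distr} to extract a symmetric kernel in $\cU_{\obli,\symml{L}}$. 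Individual typicality of each competitor (each with its own $\vs_i$), even together with a near-product joint type of the codewords, does not yield this: in the very bad event you are trying to rule out, $\bfy$ \emph{is} correlated with the competing codewords given $(\bfu,\bfx_{i_0},\bfs)$, so no contradiction with non-$L$-symmetrizability of $(P_\bfu^*,P_{\bfx|\bfu}^*)$ follows. Indeed, even by the paper's own accounting, the typicality clause alone only trims the list to $\cO(\poly(n))$ candidates, so your decoder does not satisfy $|\psi(\vy)|\le L$, and your part (ii) argument cannot be repaired without adding the disambiguation clause (or an equivalent).

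A secondary issue: your claim that "the joint type of any $L$-sublist concentrates around $\sqrbrkt{P_\bfu^*(P_{\bfx|\bfu}^*)^\tl}$" is too strong. With $2^{nR}$ codewords there are $2^{nLR}$ sublists, and per-tuple exponential concentration does not survive that union bound; what the proof actually needs (and what the paper's \Cref{lem:cw_select} supplies, via the Csiszár--Narayan concentration \Cref{lem:cn_conc}) are counting properties of the form "at most a $2^{-n\eps/2}$ fraction of messages participate in a tuple of any fixed bad joint type with $\vs$," holding simultaneously for \emph{every} $\vs\in\cS^n$ and every joint type — not i.i.d.-jamming bounds of the $U_{\bfs|\bfu}^\tn$ form. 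Your error analysis, which pins the rate constraint to $\min_{U_{\bfs|\bfu}}I(\bfx;\bfy|\bfu)$ via typicality against memoryless jamming, must instead be run through the method-of-types decomposition against arbitrary feasible state sequences, as in \Cref{sec:ach_analysis_error_prob}. Until the decoder includes the mutual-information tournament and the unambiguity lemma is proved for it, the proposal does not establish the theorem.
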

\begin{proof}
See \Cref{sec:achievability}. 
\end{proof}

\begin{theorem}[Converse: capacity upper bound]
\label{thm:converse_rate_ub}
Fix an oblivious AVC $ \obliavc $ and a list-size $ L\in\bZ_{\ge1} $.
If $ L>L_\cp^* $, then conditioned on \Cref{conj:comb_conj}, $ C_L(\cA_\obli)\ge C_L $ where $ C_L $ was defined in \Cref{eqn:cap_obli_avc_list_dec}. 
That is, for any $ \delta>0 $, there is a jamming strategy and a constant $ c_2>0 $ such that under this jamming strategy,   any code $ \cC $ satisfying input constraints of rate $ R(\cC)\ge C_L +\delta $ has average probability $ P_{\e,\avg}(\cC)\ge c_2 $. 
\end{theorem}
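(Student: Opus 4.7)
The overall plan is to mirror the classical constrained-AVC converse paradigm from \cite{csiszar-narayan-it1988-obliviousavc,hughes-1997-list-avc,sarwate-gastpar-2012-list-dec-avc-state-constr}, but where the code-structure step is upgraded from ``weak'' / ``strong'' type concentration to a $\cp$-type concentration via \Cref{conj:comb_conj}. Suppose, for contradiction, $\cC$ is an $L$-list-decodable code satisfying the input constraints with $R(\cC)\ge C_L+\delta$ and $P_{\e,\avg}(\cC)=o(1)$. I will produce a jamming strategy that drives $P_{\e,\avg}(\cC)$ above a positive constant $c_2$.

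\emph{Step 1 (constant-composition reduction).} Apply \Cref{lem:apx_cc_reduction} to extract an approximately constant-composition subcode $\cC_1\subseteq\cC$ of some type $P_\bfx\in\lambda_\bfx$ with $R(\cC_1)\asymp R(\cC)$. Since $L>L_\cp^*$, by minimality in \Cref{def:cp_symm} there exists an input distribution demonstrating non-$L$-symmetrizability; I may assume after a further quantization/averaging argument that $P_\bfx$ itself is non-$L$-symmetrizable (otherwise \Cref{thm:symm_converse} already zeroes out the rate of $\cC_1$).

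\emph{Step 2 (invoke the combinatorial conjecture).} Here is where \Cref{conj:comb_conj} enters. Theorem \ref{thm:gen_plotkin_listdec_informal} only guarantees that a constant fraction of random $(L+1)$-tuples drawn from $\cC_1$ have joint type near some $\cp$ distribution of order $L+1$; the conjecture strengthens this to the existence of a further subcode $\cC_2\subseteq\cC_1$ with $R(\cC_2)\asymp R(\cC_1)$ such that \emph{every} $(L+1)$-tuple of codewords in $\cC_2$ has joint type within $o(1)$ of a single, fixed $\cp$ distribution $P_{\bfx_0,\bfx_1,\ldots,\bfx_L}\in\cp^{\otimes(L+1)}(P_\bfx)$. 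Fix a $\cp$-decomposition $(P_\bfu, P_{\bfx|\bfu})$ of this distribution; because $P_\bfx$ is non-$L$-symmetrizable, $(P_\bfu,P_{\bfx|\bfu})$ is a legitimate feasible choice in the outer maximization of \Cref{eqn:cap_obli_avc_list_dec}.

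\emph{Step 3 (design James's attack).} Let $U_{\bfs|\bfu}^\star\in\Delta(\cS|\cU)$ attain the inner minimum $\min_{U_{\bfs|\bfu}\colon [P_\bfu U_{\bfs|\bfu}]_\bfs\in\lambda_\bfs} I(\bfx;\bfy|\bfu)\le C_L$ evaluated at $(P_\bfu,P_{\bfx|\bfu})$. James, knowing the codebook, extracts the time-sharing schedule $\vu$ induced by the $\cp$-decomposition, and transmits $\vbfs$ i.i.d.~according to $U_{\bfs|\bfu}^\star(\cdot|\vu(i))$ coordinate-by-coordinate. The marginal distribution of $\vbfs$ is (tightly concentrated around) $[P_\bfu U_{\bfs|\bfu}^\star]_\bfs^{\otimes n}\in\lambda_\bfs$, so the state constraints are satisfied with probability $1-o(1)$. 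Note that this is the ``less general'' jamming family discussed in \Cref{rk:seemingly_natural}; the point of Step 2 is precisely that the $\cp$-structure forced on $\cC_2$ makes the restricted family suffice.

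\emph{Step 4 (strong converse over the induced DMC).} Conditioned on $\vbfs$, the channel from $\vbfx$ to $\vbfy$ is a product of $W_{\bfy|\bfx,\bfs}(\cdot|\cdot,\vbfs(i))$, and marginalizing $\vbfs$ yields a discrete memoryless channel with a known time-sharing sequence $\vu$ whose single-letter capacity is $I(\bfx;\bfy|\bfu)\le C_L$. Applied to the constant-composition code $\cC_2$ of rate $\ge C_L+\delta-o(1)$, a list-decoding strong converse (Fano plus standard type-counting arguments, delegated to \Cref{app:strong_conv_fading_pf}) shows that the $L$-list-decoding error probability is bounded away from zero, because the extra $\log L$ bits afforded by lists of size $L$ contribute only $o(1)$ to the rate when $L$ is constant. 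This yields the desired contradiction with the assumption $P_{\e,\avg}(\cC)=o(1)$.

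\emph{Main obstacle.} Steps 1, 3, and 4 are essentially standard modulo careful bookkeeping. The heart of the argument is Step 2: \Cref{thm:gen_plotkin_listdec_informal} guarantees only \emph{random-sampling} concentration to a $\cp$ type, whereas the converse requires an \emph{every-tuple} statement on a positive-rate subcode so that James's non-list-dependent distribution $U_{\bfs|\bfu}^\star$ actually symmetrizes the posterior over all $(L+1)$-sized competitor sets uniformly. Extracting such a subcode is a Ramsey-/hypergraph-regularity-type problem about multisets of vectors over finite alphabets, which is precisely the content of \Cref{conj:comb_conj}. A secondary technical point is that the extraction must be done at order $L+1$ (Alice's codeword together with $L$ decoy codewords) rather than $L$, so that the permutation invariance underlying the list-decoding error event is preserved.
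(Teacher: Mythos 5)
Your overall architecture coincides with the paper's: reduce to an approximately constant-composition code (\Cref{lem:apx_cc_reduction}), use the Plotkin/Tur\'an extraction (\Cref{lem:apply_turan}) plus \Cref{conj:comb_conj} to obtain a structured subcode of constant relative size, let James jam i.i.d.\ coordinate-wise with the minimizing $U_{\bfs|\bfu}$ along a time-sharing sequence $\vu$, and finish with a strong-converse argument for the induced time-varying (fading) DMC under list-decoding (\Cref{thm:strong_conv_fading}), so that the constant-fraction subcode inherits error close to $1$ and the full code has error at least a constant. Steps 1, 3 and 4 of your proposal are essentially the paper's Sections \ref{sec:rate_ub_strategy} and \ref{sec:rate_ub_analysis}.

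The gap is in Step 2, i.e.\ in what you take \Cref{conj:comb_conj} to say and in how $\vu$ is obtained. The conjecture does \emph{not} assert that a rate-preserving subcode exists in which every $(L{+}1)$-tuple has joint type close to a fixed order-$(L{+}1)$ $\cp$ distribution; it asserts the existence of a constant-fraction subcode $\cC'$ together with a \emph{universal time-sharing sequence} $\vu\in\cU^n$ such that $d\paren{\tau_{\vu,\vx},\wt P_\bfu\wt P_{\bfx|\bfu}}\le\gamma$ for every single codeword $\vx\in\cC'$. This distinction is exactly where your argument breaks: in Step 3 you say James ``extracts the time-sharing schedule $\vu$ induced by the $\cp$-decomposition,'' but a $\cp$-decomposition $(\wt P_\bfu,\wt P_{\bfx|\bfu})$ is a single-letter object and induces no assignment of a symbol $u$ to each coordinate $j\in[n]$; producing such a coordinate-wise $\vu$ that is simultaneously compatible with a positive fraction of the codewords is precisely the content of the conjecture, and it neither follows from the decomposition alone nor (at least not without a further nontrivial argument) from your ``every-tuple'' reformulation. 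Without a universal $\vu$, the only attack James can implement in this family is the unconditional i.i.d.\ one with $U_\bfs=\sqrbrkt{\wt P_\bfu U_{\bfs|\bfu}}_\bfs$, and the resulting single-letter bound is $I(\bfx;\bfy)$ (the Sarwate--Gastpar-type quantity $\wc C_L$), not the conditional $I(\bfx;\bfy|\bfu)\le C_L$ that the theorem claims; the whole improvement over \cite{sarwate-gastpar-2012-list-dec-avc-state-constr} lives in that conditioning. Two smaller points: the paper performs the Tur\'an extraction at order $L$ (the order-$(L{+}1)$ structure is needed only for the $\cp$-symmetrization converse of \Cref{thm:symm_converse}, since the present attack never samples a spoofing list), so your insistence on order $L{+}1$ is unnecessary here; and the quantitative conclusion is obtained in the paper via the strong converse of \Cref{thm:strong_conv_fading} applied to the subsubcode with its optimal list-decoder, rather than a Fano-type weak converse, although for the purpose of a constant lower bound on the error either would do.
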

\begin{proof}
See \Cref{sec:converse_rate_ub}. 
\end{proof}

\subsection{Comparison of our results with \cite{sarwate-gastpar-2012-list-dec-avc-state-constr}}
\label{sec:comparison_our_sg}
Using the notions of strong (\Cref{def:strong_symm}) and weak (\Cref{def:weak_symm}) symmetrizability mentioned in \Cref{sec:obli_avc_and_cp_symm}, 
Sarwate and Gastpar \cite{sarwate-gastpar-2012-list-dec-avc-state-constr} proved upper and lower bounds, respectively, on the $L$-list-decoding capacity.
Their upper and lower bounds read as follows:
\begin{align}
\wc C_L \coloneqq& \max_{ \substack{P_\bfx\in\lambda_\bfx\colon \\ P_\bfx\mathrm{\ non}\hyphen\mathrm{strongly}\hyphen L\hyphen\mathrm{symmetrizable}} } \min_{U_\bfs\in\lambda_\bfs} I(\bfx;\bfy), \notag \\
\wh C_L \coloneqq& \max_{ \substack{P_\bfx\in\lambda_\bfx\colon \\ P_\bfx\mathrm{\ non}\hyphen\mathrm{weakly}\hyphen L\hyphen\mathrm{symmetrizable}} } \min_{U_\bfs\in\lambda_\bfs} I(\bfx;\bfy), \notag 
\end{align}
where the mutual information terms are evaluated according to $ P_{\bfx,\bfy} = \sqrbrkt{P_\bfx U_\bfs W_{\bfy|\bfx,\bfs}}_{\bfx,\bfy} $. 

\begin{theorem}
Let $ \cA_\obli $ be an oblivious AVC. Let $ L\in\bZ_{\ge1} $ be the list-size. 
Then the $L$-list-decoding capacity $ C_L(\cA_\obli) $ of $ \cA_\obli $ satisfies the following.
\begin{enumerate}
	\item If $ L\le L_\strong^* $, then $ C_L(\cA_\obli) = 0 $.
	\item If $ L>L_\strong^* $, then $ C_L(\cA_\obli)\le \wc C_L $. 
	\item If $ L>L_\weak^* $, then $ C_L(\cA_\obli)\ge \wh C_L $. 
\end{enumerate}
\end{theorem}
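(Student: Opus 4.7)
The plan is to handle the three items separately but to begin each with the same reduction: invoke \Cref{lem:apx_cc_reduction} to pass, without loss in rate, from any code $\cC$ to a $(\lambda, P_\bfx)$-approximately constant-composition subcode $\cC'$ for some $P_\bfx \in \lambda_\bfx$, and then analyze the attack or achievable scheme relative to this single input type $P_\bfx$. Throughout, the received channel is driven by $W_{\bfy|\bfx,\bfs}^\tn$, and the distinctions between the three items amount to which class of joint input laws James is allowed (or forced) to reason about, namely $\cJ^\tl(P_\bfx)$, $\cp^\tl(P_\bfx)$, or the singleton $\{P_\bfx^\tl\}$.

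For item (1), I would use the $L$-wise symmetrization attack: James samples a uniform ``spoof'' list $\cL = (\vbfx_{i_1},\ldots,\vbfx_{i_L})$ from $\cC'$ and then, pointwise, passes it through a distribution $U_{\bfs|\bfx_{[L]}} \in \cU_{\obli,\symml{L}}$ adapted to the joint type of $\cL$ to obtain $\vbfs$. Since $\cC'$ is approximately constant-composition with marginal $P_\bfx$, with high probability the joint type of $\cL$ lies in a small $\ell^1$-neighborhood of some $P_{\bfx_{[L]}} \in \cJ^\tl(P_\bfx)$; because $L \le L_\strong^*$, the definition of strong symmetrizability guarantees a feasible $U_{\bfs|\bfx_{[L]}}$ that renders $\sqrbrkt{P_{\bfx_{[L]}} U_{\bfs|\bfx_{[L]}}}_\bfs \in \lambda_\bfs$. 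The resulting induced joint law of $(\vbfx_{i_0}, \vbfx_{i_1},\ldots,\vbfx_{i_L}, \vbfy)$ is approximately invariant under $S_{L+1}$, so by the ``mistaken identity'' argument from the overview section, the true message index $i_0$ is a posteriori equally likely as any $i_j$, forcing $P_{\e,\avg}(\cC) \ge \tfrac{1}{L+1} - o(1)$.

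For item (2), I would combine the attack of item (1) with an iid-jamming attack. Any strongly $L$-symmetrizable $P_\bfx$ is already killed by that attack, so I may assume the constant-composition type of $\cC'$ is non-strongly-$L$-symmetrizable. James then picks an arbitrary $U_\bfs \in \lambda_\bfs$ and jams with $\vbfs \sim U_\bfs^{\otimes n}$; the induced channel is a DMC with transition law $\sum_{s} U_\bfs(s) W_{\bfy|\bfx,\bfs}(\cdot|\cdot,s)$, and a standard Fano converse bounds rate by $I(\bfx;\bfy)$. Minimizing over feasible $U_\bfs$ and maximizing over $P_\bfx$ in the allowed class yields $\wc C_L$. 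For item (3), I would run a standard random-coding scheme with iid codewords of marginal $P_\bfx$ for a non-weakly-$L$-symmetrizable $P_\bfx$. By measure concentration, most $L$-subsets of codewords have joint type close to $P_\bfx^\tl$; non-weak-$L$-symmetrizability means every $U_{\bfs|\bfx_{[L]}} \in \cU_{\obli,\symml{L}}$ produces a marginal state distribution outside $\lambda_\bfs$ when averaged against $P_\bfx^\tl$, so every feasible jamming law is ``asymmetric'' with respect to the random code. A typicality-based list decoder (as in \cite{csiszar-narayan-it1988-obliviousavc,hughes-1997-list-avc,sarwate-gastpar-2012-list-dec-avc-state-constr}), combined with an expurgation to remove the exponentially few bad $L$-tuples, achieves $\min_{U_\bfs \in \lambda_\bfs} I(\bfx;\bfy)$ and hence $\wh C_L$ after maximizing over $P_\bfx$.

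The main obstacle is in item (1). Strong symmetrizability is a statement about \emph{exact} self-couplings in $\cJ^\tl(P_\bfx)$, but the joint type of a random spoof list $\cL$ is only \emph{approximately} such a self-coupling, and the symmetrizing $U_{\bfs|\bfx_{[L]}}$ depends continuously on that type. One must therefore show that the $\ell^1$-closeness of $\tau_{\vbfx_{i_1},\ldots,\vbfx_{i_L}}$ to $\cJ^\tl(P_\bfx)$, together with convexity of $\lambda_\bfs$ and robust continuity of the symmetrization map $P_{\bfx_{[L]}} \mapsto U_{\bfs|\bfx_{[L]}}$, still yields an output law that is $o(1)$-close in total variation to a genuinely $S_{L+1}$-invariant one, so that the constant-probability decoding failure survives. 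This is the place where the ``careful tweaks'' alluded to in the overview are needed, and it is largely parallel to the corresponding step inside the $\cp$-symmetrization analysis deferred to \Cref{sec:cp_symm} and \Cref{app:converse_rate_zero}.
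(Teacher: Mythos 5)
Before anything else, note that the paper does not prove this theorem at all: it is the quoted result of Sarwate and Gastpar \cite{sarwate-gastpar-2012-list-dec-avc-state-constr}, stated in \Cref{sec:comparison_our_sg} purely for comparison with the paper's new bounds, and no proof environment follows it. So your proposal has to stand on its own as a reconstruction of the prior argument. Your overall template — constant-composition reduction via \Cref{lem:apx_cc_reduction}, a symmetrization attack for the zero-rate claim, i.i.d.\ jamming plus a Fano-type converse for the upper bound, and Csisz\'ar--Narayan/Hughes-style random coding with a typicality/disambiguation decoder for the lower bound — is the right family of techniques, and items (2) and (3) are plausible at the sketch level (modulo the fact that item (2) leans on item (1) to dispose of strongly symmetrizable compositions).

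The genuine gap is in item (1), and it is not the continuity issue you flag. Your attack lets James pick $U_{\bfs|\bfx_{[L]}}$ as a function of the realized joint type of his spoof list $\cL$. The ``mistaken identity'' step needs, for each exchange of the true index $i_0$ with a list index $i_j$, that $\sum_{\vs\in\cS^n} U_{\bfs|\bfx_{[L]}}^{\tn}(\vs|\vx_{\cL})W^{\tn}_{\bfy|\bfx,\bfs}(\vy|\vx_{i_0},\vs) = \sum_{\vs\in\cS^n} U_{\bfs|\bfx_{[L]}}^{\tn}(\vs|\vx_{\cL_j})W^{\tn}_{\bfy|\bfx,\bfs}(\vy|\vx_{i_j},\vs)$, where $\cL_j$ is $\cL$ with $i_j$ replaced by $i_0$; the identities defining $\cU_{\obli,\symml{L}}$ (\Cref{def:def_symm_distr}) deliver this only when the \emph{same} kernel $U$ appears on both sides. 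But $\tau_{\vx_{\cL_j}}$ is in general a different self-coupling than $\tau_{\vx_{\cL}}$, so under your adaptive rule James would have used a different symmetrizing kernel in the swapped scenario, and the two output laws need not coincide — $\ell^1$-closeness does not help because the two types can be far apart. In addition, the state constraint must be verified for the state sequences attached to all permuted scenarios, not just the realized one. If you instead fix one $U$ per quantization cell, strong symmetrizability (\Cref{def:strong_symm}) gives a feasible $U$ for each self-coupling but not a single common $U$ valid for all the $L$-marginals obtained by deleting different coordinates of the $(L+1)$-tuple; for a non-exchangeable joint type those marginals differ. This is precisely the obstruction the paper emphasizes in \Cref{sec:overview_results_techniques} and circumvents, for its own converse \Cref{thm:symm_converse}, by using \Cref{lem:apply_turan} to pin the list type to a single completely positive — hence exchangeable — distribution, so that deleting any coordinate of a good $(L+1)$-list gives a good $L$-list with the same target type (\Cref{lem:relation_g_gprime}) and one fixed $U$ suffices; in Hughes's unconstrained setting the issue is absent because any single element of $\cU_{\obli,\symml{L}}$ works for every list. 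Your sketch supplies no analogous device for general self-couplings, so as written the claimed constant-probability decoding failure in item (1), and hence the dichotomy step of item (2), does not follow.
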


It was open before this work whether positive rate is achievable for $ L_\strong^* <L\le L_\weak^* $. 

A visual comparison of our results with those in \cite{sarwate-gastpar-2012-list-dec-avc-state-constr} is shown in \Cref{fig:comparison}.
\begin{figure}[htbp]
	\centering
	\includegraphics[width=0.9\textwidth]{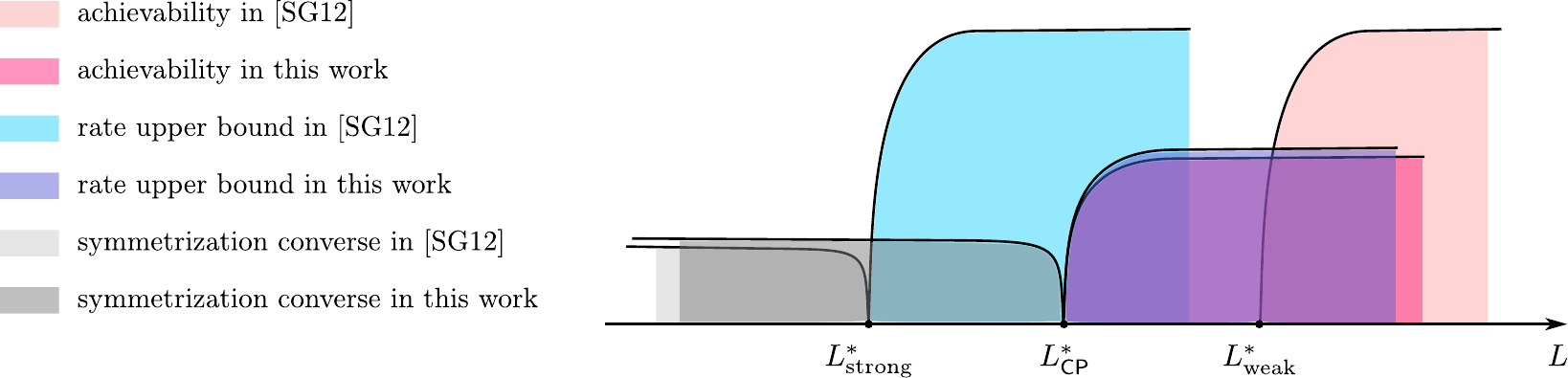}
	\caption{A visual comparison of results in this paper with those in \cite{sarwate-gastpar-2012-list-dec-avc-state-constr}.}
	\label{fig:comparison}
\end{figure}
More specifically, comparisons of achievability, symmetrization converse and capacity upper bounds are shown in \Cref{fig:comparison_ach}, \Cref{fig:comparison_ub} and \Cref{fig:comparison_symm}, respectively. 
\begin{figure}[htbp]
	\centering
	\begin{subfigure}{0.6\textwidth}
		\centering
		\includegraphics[width=1.0\linewidth]{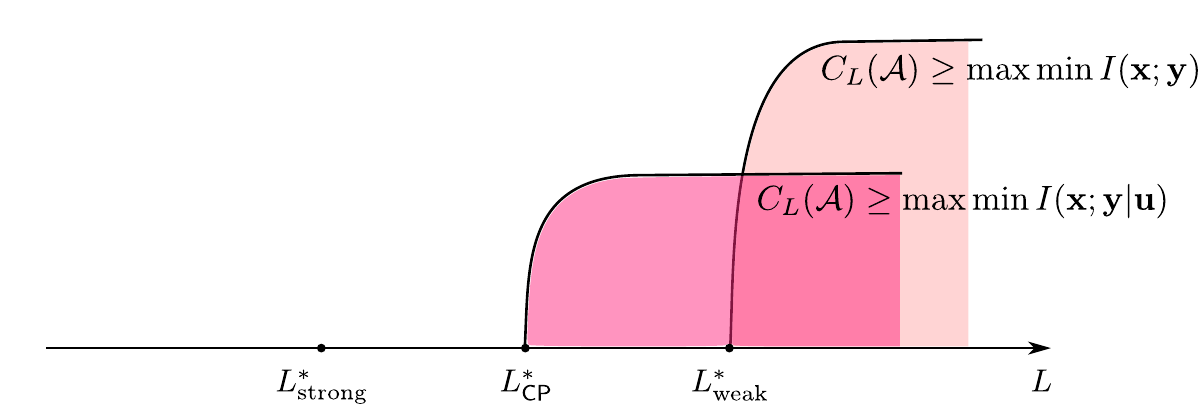}
		\caption{Comparison of achievability results.}
		\label{fig:comparison_ach}
	\end{subfigure} \\ 
	\begin{subfigure}{0.6\textwidth}
		\centering
		\includegraphics[width=1.0\linewidth]{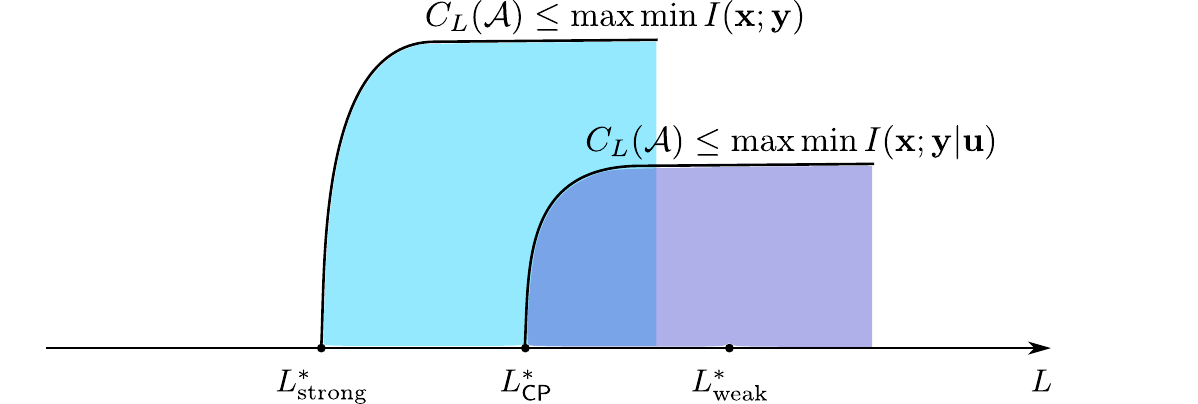}
		\caption{Comparison of capacity upper bounds.}
		\label{fig:comparison_ub}
	\end{subfigure} \\ 
	\begin{subfigure}{0.6\textwidth}
		\centering
		\includegraphics[width=1.0\linewidth]{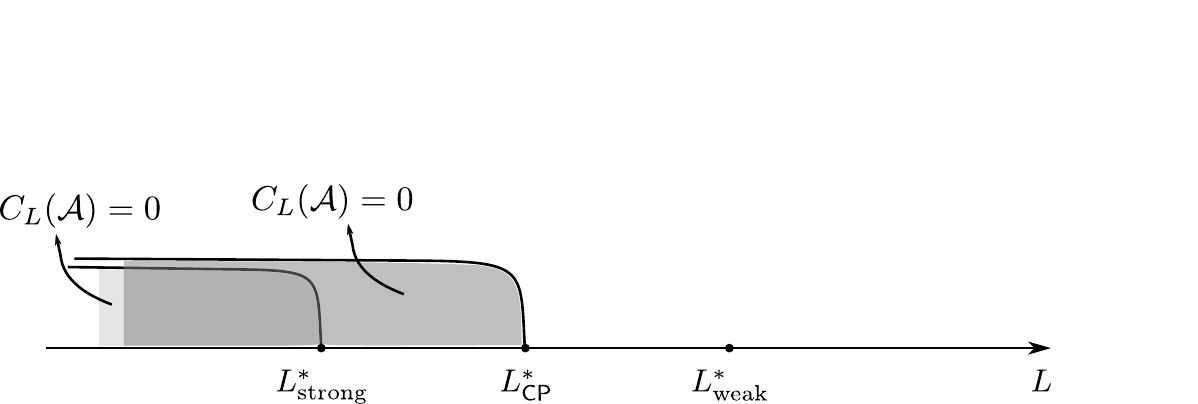}
		\caption{Comparison of symmetrization results.}
		\label{fig:comparison_symm}
	\end{subfigure}
	\caption{Comparisons of achievability, symmetrization converse and capacity upper bounds in this work and \cite{sarwate-gastpar-2012-list-dec-avc-state-constr}.}
	\label{fig:comparison_ach_ub_symm}
\end{figure}

For $L$ in different ranges, we align our results with \cite{sarwate-gastpar-2012-list-dec-avc-state-constr}'s for the readers' convenience. 
\begin{enumerate}
\item When $ L\le L_\strong^* $, our result reconciliates with \cite{sarwate-gastpar-2012-list-dec-avc-state-constr}'s both of which claim that $ C_L(\cA_\obli) = 0 $. 
\item 
When $ L_\strong^*< L\le L_\cp^* $, our symmetrization result claims that $ C_L(\cA_\obli) = 0 $, improving on \cite{sarwate-gastpar-2012-list-dec-avc-state-constr}'s upper bound $ \wc C_L $. 
\item 
When $ L> L_\cp^* $, our upper bound (conditioned on \Cref{conj:comb_conj}) $ C_L $ improves on $ \wc C_L $. 
Indeed,
\begin{align}
I(\bfx;\bfy|\bfu)=& H(\bfy|\bfu) - H(\bfy|\bfx,\bfu) \notag \\
=& H(\bfy|\bfu) - H(\bfy|\bfx) \label{eqn:mc} \\
\le& H(\bfy) - H(\bfy|\bfx) \label{eqn:cre} \\
=& I(\bfx;\bfy). \label{eqn:bound_comparison}
\end{align}
\Cref{eqn:mc} follows since $ \bfu \leftrightarrow \bfx \leftrightarrow \bfy $ forms a Markov chain.
\Cref{eqn:cre} follows from conditioning reduces entropy (\Cref{lem:properties_info_mes}). 
\item 
In the range where $ L_\cp^*<L\le L_\weak^* $, we provide an achievability result whereas there was none in \cite{sarwate-gastpar-2012-list-dec-avc-state-constr}.
In particular, our result indicates that the capacity is positive in this regime which was open before this work.
\item 
When $ L>L_\weak^* $, by \Cref{eqn:bound_comparison}, it seems that our lower bound is no better than that by Sarwate and Gastpar \cite{sarwate-gastpar-2012-list-dec-avc-state-constr}. 
However, in this regime, these two bounds in fact coincide. 
To see this, note that, by non-weak-$L$-symmetrizability, there is an input distribution $ P_\bfx\in\lambda_\bfx $ such that $ P_\bfx^\tl $ is not $L$-symmetrizable. 
Therefore we could simply take $ \cU = \curbrkt{0} $ and take $\bfu = 0$ to be a constant time-sharing variable.
Consequently, $ P_{\bfu,\bfx_1,\cdots,\bfx_L} = P_\bfx^\tl $ and $ I(\bfx;\bfy|\bfu) = I(\bfx;\bfy) $. 

For the same reason, our (conditional) upper bound $ C_L $ does not contradict the lower bound $ \wh C_L $ in \cite{sarwate-gastpar-2012-list-dec-avc-state-constr}. 
\end{enumerate}

In summary, the effect of complete positivity is two-fold.
\begin{enumerate}
	\item The introduction of the notion of $ \cp $-symmetrizability moves the critical $L^* $ (which we show equals $ L_\cp^* $) above which the $L$-list-decoding capacity is positive and at most which the $L$-list-decoding capacity is zero. 
	\item The complete positivity of the spoofing list reduces the value of the list-decoding capacity expression in the manner of conditioning on a time-sharing variable. 
\end{enumerate}

\section{$\cp$-symmetrization: an improved jamming strategy}
\label{sec:cp_symm}

We want to show that given $L$, if every $ P_\bfx\in\lambda_\bfx $ is $L$-obliviously symmetrizable\footnote{This is equivalent to the condition $ L\le L_\cp^* $.}, then no positive rate can be achieved. 
That is,  any positive rate  code $\cC \subseteq\cX^n $ cannot be $L$-list-decodable for $\cA_\obli $.
To this end, we will equip James with a jamming strategy and argue that under such a strategy the average probability of Bob's decoding error is at least some constant (independent of the blocklength $n$). 
To describe the jamming strategy, we need the following definitions, observations and lemmas.

Suppose that Alice uses a (deterministic) codebook $ \cC $. Let $ M\coloneqq|\cC| $.
Take a $\lambda$-net $\cN_1$ of $\Delta(\cX)$ of size $ N_1 = N_1(\lambda,\card{\cX}) $. Take the largest subcode $\cC'\subset\cC$ which is $(\lambda,\wh P_\bfx)$-constant-composition for some $ \wh P_\bfx\in\cN_1 $. 
Let $ M'\coloneqq|\cC'| $.
Note that $ M'\ge M/N_1 $. 

Assume $ \cC = \curbrkt{\vx_i}_{i=1}^M $ and $ \cC' = \curbrkt{\vx_j}_{j=1}^{M'} $.
Define 
\begin{align}
\Gamma_{ L } =& \Gamma^{ \tl }(\cC) \coloneqq \curbrkt{ \tau_{\vx_{i_1},\cdots,\vx_{i_L}}  \colon 1\le i_1<\cdots<i_{ L }\le M }, \notag \\
\Gamma_{ L }' =& \Gamma^{ \tl }(\cC') \coloneqq \curbrkt{ \tau_{\vx_{j_1},\cdots,\vx_{j_L}}\colon 1\le j_1<\cdots<j_{ L }\le M' }. \notag 
\end{align}

\begin{theorem}[Robust generalized Plotkin, \cite{zbj-2019-generalized-ld}]\label{thm:robust_plotkin}
Let $ \wh P_\bfx\in\Delta(\cX) $ and let $ \lambda>0 $ be a constant. 
Any $ (\lambda,\wh P_\bfx) $-constant-composition code $ \cC\subseteq\cX^n $  satisfying
\begin{align}
 d\paren{\Gamma^{ \tl }(\cC),\cp^{\otimes L }(\wh P_\bfx)}>\eps , \label{eqn:cond_robust_plotkin}
\end{align}
for some constant $ \eps\gg\lambda $, has size at most $ K =  K{(\lambda,\eps,L,\cardX)} $ (independent of $n$). 
\end{theorem}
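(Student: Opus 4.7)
The plan is by contrapositive: assuming $|\cC| > K(\lambda,\eps,L,|\cX|)$, I exhibit an $L$-subset of $\cC$ whose joint type is within $\eps$ of $\cp^{\otimes L}(\wh P_\bfx)$. The argument blends three ingredients: (i) an $\eps$-net on $\Delta(\cX^L)$, via \Cref{lem:quant}, which replaces the $n$-dependent count $|\Delta^{(n)}(\cX^L)| = \mathrm{poly}(n)$ by a constant $N_\eps = O((1/\eps)^{|\cX|^L})$; (ii) hypergraph Tur\'an / supersaturation (\Cref{thm:hypergraph_turan}) to upgrade pigeonhole to a large homogeneous clique; and (iii) a finite de Finetti (Diaconis--Freedman) bound that ties the common joint type inside the clique to the $\cp$ cone.

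The execution goes as follows. Form the $L$-uniform hypergraph on vertex set $\cC$ whose hyperedges carry the $(\eps/10)$-quantization class of the joint type of the corresponding $L$-subset, refining the net by the $L!$ coordinate permutations so that a common label $P^\star$ may be taken approximately symmetric. Pigeonhole yields a label $P^\star$ shared by at least $\binom{|\cC|}{L}/(L!\,N_\eps)$ hyperedges, and once $|\cC|$ exceeds the hypergraph Tur\'an threshold (depending only on $L$, $L'$, $L!\,N_\eps$) supersaturation gives a monochromatic clique of size $L'$: codewords $\vx_{i_1},\ldots,\vx_{i_{L'}}$ such that every $L$-subset has joint type within $\eps/10$ of the (approximately symmetric) $P^\star$. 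Now consider the column empirical distribution $P^{(L')} \in \Delta(\cX^{L'})$ of the $L' \times n$ array whose rows are these codewords; constant-composition forces each single-coordinate marginal within $\lambda$ of $\wh P_\bfx$, and symmetrizing $P^{(L')}$ over the $L'!$ row permutations yields an exchangeable law all of whose $L$-dimensional marginals lie within $\eps/10$ of $P^\star$. By finite de Finetti, the $L$-marginal of an exchangeable law on $L'$ variables is within total-variation distance $O(L^2/L')$ of some mixture $\sum_i \lambda_i Q_i^{\otimes L}$, i.e., of a $\cp$ distribution; taking $L' = \Theta(L^2/\eps)$ places $P^\star$ within $\eps/5$ of $\cp$. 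A final $O(\lambda)$ correction (negligible since $\lambda \ll \eps$) snaps the marginal exactly to $\wh P_\bfx$, exhibiting an element of $\cp^{\otimes L}(\wh P_\bfx)$ within $\eps$ of $P^\star$ and contradicting $d(\Gamma^{\otimes L}(\cC),\cp^{\otimes L}(\wh P_\bfx))>\eps$.

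The main obstacle is the chain of quantitative dependences. De Finetti fixes $L'=\Theta(L^2/\eps)$; the net scale and $L!$-refinement fix $N_\eps$; the hypergraph supersaturation threshold then dictates $K$. Bookkeeping these constants is routine but notationally heavy. Crucially, every constant in the chain is $n$-free, which is why $K$ is independent of $n$ even though the integer type count $|\Delta^{(n)}(\cX^L)|$ is polynomial in $n$. A secondary subtlety is that $P^\star$ is a priori just the type of ordered tuples, not a symmetric distribution, while de Finetti prefers symmetric laws; I handle this by refining the net with the $L!$ permutation classes so that, after paying an $L!$ factor in $N_\eps$, $P^\star$ can be treated as symmetric to within $\eps/10$ and $\cp$'s own symmetry makes the approximation transparent.
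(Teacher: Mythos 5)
Your overall plan (homogenize the joint types, then compare sampling without replacement to sampling with replacement so that the common type lands near a mixture of i.i.d.\ laws) is a legitimate primal counterpart of the paper's argument, but two of its load-bearing steps do not work as written. First, step (ii): pigeonhole only gives you a colour class containing a $1/(L!\,N_\eps)$ fraction of all $L$-subsets, and hypergraph Tur\'an/supersaturation cannot extract a clique from such a sparse class --- already for $L=2$, a complete bipartite graph has edge density about $1/2$ and contains no triangle, so density $1/N$ never forces a clique of size $L'$. The Tur\'an bound (\Cref{thm:hypergraph_turan}) only forces cliques at densities exceeding $1-\binom{L'}{L}^{-1}$, which pigeonhole does not supply; in this paper Tur\'an is used in the \emph{opposite} direction, in \Cref{lem:apply_turan}, where the Plotkin bound is already assumed. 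What you need at this point is hypergraph Ramsey: colouring all $L$-subsets of $\cC$ by the net cell of their joint type, a monochromatic clique of any prescribed constant size $L'$ exists once $\cardC$ exceeds the ($n$-free) Ramsey number, and this is exactly what the paper does in its proof of the stronger \Cref{thm:robust_plotkin_stronger}. With that substitution the homogenization step is fine.

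The second and more serious gap is the symmetry of the common label $P^\star$. Completely positive distributions are invariant under coordinate permutations, so $d\paren{P^\star,\cp^{\otimes L}(\wh P_\bfx)}$ can only be small if $P^\star$ is nearly symmetric; but $P^\star$ is the type of tuples taken in increasing index order and can a priori be badly asymmetric (for $L=2$, a type concentrated on the pair $(0,1)$). Your de Finetti step only shows that the row-symmetrized marginal, i.e.\ $\frac{1}{L!}\sum_{\pi}P^\star\circ\pi$, is close to a mixture of i.i.d.\ laws; this says nothing about $P^\star$ itself, and refining the net by the $L!$ permutation classes does not make an asymmetric type symmetric --- no bookkeeping of net cells can. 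What is needed is a Koml\'os-type lemma: if all increasing-order $L$-subset types of a set of codewords approximately equal a $P^\star$ whose asymmetry is at least $\alpha$, then that set has size bounded by a constant depending on $\alpha$. This is precisely the ``asymmetric case'' in the paper's proof, handled via \cite{komlos-1990-strange_pigeonhole} and its list-decoding extension \cite{bondaschi-dalai-2019-revisiting}, and it is absent from your proposal. Once you add it (and treat the final marginal adjustment a bit more carefully --- the de Finetti mixture has one-dimensional marginal only $(\lambda+O(L^2/L'))$-close to $\wh P_\bfx$, and pushing each component $Q_i$ through a TV-optimal coupling channel repairs it at a cost of order $L$ times that, still $\ll\eps$), your sampling/de Finetti route does go through and is a genuine alternative to the paper's copositive-witness double counting, which encodes the same with-versus-without-replacement comparison in dual form.
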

The above theorem which allows the code to be \emph{approximately} constant-composition is in fact a slight extension to \cite{zbj-2019-generalized-ld}.
The proof is similar and we only highlight the differences in \Cref{app:robust_plotkin}.

\begin{theorem}[Hypergraph Tur\'an]\label{thm:hypergraph_turan}
If an $ L $-uniform hypergraph $ \cH^{(L)} $ on $M$ vertices contains as subgraphs no complete $L$-uniform hypergraphs $ \cK_K^{(L)} $ on $K$ vertices, then the edge density of $ \cH^{(L)} $ is at most 
\begin{align}
\frac{e(\cH^{(L)})}{\binom{M}{L}}\le & 1 - \frac{1}{\binom{{K}}{L}}, \notag 
\end{align}
where $ e(\cH^{(L)}) $ denotes the number of hyperedges in $ \cH^{(L)} $. 
\end{theorem}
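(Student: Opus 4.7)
The plan is to establish this bound via a standard double-counting argument. The key observation is that the $K$-clique-free hypothesis forces every $K$-subset $S$ of the vertex set $V$ (with $|V|=M$) to contain at least one $L$-subset that is \emph{not} an edge of $\cH^{(L)}$; otherwise, all $\binom{K}{L}$ $L$-subsets of $S$ would be edges and $S$ would span a copy of $\cK_K^{(L)}$, contradicting the hypothesis.

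With this observation in hand, I would count in two ways the set of incidence pairs $(S,T)$ satisfying $S\in\binom{V}{K}$, $T\in\binom{S}{L}$, and $T$ is a non-edge of $\cH^{(L)}$. From the ``$S$-side,'' the preceding observation shows that each of the $\binom{M}{K}$ choices of $S$ contributes at least one valid $T$, giving a lower bound of $\binom{M}{K}$ on the total number of incidences. From the ``$T$-side,'' each fixed non-edge $T$ is contained in exactly $\binom{M-L}{K-L}$ subsets $S$ of size $K$ (choose the remaining $K-L$ vertices from the $M-L$ vertices outside $T$), so the same total count equals $\overline{e}(\cH^{(L)})\cdot\binom{M-L}{K-L}$, where $\overline{e}(\cH^{(L)})\coloneqq \binom{M}{L}-e(\cH^{(L)})$ denotes the number of non-edges.

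Combining these two bounds yields $\overline{e}(\cH^{(L)})\ge \binom{M}{K}/\binom{M-L}{K-L}$, and invoking the standard identity $\binom{M}{K}\binom{K}{L}=\binom{M}{L}\binom{M-L}{K-L}$ (both sides count flags $T\subset S\subset V$ with $|T|=L$, $|S|=K$, but choose which layer to pick first) reduces the right-hand side to $\binom{M}{L}/\binom{K}{L}$. Rearranging gives $e(\cH^{(L)})\le \binom{M}{L}\bigl(1-1/\binom{K}{L}\bigr)$, which is precisely the claimed edge-density bound. The argument is entirely routine and I do not anticipate any substantive obstacle; the only mild subtlety is the binomial identity used in the last step, which is transparent once interpreted combinatorially as above.
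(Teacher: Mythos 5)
Your argument is correct. The paper does not actually prove this theorem---it states it as a known result (remarking that de Caen's stronger bound $1-\binom{K-1}{L-1}^{-1}$ exists, and pointing to Keevash's survey)---so there is no internal proof to compare against; what you have written is the standard double-counting/averaging proof of the weak hypergraph Tur\'an bound, and every step checks out: the clique-freeness forces at least one non-edge inside each $K$-subset, each non-edge lies in exactly $\binom{M-L}{K-L}$ such $K$-subsets, and the flag identity $\binom{M}{K}\binom{K}{L}=\binom{M}{L}\binom{M-L}{K-L}$ converts the resulting lower bound on the number of non-edges into the claimed density bound. The only caveat worth noting is the implicit assumption $M\ge K$ (otherwise $\binom{M}{K}=0$ and the statement itself fails, e.g.\ for the complete $L$-uniform hypergraph on $M<K$ vertices); this is harmless in the paper's application, where $K$ is a constant and $M$ is exponentially large in $n$.
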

\begin{remark}
In \cite{decaen-1983-hypergraph_turan}, the author proved a stronger bound 
\begin{align}
\frac{e(\cH^{(L)})}{\binom{M}{L}}\le & 1 - \frac{1}{\binom{{K-1}}{L-1}}. \notag 
\end{align}
However, the simple bound in \Cref{thm:hypergraph_turan} suffices for the purposes of this paper. 
The readers are encouraged to refer to the survey by Keevash \cite{keevash-2011-hypergraph-turan-survey} on hypergraph Tur\'an problem.
\end{remark}

\begin{lemma}[$\cp$-list extraction]\label{lem:apply_turan}
Let $ \wh P_\bfx\in\Delta(\cX) $. 
Fix constants $ \lambda>0 $ and  $ \eps\gg\lambda $.
For any $ (\lambda,\wh P_\bfx) $-constant-composition code $ \cC = \curbrkt{\vx_i}_{i=1}^M \subseteq\cX^n $, 
there exists a constant $  \nu = \nu(\lambda,\eps,L,\cardX) >0 $ (independent of $n$) such that
\begin{align}
\frac{1}{\binom{M}{ L }}\card{ \curbrkt{ \tau_{\vx_{i_1},\cdots,\vx_{ i_L }}\colon 1\le i_1<\cdots<i_{ L }\le M,\;d\paren{ \tau_{\vx_{i_1},\cdots,\vx_{i_L}},\cp^{\otimes L }(\wh P_\bfx) }\le\eps } }\ge \nu. \notag
\end{align}
\end{lemma}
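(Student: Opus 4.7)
The plan is to combine the Robust Generalized Plotkin theorem (\Cref{thm:robust_plotkin}) with the Hypergraph Tur\'an theorem (\Cref{thm:hypergraph_turan}) via an auxiliary hypergraph of ``bad'' $L$-tuples.

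First, I would introduce the $L$-uniform hypergraph $\cH^{(L)}$ on vertex set $[M]$ whose hyperedges are the $L$-subsets $\{i_1<\cdots<i_L\}\subseteq[M]$ for which the joint type $\tau_{\vx_{i_1},\ldots,\vx_{i_L}}$ lies at $\ell^1$-distance strictly greater than $\eps$ from every element of $\cp^{\otimes L}(\wh P_\bfx)$. Write $K=K(\lambda,\eps,L,\cardX)$ for the constant provided by \Cref{thm:robust_plotkin}. The goal is to show that $\cH^{(L)}$ contains no complete sub-hypergraph $\cK_{K+1}^{(L)}$.

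Suppose for contradiction that $\{j_1,\ldots,j_{K+1}\}$ spans a copy of $\cK_{K+1}^{(L)}$ in $\cH^{(L)}$, and let $\cC''\coloneqq\{\vx_{j_1},\ldots,\vx_{j_{K+1}}\}\subseteq\cC$. Since $\cC$ is $(\lambda,\wh P_\bfx)$-constant-composition, so is $\cC''$. By the defining property of the clique, every ordered $L$-sublist of $\cC''$ has joint type at distance exceeding $\eps$ from $\cp^{\otimes L}(\wh P_\bfx)$, hence
\[
d\bigl(\Gamma^{\otimes L}(\cC''),\,\cp^{\otimes L}(\wh P_\bfx)\bigr)>\eps .
\]
Thus $\cC''$ satisfies the hypothesis \Cref{eqn:cond_robust_plotkin} of \Cref{thm:robust_plotkin}, which forces $|\cC''|\le K$, contradicting $|\cC''|=K+1$.

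Since $\cH^{(L)}$ is therefore $\cK_{K+1}^{(L)}$-free, \Cref{thm:hypergraph_turan} (applied with parameter $K+1$) yields
\[
\frac{e(\cH^{(L)})}{\binom{M}{L}}\le 1-\frac{1}{\binom{K+1}{L}} .
\]
Consequently the number of $L$-sublists of $\cC$ whose joint type lies within $\eps$ of $\cp^{\otimes L}(\wh P_\bfx)$ is at least $\binom{M}{L}\big/\binom{K+1}{L}$, proving the claim with $\nu\coloneqq1/\binom{K+1}{L}$, a constant depending only on $\lambda,\eps,L,\cardX$. The only mild subtlety is that the argument is vacuous unless $M\ge K+1$, but since $K$ is independent of $n$ while the lemma is of interest only when $M$ grows with $n$ (e.g., for positive-rate codes), this imposes no real restriction; one can simply take $n$ large enough, or equivalently absorb the finite exceptional cases into the constant $\nu$.
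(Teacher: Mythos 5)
Your proof is correct and follows essentially the same route as the paper: build the $L$-uniform hypergraph of "bad" tuples, invoke the robust generalized Plotkin bound to rule out large cliques, and apply hypergraph Tur\'an to lower-bound the fraction of good tuples. Your bookkeeping is in fact slightly more careful than the paper's (you correctly exclude cliques of size $K+1$ and take $\nu = 1/\binom{K+1}{L}$, whereas the paper writes $\nu = \binom{K}{L}^{-1}$), but this off-by-one is immaterial since either choice is a positive constant depending only on $\lambda,\eps,L,\cardX$.
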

\begin{proof}
Let $\cH_\cC $ be a hypergraph consisting of all codewords in $\cC$ as vertices. An $L$-tuple of codewords $ \vx_{i_1},\cdots,\vx_{i_{ L }} $ ($ 1\le i_1<\cdots<i_{ L }\le M $) is connected by a hyperedge if $ d\paren{\tau_{\vx_{i_1},\cdots,\vx_{i_{ L }}},\cp^{\otimes L }(\wh P_\bfx)} >\eps $. 
Then by Plotkin bound, no subcode $ \cC' = \curbrkt{\vx_{j}}_{j=1}^{M'} \subset\cC $ of size larger than $ M'> K=K\paren{ \lambda,\eps,L,\cardX} $ satisfies $ d\paren{\tau_{\vx_{j_1},\cdots,\vx_{j_{ L }}}, \cp^{\otimes L }(\wh P_\bfx)} > \eps $ for all $ 1\le j_1<\cdots<j_{ L }\le M' $. 
Said differently, $\cH_\cC $ has no clique of size larger than $ K $. Then by the hypergraph Tur\'an's theorem (\Cref{thm:hypergraph_turan}), the edge density of $\cH_\cC$ is at most $ 1- \nu <1 $, where $ \nu = \binom{K}{L}^{-1} $ . That is, at least a $  \nu $ fraction of order-$L$ joint types of $\cC$ are $\eps$-close to $ \cp^{\otimes L }(\wh P_\bfx) $. (See \Cref{fig:many_cp_types}.)
\begin{figure}[htbp]
	\centering
	\includegraphics[width=0.4\textwidth]{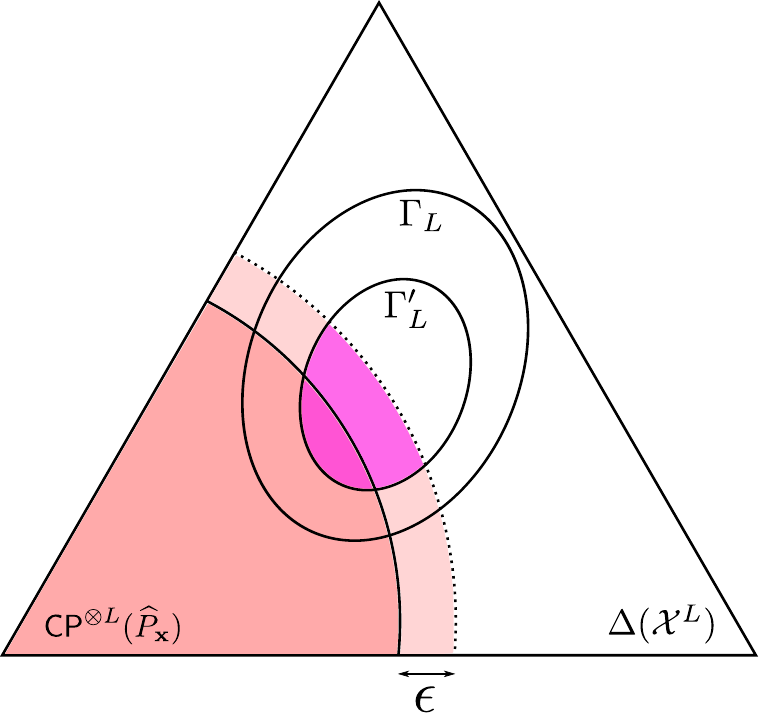}
	\caption{Any sufficiently large code must contain many completely positive joint types.}
	\label{fig:many_cp_types}
\end{figure}
\end{proof}

For a subset $\cA\subseteq\Delta$ of a metric space $ (\Delta,\dist) $, let $(\cA)_\eps $ denote its \emph{$\eps$-enlargement}, i.e., $ (\cA)_\eps\coloneqq\curbrkt{a\in\Delta\colon \dist(a,\cA)\le\eps } $. In particular, a ball centered around a point $a$ of radius $\eps$  can be denoted by $(a)_{\eps}$.

Take an $\eta$-net $ \cN_2 $ of $ \paren{\cp^{\otimes L }(\wh P_\bfx)}_\eps\cap\Delta(\cX^{ L }) $ of size $ N_2\coloneqq N_2(\eps,\eta,L,\cardX) $. 
By \Cref{lem:apply_turan}, we know that the fraction of order-$L$ joint types of $ \cC' $ that are $\eps$-close to $\cp^{\otimes L }(\wh P_\bfx)$ is at least $  \nu  $. 
By Markov's inequality (\Cref{lem:markov}), there exists a distribution $ \wh P_{\bfx_1,\cdots,\bfx_{ L }}\in\cN_2   $ such that 
\begin{align}
\frac{\card{\Gamma_{ L }'\cap\paren{\cp^{\otimes L }(\wh P_\bfx)}_\eps\cap\paren{\wh P_{\bfx_1,\cdots,\bfx_{ L }}}_\eta}}{\card{\Gamma_{ L }'\cap\paren{\cp^{\otimes L }(\wh P_\bfx)}_\eps}}
=&\frac{\card{\curbrkt{ \tau_{\vx_{j_1},\cdots,\vx_{j_{ L }}}\colon \begin{array}{l}
1\le j_1<\cdots<j_{ L }\le M', \\
d\paren{ \tau_{\vx_{j_1},\cdots,\vx_{j_{ L }}},\cp^{\otimes L }(\wh P_\bfx) }\le\eps \\
d\paren{\tau_{\vx_{j_1},\cdots,\vx_{j_{ L }}},\wh P_{\bfx_1,\cdots,\bfx_{ L }}} \le \eta
\end{array} }} }
{ \card{\curbrkt{ \tau_{\vx_{j_1},\cdots,\vx_{j_{ L }}}\colon \begin{array}{l}
1\le j_1<\cdots<j_{ L }\le M', \\
d\paren{ \tau_{\vx_{j_1},\cdots,\vx_{j_{ L }}},\cp^{\otimes L }(\wh P_\bfx) }\le\eps
\end{array} }} } 
\ge 1/N_2. \notag
\end{align}
(See. \Cref{fig:quantize_joint_types}.)
\begin{figure}[htbp]
	\centering
	\includegraphics[width=0.5\textwidth]{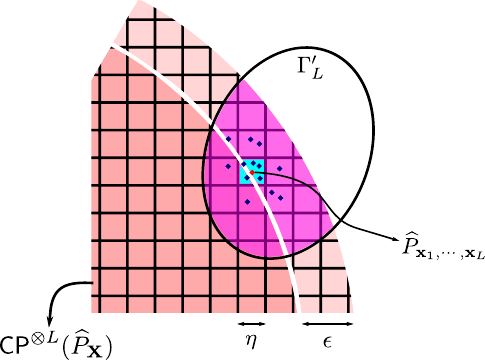}
	\caption{Quantize the set of joint types and find an approximately equicoupled subcode.}
	\label{fig:quantize_joint_types}
\end{figure}
That is
\begin{align}
\card{\Gamma_{ L }'\cap\paren{\cp^{\otimes L }(\wh P_\bfx)}_\eps\cap\paren{\wh P_{\bfx_1,\cdots,\bfx_{ L }}}_\eta}\ge& \card{\Gamma_{ L }'\cap\paren{\cp^{\otimes L }(\wh P_\bfx)}_\eps}/N_2 \ge \frac{\card{\Gamma'_{ L }} \nu }{N_2}. \notag
\end{align}
Note that
\begin{align}
\card{\Gamma_{ L }'} =& \binom{M'}{ L }\ge\binom{M/N_1}{ L }, \quad
\card{\Gamma_{ L }} = \binom{M}{ L }. \notag
\end{align}
Therefore 
\begin{align}
\frac{ \card{\Gamma_{ L }'\cap\paren{\cp^{\otimes L }(\wh P_\bfx)}_\eps\cap\paren{\wh P_{\bfx_1,\cdots,\bfx_{ L }}}_\eta} }{ \card{\Gamma_{ L }} }
\ge& \frac{\binom{M/N_1}{ L } \nu }{\binom{M}{ L } N_2}\ge \frac{\nu\paren{\frac{M/N_1}{L}}^L}{N_2\paren{\frac{Me}{L}}^L} = \frac{\nu}{N_2(N_1e)^L}\eqqcolon c,
\label{eqn:cp_tuple_extraction}
\end{align}
which is at least a constant, denoted by $c>0$,



Note that  $ \wh P_{\bfx_1,\cdots,\bfx_{ L }} \in\cN_2\subset \paren{\cp^{\otimes L }(\wh P_\bfx)}_\eps\cap\Delta(\cX^L) $ may not be exactly $\cp$. 
We project it back to $ \cp^\tl(\wh P_\bfx) $ and get $ \wt P_{\bfx_1,\cdots,\bfx_L}\in\cp^\tl(\wh P_\bfx) $. 
That is
\begin{align}
\wt P_{\bfx_1,\cdots,\bfx_L}\coloneqq& \argmin{P_{\bfx_1,\cdots,\bfx_L}\in\cp^\tl(\wh P_\bfx)}d\paren{ \wh P_{\bfx_1,\cdots,\bfx_L},P_{\bfx_1,\cdots,\bfx_L} }. \notag 
\end{align}
We automatically have 
\begin{align}
 d\paren{\wh P_{\bfx_1,\cdots,\bfx_L}, \wt P_{\bfx_1,\cdots,\bfx_L}}\le\eps. \label{eqn:dist_hat_tilde}
 \end{align} 

We are now ready to specify the jamming strategy James is going to use.
Given a (deterministic) codebook $ \cC $ (which is known to every party), assume $ \vbfx $ (which James does not know) was transmitted by Alice.
James first samples randomly $L$ codewords $ \vbfx_1,\cdots,\vbfx_L $ from $ \cC $.

Since $ L\le L_\cp^* $, every $ P_\bfx\in\lambda_\bfx $ is $L$-symmetrizable. 
In particular, $ \wh P_\bfx\in\lambda_\bfx $ is also symmetrizable. 
Since  $ \wt P_{\bfx_1,\cdots,\bfx_{ L }} $ constructed above is a completely positive $ \wh P_\bfx $-self-coupling, 
for any $\cp$-decomposition $ (\wt P_\bfu,\wt P_{\bfx|\bfu}) $ of $ \wt P_{\bfx_1,\cdots,\bfx_L} $, 
there is a symmetrizing distribution $ U_{\bfs|\bfu,\bfx_1,\cdots,\bfx_{ L }} \in\cU_{ L } $ associated to $(\wt P_\bfu,\wt P_{\bfx|\bfu})$.  
James then sample $ \vbfs\sim \sqrbrkt{ \wt P_\bfu \wt P_{\bfx|\bfu}^\tl U_{\bfs|\bfu,\bfx_1,\cdots,\bfx_{ L }} }_\bfs^{\otimes n} $.

We then briefly sketch why the above strategy enforces the probability of Bob's decoding error to be bounded away from zero. 
The formal proof is relegated to \Cref{app:converse_rate_zero}. 
Assume that for  some positive constants $ \delta_i>0 $ ($1\le i\le\beta$),
\begin{align}
\cost_i((\wt P_\bfu,\wt P_{\bfx|\bfu}) ,U_{\bfs|\bfu,\bfx_{[L]}}) =&  \sum_{(u,x_{[L]},s)\in\cU\times\cX^L\times\cS}\wt P_\bfu(u) \wt P_{\bfx|\bfu}^\tl(x_{[L]}|u)U_{\bfs|\bfu,\bfx_{[L]}}(s|u,x_{[L]})B_i(s) \le \Lambda_i-\delta_i. \label{eqn:cost_assumption} 
\end{align}
By \Cref{eqn:dist_hat_tilde}, it can be shown that
\begin{multline}
\left| \sum_{(u,x_{[L]},s)\in\cU\times\cX^L\times\cS} \tau_{\vu,\vx_{\cL}}(u,x_{[L]})U_{\bfs|\bfu,\bfx_{[L]}}(s|u,x_{ [L] })B_i(s) \right. \\
\left. - \sum_{(u,x_{[L]},s)\in\cU\times\cX^L\times\cS} \wt P_\bfu(u) \wt P_{\bfx|\bfu}^\tl(x_{[L]}|u)U_{\bfs|\bfu,\bfx_{[L]}}(s|u,x_{ [L] })B_i(s) \right| \le g(\eta,\eps) . \notag
\end{multline}
Therefore
\begin{align}
\cost_i(\tau_{\vu,\vx_\cL},U_{\bfs|\bfu,\bfx_{[L]}}) \le& \cost_i((\wt P_\bfu,\wt P_{\bfx|\bfu}),U_{\bfs|\bfu,\bfx_{[L]}}) + g(\eta, \eps) \le \Lambda_i-\delta_i+g(\eta, \eps) < \Lambda_i, \notag
\end{align}
by choosing proper $\eta$ and $\eps$.

Over the randomness of Alice's choice of $\vbfx$ and James's choice of $ \vbfx_1,\cdots,\vbfx_L $, the joint type $ \tau_{\vbfx_1,\cdots,\vbfx_L} $ is $(\eta+\eps)$-close to $ \wt P_{\bfx_{[L]}} $ with probability at least $c$, given $ \vbfx,\vbfx_1,\cdots,\vbfx_L $ are distinct (which is not case with probability only $ 1 - \frac{M-1}{M}\frac{M-2}{M}\cdots\frac{M-L}{M} $ -- vanishingly small in $n$). 
Since James uses a symmetrizing distribution $ U_{\bfs|\bfu,\bfx_{[L]}} $ tailored for $ \wt P_{\bfx_{[L]}} $, the channel is symmetrized under the aforementioned condition. 
Then from Bob's perspective, any $L$-subset of $ \vbfx,\vbfx_1,\cdots,\vbfx_L $ is equally likely and he could not distinguish them.
Consequently, no matter which list-decoder Bob uses, it makes an error with probability at least $1/L$ -- no better than randomly output an $L$-subset of the candidate $(L+1)$-list. 
The above intuition combined with the calculation in this section can be formalized using arguments along the lines of \cite{csiszar-narayan-it1988-obliviousavc,hughes-1997-list-avc,sarwate-gastpar-2012-list-dec-avc-state-constr}.
See \Cref{app:converse_rate_zero} for details.

\section{Achievability}
\label{sec:achievability}
In this section, we prove \Cref{thm:achievability}.
The proof is along the lines of \cite{csiszar-narayan-it1988-obliviousavc,hughes-1997-list-avc,sarwate-gastpar-2012-list-dec-avc-state-constr}.

Assume that $ L>L_{\cp}^* $. 
It suffices to take $ L = L_{\cp}^* + 1 $. 
By the definition of $\cp$-symmetrizability (\Cref{def:cp_symm}), there exists a feasible input distribution $ P_\bfx\in\lambda_\bfx $, a $\cp$ $ P_\bfx $-self-coupling $ P_{\bfx_1,\cdots,\bfx_L}\in\cp^\tl(P_\bfx) $ and a $\cp$-decomposition $ (P_\bfu, P_{\bfx|\bfu}) $ such that there is no symmetrizing distribution $ U_{\bfs|\bfu,\bfx_1,\cdots,\bfx_L}\in\cU_{\obli,\symml{L}} $ satisfying power constraint $ \sqrbrkt{P_{\bfu} P_{\bfx|\bfu}^\tl U_{\bfs|\bfu,\bfx_1,\cdots,\bfx_L} }_\bfs \in\lambda_\bfs $.

Throughout the proof of achievability, fix any distributions $ P_\bfx\in\lambda_\bfx $ and $ P_{\bfx_1,\cdots,\bfx_L}\in\cp^\tl(P_\bfx) $  satisfying the above conditions.
Let $ P_{\bfx_1,\cdots,\bfx_L} = \sum_{i = 1}^k\lambda_iP_{\bfx_i}^\tl $ for some $  k\ge \cprk(P_{\bfx_1,\cdots,\bfx_L}) $ be any non-$L$-symmetrizable $\cp$-decomposition of $ P_{\bfx_1,\cdots,\bfx_L} $. 
We introduce a time-sharing variable $ \bfu $ supported on $ [k] $ with probability mass vector $ P_\bfu = [\lambda_1,\cdots,\lambda_k] $. 
By the construction of $ \bfu $, $ P_\bfu $ has no zero atoms. 
The $ \cp $-decomposition can be alternatively written as $ P_{\bfx_1,\cdots,\bfx_L} = \sqrbrkt{ P_{\bfu}P_{\bfx|\bfu}^\tl }_{\bfx_1,\cdots,\bfx_L} $ where $ P_{\bfx|\bfu = i} = P_{\bfx_i} $ for any $ i\in[k] $. 

Let $ \cU\coloneqq[k] $. 
For a list of messages $ \cL = \curbrkt{i_1,\cdots,i_L},1\le i_1<\cdots<i_L $, we sometimes use $ \vv_{\cL} $ to denote the sequence of vectors $ \vv_{i_1},\cdots,\vv_{i_L} $.
Similar notation is used for random variables $ \bfv_\cL $.

\subsection{Codebook design}
\label{sec:codebook_design}
\begin{lemma}[Codeword selection]
\label{lem:cw_select}
Fix any $ L\in\bZ_{>0} $, $ \eps>0 $, sufficiently large $n \in\bZ_{>0} $, $ M\ge L2^{n\eps} $, $ \vu\in\cU^n $ of type $ P_\bfu\in\Delta(\cU) $ and conditional type $ P_{\bfx|\bfu}\in\Delta(\cX|\cU) $.
Let  $ R = \frac{1}{n}\log\frac{M}{L} $. 
Then there exist codewords $ \vx_1,\cdots,\vx_M\in\cX^n $, each of joint type $ P_{\bfu,\bfx} = P_\bfu P_{\bfx|\bfu} $ with $\vu$, satisfying that for every $ \vx\in\cX^n,\vs\in\cS^n $, joint type $ P_{\bfu,\bfx,\bfx_1,\cdots,\bfx_L,\bfs}\in\Delta(\cU\times\cX\times\cX^L\times\cS) $ and $ k\in[L] $,
\begin{enumerate}
	\item if $ I(\bfx;\bfs|\bfu)\ge\eps $, then 
	\begin{align}
	&\frac{1}{M}\card{\curbrkt{ i\in[M]\colon \tau_{\vu,\vx_i,\vs} = P_{\bfu,\bfx,\bfs} }}\le2^{-n\eps/2};
	\label{eqn:property_1}
	\end{align}
	\item if $ I(\bfx;\bfx_k,\bfs|\bfu)\ge\sqrbrkt{ R - I(\bfx_k;\bfs|\bfu) }^++\eps $, then
	\begin{align}
	&\frac{1}{M}\card{\curbrkt{ i\in[M]\colon\exists j\in[M]\backslash\curbrkt{i},\; \tau_{\vu,\vx_i,\vx_j,\vs} = P_{\bfu,\bfx,\bfx_k,\bfs} }}\le2^{-n\eps/2}; 
	\label{eqn:property_2}
	\end{align}
	\item and 
	\begin{align}
	&\card{\curbrkt{ j\in[M]\colon \tau_{\vu,\vx,\vx_j,\vs} = P_{\bfu,\bfx,\bfx_k,\bfs} }}\le2^{ n\paren{ \sqrbrkt{R - I(\bfx_k;\bfx,\bfs|\bfu)}^++\eps } }.
	\label{eqn:property_3}
	\end{align}
\end{enumerate}
Moreover, if $ R<I(\bfx_k;\bfs|\bfu) $ for all $ k\in[L] $, then $ \vx_1,\cdots,\vx_M $ can be selected to further satisfy that
\begin{enumerate}
	\item[2')] if $ I(\bfx;\bfx_{[L]},\bfs|\bfu)\ge\eps $, then 
	\begin{align}
	&\frac{1}{M}\card{\curbrkt{i\in[M]\colon\exists\cL\in\binom{[M]\backslash\curbrkt{i}}{L},\; \tau_{\vu,\vx_i,\vx_\cL,\vs} = P_{\bfu,\bfx,\bfx_{[L]},\bfs}}}\le2^{-n\eps/2}; 
	\label{eqn:property_22}
	\end{align}
	\item[3')] and
	\begin{align}
	\card{\curbrkt{\cL\in\binom{[M]}{L}\colon \tau_{\vu,\vx,\vx_\cL,\vs} = P_{\bfu,\bfx,\bfx_{[L]},\bfs}}}\le2^{n\eps}.
	\label{eqn:property_33}
	\end{align}
\end{enumerate}
\end{lemma}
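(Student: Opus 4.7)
The proof will be a standard random coding argument. The plan is to sample $\vbfx_1, \dots, \vbfx_M$ independently and uniformly from the conditional-type shell $\{\vx \in \cX^n : \tau_{\vu, \vx} = P_\bfu P_{\bfx|\bfu}\}$, which makes the joint-type constraint with $\vu$ hold by construction. I would then verify that each of the five typicality conditions (1), (2), (3), (2'), (3') holds for this random codebook with probability $1 - o(1)$. Since each condition depends only on the joint type of the sequences involved, and the number of joint types on a fixed alphabet over a bounded number of sequences is polynomial in $n$, a union bound over target joint types converts a per-type high-probability estimate into the existence of a deterministic codebook satisfying all five conditions simultaneously.

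For conditions (1) and (3), single-shot method-of-types estimates suffice. For (1), the probability that a single $\vbfx_i$ produces a prescribed joint type with $(\vu, \vs)$ is at most $2^{-n I(\bfx;\bfs|\bfu) + o(n)} \le 2^{-n\eps + o(n)}$, and a Chernoff bound over the $M$ independent trials (using $M \ge L 2^{n\eps}$) pins the empirical fraction below $2^{-n\eps/2}$ with doubly-exponentially small failure probability. For (3), the expected count of valid $j$ has the form $2^{n(R - I(\bfx_k;\bfx,\bfs|\bfu)) + o(n)}$, and Chernoff (analyzed separately in the super- and sub-polynomial regimes of the expectation) gives the claimed $2^{n([R - I(\bfx_k;\bfx,\bfs|\bfu)]^+ + \eps)}$ bound uniformly over external $\vx, \vs$. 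Property (2) combines the two: conditioning on $\vbfx_i$ realising the marginal $P_{\bfu,\bfx,\bfs}$ and union-bounding over $j \ne i$ yields $\Pr[A_i] \le 2^{-n I(\bfx;\bfs|\bfu) + n[R - I(\bfx_k;\bfx,\bfs|\bfu)]^+ + o(n)}$; a short case analysis using the chain rule $I(\bfx;\bfx_k,\bfs|\bfu) = I(\bfx;\bfs|\bfu) + I(\bfx;\bfx_k|\bfs,\bfu)$ and the symmetry of conditional mutual information shows that the hypothesis of (2) forces this to be at most $2^{-n\eps + o(n)}$, after which Chernoff over $i$ delivers (2).

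Properties (2') and (3') follow the same template but iterate over the $L$ codewords of $\cL$, using the chain-rule expansion $I(\bfx_{[L]}; \bfx, \bfs|\bfu) = \sum_{k=1}^L I(\bfx_k; \bfx, \bfs, \bfx_1, \ldots, \bfx_{k-1}|\bfu)$ at each peeling step. The additional hypothesis $R < I(\bfx_k;\bfs|\bfu)$ for every $k \in [L]$ is essential here: combined with non-negativity and the chain rule it implies $I(\bfx_k;\bfx,\bfs,\bfx_1,\ldots,\bfx_{k-1}|\bfu) \ge I(\bfx_k;\bfs|\bfu) > R$ for each term in the expansion, so that the per-codeword method-of-types saving strictly dominates the per-codeword union-bound cost of $2^{nR}$. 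The main obstacle I expect is the bookkeeping in this iterated expansion: one must check that after $L$ such peelings, the cumulative factor $M^L = 2^{nLR}$ from the union bound over ordered $L$-tuples $\cL$ is dominated by the product of per-step savings with enough uniform slack (over the polynomially many target joint types) to yield exponents $-n\eps/2$ for (2') and $n\eps$ for (3'). Once this uniform slack is in hand, the usual Chernoff concentration (over $i$ for (2') and as an upper-tail bound on the count for (3')) together with a union bound over joint types closes the argument.
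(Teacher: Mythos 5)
The sampling scheme, the single-codeword estimates for properties~(1) and~(3), and the information-inequality manipulations you describe are all in line with the paper's proof. The genuine gap is in how you claim concentration for properties~(2) and~(2'): you say "Chernoff over $i$," but the events $A_i = \{\exists \cL \subseteq [M]\setminus\{i\},\ \tau_{\vu,\vx_i,\vx_\cL,\vs} = P\}$ are \emph{not} independent across $i$. If $\vbfx_i$ and $\vbfx_{i'}$ belong to a confusable $(L+1)$-tuple, then $A_i$ and $A_{i'}$ are both triggered by the same collection of codewords, so the joint distribution of $\{\indicator{A_i}\}_i$ is far from product form and the vanilla Chernoff bound you invoke is simply not applicable. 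This matters because you correctly note that a doubly-exponentially small failure probability is needed (the properties must hold simultaneously for all $\vx\in\cX^n$ and $\vs\in\cS^n$, which is an exponential union, not just a polynomial union over joint types), and Markov alone would only give an exponentially small tail.

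The paper sidesteps the dependence with three devices you did not mention. First, it restricts the spoofing lists to $\binom{[i-1]}{L}$ (earlier codewords only), so that $f_i$ is a function of $(\vbfx_1,\ldots,\vbfx_i)$ and a one-sided martingale-type concentration lemma (\Cref{lem:cn_conc}, due to \csiszar--Narayan) applies: if $\mathbb{E}[f_i \mid \vbfx_{<i}]\le a$ a.s., then $\Pr[\frac{1}{M}\sum f_i>t]\le 2^{-M(t-a\log e)}$. Second, to make the conditional expectation bounded a.s., the paper clamps the set of candidate lists $\sL_i$ to $\emptyset$ whenever $|\sL_i'|>2^{n\eps_1}$, and separately bounds (via property~(3'), which itself reduces to independent events over individual codewords as in \Cref{eqn:three_prime}) the probability that the clamp is ever triggered. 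Third, to go back from lists in $\binom{[i-1]}{L}$ to lists in $\binom{[M]\setminus\{i\}}{L}$, it invokes Hughes's permutation-cover lemma (Lemma~A2 of \cite{hughes-1997-list-avc}): there are at most $K\le n(L+1)^2(\log\cardX+1)$ permutations $\pi_1,\ldots,\pi_K\in S_M$ such that $\binom{[M]\setminus\{i\}}{L}=\bigcup_k \pi_k\binom{[i-1]}{L}$ for every $i$, and a union bound over these $K$ permutations costs only a polynomial factor. Your "peeling" chain-rule computation is the right way to compute the relevant exponent, but without the sequential restriction, the clamp, and the permutation cover, the concentration step does not close. You should replace "Chernoff over $i$" with the \csiszar--Narayan lemma applied to the clamped $f_i$'s, prove property~(2) as the $L=1$ case with the modified $\eps_1=[R-I(\bfx_k;\bfs|\bfu)]^++\eps/4$, and add Hughes's lemma for the extension from $[i-1]$ to $[M]\setminus\{i\}$.
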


\subsection{Decoding rules}
\label{sec:dec_rules}
Define the set of joint distributions that $\eta$-approximately (w.r.t. KL-divergence) obey the channel structure:
\begin{align}
\cP_\eta \coloneqq& \curbrkt{P_{\bfu,\bfx,\bfs,\bfy}\in\Delta(\cU\times\cX\times\cS\times\cY)\colon D(P_{\bfu,\bfx,\bfs,\bfy}\|P_\bfu P_{\bfx|\bfu}P_\bfs W_{\bfy|\bfx,\bfs} )\le\eta,\;P_\bfs\in\lambda_\bfs }. \label{eqn:def_p_eta} 
\end{align}

Given the codebook $ \cC = \curbrkt{\vx_{i}}_{i\in[M]} $, the time-sharing sequence $ \vu\in\cU^n $ and Bob's observation $ \vy\in\cY^n $, the decoder $\psi$ outputs all messages $ i\in[M] $ satisfying: 
there is a jamming vector  $ \vs\in\Lambda_\bfs $ such that
\begin{enumerate}
	\item\label{itm:dec_step1}   $\tau_{ \vu, \vx_i,\vs,\vy }\in\cP_\eta $; 
	\item\label{itm:dec_step2} for all $ \cL' = \curbrkt{ \vx_{i_1},\cdots,\vx_{i_L} },1\le i_1<\cdots<i_L\le M $ satisfying that
	\begin{itemize}
		\item there is a collection of jamming vectors $ \curbrkt{\vs_j}_{j\in[L]}\subset\Lambda_\bfs $ such that   $ \tau_{\vu,\vx_{i_j},\vs_j,\vy }\in\cP_\eta $ for each $ j\in[L] $,
	\end{itemize}
	it holds that $ I(\bfx,\bfy;\bfx_1,\dots,\bfx_L|\bfu,\bfs)\le\eta $
	where the mutual information is evaluated w.r.t $ P_{\bfu,\bfx,\bfx_1,\cdots,\bfx_L,\bfs,\bfy} = \tau_{\vu,\vx_i,\vx_{i_1},\cdots,\vx_{i_L},\vs,\vy} $. 
\end{enumerate}

The first step of the decoding rules can be viewed as a list-decoding step.
Bob can use the typicality condition in \Cref{itm:dec_step1} to list-decode to a list $ \cL_0 $ of at most $ \cO(\poly(n)) $ messages. 
The second step can be viewed as a tournament. 
Bob examines all pairs $ (\cL_1,\cL_2)\subseteq\cL_0\times\cL_0 $ of distinct $L$-lists of messages in the big list $ \cL_0 $ and determines if $ \cL_1 $ or $ \cL_2 $ wins. 
If there is a unique champion in this competition, then the decoder outputs that list. 

It is clear that the correct message $i$ will be output with high probability.
We also need to argue that such a decoder is a valid list-decoder such that  $ |\psi(\vy)|\le L $ for any channel output $ \vy\in\cY^n $.

\subsection{Unambiguity of decoding rules}\label{sec:unambiguity_dec_rules}
\label{sec:unambiguity}
\begin{lemma}\label{lem:unambiguity}
There is no joint distribution $ P_{\bfu,\bfx_{[L+1]},\bfs_{[L+1]},\bfy} $ that simultaneously satisfies
\begin{align}
P_{\bfu,\bfx_i} = P_{\bfu,\bfx},\; P_{\bfu,\bfx_i,\bfs_i,\bfy}\in\cP_\eta,\;I(\bfx_i,\bfy;\bfx_{{[L+1]}\setminus\curbrkt{i}}|\bfu,\bfs_i) \le \eta, \label{eqn:unambiguity_cond}
\end{align}
for all $ i\in{[L+1]} $. 
\end{lemma}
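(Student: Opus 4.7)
The plan is to argue by contradiction: suppose such a distribution $P_{\bfu,\bfx_{[L+1]},\bfs_{[L+1]},\bfy}$ exists. I would extract from it a kernel $U_{\bfs|\bfu,\bfx_{[L]}}$ that $L$-symmetrizes the $\cp$-decomposition $(P_\bfu,P_{\bfx|\bfu})$ fixed at the outset of the achievability argument. Since the choice $L > L_\cp^*$ guarantees that $(P_\bfu,P_{\bfx|\bfu})$ is \emph{not} $L$-symmetrizable in the sense of \Cref{def:cp_symm}, the existence of such a $U$ would yield the desired contradiction.

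The first step is to unpack \Cref{eqn:unambiguity_cond}. The inclusion $P_{\bfu,\bfx_i,\bfs_i,\bfy}\in\cP_\eta$ (see \Cref{eqn:def_p_eta}) bounds $\kl{P_{\bfu,\bfx_i,\bfs_i,\bfy}}{P_\bfu P_{\bfx|\bfu}P_{\bfs_i}W_{\bfy|\bfx,\bfs}}\le\eta$ and forces $P_{\bfs_i}\in\lambda_\bfs$. \Cref{lem:pinsker} converts this into $O(\sqrt\eta)$ total-variation closeness of $P_{\bfy|\bfu,\bfx_i,\bfs_i}$ to $W_{\bfy|\bfx,\bfs}(\cdot|\bfx_i,\bfs_i)$. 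The bound $I(\bfx_i,\bfy;\bfx_{[L+1]\setminus\{i\}}|\bfu,\bfs_i)\le\eta$ similarly gives, via the identity of mutual information as a KL divergence against the independent product and Pinsker again, that $(\bfx_i,\bfy)$ is $O(\sqrt\eta)$-close in $\ell^1$ to being conditionally independent of $\bfx_{[L+1]\setminus\{i\}}$ given $(\bfu,\bfs_i)$. Applied for every $i\in[L+1]$, these say $\bfy$ is nearly produced from each $(\bfx_i,\bfs_i)$ through $W_{\bfy|\bfx,\bfs}$, independently of the remaining $L$ coordinates.

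Now fix a distinguished index (say $L+1$) and set
\[
U_{\bfs|\bfu,\bfx_{[L]}}(s|u,x_{[L]}) \;\coloneqq\; P_{\bfs_{L+1}|\bfu,\bfx_{[L]}}(s|u,x_{[L]}),
\]
extended arbitrarily off-support. For the state constraint, the matching marginals $P_{\bfu,\bfx_i}=P_{\bfu,\bfx}$ render $\sqrbrkt{P_\bfu P_{\bfx|\bfu}^\tl U}_\bfs$ equal to $P_{\bfs_{L+1}}\in\lambda_\bfs$ up to $O(\sqrt\eta)$ slack. For the symmetry identity in \Cref{def:def_symm_distr}, given $\pi\in S_{L+1}$ I would compute $P_{\bfy|\bfu,\bfx_{[L+1]}}$ in two ways: marginalizing $\bfs_{L+1}$ out of the $i=L+1$ channel view yields $\sum_s W(y|x_{L+1},s)\,U(s|u,x_{[L]})$, while marginalizing $\bfs_{\pi^{-1}(L+1)}$ out of the $i=\pi^{-1}(L+1)$ channel view yields the permuted expression $\sum_s W(y|x_{\pi^{-1}(L+1)},s)\,U(s|u,x_{\pi([L])})$. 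The two must coincide up to $O(\sqrt\eta)$, which after relabeling is the approximate symmetrization identity.

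The principal obstacle is upgrading these $O(\sqrt\eta)$ near-identities to the exact equalities mandated by \Cref{def:def_symm_distr,def:cp_symm}. I would argue the contrapositive via compactness: assume the lemma fails for a vanishing sequence $\eta_n\downarrow 0$, extract witnessing distributions, pass to a subsequential limit in the compact simplex $\Delta(\cU\times\cX^{L+1}\times\cS^{L+1}\times\cY)$, and use lower semicontinuity of KL divergence and continuity of mutual information to produce a limiting distribution in which \Cref{eqn:unambiguity_cond} holds with $\eta=0$. The construction above then yields a genuine $U\in\cU_{\obli,\symml{L}}$ with $\sqrbrkt{P_\bfu P_{\bfx|\bfu}^\tl U}_\bfs\in\lambda_\bfs$, contradicting non-$L$-symmetrizability of $(P_\bfu,P_{\bfx|\bfu})$. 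A minor secondary concern is that the natural $U$ is defined only on the support of $P_{\bfu,\bfx_{[L]}}$; however, the symmetry identity is purely functional in $(u,x_0,x_1,\ldots,x_L)$ and extends to all of $\cU\times\cX^L$ by choosing any row-stochastic completion, since the relevant identity already holds pointwise wherever $U$ was defined.
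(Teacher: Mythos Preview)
Your compactness reduction to the $\eta=0$ case is fine and is also the implicit mechanism in the paper. The gap is in how you extract a single kernel $U\in\cU_{\obli,\symml{L}}$ from the limiting distribution.

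You set $U\coloneqq P_{\bfs_{L+1}\mid\bfu,\bfx_{[L]}}$ and claim that evaluating $P_{\bfy\mid\bfu,\bfx_{[L+1]}}$ from the view of index $j=\pi^{-1}(L{+}1)$ produces $\sum_s W(y\mid x_j,s)\,U(s\mid u,x_{\pi([L])})$. It does not: that view yields $\sum_s W(y\mid x_j,s)\,P_{\bfs_j\mid\bfu,\bfx_{[L+1]\setminus\{j\}}}(s\mid u,\,\cdot\,)$, a \emph{different} kernel $U_j$. Nothing in the hypotheses forces $U_j=U_{L+1}$ or even forces $U_{L+1}$ to be symmetric in its $\bfx$-arguments, so your $U$ need not belong to $\cU_{\obli,\symml{L}}$. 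The same uncontrolled joint structure bites the cost check: the hypotheses only pin down the one-dimensional marginals $P_{\bfu,\bfx_i}=P_{\bfu,\bfx}$, not $P_{\bfx_{[L]}\mid\bfu}=P_{\bfx\mid\bfu}^{\otimes L}$, so $\bigl[P_\bfu P_{\bfx\mid\bfu}^{\otimes L}U\bigr]_\bfs$ is not $P_{\bfs_{L+1}}$ in general.

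The paper repairs exactly this by \emph{symmetrizing}: it averages $P_{\bfx_{[L+1]\setminus\{j\}},\bfs_j\mid\bfu}$ over all $j\in[L{+}1]$ and all permutations of the $\bfx$-arguments to obtain a symmetric $Q_{\bfx_{[L]},\bfs\mid\bfu}$. Equality of the $(L{+}1)$ channel views then gives, for each pair $i\ne i'$, the identity $P_{\bfx\mid\bfu}(x_i)\,Q(x_{[L+1]\setminus\{i\}},s)=P_{\bfx\mid\bfu}(x_{i'})\,Q(x_{[L+1]\setminus\{i'\}},s)$ (after summing out $\bfy$). A short product-distribution lemma (the paper's \Cref{lem:prod_distr}) now forces $Q_{\bfx_{[L]}\mid\bfu}=P_{\bfx\mid\bfu}^{\otimes L}$; dividing through yields the exact symmetrization identity for $Q_{\bfs\mid\bfu,\bfx_{[L]}}$, and the state constraint follows since $Q_\bfs=\tfrac{1}{L+1}\sum_j P_{\bfs_j}\in\lambda_\bfs$ by convexity. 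In short, you are missing the symmetrization step and the product lemma; without them a single-index conditional $P_{\bfs_{L+1}\mid\bfu,\bfx_{[L]}}$ is not the right object.
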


\subsection{Analysis of average error probability}\label{sec:analysis_error_prob}
\label{sec:ach_analysis_error_prob}
Fix a non-$L$-symmetrizable $P_\bfx $ and a corresponding $ P_{\bfx_1,\cdots,\bfx_L}\in\cp^\tl(P_\bfx) $ which  induces $ P_\bfu $ and $ P_{\bfx|\bfu} $. 
For each $j \in[\beta] $, define the minimum power that James has to spend to symmetrize the channel, 
\begin{align}
\Lambda_j^*(P_\bfu, P_{\bfx|\bfu}) \coloneqq& \min_{U_{\bfs|\bfu,\bfx_{[L]}}\in\cU_{\obli,\symml{L}} } \sum_{(u,x_{[L]},s)\in\cU\times\cX^L\times\cS}P_\bfu(u)P_{\bfx|\bfu}^\tl(x_{[L]}|u)U_{\bfs|\bfu,\bfx_{[L]}}(s|u,x_{[L]})B_j(s). \notag 
\end{align}
By non-$L$-symmetrizability, we know that there must exist a $ j\in[\beta] $ such that $ \Lambda_j^*(P_\bfu, P_{\bfx|\bfu})>\Lambda_j $. 
Focus on that $j$ from now on. 
Fix an arbitrary $ \vs\in\Lambda_\bfs $. 

Define the set of \emph{bad messages}
\begin{align}
\cM(\vu)\coloneqq& \curbrkt{m\in[M]\colon I(\bfx;\bfs|\bfu)>\eps }, \label{def:bad_msg}
\end{align}
where the mutual information is evaluated w.r.t. the joint distribution $P_{\bfu,\bfx,\bfs} = \tau_{\vu,\vx_m,\vs}$. 

According to the definition of the list-decoder (\Cref{sec:dec_rules}), for any fixed $ \vs\in\Lambda_\bfs $,  a list-decoding error occurs only if:
codeword $ \vx_i $ corresponding to message $i$ was sent by Alice and $ \vy $ was received by Bob, but
\begin{enumerate}
	\item \textbf{either} $ (\vu,\vx_i,\vs,\vy) $ are not jointly typical in the sense that $ \tau_{\vu,\vx_i,\vs,\vy} \notin\cP_\eta $;
	\item \textbf{or} a spoofing list of size $L$ that does not contain message $i$ confuses Bob, i.e.,  there is an $L$-list $ \cL'\in\binom{[M]\setminus\curbrkt{i}}{L} $ such that the joint type $ P_{\bfu,\bfx_i,\bfx_{\cL'},\bfs,\bfy}\in\Delta^{(n)}(\cU\times\cX^{L+1}\times\cS\times\cY) $ of the tuple $ (\vu,\vx_i,\vx_{\cL'},\vs,\vy) $ satisfies the following conditions:
	\begin{enumerate}
		\item $ (\vu,\vx_i,\vs,\vy) $ are jointly typical: $ P_{\bfu,\bfx_i,\bfs,\bfy} \in \cP_\eta $;
		\item each codeword corresponding to a candidate message $i'$ in the competing list $ \cL' $ is also typical: for each $ {i'}\in\cL' $, there exists an $ \vs_{i'}\in\Lambda_\bfs $ such that $ (\vu,\vx_{i'},\vs_{i'},\vy) $ has type $ P_{\bfu,\bfx_{i'},\bfs_{i'},\bfy}\in\cP_\eta $;
		\item and finally $ I(\bfx_i,\bfy;\bfx_{\cL'}|\bfu,\bfs)>\eta $. 
	\end{enumerate}
\end{enumerate}
Motivated by the above conditions, we define the set of \emph{bad distributions}
\begin{align}
\cD \coloneqq\curbrkt{ P_{\bfu,\bfx ,\bfx_{[L]},\bfs,\bfy}\in\Delta(\cU\times\cX\times\cX^L \times\cS\times\cY)\colon \begin{array}{rl}
P_{\bfu,\bfx ,\bfs,\bfy}\in&\cP_\eta \\
\forall i\in\cL, \exists \bfs_i,\;P_{\bfu,\bfx_i,\bfs_i,\bfy}\in&\cP_\eta \\
I(\bfx ,\bfy;\bfx_{[L]}|\bfu,\bfs)>&\eta
\end{array} }.
\label{eqn:def_d} 
\end{align}

The average error probability under $\vs$ is 
\begin{align}
P_{\e,\avg}(\vs) =& \frac{1}{M}\sum_{m\in[M]} P_{\e}(m,\vs) \le \frac{1}{M}\sum_{m\in\cM(\vu)}P_\e(m,\vs) + \frac{1}{M}\sum_{m\in[M]\setminus\cM(\vu)}P_\e(m,\vs) \notag \\
\le& \frac{1}{M}\card{\curbrkt{m\in[M]\colon I(\bfx;\bfs|\bfu)>\eps}} \label[term]{ineq:term1} \\
&+ \frac{1}{M}\sum_{m\in[M]\setminus\cM(\vu)}\sum_{\vy\in\cY^n\colon\tau_{\vu,\vx_m,\vs,\vy}\notin\cP_\eta} W\paren{\vy\condon\vx_m,\vs} \label[term]{eqn:term2} \\
&+ \frac{1}{M}\sum_{m\in[M]\setminus\cM(\vu)} \sum_{\tau_{\bfu,\bfx ,\bfx_{[L]},\bfs,\bfy}\in\cD} P_{\e}(m,\vs,\tau_{\bfu,\bfx ,\bfx_{[L]},\bfs,\bfy}), \label[term]{eqn:term3}
\end{align}
where 
\begin{align}
P_\e(m,\vs,\tau_{\bfu,\bfx ,\bfx_{[L]},\bfs,\bfy})\coloneqq& \sum_{\vy\in\cY^n\colon \exists \cL'\in\binom{[M]\setminus\curbrkt{m}}{L},\; \tau_{\vu,\vx_m,\vx_{\cL'},\vs,\vy} = \tau_{\bfu,\bfx ,\bfx_{[L]},\bfs,\bfy}}W\paren{\vy\condon\vx_m,\vs}. \notag
\end{align}
By \Cref{eqn:property_1}, \Cref{ineq:term1} is at most 
\begin{align}
 \frac{1}{M}\card{\curbrkt{m\in[M]\colon I(\bfx;\bfs|\bfu)>\eps}} =& \frac{1}{M}\sum_{ \substack{\tau_{\bfu,\bfx,\bfs}\in\Delta^{(n)}(\cU\times\cX\times\cS)\colon \\ \tau_{\bfu,\bfx} = P_{\bfu}P_{\bfx|\bfu},\tau_{\bfs}=\tau_\vs} }\card{\curbrkt{m\in[M]\colon \tau_{\vu,\vx_m,\vs} = P_{\bfu,\bfx,\bfs}}}\indicator{I(\bfx;\bfs|\bfu)>\eps}\dotle2^{-n\eps/2}. \notag
\end{align}
By Sanov's theorem (\Cref{lem:sanov}), \Cref{eqn:term2} is dot less than
\begin{align}
\sup_{P_{\bfu,\bfx,\bfs,\bfy}\notin\cP_\eta\colon I(\bfx;\bfs|\bfu)>\eps} 2^{-n\kl{P_{\bfu,\bfx,\bfs,\bfy}}{P_{\bfu,\bfx,\bfs}W_{\bfy|\bfx,\bfs}}} =& \sup_{P_{\bfu,\bfx,\bfs,\bfy}\notin\cP_\eta\colon I(\bfx;\bfs|\bfu)>\eps} 2^{-n\paren{\kl{P_{\bfu,\bfx,\bfs,\bfy}}{P_\bfu P_{\bfx|\bfu}P_\bfs W_{\bfy|\bfx,\bfs}} - I(\bfx;\bfs|\bfu)}} \le 2^{-n(\eta-\eps)}. \notag 
\end{align}
The inequality follows from the definition of $ \cP_\eta $ (\Cref{eqn:def_p_eta}) and the condition $ I(\bfx;\bfs|\bfu)>\eps $. 

In what follows, we focus on \Cref{eqn:term3}. 
We further define the following sets of  distributions, 
\begin{align}
\cD_1\coloneqq&\curbrkt{ P_{\bfu,\bfx ,\bfx_{[L]},\bfs,\bfy}\in\Delta(\cU\times\cX\times\cX^L \times\cS\times\cY)\colon R<\min_{i\in[L]} I(\bfx_i;\bfs|\bfu) }, \label{eqn:def_d1} \\
\cD_2\coloneqq&\curbrkt{ P_{\bfu,\bfx ,\bfx_{[L]},\bfs,\bfy}\in\Delta(\cU\times\cX\times\cX^L \times\cS\times\cY)\colon I(\bfx ;\bfx_{[L]},\bfs|\bfu)>\eps }. \label{eqn:def_d2} 
\end{align}
Now \Cref{eqn:term3} can be decomposed as
\begin{align}
& \frac{1}{M}\sum_{m\in[M]\setminus\cM(\vu)} \sum_{\tau_{\bfu,\bfx ,\bfx_{[L]},\bfs,\bfy}\in\cD\cap\cD_1\cap\cD_2} P_{\e}(m,\vs,\tau_{\bfu,\bfx ,\bfx_{[L]},\bfs,\bfy}) \label[term]{eqn:term31} \\
&+\frac{1}{M}\sum_{m\in[M]\setminus\cM(\vu)} \sum_{\tau_{\bfu,\bfx ,\bfx_{[L]},\bfs,\bfy}\in\cD\cap\cD_1\cap\cD_2^c} P_{\e}(m,\vs,\tau_{\bfu,\bfx ,\bfx_{[L]},\bfs,\bfy}) \label[term]{eqn:term32}\\
&+\frac{1}{M}\sum_{m\in[M]\setminus\cM(\vu)} \sum_{\tau_{\bfu,\bfx ,\bfx_{[L]},\bfs,\bfy}\in\cD\cap\cD_1^c} P_{\e}(m,\vs,\tau_{\bfu,\bfx ,\bfx_{[L]},\bfs,\bfy}). \label[term]{eqn:term33}
\end{align}
\Cref{eqn:term31} can be bounded as follows 
\begin{align}
&\frac{1}{M}\sum_{m\in[M]\setminus\cM(\vu)} \sum_{\tau_{\bfu,\bfx ,\bfx_{[L]},\bfs,\bfy}\in\cD\cap\cD_1\cap\cD_2}\sum_{\vy\in\cY^n}W\paren{\vy\condon\vx_m,\vs}\indicator{ \exists \cL'\in\binom{[M]\setminus\curbrkt{m}}{L},\; \tau_{\vu,\vx_m,\vx_{\cL'},\vs,\vy} = \tau_{\bfu,\bfx ,\bfx_{[L]},\bfs,\bfy}} \notag \\
\le& \frac{1}{M}\sum_{\tau_{\bfu,\bfx ,\bfx_{[L]},\bfs,\bfy}\in\cD_1\cap\cD_2} \sum_{m\in[M]}\sum_{\vy\in\cY^n}W\paren{\vy\condon\vx_m,\vs}\indicator{\exists \cL'\in\binom{[M]\setminus\curbrkt{m}}{L},\; \tau_{\vu,\vx_m,\vx_{\cL'},\vs} = \tau_{\bfu,\bfx ,\bfx_{[L]},\bfs}} \notag \\
=&  \frac{1}{M}\sum_{\tau_{\bfu,\bfx ,\bfx_{[L]},\bfs,\bfy}\in\cD_1\cap\cD_2} \sum_{m\in[M]}\indicator{\exists \cL'\in\binom{[M]\setminus\curbrkt{m}}{L},\; \tau_{\vu,\vx_m,\vx_{\cL'},\vs} = \tau_{\bfu,\bfx ,\bfx_{[L]},\bfs}} \notag \\
\le& \frac{1}{M}\sum_{\tau_{\bfu,\bfx ,\bfx_{[L]},\bfs,\bfy}\in\cD_1\cap\cD_2}\card{\curbrkt{m\in[M]\colon \exists \cL'\in\binom{[M]\setminus\curbrkt{m}}{L},\; \tau_{\vu,\vx_m,\vx_{\cL'},\vs} = \tau_{\bfu,\bfx ,\bfx_{[L]},\bfs}}}\notag \\
\dotle& 2^{-n\eps/2}. \label[ineq]{eqn:bound_term31}
\end{align}
\Cref{eqn:bound_term31} follows from \Cref{eqn:property_22}. 

\Cref{eqn:term32} can be bounded as follows.
\begin{align}
&\frac{1}{M}\sum_{m\in[M]\setminus\cM(\vu)} \sum_{\tau_{\bfu,\bfx ,\bfx_{[L]},\bfs,\bfy}\in\cD\cap\cD_1\cap\cD_2^c}\sum_{\vy\in\cY^n}W\paren{\vy\condon\vx_m,\vs} \indicator{ \exists \cL'\in\binom{[M]\setminus\curbrkt{m}}{L},\; \tau_{\vu,\vx_m,\vx_{\cL'},\vs,\vy} = \tau_{\bfu,\bfx ,\bfx_{[L]},\bfs,\bfy}} \notag \\
\le& \frac{1}{M}\sum_{m\in[M]} \sum_{\tau_{\bfu,\bfx ,\bfx_{[L]},\bfs,\bfy}\in\cD\cap\cD_1\cap\cD_2^c}\sum_{\cL'\in\binom{[M]\setminus\curbrkt{m}}{L}\colon \tau_{\vu,\vx_m,\vx_{\cL'},\vs} = \tau_{\bfu,\bfx ,\bfx_{[L]},\bfs}}\sum_{\vy\in\cY^n}W\paren{\vy\condon\vx_m,\vs} \indicator{ \tau_{\vu,\vx_m,\vx_{\cL'},\vs,\vy} = \tau_{\bfu,\bfx ,\bfx_{[L]},\bfs,\bfy}} \notag \\
\le& \frac{1}{M}\sum_{m\in[M]} \sum_{\tau_{\bfu,\bfx ,\bfx_{[L]},\bfs,\bfy}\in\cD\cap\cD_1\cap\cD_2^c} \card{\curbrkt{\cL'\in\binom{[M]\setminus\curbrkt{m}}{L}\colon \tau_{\vu,\vx_m,\vx_{\cL'},\vs} = \tau_{\bfu,\bfx ,\bfx_{[L]},\bfs}}} 2^{-n\kl{P_{\bfu,\bfx,\bfx_{[L]},\bfs,\bfy}}{P_{\bfu,\bfx,\bfx_{[L]},\bfs}W_{\bfy|\bfx,\bfs}}} \label[ineq]{eqn:sanov32} \\
\le& \frac{1}{M}\sum_{m\in[M]}\sum_{\tau_{\bfu,\bfx ,\bfx_{[L]},\bfs,\bfy}\in\cD\cap\cD_1\cap\cD_2^c}\card{\curbrkt{\cL'\in\binom{[M]}{L}\colon \tau_{\vu,\vx_m,\vx_{\cL'},\vs} = \tau_{\bfu,\bfx ,\bfx_{[L]},\bfs}}} 2^{-n\kl{P_{\bfu,\bfx,\bfx_{[L]},\bfs,\bfy}}{P_{\bfu,\bfx,\bfx_{[L]},\bfs}W_{\bfy|\bfx,\bfs}}} \notag \\
\le&\sum_{\tau_{\bfu,\bfx ,\bfx_{[L]},\bfs,\bfy}\in\cD\cap\cD_1\cap\cD_2^c} 2^{n\eps}2^{-nI(\bfy;\bfu,\bfx_{[L]}|\bfx,\bfs)} \label[ineq]{eqn:apply_property_33} \\
=& \sum_{\tau_{\bfu,\bfx ,\bfx_{[L]},\bfs,\bfy}\in\cD\cap\cD_1\cap\cD_2^c}2^{n\eps}2^{-n(I(\bfy;\bfu|\bfx,\bfs) + I(\bfy;\bfx_{[L]}|\bfu,\bfx,\bfs))}\notag \\
\le& \sum_{\tau_{\bfu,\bfx ,\bfx_{[L]},\bfs,\bfy}\in\cD\cap\cD_1\cap\cD_2^c}2^{n\eps}2^{-n(I(\bfx,\bfy;\bfx_{[L]}|\bfs,\bfu) - I(\bfx;\bfx_{[L]}|\bfu,\bfs))}\notag \\
\le& \sum_{\tau_{\bfu,\bfx ,\bfx_{[L]},\bfs,\bfy}\in\cD\cap\cD_1\cap\cD_2^c}2^{n\eps}2^{-n(I(\bfx,\bfy;\bfx_{[L]}|\bfs,\bfu) - (I(\bfx;\bfx_{[L]},\bfs|\bfu) - I(\bfx;\bfs|\bfu)))} \notag \\
\le& \sum_{\tau_{\bfu,\bfx ,\bfx_{[L]},\bfs,\bfy}\in\cD\cap\cD_1\cap\cD_2^c}2^{n\eps}2^{-n(I(\bfx,\bfy;\bfx_{[L]}|\bfs,\bfu) - I(\bfx;\bfx_{[L]},\bfs|\bfu))}\notag \\
\dotle& 2^{n\eps}2^{-n(\eta - \eps)} = 2^{-n(\eta - 2\eps)}. \label[ineq]{eqn:apply_d2} 
\end{align}
\Cref{eqn:sanov32} is by Sanov's theorem (\Cref{lem:sanov}).
\Cref{eqn:apply_property_33} is by \Cref{eqn:property_33} given $ \tau_{\bfu,\bfx,\bfx_{[L]},\bfs,\bfy}\in\cD_1 $ (see \Cref{eqn:def_d1}).
\Cref{eqn:apply_d2} is by the definition of $ \cD $ (\Cref{eqn:def_d}) and $ \cD_2 $ (\Cref{eqn:def_d2}). 

We now turn to the last term \Cref{eqn:term33}. 
For each $ \tau_{\bfu,\bfx,\bfx_{[L]},\bfs,\bfy}\in\cD_1^c $, there must be an $ i_\tau\in[L] $ such that $ R> I(\bfx_{i_\tau};\bfs|\bfu) $. 
Define 
\begin{align}
\cD_3\coloneqq\curbrkt{ \tau_{\bfu,\bfx,\bfx_{[L]},\bfs,\bfy}\in\Delta^{(n)}(\cU\times\cX\times\cX^L \times\cS\times\cY) \colon I(\bfx;\bfx_{i_\tau},\bfs|\bfu) - \sqrbrkt{R - I(\bfx_{i_\tau};\bfs|\bfu)}^+\ge\eps }. \notag 
\end{align}
\Cref{eqn:term33} is at most
\begin{align}
&\frac{1}{M}\sum_{m\in[M]\setminus\cM(\vu)} \sum_{\tau_{\bfu,\bfx ,\bfx_{[L]},\bfs,\bfy}\in\cD\cap\cD_1^c} \sum_{\vy\in\cY^n}W\paren{\vy\condon\vx_m,\vs} \indicator{ \exists \cL'\in\binom{[M]\setminus\curbrkt{m}}{L},\; \tau_{\vu,\vx_m,\vx_{\cL'},\vs,\vy} = \tau_{\bfu,\bfx ,\bfx_{[L]},\bfs,\bfy}} \notag \\
\le& \frac{1}{M}\sum_{m\in[M]\setminus\cM(\vu)} \sum_{\tau_{\bfu,\bfx ,\bfx_{[L]},\bfs,\bfy}\in\cD\cap\cD_1^c} \sum_{\vy\in\cY^n}W\paren{\vy\condon\vx_m,\vs} \indicator{ \exists m'\in[M]\setminus\curbrkt{m},\; \tau_{\vu,\vx_m,\vx_{m'},\vs,\vy} = \tau_{\bfu,\bfx ,\bfx_{i_\tau},\bfs,\bfy}} \notag \\
\le&\frac{1}{M}\sum_{m\in[M]\setminus\cM(\vu)} \sum_{\tau_{\bfu,\bfx ,\bfx_{[L]},\bfs,\bfy}\in\cD\cap\cD_1^c\cap\cD_3} \sum_{\vy\in\cY^n}W\paren{\vy\condon\vx_m,\vs} \indicator{ \exists m'\in[M]\setminus\curbrkt{m},\; \tau_{\vu,\vx_m,\vx_{m'},\vs,\vy} = \tau_{\bfu,\bfx ,\bfx_{i_\tau},\bfs,\bfy}} \label[term]{eqn:term331} \\
&+\frac{1}{M}\sum_{m\in[M]\setminus\cM(\vu)} \sum_{\tau_{\bfu,\bfx ,\bfx_{[L]},\bfs,\bfy}\in\cD\cap\cD_1^c\cap\cD_3^c} \sum_{\vy\in\cY^n}W\paren{\vy\condon\vx_m,\vs} \indicator{ \exists m'\in[M]\setminus\curbrkt{m},\; \tau_{\vu,\vx_m,\vx_{m'},\vs,\vy} = \tau_{\bfu,\bfx ,\bfx_{i_\tau},\bfs,\bfy}}. \label[term]{eqn:term332}
\end{align}
\Cref{eqn:term331} is at most
\begin{align}
&\frac{1}{M}\sum_{m\in[M]\setminus\cM(\vu)} \sum_{\tau_{\bfu,\bfx ,\bfx_{[L]},\bfs,\bfy}\in\cD\cap\cD_1^c\cap\cD_3} \sum_{\vy\in\cY^n}W\paren{\vy\condon\vx_m,\vs} \indicator{ \exists m'\in[M]\setminus\curbrkt{m},\; \tau_{\vu,\vx_m,\vx_{m'},\vs} = \tau_{\bfu,\bfx ,\bfx_{i_\tau},\bfs}}\notag \\
\le&\frac{1}{M}\sum_{\tau_{\bfu,\bfx ,\bfx_{[L]},\bfs,\bfy}\in\cD\cap\cD_1^c\cap\cD_3}\sum_{m\in[M]}\indicator{ \exists m'\in[M]\setminus\curbrkt{m},\; \tau_{\vu,\vx_m,\vx_{m'},\vs} = \tau_{\bfu,\bfx ,\bfx_{i_\tau},\bfs}} \notag \\
=& \frac{1}{M}\sum_{\tau_{\bfu,\bfx ,\bfx_{[L]},\bfs,\bfy}\in\cD\cap\cD_1^c\cap\cD_3}\card{\curbrkt{m\in[M]\colon \exists m'\in[M]\setminus\curbrkt{m},\; \tau_{\vu,\vx_m,\vx_{m'},\vs} = \tau_{\bfu,\bfx ,\bfx_{i_\tau},\bfs}}} \notag \\
\dotle& 2^{-n\eps/2}, \label[ineq]{eqn:apply_property_22}
\end{align}
where \Cref{eqn:apply_property_22}  follows from \Cref{eqn:property_2}. 

\Cref{eqn:term332} is at most 
\begin{align}
&\frac{1}{M}\sum_{m\in[M]\setminus\cM(\vu)} \sum_{\tau_{\bfu,\bfx ,\bfx_{[L]},\bfs,\bfy}\in\cD\cap\cD_1^c\cap\cD_3^c} \sum_{m'\in[M]\setminus\curbrkt{m}\colon \tau_{\vu,\vx_m,\vx_{m'},\vs} = \tau_{\bfu,\bfx,\bfx_{i_\tau},\bfs}} \sum_{\vy\in\cY^n}W\paren{\vy\condon\vx_m,\vs} \indicator{ \tau_{\vu,\vx_m,\vx_{m'},\vs,\vy} = \tau_{\bfu,\bfx ,\bfx_{i_\tau},\bfs,\bfy}} \notag \\
\le& \frac{1}{M}\sum_{m\in[M]} \sum_{\tau_{\bfu,\bfx ,\bfx_{[L]},\bfs,\bfy}\in\cD\cap\cD_1^c\cap\cD_3^c} \sum_{m'\in[M]\colon \tau_{\vu,\vx_m,\vx_{m'},\vs} = \tau_{\bfu,\bfx,\bfx_{i_\tau},\bfs}} 2^{-nI(\bfy;\bfx_{i_\tau}|\bfu,\bfx,\bfs)} \notag \\
=& \frac{1}{M}\sum_{m\in[M]}\sum_{\tau_{\bfu,\bfx ,\bfx_{[L]},\bfs,\bfy}\in\cD\cap\cD_1^c\cap\cD_3^c} \card{\curbrkt{m'\in[M]\colon \tau_{\vu,\vx_m,\vx_{m'},\vs} = \tau_{\bfu,\bfx,\bfx_{i_\tau},\bfs}}} 2^{-nI(\bfy;\bfx_{i_\tau}|\bfu,\bfx,\bfs)} \notag \\
\le& \sum_{\tau_{\bfu,\bfx ,\bfx_{[L]},\bfs,\bfy}\in\cD\cap\cD_1^c\cap\cD_3^c} 2^{n\paren{\sqrbrkt{R - I(\bfx_{i_\tau};\bfx,\bfs|\bfu)}^++\eps}}2^{-nI(\bfy;\bfx_{i_\tau}|\bfu,\bfx,\bfs)} .\label[ineq]{eqn:to_bd_r_minus_i} 
\end{align}
\Cref{eqn:to_bd_r_minus_i} follows from \Cref{eqn:property_3}. 

Since $ \tau_{\bfu,\bfx ,\bfx_{[L]},\bfs,\bfy}\in\cD_1^c\cap\cD_3^c $,
\begin{align}
\eps>&I(\bfx;\bfx_{i_\tau},\bfs|\bfu) - \sqrbrkt{R - I(\bfx_{i_\tau};\bfs|\bfu)}^+ = I(\bfx;\bfx_{i_\tau},\bfs|\bfu) - \paren{R - I(\bfx_{i_\tau};\bfs|\bfu)}. \notag 
\end{align}
Hence
\begin{align}
&&&R> I(\bfx;\bfx_{i_\tau},\bfs|\bfu) + I(\bfx_{i_\tau};\bfs|\bfu) - \eps = I(\bfx;\bfx_{i_\tau}|\bfu,\bfs) + I(\bfx;\bfs|\bfu) + I(\bfx_{i_\tau};\bfs|\bfu)-\eps \ge I(\bfx_{i_\tau};\bfx,\bfs|\bfu)-\eps. \notag \\
&\implies && R - I(\bfx_{i_\tau};\bfx,\bfs|\bfu) + \eps>0. \notag
\end{align}
Consequently,
\begin{align}
\sqrbrkt{R - I(\bfx_{i_\tau};\bfx,\bfs|\bfu)}^+ =& \begin{cases}
0, & R< I(\bfx_{i_\tau};\bfx,\bfs|\bfu) \\
R - I(\bfx_{i_\tau};\bfx,\bfs|\bfu), & R\ge I(\bfx_{i_\tau};\bfx,\bfs|\bfu)
\end{cases} \notag \\
\le& \begin{cases}
R - I(\bfx_{i_\tau};\bfx,\bfs|\bfu) + \eps,& R<I(\bfx_{i_\tau};\bfx,\bfs|\bfu)\\
R - I(\bfx_{i_\tau};\bfx,\bfs|\bfu), & R\ge I(\bfx_{i_\tau};\bfx,\bfs|\bfu) 
\end{cases} \notag \\
\le& R - I(\bfx_{i_\tau};\bfx,\bfs|\bfu) + \eps. \label[ineq]{eqn:r_minus_i}
\end{align}
Substituting \Cref{eqn:r_minus_i} to \Cref{eqn:to_bd_r_minus_i}, we continue the calculation.
\begin{align}
 \sum_{\tau_{\bfu,\bfx ,\bfx_{[L]},\bfs,\bfy}\in\cD\cap\cD_1^c\cap\cD_3^c} 2^{n\paren{ \sqrbrkt{R - I(\bfx_{i_\tau};\bfx,\bfs|\bfu)}^+ + \eps} - nI(\bfy;\bfx_{i_\tau}|\bfu,\bfx,\bfs)} \le& \sum_{\tau_{\bfu,\bfx ,\bfx_{[L]},\bfs,\bfy}\in\cD\cap\cD_1^c\cap\cD_3^c} 2^{n\paren{ R - I(\bfx_{i_\tau};\bfx,\bfs|\bfu) + 2\eps } - nI(\bfy;\bfx_{i_\tau}|\bfu,\bfx,\bfs)} \notag \\
=& \sum_{\tau_{\bfu,\bfx ,\bfx_{[L]},\bfs,\bfy}\in\cD\cap\cD_1^c\cap\cD_3^c} 2^{-nI(\bfx_{i_\tau};\bfx,\bfs,\bfy|\bfu) + nR + 2n\eps} \notag \\
=& \sum_{\tau_{\bfu,\bfx ,\bfx_{[L]},\bfs,\bfy}\in\cD\cap\cD_1^c\cap\cD_3^c} 2^{-n\paren{I(\bfx_{i_\tau};\bfy|\bfu) - R - 2\eps}}
\end{align}

\section{Converse: capacity upper bound}\label{sec:converse_rate_ub}
In this section, we prove \Cref{thm:converse_rate_ub}.

Fix any constant $ \delta>0 $.
Fix any code $ \cC $ of rate $ C_L(\cA_\obli) + \delta $. 
Let $ M\coloneqq |\cC|  $. 
Without loss of generality (\Cref{lem:apx_cc_reduction}), assume that $ \cC $ is $ (\lambda,\wh P_\bfx) $-constant-composition for some constant $\lambda>0$ to be determined later and for some distribution $ P_\bfx\in\Delta(\cX) $. 

By \Cref{lem:apply_turan}, there exists a constant $ \nu>0 $ such that a $ \nu $ fraction of types $ \tau_{\vx_\cL} $ of ordered $L$-lists $ \vx_\cL $, where $ \cL\in\binom{[M]}{L} $, are $ \eps $-close to $ \cp^\tl(\wh P_\bfx) $. 
By quantizing $ \paren{\cp^{\otimes L }(\wh P_\bfx)}_\eps\cap\Delta(\cX^L) $, there is a smaller positive constant $ 0< c \le\nu $ such that a $c$ fraction of types of ordered $L$-lists in $\cC$ are $\eta$-close to $ \wh P_{\bfx_1,\cdots,\bfx_L} $ for some $ \wh P_{\bfx_1,\cdots,\bfx_L}\in\paren{\cp^{\otimes L }(\wh P_\bfx)}_\eps\cap\Delta(\cX^L) $. 
Projecting $ \wh P_{\bfx_1,\cdots,\bfx} $ back to $ \cp^\tl(\wh P_\bfx) $, we finally get  a $c$ fraction of order-$ L $ types in $\cC$ that are $ (\eta+\eps) $-close to $ \wt P_{\bfx_1,\cdots,\bfx_L} $ for some $ \wt P_{\bfx_1,\cdots,\bfx_L}\in\cp^\tl(\wh P_\bfx) $.

\begin{conjecture}\label{conj:comb_conj}
Let $ \cC\subseteq\cX^n $ be any $ (\lambda,\wh P_\bfx) $-constant-composition code of size $ M $ where $ M\ge|\cX|^{nR} $ for some constant $ R\in(0,1] $. 
Let $ L\in\bZ_{\ge1} $. 
We know from the above argument that for any sufficiently small constant $ \eps>0 $, there must exist a  constant $ c>0 $ and a distribution $ \wt P_{\bfx_1,\cdots,\bfx_L}\in\cp^\tl(\wh P_\bfx) $ such that a $c$ fraction of $ \tau_{\vx_\cL} $ ($ \cL\in\binom{[M]}{L} $) satisfy $ d\paren{\tau_{\vx_\cL},\wt P_{\bfx_1,\cdots,\bfx_L}}\le\eps $. 
Assume that the $\cp$-distribution can be decomposed as $ \wt P_{\bfx_1,\cdots,\bfx_L} = \sqrbrkt{\wt P_\bfu\wt P_{\bfx|\bfu}}_{\bfx_1,\cdots,\bfx_L} $ for some $ k\in\bZ_{\ge1} $,  $ P_\bfu\in\Delta(\cU) $ ($ \cU = [k] $) and $ P_{\bfx|\bfu}\in\Delta(\cX|\cU) $. 
It is conjectured that for any sufficiently small constant $ \gamma>0 $, there exist a constant $ \mu>0 $, a subcode $ \cC'\subset\cC $ of size $ |\cC'| = M'\ge\mu M $ and a \emph{universal time-sharing sequence} $ \vu\in\cU^n $ for $ \cC' $ such that $ d\paren{\tau_{\vu,\vx}, \wt P_\bfu \wt P_{\bfx|\bfu}}\le\gamma $ for every $ \vx\in\cC' $. 
\end{conjecture}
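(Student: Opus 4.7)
The plan is to invert the latent-variable description carried by the $\cp$-decomposition $\wt P_{\bfx_1,\cdots,\bfx_L}=\sum_{j=1}^k\lambda_j P_{\bfx|\bfu=j}^{\otimes L}$: a random $L$-tuple with this joint distribution arises by first drawing a latent $\bfu(i)\sim\wt P_\bfu$ at each coordinate $i$ and then sampling $L$ symbols i.i.d.\ from $P_{\bfx|\bfu=\bfu(i)}$, so the goal is to recover a combinatorial avatar of this latent sequence from the code $\cC$ itself.

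First I would pick an \emph{anchor} $L$-tuple $(\vx_1^\star,\ldots,\vx_L^\star)\in\cC^L$ whose joint type is $\eps$-close to $\wt P_{\bfx_1,\cdots,\bfx_L}$ (such anchors are abundant by the hypothesis) and construct a candidate $\vu\in\cU^n$ by setting, at each coordinate $i\in[n]$, the value $\vu(i)$ according to the posterior $P\paren{\bfu\condon \bfx_1=\vx_1^\star(i),\ldots,\bfx_L=\vx_L^\star(i)}$ computed under $\wt P_\bfu\wt P_{\bfx|\bfu}^{\otimes L}$. A coordinate-by-coordinate concentration argument would then show $\tau_{\vu}\approx\wt P_\bfu$ and $\tau_{\vu,\vx_j^\star}\approx\wt P_\bfu\wt P_{\bfx|\bfu}$ for every $j\in[L]$, so that the $L$ anchor codewords themselves already satisfy the target condition.

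The harder step is to enlarge this pool of $L$ compatible codewords into a subcode of linear size $\mu M$. The key observation is that the hypothesis combined with \Cref{lem:apply_turan} applied at level $L+1$ forces a constant fraction of the $(L+1)$-tuples of $\cC$ to have joint type close to some $\cp$ $(L+1)$-fold self-coupling of $\wh P_\bfx$; ideally such a self-coupling extends $\wt P_{\bfx_1,\cdots,\bfx_L}$ in the natural iid-given-$\bfu$ manner, in which case any $\vx\in\cC$ that extends the anchor to a tuple of this extended type automatically inherits $\tau_{\vu,\vx}\approx\wt P_\bfu\wt P_{\bfx|\bfu}$. An averaging argument over anchors and single-codeword extensions would then identify $\Omega(M)$ such $\vx$'s.

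The principal obstacle is avoiding the entropy-loss factor $2^{-nI(\bfu;\bfx)}$ intrinsic to naive probabilistic choices of $\vu$: a uniformly random $\vu$ of type $\wt P_\bfu$ yields only $M\cdot 2^{-nI(\bfu;\bfx)}$ compatible codewords in expectation, which is generically exponentially smaller than the desired $\mu M$. Beating this loss requires genuinely exploiting the $\cp$-structural surplus of $\cC$ rather than its size alone, together with controlling the additional ambiguity introduced by choosing which $\cp$ $(L+1)$-fold extension the $(L+1)$-tuples cluster around and arguing that the anchor-induced $\vu$ is stable under this choice. Formalizing this enlargement quantitatively---via an iterative refinement of the subcode or a second moment method on the $(L+1)$-fold joint-type distribution---is precisely the step that the subcode-extraction construction of \Cref{sec:subcode_extraction_towards_conj} is designed to execute, and is the main technical hurdle between the present work and an unconditional matching capacity upper bound.
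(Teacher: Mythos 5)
You are attempting to prove \Cref{conj:comb_conj}, but note first that the paper itself has no proof of this statement: it is left as an open conjecture, \Cref{thm:converse_rate_ub} is explicitly conditioned on it, and \Cref{sec:subcode_extraction_towards_conj} only proposes a candidate subcode-extraction procedure while naming the point at which the authors' own attempt stops. Your proposal likewise is not a proof, and the gap you flag yourself is exactly the fatal one. The anchor construction gives you a sequence $\vu$ tailored to one fixed $L$-tuple $(\vx_1^\star,\ldots,\vx_L^\star)$, and whether some other codeword $\vx$ satisfies $d\paren{\tau_{\vu,\vx},\wt P_\bfu\wt P_{\bfx|\bfu}}\le\gamma$ is a statement about the $(L+1)$-wise joint type $\tau_{\vx_1^\star,\ldots,\vx_L^\star,\vx}$. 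The only tool available at level $L+1$ (\Cref{lem:apply_turan}, via the generalized Plotkin bound and hypergraph Tur\'an) says that a constant fraction of \emph{all} $(L+1)$-subsets of $\cC$ have approximately completely positive type; it does not localize those tuples to ones containing your fixed anchor, it does not force their $\cp$ type to have $L$-marginal equal to $\wt P_{\bfx_1,\cdots,\bfx_L}$, and it certainly does not force the $\cp$-decomposition of the $(L+1)$-fold type to be the ``i.i.d.\ given the same $\bfu$'' extension of the decomposition $(\wt P_\bfu,\wt P_{\bfx|\bfu})$ you started from ($\cp$-decompositions are not unique, cf.\ \Cref{rk:cp-decomp-non-unique}). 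Averaging over anchors does not rescue this: each anchor induces its own $\vu$, whereas the conjecture demands a single \emph{universal} time-sharing sequence for a subcode of size $\mu M$, and nothing in the proposal beats the $2^{-nI(\bfu;\bfx)}$ counting loss you correctly identify as the crux.

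For comparison, the paper's own (also unproven) route in \Cref{sec:subcode_extraction_towards_conj} avoids anchoring on any particular tuple: it induces $\vu$ by clustering the \emph{columns} of $\cC$ according to their types, and then iteratively extracts, chunk by chunk, a subcode that is approximately constant-composition on each chunk (\Cref{lem:apx_cc_reduction}), losing only a constant factor per chunk. This does produce a universal $\vu$ and a linear-size subcode unconditionally; what it fails to deliver is that the induced per-chunk compositions reproduce (or even are close to) the targeted non-$L$-symmetrizable decomposition $(\wt P_\bfu,\wt P_{\bfx|\bfu})$ — the extracted $\cp$-distribution $\sum_u\tau_\vu(u)P_{\bfx_u}'^{\tl}$ might be $L$-symmetrizable, which is exactly where the authors stop. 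So your approach and the paper's differ in how $\vu$ is generated (posterior from an anchor tuple versus column clustering), but both founder on the same unproved implication from ``many $L$-tuples have the $\cp$ type $\wt P_{\bfx_{[L]}}$'' to ``a constant fraction of codewords are simultaneously aligned with one realization of the latent time-sharing sequence,'' and neither constitutes a proof of the conjecture.
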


\subsection{James's jamming strategy}
\label{sec:rate_ub_strategy}
Let $ \cC = \curbrkt{\vx_i}_{i \in[M]} $ be a $ (\lambda,\wh P_\bfx) $-constant-composition code of size $M$. 
Let $ \delta_L,\gamma $ be positive constants to be determined later.
\begin{enumerate}
	\item Construct a $\delta_L $-net $\cN_L $ of $ \cp^\tl(\wh P_\bfx) $. 
	\item Draw all types $ \tau_{\vx_\cL} $ ($ \cL\in\binom{[M]}{L} $) on the space $ \Delta(\cX^L) $.
	(Note that the types may not fall within $ \cp^\tl(\wh P_\bfx) $.)
	Find the Voronoi cell induced by $\cN_L $ that contains the largest number of types of ordered $L$-tuples in $\cC$. 
	Let $ \wt P_{\bfx_1,\cdots,\bfx_L} $ denote the representative of that cell.
	Let $ (\wt P_\bfu,\wt P_{\bfx|\bfu}) $ be  a $\cp$-decomposition of $ \wt P_{\bfx_1,\cdots,\bfx_L} $. 
	Let $ \cC'\subseteq\cC $ denote the set of all codewords in the Voronoi cell corresponding to $ \wt P_{\bfx_1,\cdots,\bfx_L} $. 
	\item Find the largest subsubcode $ \cC''\subseteq\cC' $ such that there is a sequence $ \vu\in\cU^n $ satisfying $ d\paren{\tau_{\vu,\vx}, \wt P_\bfu,\wt P_{\bfx|\bfu}}\le\gamma $ for all $ \vx\in\cC'' $. 
	\item Sample $ \vbfs\in\cS^n $ according to the following distribution
	\begin{align}
	\vbfs\sim\prod_{j = 1}^n U_{\bfs|\bfu = \vu(j)} \eqqcolon U_{\vbfs|\vu}. \label{eqn:converse_distr} 
	\end{align}
\end{enumerate}

\subsection{Analysis of average error probability}
\label{sec:rate_ub_analysis}
Let $ \psi\colon\cY^n\to[M] $ be the $L$-list-decoder equipped with $\cC$.
By \Cref{def:enc_list_dec}, $ |\psi(\vy)|\le L $ for any $ \vy\in\cY^n $. 
Using \Cref{def:avg_error_prob}, we compute the expected average error probability over the random  jamming sequence generation defined by \Cref{eqn:converse_distr}.
\begin{align}
\exptover{\vbfs\sim U_{\vbfs|\vu}}{P_{\e,\avg}(\vbfs)} =& \frac{1}{M}\sum_{i\in[M]}\exptover{\vbfs\sim U_{\vbfs|\vu}}{P_{\e}(i,\vbfs)} \notag \\
=& \frac{1}{M}\sum_{i\in[M]}\sum_{\vy\in\cY^n\colon \psi(\vy)\not\ni i}\exptover{\vbfs\sim U_{\vbfs|\vu}} {W^{\tn}_{\bfy|\bfx,\bfs}(\vy|\vx_i,\vbfs)} \notag \\
=& \frac{1}{M}\sum_{i\in[M]}\sum_{\vy\in\cY^n\colon \psi(\vy)\not\ni i}\prod_{j = 1}^n\exptover{\vbfs(j)\sim U_{\bfs|\bfu = \vu(j)}}{W_{\bfy|\bfx,\bfs}(\vy(j)|\vx_i(j),\vbfs(j))}. \label{eqn:avg_error_prob_original}
\end{align}
We observe that \Cref{eqn:avg_error_prob_original} can be viewed as the average error probability of a \emph{time-varying} channel $ \wt W_{\bfy|\bfx,\bfu} $ defined according to
\begin{align}
\wt W_{\bfy|\bfx,\bfu}(y|x,u) \coloneqq& \exptover{\bfs\sim U_{\bfs|\bfu = u}}{W_{\bfy|\bfx,\bfs}(y|x,s)} = \sum_{s\in\cS}U_{\bfs|\bfu}(s|u)W_{\bfy|\bfx,\bfs}(y|x,s), \notag 
\end{align}
for all $ y\in\cY,x\in\cX,u\in\cU $. 
Under this definition, \Cref{eqn:avg_error_prob_original} can be written as
\begin{align}
\Cref{eqn:avg_error_prob_original} =& \frac{1}{M}\sum_{i\in[M]}\sum_{\vy\in\cY^n\colon \psi(\vy)\not\ni i}\prod_{j = 1}^n \wt W_{\bfy|\bfx,\bfu}(\vy(j)|\vx_i(j),\vu(j)) \notag \\
=& \frac{1}{M}\sum_{i\in[M]}\sum_{\vy\in\cY^n\colon \psi(\vy)\not\ni i}\wt W_{\bfy|\bfx,\bfu}^\tn(\vy|\vx_i,\vu) = \wt P_{\e,\avg}(\cC), \label{eqn:to_be_massaged} 
\end{align}
where $ \wt P_{\e,\avg} $ is the average error probability of $ \cC $ used over $ \wt W_{\bfy|\bfx,\bfu}^\tn $. 
To apply strong converse for fading channels with approximate constant-composition codes and list-decoding, we  massage \Cref{eqn:to_be_massaged} as follows.
\begin{align}
\Cref{eqn:to_be_massaged} \ge& \frac{1}{M}\sum_{i\in[M]\colon \vx_i\in\cC''}\sum_{\vy\in\cY^n\colon\psi(\vy)\not\ni i}\wt W_{\bfy|\bfx,\bfu}^\tn(\vy|\vx_i,\vu) \notag \\
\ge& \frac{1}{M}\sum_{i\in[M]\colon \vx_i\in\cC''}\sum_{\vy\in\cY^n\colon\psi_\opt''(\vy)\not\ni i}\wt W_{\bfy|\bfx,\bfu}^\tn(\vy|\vx_i,\vu) \label{eqn:opt_list_dec_subsubcode} \\
=& \frac{M''}{M}\frac{1}{M''}\sum_{i\in[M]\colon\vx_i\in\cC''}\sum_{\vy\in\cY^n\colon \psi_\opt''(\vy)\not\ni i}\wt W_{\bfy|\bfx,\bfu}^\tn(\vy|\vx_i,\vu), \label{eqn:notation_size_subsubcode}
\end{align}
where in \Cref{eqn:notation_size_subsubcode}, $ |\cC''| $ is denoted by $M''$ and in \Cref{eqn:opt_list_dec_subsubcode} we use $ \psi_\opt''\colon\cY^n\to\binom{[M'']}{\le L} $ to denote the optimal $L$-list-decoder for $ \cC'' $ used over $ \wt W^\tn_{\bfy|\bfx,\bfu} $. 

By \Cref{eqn:cp_tuple_extraction}, $ M'/M\ge c $.
By \Cref{conj:comb_conj}, $ M''/M'\ge \mu $. 
Therefore we have
\begin{align}
\frac{M''}{M} =& \frac{M'}{M}\frac{M''}{M'}\ge c\mu>0, \label{eqn:subcode_bound} 
\end{align}
which is a positive constant. 
Note that
\begin{align}
\wt P_{\e,\avg}(\cC'') \coloneqq& \frac{1}{M''}\sum_{i\in[M]\colon\vx_i\in\cC''}\sum_{\vy\in\cY^n\colon \psi_\opt''(\vy)\not\ni i}\wt W_{\bfy|\bfx,\bfu}^\tn(\vy|\vx_i,\vu) \notag
\end{align}
is the average error probability of $ \cC'' $ used over $ \wt W_{\bfy|\bfx,\bfu}^\tn $ under the optimal $L$-list-decoder. 
We are going to apply the strong converse bound on $ \wt P_{\e,\avg}(\cC'') $.
To this end, first recall the capacity of fading channels. 
\begin{theorem}[Capacity of fading DMCs, Chap.  23 \cite{elgamal-kim}]\label{thm:cap_fading}
Let $ W_{\bfy|\bfx,\bfu} $ be a fading channel with fast block fading according to a fixed sequence $ \vu\in\cU^n $ (where $ |\cU| $ is a constant) of type $P_\bfu $.\footnote{That is, the fading process $ \curbrkt{ \bfu_i}_{i\in[n]} $ is a constant $ u\in\cU $  in each coherence time interval $ \curbrkt{i\in[n]\colon \vu(i) = u} $. Since $ |\cU|\ll n $ is a constant, this model is called \emph{fast fading}. }
The capacity of  $ W_{\bfy|\bfx,\bfu} $  is given by
\begin{align}
C(W_{\bfy|\bfx,\bfu})=& \max_{P_{\bfx|\bfu}\in\Delta(\cX|\cU)} I(\bfx;\bfy|\bfu), \label{eqn:cap_fading}
\end{align}
where the mutual information is evaluated according to $ P_\bfu P_{\bfx|\bfu}W_{\bfy|\bfx,\bfu} $. 
\end{theorem}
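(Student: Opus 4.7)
The plan is to prove the two directions separately using standard single-letterization, adapted to the non-stationarity induced by the fixed fading sequence $\vu$ of type $P_\bfu$.

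For achievability, fix $P_{\bfx|\bfu}$ attaining the maximum in \eqref{eqn:cap_fading}. Given $\vu$, generate $2^{nR}$ codewords whose coordinates are mutually independent with $\vbfx_m(i) \sim P_{\bfx|\bfu = \vu(i)}$, and use the joint-typicality decoder that searches for the unique $\hat m$ such that $(\vu, \vbfx_{\hat m}, \vbfy)$ is jointly $\delta$-typical with respect to $P_\bfu P_{\bfx|\bfu} W_{\bfy|\bfx,\bfu}$. The per-coordinate distributions are non-identical, but the joint empirical type of $(\vu, \vbfx_m, \vbfy)$ still concentrates at the target distribution by coordinate-wise concentration, since the $|\cU|$ subsets of coordinates $\{i \colon \vu(i) = u\}$ each look like i.i.d.\ draws and have cardinality $nP_\bfu(u)$. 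A union bound together with a standard packing-lemma estimate for spurious codewords then gives vanishing average error probability whenever $R < I(\bfx;\bfy|\bfu)$.

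For the converse, consider any sequence of codes with rate $R$ and vanishing average error probability. Let $\bfi$ be uniform on $[M]$ and let $\vbfx, \vbfy$ be the transmitted codeword and received output. Fano's inequality gives $H(\bfi|\vbfy) \le n\epsilon_n$ with $\epsilon_n \to 0$, whence
\begin{align*}
nR \;\le\; I(\bfi;\vbfy) + n\epsilon_n \;\le\; I(\vbfx;\vbfy) + n\epsilon_n \;=\; I(\vbfx;\vbfy|\vu) + n\epsilon_n,
\end{align*}
the last equality using that $\vu$ is deterministic. Memorylessness and a standard chain-rule expansion yield $I(\vbfx;\vbfy|\vu) \le \sum_{i=1}^n I(\vbfx(i);\vbfy(i)|\vu(i))$. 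Introducing an independent uniform time-sharing index $\bfq$ on $[n]$ and setting $\bfu \coloneqq \vu(\bfq)$, $\bfx \coloneqq \vbfx(\bfq)$, $\bfy \coloneqq \vbfy(\bfq)$ produces single-letter variables whose marginal $P_\bfu$ equals the type of $\vu$, and
\begin{align*}
\frac{1}{n}\sum_{i=1}^n I(\vbfx(i);\vbfy(i)|\vu(i)) \;=\; I(\bfx;\bfy|\bfu,\bfq) \;\le\; I(\bfx;\bfy|\bfu),
\end{align*}
since dropping the conditioning on $\bfq$ is absorbed into the choice of $P_{\bfx|\bfu}$ as a mixture over $\bfq$. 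Maximizing over feasible $P_{\bfx|\bfu}$ yields $R \le \max_{P_{\bfx|\bfu}} I(\bfx;\bfy|\bfu) + \epsilon_n$, completing the weak converse as $n \to \infty$.

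The main subtlety I anticipate is ensuring that averaging over $\bfq$ in the converse genuinely produces a channel of the form $W_{\bfy|\bfx,\bfu}$ rather than some mixture; this works because $W(y|x,u)$ depends on the coordinate index $i$ only through $\vu(i) = u$, so conditioning on $\bfu = u$ isolates exactly the DMC at fading state $u$. A secondary consideration, relevant to how the theorem is invoked just after its statement, is that the downstream application appears to require a \emph{strong} converse rather than the weak converse provided by Fano; if needed, one can upgrade the argument by applying Wolfowitz's or Arimoto's strong converse blockwise to the $|\cU|$ stationary DMCs obtained from the coordinate partition induced by $\vu$, and then recombining through the time-sharing variable $\bfq$ — a clean extension since each sub-channel is memoryless and each block has length $nP_\bfu(u) = \Theta(n)$.
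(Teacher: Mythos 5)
You should first note that the paper does not prove this theorem at all: it cites it from El~Gamal and Kim's textbook (Chapter~23) and moves on, so there is no in-paper proof to compare against. Your argument is nevertheless a correct and standard derivation. The achievability (random coding with coordinate distributions $P_{\bfx|\bfu = \vu(i)}$ and a joint-typicality decoder, with concentration handled coordinate-block by coordinate-block) is routine, and the converse single-letterization is sound: the key inequality $I(\bfx;\bfy|\bfu,\bfq)\le I(\bfx;\bfy|\bfu)$ holds because $H(\bfy|\bfu,\bfq)\le H(\bfy|\bfu)$ while $H(\bfy|\bfx,\bfu,\bfq)=H(\bfy|\bfx,\bfu)$, the latter from the Markov chain $\bfq-(\bfx,\bfu)-\bfy$ induced by memorylessness, which is exactly the subtlety you flag.

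Your secondary observation is also on target and worth making explicit. The way \Cref{thm:cap_fading} is invoked — lower-bounding the jammer's success probability via a subcode $\cC''$ — does require a strong converse, not the Fano-based weak converse. The paper addresses this by stating a \emph{separate} result, \Cref{thm:strong_conv_fading}, tailored to list-decoding and $(\lambda,\vu,P_{\bfx|\bfu})$-constant-composition codes, and proves it in \Cref{app:strong_conv_fading_pf}. That proof does \emph{not} go through Wolfowitz or Arimoto; it is a direct typical-set counting argument: the number of conditionally typical outputs is at most $2^{n(H(\bfy|\bfu)+f_2)}$, each output lands in at most $L$ decoding regions, and each typical output has conditional probability at most $2^{-n(H(\bfy|\bfx,\bfu)-f_1)}$, so the correct-decoding mass over typical outputs is $\lesssim L\cdot 2^{-nR}\cdot 2^{n(I(\bfx;\bfy|\bfu)+f_1+f_2)}$, which is exponentially small once $R>I(\bfx;\bfy|\bfu)+\delta$; the atypical mass is handled by large deviations. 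This is cleaner than recombining $|\cU|$ blockwise Wolfowitz bounds, because the codewords' restrictions to the $|\cU|$ coordinate blocks are statistically entangled (they are slices of the same codeword), which makes the blockwise-then-recombine strategy delicate to carry out rigorously, and because the typicality route incorporates the list-size $L$ and the approximate-constant-composition slack $\lambda$ with essentially no extra work.
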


The strong converse result for fading channels under list-decoding with approximate constant-composition codes follows from standard techniques.
We state the result below and prove it in \Cref{app:strong_conv_fading_pf}.
\begin{theorem}[Strong converse for fading DMCs with approximate constant-composition codes and list-decoding]\label{thm:strong_conv_fading}
Let $ W_{\bfy|\bfx,\bfu} $ be a fading DMC with fast fading according to a sequence $ \vu\in\cU^n $ of type $ P_\bfu\in\Delta(\cU) $ where $ \cardU $ is a constant. 
Fix any constants $ \delta>0 $ and $ 0<\lambda\ll\delta $.
Let  $ P_{\bfx|\bfu}\in\Delta(\cX|\cU) $. 
For any  list-size $ L\in\bZ_{\ge1} $ and any $(\lambda,\vu, P_{\bfx|\bfu})$-constant-composition code $\cC$ of rate $ R = \frac{1}{n}\log \frac{|\cC|}{L} \ge C(W_{\bfy|\bfx,\bfu})+\delta $ (where $ C(W_{\bfy|\bfx,\bfu}) $ is defined in \Cref{eqn:cap_fading}), the average error probability of $ \cC $ used over $ W_{\bfy|\bfx,\bfu}^\tn $ is approaching 1. 
More precisely 
\begin{align}
P_{\e,\avg}(\cC) \ge 1- 2^{-nf(\delta,\lambda)} , \label{eqn:strong_conv_fading_bound}
\end{align}
for some function $ f(\delta,\lambda)>0 $ satisfying $ f(\delta,\lambda)\xrightarrow{\delta,\lambda\to0}0 $. 
\end{theorem}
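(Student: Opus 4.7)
The plan is to reduce this strong converse to the classical strong converse for constant-composition codes over stationary DMCs by decomposing the fading channel along the time-sharing sequence $\vu$. First I would partition $[n] = \bigsqcup_{u \in \cU} \cI_u$ with $\cI_u \coloneqq \{i \in [n] : \vu(i) = u\}$ and $n_u \coloneqq |\cI_u| = n P_\bfu(u)$. Conditional on $\vu$, the product channel $W_{\bfy|\bfx,\bfu}^{\otimes n}$ factorizes (after rearrangement of coordinates) as $\bigotimes_{u \in \cU} W_{\bfy|\bfx,u}^{\otimes n_u}$ acting on the restrictions $\vx_{m,u} \in \cX^{n_u}$ of each codeword $\vx_m$ to the coordinates in $\cI_u$. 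The approximate constant-composition hypothesis $d(\tau_{\vu,\vx_m}, P_\bfu P_{\bfx|\bfu}) \le \lambda$ implies $d(\tau_{\vx_{m,u}}, P_{\bfx|\bfu=u}) \le O(\lambda/P_\bfu(u))$ for each $u$ with $P_\bfu(u) > 0$, so every sub-code $\{\vx_{m,u}\}_{m \in [M]}$ is approximately $P_{\bfx|\bfu=u}$-constant-composition.

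Next, I would invoke the classical strong converse for constant-composition codes over DMCs (via the image-size characterization, cf.\ \cite{csiszar-korner-1981}, Chap.~2), which yields: for any rate exceeding $I(\bfx;\bfy|\bfu=u)$ by some $\delta' > 0$, the average correct-decoding probability under \emph{any} decoder decays as $2^{-n_u E(\delta')}$ with $E(\delta') > 0$. Combining these per-$u$ bounds and using the additivity $I(\bfx;\bfy|\bfu) = \sum_u P_\bfu(u) I(\bfx;\bfy|\bfu=u)$ gives exponential decay whenever the overall rate $R$ exceeds $I(\bfx;\bfy|\bfu) + \delta'$. The hypothesis $R \ge C(W_{\bfy|\bfx,\bfu}) + \delta$ together with \Cref{eqn:cap_fading} guarantees $R \ge I(\bfx;\bfy|\bfu) + \delta$ at the code's own $P_{\bfx|\bfu}$, so this condition holds with $\delta' = \Omega(\delta)$.

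To pass from unique decoding to $L$-list decoding, I would use the standard reduction: the randomized decoder returning a uniformly chosen element of $\psi(\vy)$ is a valid unique decoder whose correct-decoding probability is at least $(1 - P_{\e,\avg}(\cC))/L$. Since $L$ is a fixed constant, the $\frac{1}{n}\log L$ correction to the rate $R = \frac{1}{n}\log(|\cC|/L)$ is absorbed into $\delta$, and the factor $L$ on the correct-decoding probability is absorbed into the exponent. Combining yields $1 - P_{\e,\avg}(\cC) \le L \cdot 2^{-n E} \le 2^{-n f(\delta,\lambda)}$ for some $f(\delta,\lambda) > 0$, establishing \Cref{eqn:strong_conv_fading_bound}.

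The main obstacle will be handling the \emph{approximate} rather than exact constant-composition condition, since the textbook strong converse typically assumes exactly constant-composition codes. This is resolved by partitioning $\cC$ into sub-codes of fixed conditional type, of which there are only polynomially many (at most $(n+1)^{|\cU||\cX|}$) within the $\lambda$-ball around $P_\bfu P_{\bfx|\bfu}$: applying the classical strong converse to each such sub-code and using the Lipschitz continuity of $(P_\bfu, P_{\bfx|\bfu}) \mapsto I(\bfx;\bfy|\bfu)$ in total variation (with constant $O(\log|\cX||\cY|)$, the channel $W_{\bfy|\bfx,\bfu}$ being fixed) guarantees that each sub-code still has rate exceeding its own mutual information by $\Omega(\delta)$ whenever $\lambda \ll \delta$. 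The resulting $f(\delta,\lambda)$ then satisfies $f(\delta,\lambda) \to 0$ as $\delta, \lambda \to 0$, as required.
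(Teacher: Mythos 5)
There is a genuine gap at the heart of your argument: the step ``combine the per-$u$ strong converses via additivity of $I(\bfx;\bfy|\bfu)$'' is not a valid deduction. The classical constant-composition strong converse you invoke bounds the correct-decoding probability of a code \emph{on that block} with its own decoder, at a rate determined by the number of messages carried by that block. Here neither object exists: the restrictions of the $M$ codewords to the coordinates $\{j:\vu(j)=u\}$ do not form a code with a well-defined rate (the number of distinct block-$u$ restrictions can be anything from $1$ up to roughly $2^{n_u H(\bfx|\bfu=u)}$, and the overall hypothesis $R\ge C+\delta$ imposes no per-block rate condition), and, more importantly, the actual list-decoder observes the \emph{entire} output $\vy$, not just the block-$u$ portion. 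Messages sharing a block-$u$ restriction are told apart through the other blocks, and conversely the decoder can recover the block-$u$ restriction from the other blocks even if the channel on block $u$ is useless; so a bound on what a block-$u$-only decoder can do says nothing about the joint decoder. A pigeonhole argument (``some block must carry rate above its own mutual information'') does not rescue this, for the same reason. The information accounting must be done jointly over all $n$ coordinates; additivity of mutual information is the reason the \emph{answer} is $I(\bfx;\bfy|\bfu)$, but it is not by itself a proof mechanism.

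The paper's proof does exactly this joint accounting with an elementary Wolfowitz-style counting argument on the nonstationary product channel $W_{\bfy|\bfx,\bfu}^{\otimes n}$ with $\vu$ fixed: for outputs in the conditional typical set given $\vu$, each codeword's likelihood is at most $2^{-n(H(\bfy|\bfx,\bfu)-f_1(\lambda,\eps))}$ (this is where the approximate composition $d(\tau_{\vu,\vx_i},P_\bfu P_{\bfx|\bfu})\le\lambda$ enters), the typical set has size at most $2^{n(H(\bfy|\bfu)+f_2(\lambda,\eps))}$, the constraint $|\psi(\vy)|\le L$ caps the sum of indicators over messages by $L$ per output, and atypical outputs are killed by a large-deviation bound; the three contributions give $1-P_{\e,\avg}(\cC)\le 2^{-n(\delta-f_1-f_2)}+2^{-nf_3}$. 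An equivalent repair of your plan would be an information-spectrum argument: the information density tensorizes over the $n$ coordinates and concentrates at $nI(\bfx;\bfy|\bfu)$, so correct list-decoding fails once $\log(M/L)$ exceeds this by $n\delta$. Your two auxiliary devices --- the randomized list-to-unique reduction costing a factor $L$, and the partition of $\cC$ into polynomially many exact conditional types together with continuity of $I(\bfx;\bfy|\bfu)$ --- are fine (and the paper in fact avoids even these by working with $|\psi(\vy)|\le L$ and approximate types directly), but they sit on top of the block-wise reduction, which is the step that fails.
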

By \Cref{thm:strong_conv_fading},  
\begin{align}
\wt P_{\e,\avg}(\cC'') \ge1 - 2^{-nf(\delta,\lambda)}. \label{eqn:pe_fading_bound}
\end{align}

Finally, plugging \Cref{eqn:subcode_bound} and \Cref{eqn:pe_fading_bound} into \Cref{eqn:notation_size_subsubcode}, we get
\begin{align}
\exptover{\vbfs\sim U_{\vbfs|\vu}}{P_{\e,\avg}(\vbfs)} \ge& c\mu(1-2^{-nf(\delta,\lambda)})\xrightarrow{n\to\infty} c\mu >0, \notag
\end{align}
which completes the error analysis.

\subsection{A subcode extraction procedure towards \Cref{conj:comb_conj}}
\label{sec:subcode_extraction_towards_conj}
Towards a tight capacity characterization, we propose a natural subcode extraction procedure. 
However, we do not know how to {prove} that it gives rise to a matching capacity upper bound.
The procedure is described below.
Given any $ (\lambda,P_\bfx) $-constant-composition code $ \cC\subseteq\cX^n $ of size $ M $. 
We extract a subcode $ \cC'\subseteq\cC $ via the following steps. 
Let $ \lambda'>0 $ be a  constant to be determined. 
\begin{enumerate}
	\item Quantize $ \Delta(\cX) $ using a $\zeta$-net $ \cN = \curbrkt{P_{\bfx_i}}_{i\in[N]} $ (where $ P_{\bfx_i}\in\Delta(\cX) $ for each $ i $) of size $ N $ at most $ N(\zeta,\cardX) $ (by \Cref{lem:quant}). 
	\item Cluster the \emph{columns} of $ \cC\in\cX^{M\times n} $ according their types.
	This naturally induces a time-sharing sequence $ \vu\in\cU^n $ defined as follows. 
	The alphabet $ \cU $ has size at most $ N $. 
	The $j$-th ($j\in[n]$) component $ \vu(j) $ is set $ i\in[N] $ if the $j$-th column of $ \cC $ has type $ \zeta $-close to $ P_{\bfx_i}\in\cN $. 
	\item Find the largest subcode $ \cC'\subseteq\cC $  with the following properties. 
	\begin{enumerate}
		\item\label{itm:proposed_subcode_cond1}
		The subcode $ \cC' $ has size lat least $ \theta M $ for some constant $ \theta>0 $. 
		\item\label{itm:proposed_subcode_cond2} For each $ i\in\cU $, the shorter subcode $ \cC'|_i\subseteq\cX^{n\tau_\vu(i)} $ is  $ (\lambda', P_{\bfx_i}') $-constant-composition for some  distribution $ P_{\bfx_i}'\in\Delta(\cX) $\footnote{Unless other operation is done otherwise, for each $ i\in\cU $, the \emph{column} composition $ P_{\bfx_i} $ may not be preserved in each punctured subcode $ \cC'|_i $. 
		This is why we only claim $ (\lambda',P_{\bfx_i}') $-constant-composition for another distribution $ P_{\bfx_i}' $ that can be different from $ P_{\bfx_i} $ a priori.}.
		Here  $ \cC'|_i\subseteq\cX^{ n\tau_\vu(i) } $ denotes the puncturing of $ \cC' $ onto the components $j$'s ($j\in[n]$) such that $ \vu(j) = i $. 
		\item The subcode $ \cC' $ is the one that minimizes the induced conditional  mutual information 
		$$ \max_{U_{\bfs|\bfu}\in\Delta(\cS|\cU)\colon \sqrbrkt{P_\bfu U_{\bfs|\bfu}}_\bfs\in\lambda_\bfs} I(\bfx;\bfy|\bfu) $$
		among all subcodes satisfying \Cref{itm:proposed_subcode_cond1,itm:proposed_subcode_cond2},
		where the conditional mutual information  is evaluated according to $ P_{\bfu,\bfx,\bfy}(u,x,y) = \tau_\vu(u) P'_{\bfx_u}(x) U_{\bfs|\bfx}(s|x)W_{\bfy|\bfx,\bfs}(y|x,s)  $. 
	\end{enumerate}
\end{enumerate}
We claim that subcodes satisfying \Cref{itm:proposed_subcode_cond1,itm:proposed_subcode_cond2} do exist for sufficiently small $ \theta $. 
Indeed, they can be constructed as follows. 
By approximate-constant-composition reduction (\Cref{lem:apx_cc_reduction}), for $ i = 1 $, we can find a  subcode $ \cC'_1 \subseteq\cC $ of size $ M_1 \ge \theta_0 M $ where $ \theta_0 \ge \paren{\frac{\cardX}{2\lambda'} + 1}^{-\cardX} $ such that $ \cC'_1|_1 \subseteq\cX^{n\tau_\vu(1)} $ is $ (\lambda',P'_{\bfx_1}) $-constant-composition for some $ P_{\bfx_1}'\in\Delta(\cX) $. 
For $ i = 2 $, there  exists a subcode $ \cC_2' \subseteq\cC_1' $ of size $ M_2 \ge \theta_0M_1 \ge\theta_0^2M $ such that $ \cC_2'|_2\subseteq\cX^{n\tau_\vu(2)} $ is $ (\lambda',P_{\bfx_2}') $-constant-composition for some $ P_{\bfx_2}'\in\Delta(\cX) $. 
Note that $ \cC_2'|_1 \subseteq\cX^{n\tau_\vu(1)} $ is still $ (\lambda',P_{\bfx_1}') $-constant-composition since $ \cC_2'\subseteq\cC_1' $. 
Iteratively doing this for each $ i = 1,2,\cdots,\cardU $, we end up with a subcode $ \cC' = \cC_\cardU' $ of size at least $ M_\cardU \ge \theta_0^\cardU M $ such that for each $ i\in\cU $, $ \cC'|_i $ is $ (\lambda',P_{\bfx_i}') $-constant-composition.
See \Cref{fig:proposed_subcode_extraction}. 
\begin{figure}[htbp]
	\centering
	\includegraphics[width=0.45\textwidth]{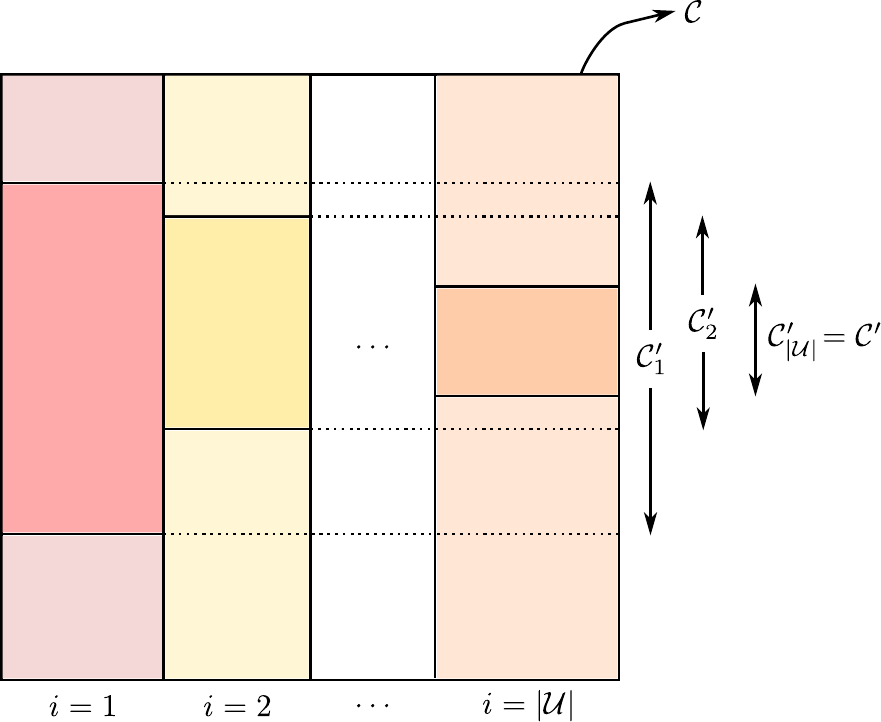}
	\caption{A proposed procedure for extracting large subcodes that is chunk-wise constant-composition w.r.t. a time-sharing sequence $ \vu\in\cU^n $.}
	\label{fig:proposed_subcode_extraction}
\end{figure}
Therefore, as long as $ \theta\le \paren{\frac{\cardX}{2\lambda'} + 1}^{-\cardX N} \le \theta_0^\cardU  $, subcodes satisfying \Cref{itm:proposed_subcode_cond1,itm:proposed_subcode_cond2} can be found. 
The goal is to find the one with the minimum conditional mutual information.

By construction, a subcode $ \cC' $ obtained as above is indeed constant-composition w.r.t. a \emph{universal} time-sharing sequence $\vu$ as desired in \Cref{conj:comb_conj}. 
It is not hard to see that if a code $ \cC $ was truly sampled from the ensemble defined in \Cref{app:cw_select}, then the above procedure will with high probability detect the correct underlying time-sharing sequence $ \vu $ and the distributions $ \curbrkt{P_{\bfx_u}}_{u\in\cU} $ that $ \cC $ was sampled from.
However, one crucial gap in the above construction that prevents us from {proving} a matching capacity upper bound is as follows. 
Recall that in the capacity expression (\Cref{eqn:cap_obli_avc_list_dec}), the maximization is over \emph{non-$L$-symmetrizable} $\cp$-distributions $ P_{\bfx_{[L]}} = \sum_{u\in\cU}P_\bfu(u)P_{\bfx_u}^\tl $. 
Unfortunately, we do not know how to show the non-$L$-symmetrizability of the $\cp$-distribution $ \sum_{u\in\cU} \tau_\vu(u)P_{\bfx_u}'^\tl $ induced by the above construction.

\section{Technical remarks}\label{sec:technical_rmk}
In the first two subsections (\Cref{sec:myop_obli_symm,sec:myop_omni_symm}) of this section, 
we discuss different notions of symmetrizability for different channel models and their relations. 
See \Cref{fig:symm_notioni} for an overview. 
\begin{figure}[htbp]
	\centering
	\includegraphics[width=\textwidth]{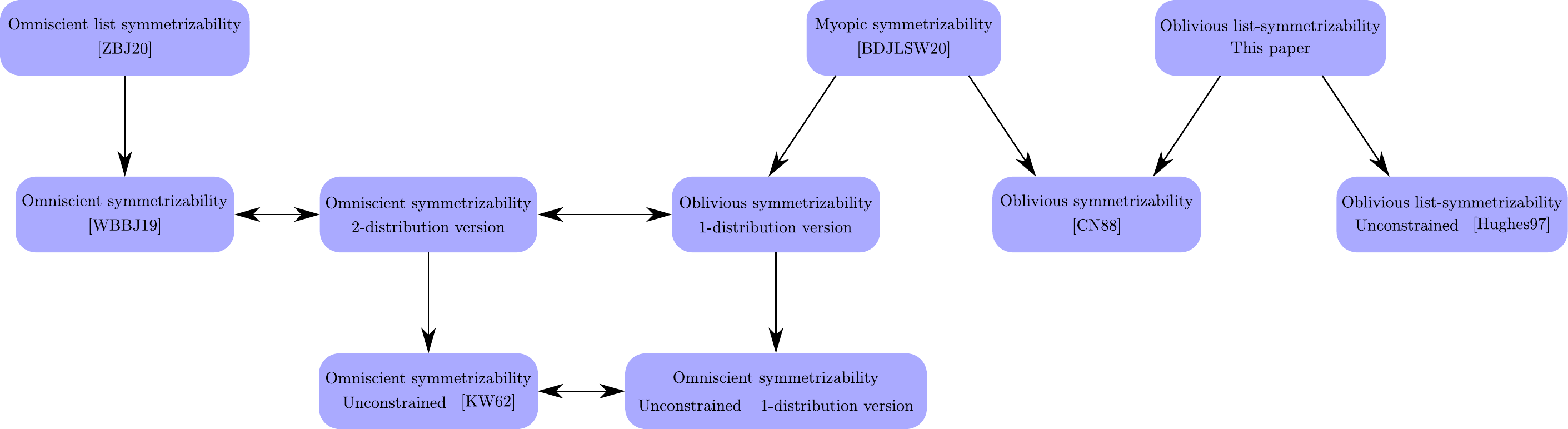}
	\caption{Different notions of symmetrizability for different channel models. If there is an arrow $ \rightarrow $ from Definition $A$ to Definition $B$, i.e., $ A\to B $, then Definition $B$ is a special case of Definition $A$. If the arrow is bidirectional, it means that the definitions on two endpoints of the arrow are equivalent.}
	\label{fig:symm_notioni}
\end{figure}
In the last subsection (\Cref{sec:seemingly_natural}), we discuss a seemingly more natural jamming strategy mentioned in \Cref{rk:seemingly_natural} for obtaining capacity upper bounds. 

\subsection{Myopic/oblivious symmetrizability}
\label{sec:myop_obli_symm}
By the unique-decoding version of the generalized Plotkin bound \cite{wbbj-2019-gen_plotkin}, in any sufficiently large approximately constant-composition code, there is a constant $ c>0 $ such that a $ c $ fraction of types $ \tau_{\vx_i,\vx_j} $ of ordered pairs $ i<j $ of codewords is approximately $\cp$. 
Curious readers may wonder why \csiszar--Narayan in their seminal paper \cite{csiszar-narayan-it1988-obliviousavc} did \emph{not} resort to the heavy  machinery in  \cite{wbbj-2019-gen_plotkin} to prove a tight characterization of the capacity of input-constrained state-constrained oblivious AVCs. 
Before explaining the reason, we would like to first take a diversion and discuss different notions of symmetrizability/confusability in the AVC/adversarial channel literature.

Let us introduce \emph{myopic AVCs} and \emph{omniscient AVCs}.
In oblivious AVCs, the adversary does not know which particular codeword from the codebook was transmitted, though he does know the codebook (which is known to everyone).
Whereas in omniscient AVCs, the adversary knows precisely which codeword was transmitted.
\begin{definition}[Omniscient AVCs]
\label{def:omni_avc}
An omniscient AVC $ \cA_\omni = (\cX,\cS,\cY,\lambda_\bfx,\lambda_\bfs,W_{\bfy|\bfx,\bfs}) $ consists of three alphabets $ \cX,\cS,\cY $ for the input, jamming and output sequences, respectively; 
input constraints $ \lambda_\bfx\subseteq\Delta(\cX) $ and state constraints $ \lambda_\bfs\subseteq\Delta(\cS) $; 
an adversarial channel $ W_{\bfy|\bfx,\bfs} $ from Alice to Bob governed by James.
Knowing the codebook $ \cC $, Alice's encoder $ \phi $, Bob's decoder $\psi$, receiving Alice's transmitted sequence $ \vbfx $, James generates a jamming sequence $ \vbfs $ and sends it through the channel $ W_{\bfy|\bfx,\bfs} $. 
See \Cref{fig:diag_omni_listdec} for a block diagram of the omniscient AVC model under list-decoding.
WHen $L=1$, it degenerates to the unique-decoding case.
\end{definition}
\begin{figure}[htbp]
	\centering
	\includegraphics[width=0.7\textwidth]{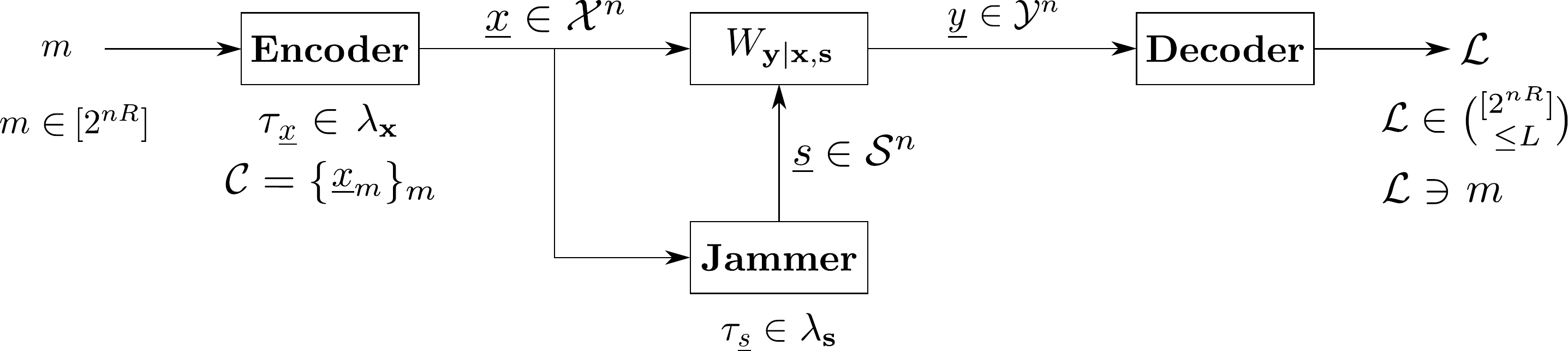}
	\caption{Block diagram of an omniscient AVC under list-decoding.}
	\label{fig:diag_omni_listdec}
\end{figure}

The myopic AVCs are a natural information-theoretic bridge between the oblivious and omniscient models.
In the myopic setting, the adversary observes a \emph{noisy} version of the transmitted sequence through a \emph{DMC}.
\begin{definition}[Myopic AVCs]
\label{def:myop_avc}
A myopic AVC $ \cA_\myop = (\cX,\cS,\cY,\lambda_\bfx,\lambda_\bfs,W_{\bfz|\bfx},W_{\bfy|\bfx,\bfs}) $ consists of three alphabets $ \cX,\cS,\cY $ for the input, jamming and output sequences, respectively; 
input constraints $ \lambda_\bfx\subseteq\Delta(\cX) $ and state constraints $ \lambda_\bfs\subseteq\Delta(\cS) $; 
a DMC $ W_{\bfz|\bfx} $ from Alice to James;
an adversarial channel $ W_{\bfy|\bfx,\bfs} $ from Alice to Bob governed by James.
Knowing the codebook $ \cC $, Alice's encoder $ \phi $, Bob's decoder $ \psi $, receiving a noisy version $ \vbfz $ of Alice's transmitted sequence $ \vbfx $, James  generates a jamming sequence $ \vbfs $ and sends it through the channel $ W_{\bfy|\bfx,\bfs} $. 
See \Cref{fig:diag_myop} for a block diagram of the myopic AVC model. 
\end{definition}
\begin{figure}[htbp]
	\centering
	\includegraphics[width=0.7\textwidth]{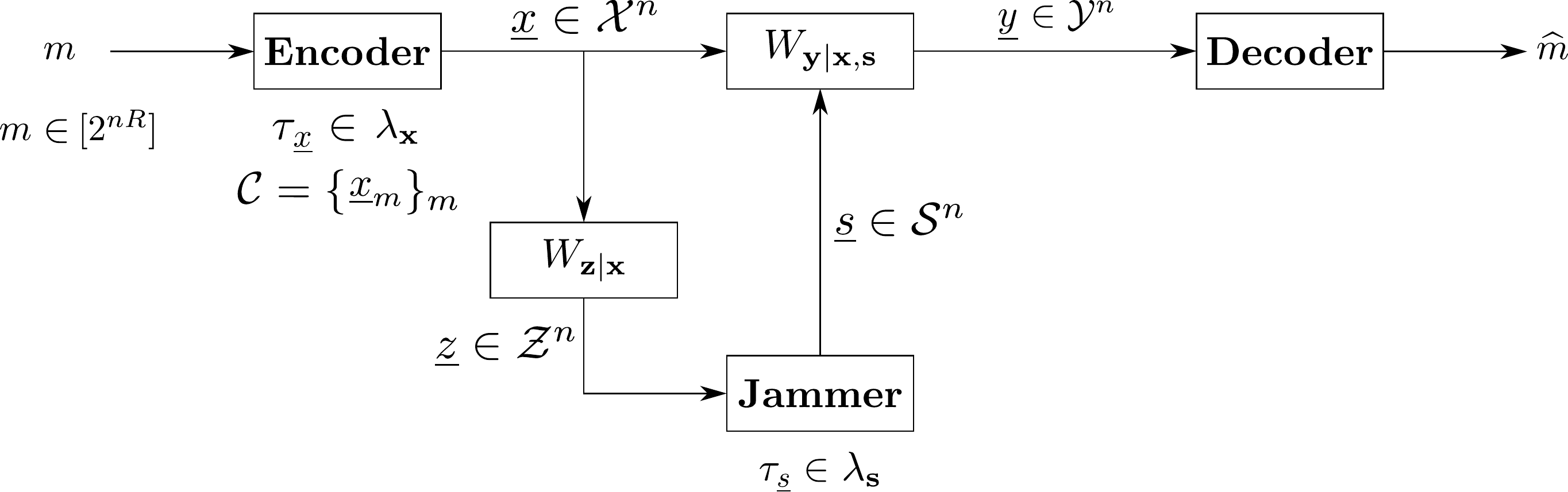}
	\caption{Block diagram of a myopic AVC.}
	\label{fig:diag_myop}
\end{figure}

In a recent work \cite{bdjlsw-2020-myopic_symm}, a new notion of \emph{myopic symmetrizability} was introduced.
\begin{definition}[Myopically symmetrizing distributions]
\label{def:myop_symm_distr_set}
Define the set of \emph{myopically symmetrizing distributions} as 
\begin{align}
\cU_{\myop,\symm}\coloneqq& \curbrkt{ U_{\bfs|\bfz,\bfx'}\in\Delta(\cS|\cZ\times\cX)\colon 
\begin{array}{rl}
&\forall (x,x',y)\in\cX^2\times\cY,\\
&\displaystyle\sum_{z\in\cZ,s\in\cS} W_{\bfz|\bfx}(z|x)U_{\bfs|\bfz,\bfx'}(s|z,x')W_{\bfy|\bfx,\bfs}(y|x,s)\\
=&\displaystyle\sum_{z\in\cZ,s\in\cS} W_{\bfz|\bfx}(z|x')U_{\bfs|\bfz,\bfx'}(s|z,x)W_{\bfy|\bfx,\bfs}(y|x',s)
\end{array}
 }. \notag
\end{align}
\end{definition}
\begin{definition}[Myopic symmetrizability, \cite{bdjlsw-2020-myopic_symm}]
\label{def:myop_symm}
Consider a myopic AVC $ \cA_\myop $. 
An input distribution $ P_\bfx\in\lambda_\bfx $ is called \emph{myopically symmetrizable} if for all $ P_{\bfx,\bfx'}\in\cp(P_\bfx) $ there is a $ U_{\bfs|\bfz,\bfx'}\in\cU_{\myop,\symm} $ such that $ \sqrbrkt{ P_{\bfx,\bfx'} W_{\bfz|\bfx}U_{\bfs|\bfz,\bfx'} }_\bfs\in\lambda_\bfs $. 
\end{definition}

It was shown in \cite{bdjlsw-2020-myopic_symm} that myopic symmetrizability is a sufficient (though not known to be necessary) condition for a myopic AVC to have zero rate.
\begin{theorem}[Capacity positivity of myopic AVCs, , \cite{bdjlsw-2020-myopic_symm}]
\label{thm:myop_positivity}
For any myopic AVC $ \cA_\myop $, the capacity is zero if every $ P_\bfx\in\lambda_\bfx $ is myopically symmetrizable. 
\end{theorem}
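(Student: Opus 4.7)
The plan is to mimic the $\cp$-symmetrization converse from the oblivious setting (\Cref{thm:symm_converse}), adapting it so that James exploits his noisy observation $\vbfz$ alongside a uniformly sampled spoofing codeword. Given any positive-rate code $\cC\subseteq\Lambda_\bfx$, I would first invoke \Cref{lem:apx_cc_reduction} to pass to a $(\lambda,\wh P_\bfx)$-constant-composition subcode for some $\wh P_\bfx\in\lambda_\bfx$, then run the $\cp$-list extraction of \Cref{lem:apply_turan} with $L=2$, quantizing $\paren{\cp^{\otimes 2}(\wh P_\bfx)}_\eps\cap\Delta(\cX^2)$ exactly as in \Cref{sec:cp_symm}, to produce a constant $c>0$ and a fixed distribution $\wt P_{\bfx,\bfx'}\in\cp^{\otimes 2}(\wh P_\bfx)$ such that a $c$-fraction of ordered pairs of distinct codewords $(\vx_i,\vx_j)$ have joint type $O(\eta+\eps)$-close to $\wt P_{\bfx,\bfx'}$. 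Since every $P_\bfx\in\lambda_\bfx$, and in particular $\wh P_\bfx$, is myopically symmetrizable by hypothesis, \Cref{def:myop_symm} supplies a kernel $U_{\bfs|\bfz,\bfx'}\in\cU_{\myop,\symm}$ whose induced marginal $\sqrbrkt{\wt P_{\bfx,\bfx'}W_{\bfz|\bfx}U_{\bfs|\bfz,\bfx'}}_\bfs$ lies in $\lambda_\bfs$, with strict slack $\delta_i>0$ in each linear constraint after an arbitrarily small shrinking of the feasible set.

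James's attack then reads: upon observing $\vbfz\sim W_{\bfz|\bfx}^\tn(\cdot\,|\,\vbfx)$, draw a spoofing codeword $\vbfx'$ uniformly from $\cC$ and generate $\vbfs(i)\sim U_{\bfs|\bfz,\bfx'}(\cdot\,|\,\vbfz(i),\vbfx'(i))$ independently for each $i\in[n]$. The error analysis has two parts. First, \emph{feasibility}: over the joint randomness of $(\vbfx,\vbfx',\vbfz,\vbfs)$, with probability at least $c-o(1)$ the pair $(\vbfx,\vbfx')$ has joint type near $\wt P_{\bfx,\bfx'}$, and conditioned on this event Sanov-type concentration (\Cref{lem:sanov}) on the composite product kernel $W_{\bfz|\bfx}U_{\bfs|\bfz,\bfx'}$ forces $\tau_\vbfs$ to lie within $o(1)$ of $\sqrbrkt{\wt P_{\bfx,\bfx'}W_{\bfz|\bfx}U_{\bfs|\bfz,\bfx'}}_\bfs$, so the slacks $\delta_i$ absorb the deviation and $\vbfs\in\Lambda_\bfs$ with high probability. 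Second, \emph{symmetrization}: the defining identity of $\cU_{\myop,\symm}$ applied coordinate-wise yields, for any fixed distinct $\vx_i,\vx_j\in\cC$ and any $\vy\in\cY^n$, the exact equality
\[
P_{\vbfy|\vbfx,\vbfx'}(\vy\,|\,\vx_i,\vx_j) \;=\; P_{\vbfy|\vbfx,\vbfx'}(\vy\,|\,\vx_j,\vx_i),
\]
so Bob cannot distinguish the scenario ``Alice sends $\vx_i$, James spoofs $\vx_j$'' from its swap. Combined with the fact that Alice's message and James's spoof are both uniform and independent (up to vanishing collision probability), a standard pigeonhole over the two candidate messages $\curbrkt{i,j}$ forces any unique-decoder to err with probability at least $1/2-o(1)$ on the good event, yielding $P_{\e,\avg}(\cC)\ge c/2-o(1)>0$ and hence zero capacity.

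The main obstacle is the feasibility step, which is genuinely new relative to \Cref{thm:symm_converse}. In the oblivious proof the jamming sequence $\vbfs$ is drawn directly from a kernel whose input is the spoofing list, so its empirical type concentrates immediately around the symmetrized marginal. Here $\vbfs$ is produced by a two-stage composition $(\vbfx,\vbfx')\mapsto\vbfz\mapsto\vbfs$ in which the first stage injects an extra layer of randomness correlated with Alice's transmission, and the approximation $\tau_{\vbfx,\vbfx'}\approx\wt P_{\bfx,\bfx'}$ must be pushed cleanly through both stages to land inside $\lambda_\bfs$ at a rate that does not wash out the slack $\min_i\delta_i$. Once the appropriate Sanov/Chernoff bound for the composite kernel is established with parameters $\lambda,\eta,\eps\ll\min_i\delta_i$ chosen accordingly, the remainder of the argument follows the pattern of \Cref{sec:cp_symm} and \Cref{app:converse_rate_zero} essentially verbatim.
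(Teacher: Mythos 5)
Your proposal is correct and follows exactly the approach the paper attributes to \cite{bdjlsw-2020-myopic_symm} and sketches in \Cref{sec:myop_obli_symm}: reduce to a near-constant-composition subcode, extract an ordered-pair $\cp$-type $\wt P_{\bfx,\bfx'}$ hit by a constant fraction of pairs via the unique-decoding generalized Plotkin bound and hypergraph Tur\'an, then let James compose $W_{\bfz|\bfx}$ with the symmetrizing kernel $U_{\bfs|\bfz,\bfx'}$ guaranteed by myopic $\wt P_{\bfx,\bfx'}$-symmetrizability, and finish with the Chebyshev feasibility check plus the coordinate-wise swap identity.

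One small note on your framing of the ``main obstacle'': the two-stage sampling $(\vbfx,\vbfx')\mapsto\vbfz\mapsto\vbfs$ is not really an extra difficulty, because it collapses into the single composite memoryless kernel $V_{\bfs|\bfx,\bfx'}(s|x,x')\coloneqq\sum_{z}W_{\bfz|\bfx}(z|x)U_{\bfs|\bfz,\bfx'}(s|z,x')$, after which the concentration of $\tau_{\vbfs}$ around $\big[\tau_{\vbfx,\vbfx'}V_{\bfs|\bfx,\bfx'}\big]_\bfs$ goes through by the same Chebyshev computation as in \Cref{app:converse_rate_zero}. The genuine novelty relative to the Csiszár--Narayan $L=1$ oblivious case is rather that $\big[P_{\bfx,\bfx'}W_{\bfz|\bfx}U_{\bfs|\bfz,\bfx'}\big]_\bfs$ depends on the joint law $P_{\bfx,\bfx'}$ and not merely on the marginal $P_\bfx$, which is exactly why $\cp$-extraction cannot be skipped here even for unique decoding; this is the point the paper itself makes in \Cref{sec:myop_obli_symm}, and your plan correctly deploys it.
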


We claim that \csiszar--Narayan's notion of oblivious symmetrizability (for unique-decoding) is a special case of \Cref{def:myop_symm}.
Indeed, the myopic model collapses to the oblivious model if $ \bfz \indep \bfx $. 
That is, the channel $ W_{\bfz|\bfx} $ from Alice to James is infinitely noisy: $ W_{\bfz|\bfx} = W_\bfz $. 
Taking $ U_{\bfs|\bfx'}(s|x') \coloneqq \sum_{z\in\cZ}W_\bfz(z)U_{\bfs|\bfz,\bfx'}(s|z,x') $, the definition of $ \cU_{\myop,\symm} $ becomes:
\begin{definition}[Obliviously symmetrizing distributions]
\label{def:obli_symm_distr_set}
The set of \emph{obliviously symmetrizing distributions} is defined as
\begin{align}
\cU_{\obli,\symm}\coloneqq&\curbrkt{ U_{\bfs|\bfx'}\in\Delta(\cS|\cX)\colon 
\forall (x,x',y)\in\cX^2\times\cY, \;
 \sum_{s\in\cS}U_{\bfs|\bfx'}(s|x') W_{\bfy|\bfx,\bfs}(y|x,s) = \sum_{s\in\cS} U_{\bfs|\bfx}(s|x) W_{\bfy|\bfx,\bfs}(y|x',s)
}. \notag 
\end{align}
\end{definition}

Note also that 
\begin{align}
\sqrbrkt{P_{\bfx,\bfx'} W_\bfz U_{\bfs|\bfz,\bfx'} }_\bfs(s) =& \sum_{(x,x')\in\cX^2} \sum_{z\in\cZ} P_{\bfx,\bfx'}(x,x') W_\bfz(z) U_{\bfs|\bfz,\bfx'}(s|z,x) \notag \\
=& \sum_{(x,x')\in\cX^2} P_{\bfx,\bfx'} (x,x') U_{\bfs|\bfx'} (s|x) \notag \\
=& \sum_{x\in\cX} P_{\bfx'} (x) U_{\bfs|\bfx'}(s|x) \notag \\
=& \sqrbrkt{ P_{\bfx} U_{\bfs|\bfx'} }_{\bfs}(s), \label{eqn:obli_symm_selfcoupl}
\end{align}
where \Cref{eqn:obli_symm_selfcoupl} follows since $ \sqrbrkt{P_{\bfx,\bfx'}}_\bfx = \sqrbrkt{P_{\bfx,\bfx'}}_{\bfx'} = P_\bfx $.

Therefore, we get the following notion of \emph{oblivious symmetrizability} which precisely matches the one in \cite{csiszar-narayan-it1988-obliviousavc}.
\begin{definition}[Oblivious symmetrizability, \cite{csiszar-narayan-it1988-obliviousavc}]
\label{def:cn_obli_symm}
Given an oblivious AVC $ \cA_\obli $, an input distribution $ P_\bfx\in\lambda_\bfx $ is called \emph{obliviously symmetrizable} if there is a $ U_{\bfs|\bfx'}\in\cU_{\obli,\symm} $ such that $ \sqrbrkt{P_\bfx U_{\bfs|\bfx'}}_\bfs\in\lambda_\bfs $. 
\end{definition}
It was shown in \cite{csiszar-narayan-it1988-obliviousavc} that:
\begin{theorem}[Capacity positivity of oblivious AVCs, \cite{csiszar-narayan-it1988-obliviousavc}]
\label{thm:oblivious_positivity}
An oblivious AVC $ \cA_{\obli} $ has zero capacity  if and only if every $ P_\bfx\in\lambda_\bfx $ is obliviously symmetrizable. 
\end{theorem}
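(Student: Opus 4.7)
\medskip

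\noindent\textbf{Proof plan.} The plan is to handle the two directions separately, each as a specialization of more general arguments that the paper develops. Both directions can be viewed as the $L=1$ case of the main results (Theorems \ref{thm:symm_converse} and \ref{thm:achievability}), so the strategy is to sketch the unique-decoding versions of those arguments, which are classical and simpler because for $L=1$ the notions of self-coupling, $\cp$-decomposition, and time-sharing all collapse trivially ($\cp^{\otimes 1}(P_\bfx) = \{P_\bfx\}$, $\cU$ can be taken to be a singleton).

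\medskip

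\noindent\textbf{Zero capacity $\Leftarrow$ universal symmetrizability.} Suppose every $P_\bfx \in \lambda_\bfx$ is obliviously symmetrizable. Given a candidate code $\cC$ of positive rate satisfying input constraints, first apply approximate constant-composition reduction (\Cref{lem:apx_cc_reduction}) to extract a $(\lambda, \wh P_\bfx)$-constant-composition subcode $\cC'$ of positive rate for some $\wh P_\bfx \in \lambda_\bfx$. By hypothesis, there exists $U_{\bfs|\bfx'} \in \cU_{\obli,\symm}$ with $\sqrbrkt{\wh P_\bfx U_{\bfs|\bfx'}}_\bfs \in \lambda_\bfs$. James then mounts the \emph{spoofing attack}: sample a uniformly random $\vbfx' \sim \cC'$ and draw $\vbfs \sim U_{\bfs|\bfx'}^{\otimes n}$ coordinate-wise from $\vbfx'$. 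By the symmetrization identity in \Cref{def:obli_symm_distr_set}, for any two distinct codewords $\vbfx_i, \vbfx_j \in \cC'$, the conditional output law satisfies $P_{\vbfy|\vbfx_i,\vbfx_j} = P_{\vbfy|\vbfx_j,\vbfx_i}$, so the true-transmitted/spoofed pair is statistically indistinguishable from the swapped pair. The state constraint holds with high probability by a concentration argument: $\tau_{\vbfx'} \approx \wh P_\bfx$ and the empirical distribution of $\vbfs$ is close to $\sqrbrkt{\wh P_\bfx U_{\bfs|\bfx'}}_\bfs \in \lambda_\bfs$. Given that the spoofed codeword $\vbfx'$ differs from the transmitted codeword $\vbfx$ with probability $1 - o(1)$, Bob's unique decoder must err with probability at least $1/2 - o(1)$.

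\medskip

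\noindent\textbf{Positive capacity $\Leftarrow$ some $P_\bfx$ non-symmetrizable.} Fix some $P_\bfx^* \in \lambda_\bfx$ that is not obliviously symmetrizable. Build a random code by drawing $M = 2^{nR}$ codewords i.i.d.\ from $P_{\bfx^*}^{\otimes n}$ conditioned on constant composition (the approximate-constant-composition conditioning loses only a polynomial factor). Use an MMI-style (joint-typicality with empirical mutual information) decoder: given $\vy$, output the unique $i$ such that $\tau_{\vx_i,\vs,\vy} \in \cP_\eta$ for some feasible $\vs \in \Lambda_\bfs$, and no competing $\vx_j$ can simultaneously satisfy a symmetric typicality condition. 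Error analysis proceeds exactly along the lines of \Cref{sec:ach_analysis_error_prob} specialized to $L = 1$: the bad-message term, the atypicality term, and the confusion term are each bounded using the codeword selection lemma (\Cref{lem:cw_select}) and Sanov's theorem. The decoder's unambiguity, analogous to \Cref{lem:unambiguity}, is precisely where non-symmetrizability is used: if two messages could simultaneously satisfy the decoding rule, a standard information-theoretic manipulation would produce a $U_{\bfs|\bfx'} \in \cU_{\obli,\symm}$ with $\sqrbrkt{P_{\bfx^*} U_{\bfs|\bfx'}}_\bfs \in \lambda_\bfs$, contradicting non-symmetrizability. Any rate $R < \min_{U_\bfs \in \lambda_\bfs} I(\bfx^*; \bfy)$ is thus achievable, and this minimum is strictly positive because non-symmetrizability rules out the degenerate $U_\bfs$ that would make $\bfx$ and $\bfy$ independent.

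\medskip

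\noindent\textbf{Main obstacle.} The conceptually delicate step is the unambiguity argument in the achievability: one must extract from a hypothetical decoding collision an explicit symmetrizing channel that is \emph{also} feasible under $\lambda_\bfs$. This is exactly the linear-algebraic/convex-analytic maneuver that \cite{csiszar-narayan-it1988-obliviousavc} introduced and that the present paper lifts to list-decoding with completely positive couplings. For $L=1$ the manipulation is simpler since no $\cp$-structure or time-sharing is needed, but it remains the heart of the argument; everything else is either a standard random-coding/expurgation calculation or a measure-concentration bound on empirical distributions.
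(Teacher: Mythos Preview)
The paper does not give its own proof of \Cref{thm:oblivious_positivity}; it is stated in the technical-remarks section and attributed entirely to \cite{csiszar-narayan-it1988-obliviousavc}. Your strategy of recovering it as the $L=1$ specialization of \Cref{thm:symm_converse} and \Cref{thm:achievability} is exactly what the paper hints at when it remarks that $\cp$-$1$-symmetrizability collapses to \csiszar--Narayan's notion, and the sketch you give is essentially correct and faithful to the classical argument.

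One loose step worth tightening: your final sentence asserts that $\min_{U_\bfs\in\lambda_\bfs} I(\bfx^*;\bfy)>0$ because ``non-symmetrizability rules out the degenerate $U_\bfs$ that would make $\bfx$ and $\bfy$ independent.'' If such a $U_\bfs$ existed, the constant kernel $U_{\bfs|\bfx'}\equiv U_\bfs$ would satisfy the symmetrization identity in \Cref{def:obli_symm_distr_set} only for $x,x'$ in the \emph{support} of $P_\bfx^*$; lifting this to all of $\cX^2$ as the definition requires needs either full support of $P_\bfx^*$ or a perturbation/compactness argument. This is handled carefully in \cite{csiszar-narayan-it1988-obliviousavc} (and is the reason their proof is not entirely trivial), so you should flag it rather than absorb it into a one-line remark. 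A similar technicality appears in your converse: for the concentration of $\tau_{\vbfs}$ to land \emph{inside} $\lambda_\bfs$ you need the target $[\wh P_\bfx U_{\bfs|\bfx'}]_\bfs$ to lie in the interior, which the paper addresses via the slack $\vdelta$ in \Cref{eqn:cost_assumption}.
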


We are now ready to answer the question at the beginning of this section.
Due to the oblivious nature,  the joint distribution $ P_{\bfx,\bfx'} $ is marginalized when we evaluate the jamming distribution $ P_\bfs $.\footnote{Note that the evaluation of $ P_\bfs $ is the only place in the definition of symmetrizability where $ P_{\bfx,\bfx'} $ plays a role. The definition of $ \cU_{\symm} $ has nothing to do with $ P_{\bfx,\bfx'} $.} 
As a result, only the marginal input distribution $ P_\bfx $ rather than the completely positive joint distribution $ P_{\bfx,\bfx'} $ matters in the definition of oblivious symmetrizability.
It is as if Alice's transmitted signal $ \bfx $ and James's spoofing signal $ \bfx' $ are independent rather than jointly $ \cp $\footnote{Of course, product distributions are also $\cp$. The point here is that requiring \emph{generic} $\cp$ correlation turns out to be unnecessary.}, i.e.,  $ P_{\bfx,\bfx'} = P_\bfx P_\bfx^\top $. 
Indeed, it is \emph{provably} known \cite{csiszar-narayan-it1988-obliviousavc} that the optimal symmetrization strategy samples a spoofing codeword $ \vbfx' $ uniformly from the codebook, independent of the transmitted codeword $ \vbfx $.

Similarly, for oblivious AVCs under $L$-list-decoding, the $L$-oblivious symmetrizability only depends on the size-$L$ marginal distribution $ P_{\bfx_1,\cdots,\bfx_L}\in\cp^\tl(P_\bfx) $ of a spoofing $L$-list,  rather than the full joint distribution $ P_{\bfx,\bfx_1,\cdots,\bfx_L} $ of the transmitted signal and the spoofing list. 
It is easy to see that $1$-oblivious symmetrizability (\Cref{def:cp_symm}) is equivalent to \csiszar--Narayan's notion of oblivious symmetrizability (\Cref{def:cn_obli_symm}). 

\subsection{Myopic/omniscient symmetrizability}
\label{sec:myop_omni_symm}
In this section, we  discuss the relation between  symmetrizability for myopic channels and  \emph{confusability} for \emph{omniscient} channels. 

The omniscient channel, as mentioned in the previous section (\Cref{sec:myop_obli_symm}), is a more coding-theoretic model of communication. 
In the absence of  constraints, i.e., $ \lambda_\bfx = \Delta(\cX), \lambda_\bfs = \Delta\cS $, the right notion of {confusability}\footnote{Due to historical reasons, the condition for zero capacity is referred to as confusability. We follow this convention in zero-error information theory. However, in essence, one can view it as a notion of symmetrizability for omniscient channels. }
was proposed by Kiefer--Wolfowitz \cite{kiefer-wolfowitz-1962-omniscient_confusability_unconstrained}. 
\begin{definition}[Unconstrained omniscient symmetrizability,  \cite{kiefer-wolfowitz-1962-omniscient_confusability_unconstrained}]
\label{def:omni_symm_unconstr}
An unconstrained omniscient AVC $$\cA_\omni = (\cX,\cS,\cY,\Delta(\cX),\Delta(\cS), W_{\bfy|\bfx,\bfs})$$ is called \emph{omnisciently symmetrizable} if for all $ (x,x')\in\cX^2 $, there are distributions $ U_{\bfs|x,x'},U_{\bfs'|x,x'}\in\Delta(\cS) $\footnote{Since the symmetrizing distributions may depend on $ (x,x') $, we emphasize this dependence using the notation for conditional distribution.} such that for every $ y\in\cY $,
\begin{align}
\sum_{s\in\cS}U_{\bfs|x,x'}(s)W_{\bfy|\bfx,\bfs}(y|x,s) = \sum_{s\in\cS}  U_{\bfs'|x,x'}(s)W_{\bfy|\bfx,\bfs}(y|x',s). \label{eqn:omni_symm_id_two_distr} 
\end{align}
\end{definition}
Before proceeding to state Kiefer--Wolfowitz's characterization of capacity positivity, we remark that one can slightly simplify \Cref{def:omni_symm_unconstr} by restricting to the case where $ U_{\bfs|x,x'} = U_{\bfs|x,x'} $. 
Indeed, once we have (potentially different) symmetrizing distributions $ U_{\bfs|x,x'} $ and $ U_{\bfs|x,x'} $, the distribution $ V_{\bfs|x,x'} = \frac{1}{2}U_{\bfs|x,x'} + \frac{1}{2}U_{\bfs|x,x'}' $ is also symmetrizing. 
This follows trivially from \Cref{eqn:omni_symm_id_two_distr}. 
Note also that associated to each pair $ (x,x') $ there is a distribution $ U_{\bfs|x,x'} $.
All these distributions $ \curbrkt{U_{\bfs|x,x'}}_{(x,x')\in\cX^2} $ together give us a conditional distribution $ U_{\bfs|\bfx,\bfx'}\in\Delta(\cS|\cX^2) $. 
Therefore, \Cref{def:omni_symm_unconstr} can be simplified as follows.
\begin{definition}[Omnisciently symmetrizing distributions]
\label{def:omni_symm_distr_set}
Define the set of \emph{omnisciently symmetrizing distributions} as 
\begin{align}
\cU_{\omni,\symm} \coloneqq& \curbrkt{ U_{\bfs|\bfx,\bfx'}\in\Delta(\cS|\cX^2)\colon
\begin{array}{rl}
&\forall (x,x',y)\in\cX^2\times\cY,\\
&\displaystyle \sum_{s\in\cS}U_{\bfs|\bfx,\bfx'}(s|x,x')W_{\bfy|\bfx,\bfs}(y|x,s) = \sum_{s\in\cS}U_{\bfs|\bfx,\bfx'}(s|x,x')W_{\bfy|\bfx,\bfs}(y|x',s)
\end{array}
 }. \notag 
\end{align}
\end{definition}
\begin{definition}[Unconstrained omniscient symmetrizability, one-distribution version]
An unconstrained omniscient AVC $ \cA_\omni  $ is called \emph{omnisciently symmetrizable} if $ \cU_{\omni,\symm}\ne\emptyset $. 
\end{definition}
\begin{theorem}[Capacity positivity of unconstrained omniscient AVCs, \cite{kiefer-wolfowitz-1962-omniscient_confusability_unconstrained}]
An unconstrained omniscient AVC $ \cA_\omni  $ has zero capacity if and only if it is omnisciently symmetrizable. 
\end{theorem}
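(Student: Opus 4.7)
\medskip

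The theorem asserts a dichotomy: $C(\cA_\omni) > 0$ if and only if $\cU_{\omni,\symm} = \emptyset$. The plan is to prove the two directions separately, using a jamming-based converse for the ``if'' direction and a random coding achievability for the ``only if'' direction.

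For the converse direction (symmetrizability implies zero capacity), the idea is to exhibit an omniscient jamming strategy that forces constant error probability against \emph{every} codebook. Fix a symmetrizing kernel $U_{\bfs|\bfx,\bfx'} \in \cU_{\omni,\symm}$; without loss of generality it can be taken symmetric in its two conditioning arguments by averaging $U_{\bfs|\bfx,\bfx'}$ with its swap $U_{\bfs|\bfx',\bfx}$ (the symmetrization identity is preserved under this operation). Given any code $\cC$ with $|\cC| \ge 2$, James, upon observing Alice's codeword $\vx$, samples a spoofing codeword $\vx'$ uniformly from $\cC \setminus \{\vx\}$ and emits $\vs$ memorylessly via $\vs(i) \sim U_{\bfs|\bfx,\bfx'}(\cdot \mid \vx(i), \vx'(i))$. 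The defining identity of $\cU_{\omni,\symm}$, applied componentwise, gives
\[
\prob{\vbfy = \vy \mid \text{Alice sent } \vx,\ \text{James spoofed } \vx'} = \prob{\vbfy = \vy \mid \text{Alice sent } \vx',\ \text{James spoofed } \vx}
\]
for every $\vy \in \cY^n$. Thus the joint distribution on (transmitted-message, received-sequence) is invariant under the involution swapping $\vx$ with $\vx'$, so the posterior over messages given $\vbfy$ assigns equal mass to the true message and its spoof. Any decoder therefore errs with probability at least $\tfrac{1}{2}\cdot\tfrac{|\cC|-1}{|\cC|}$, bounded away from zero, whence $C(\cA_\omni) = 0$.

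For the achievability direction (non-symmetrizability implies positive capacity), the plan is to leverage the fact that $\cU_{\omni,\symm} = \emptyset$ means there exists a pair $(x_0, x_1) \in \cX^2$ for which no $U \in \Delta(\cS)$ satisfies $\sum_s U(s) W_{\bfy|\bfx,\bfs}(y|x_0,s) = \sum_s U(s) W_{\bfy|\bfx,\bfs}(y|x_1,s)$ for every $y \in \cY$. By compactness of $\Delta(\cS)$ and continuity, this yields a uniform separation $\min_{U} \max_y | \sum_s U(s)(W(y|x_0,s)-W(y|x_1,s)) | \ge \delta$ for some $\delta > 0$. This gap furnishes a statistical test that distinguishes $x_0$ from $x_1$ against any memoryless jamming law. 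I would then invoke a standard method-of-types random coding argument: draw an iid codebook $\cC \subseteq \{x_0, x_1\}^n$ (or more generally from some non-symmetrizable input $P_\bfx$), and use a ``minimum-conditional-entropy'' / joint-typicality decoder $\psi(\vy) = \arg\min_m H(\tau_{\vy|\vx_m})$ restricted to codewords making $(\vx_m, \vy)$ plausible under some $P_\bfs$. A standard union bound over joint types of $(\vx_m, \vx_{m'}, \vs, \vy)$, combined with a Chernoff-type exponent estimate, shows that the error probability decays exponentially uniformly in $\vs \in \cS^n$ at some positive rate $R > 0$, implying $C(\cA_\omni) \ge R > 0$.

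The main obstacle is the achievability analysis: unlike in the oblivious case, James's choice of $\vs$ may depend on the transmitted $\vx$, so the error event must be uniformly controlled over all $\vs \in \cS^n$. The crucial step is showing that the non-symmetrizability of the pair $(x_0, x_1)$ rules out any joint type $P_{\bfx,\bfx',\bfs,\bfy}$ that would make a spurious codeword $\vx_{m'}$ appear as typical as the true $\vx_m$ under the decoder's rule; this is exactly where the Kiefer--Wolfowitz separation $\delta > 0$ enters and exponentially penalizes such ``ambiguous'' joint types. Once this is established, the random coding expurgation and union bound proceed routinely along the lines of \cref{sec:achievability} (restricted to the simpler unconstrained setting with a time-sharing variable absent).
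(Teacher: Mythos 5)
Your converse half is sound: averaging $U_{\bfs|\bfx,\bfx'}$ with its swap does stay inside $\cU_{\omni,\symm}$, and with the pairwise-symmetric kernel plus a uniformly drawn spoofing codeword the joint law of (message, output) is invariant under exchanging the true and spoofed messages, which forces constant average error. The genuine gap is in the achievability half. From $\cU_{\omni,\symm}=\emptyset$ you extract a pair $(x_0,x_1)$ such that no \emph{single} $U\in\Delta(\cS)$ equalizes $\sum_s U(s)W_{\bfy|\bfx,\bfs}(y|x_0,s)$ and $\sum_s U(s)W_{\bfy|\bfx,\bfs}(y|x_1,s)$, and you turn this into the separation $\min_{U}\max_y|\sum_s U(s)(W(y|x_0,s)-W(y|x_1,s))|\ge\delta$. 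But an omniscient James sees $\vbfx$ and may use \emph{different} mixtures depending on which symbol was sent; what your statistical test actually needs is disjointness of the two reachable output hulls, i.e.\ $\min_{U,U'}\max_y|\sum_s U(s)W(y|x_0,s)-\sum_s U'(s)W(y|x_1,s)|>0$, which is the two-distribution Kiefer--Wolfowitz condition of \Cref{def:omni_symm_unconstr} and is strictly stronger than your single-$U$ separation. Concretely, take $\cS=\{0,1\}$, $\cY=\{1,2,3\}$, $W(\cdot|x_0,0)=(1,0,0)$, $W(\cdot|x_0,1)=(0,1,0)$, $W(\cdot|x_1,0)=(0,0,1)$, $W(\cdot|x_1,1)=(\tfrac12,\tfrac12,0)$: no single $U$ equalizes (the third output forces $U(0)=0$, and then $(0,1,0)\ne(\tfrac12,\tfrac12,0)$), yet the hulls intersect at $(\tfrac12,\tfrac12,0)$ (use $U=(\tfrac12,\tfrac12)$ on the $x_0$ side and $U'=(0,1)$ on the $x_1$ side). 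With $\cX=\{x_0,x_1\}$ James can therefore make the per-letter output law identical under both hypotheses and, reusing exactly your own swap-symmetry converse (he outputs the common law where a spoofing codeword differs from the transmitted one), drive the average error to a constant; so your claimed $\delta$-gap cannot "exponentially penalize ambiguous joint types," and no decoder can succeed at positive rate in this situation. Note also that you cannot bridge this by the paper's remark that the one- and two-distribution forms coincide by averaging: the averaged kernel satisfies the required identity only if the \emph{swapped} identity also holds, which is not granted, so that reduction should not be leaned on.

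Consequently the achievability must start from hull-disjointness for some pair, and the analysis should reflect that the ambiguous event involves two state sequences, i.e.\ joint types $P_{\bfx,\bfx',\bfs,\bfs',\bfy}$ with both $(\bfx,\bfs,\bfy)$ and $(\bfx',\bfs',\bfy)$ channel-consistent, not a single $\vs$ as in your oblivious-style decoder. The standard fix is a separating hyperplane between the two hulls: a functional $f\colon\cY\to\bR$ and threshold $t$ with $\sum_y f(y)W(y|x_0,s)\le t-\delta$ and $\sum_y f(y)W(y|x_1,s)\ge t+\delta$ for \emph{every} $s\in\cS$; then a binary code over $\{x_0,x_1\}$ with linear minimum distance, decoded by comparing $\sum_i \pm f(\vy(i))$ over the coordinates where two codewords differ, gives via Hoeffding and a union bound an error bound uniform over all $\vs\in\cS^n$, because the margin holds for every state symbol and so James's knowledge of $\vx$ buys him nothing. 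Also be aware that the paper only cites this theorem from Kiefer--Wolfowitz and supplies no proof, so the achievability half is entirely on you, and as written it does not go through.
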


In the presence of (input and state) constraints, characterizing the condition for capacity positivity turns out to be significantly more challenging. 
Incorporating state constraints is one of the major obstacles and ideas that are fundamentally different from \cite{kiefer-wolfowitz-1962-omniscient_confusability_unconstrained} are required. 
Indeed, a sufficient and necessary condition is not available in the literature until very recently \cite{wbbj-2019-gen_plotkin}.
The proof, which is of combinatorial nature, is a significant generalization of the Plotkin bound in classical coding theory. 
\begin{definition}[Omniscient symmetrizability, \cite{wbbj-2019-gen_plotkin}]
\label{def:omni_symm_constr}
Given an omniscient AVC $ \cA_\omni $, an input distribution $ P_\bfx\in\lambda_\bfx $ is called \emph{omnisciently symmetrizable} if for every $ P_{\bfx,\bfx'}\in\cp(P_\bfx) $, there is a joint distribution $ P\coloneqq P_{\bfx,\bfx',\bfs,\bfs',\bfy}\in\Delta(\cX^2\times\cS^2\times\cY) $ such that 
\begin{enumerate}
	\item\label{itm:omni_symm_cond1} $ \sqrbrkt{P}_\bfx = \sqrbrkt{P}_{\bfx'} = P_\bfx,\sqrbrkt{P}_{\bfx,\bfs,\bfy} = \sqrbrkt{P}_{\bfx,\bfs}W_{\bfy|\bfx,\bfs},\sqrbrkt{P}_{\bfx',\bfs',\bfy} = \sqrbrkt{P}_{\bfx',\bfs'}W_{\bfy|\bfx,\bfs} $;
	\item\label{itm:itm:omni_symm_cond2} $ \sqrbrkt{P}_\bfs,\sqrbrkt{P}_{\bfs'}\in\lambda_\bfs $. 
\end{enumerate}
\end{definition}
It was shown in \cite{wbbj-2019-gen_plotkin} that:
\begin{theorem}[Capacity positivity of omniscient AVCs, \cite{wbbj-2019-gen_plotkin}]
\label{thm:omniscient_positivity}
An omniscient AVC $ \cA_\omni $ has zero capacity if and only if every $ P_\bfx\in\lambda_\bfx $ is omnisciently symmetrizable. 
\end{theorem}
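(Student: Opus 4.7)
The plan is to prove both directions separately, with the converse (symmetrizability $\Rightarrow$ zero capacity) being the conceptually novel part that truly requires the machinery of complete positivity, while the forward direction (non-symmetrizability $\Rightarrow$ positive capacity) should follow from random coding with only moderate modifications of \csiszar--Narayan-style arguments adapted to the omniscient setting.

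For the \emph{converse direction}, suppose every $P_\bfx\in\lambda_\bfx$ is omnisciently symmetrizable. Fix a code $\cC$ of positive rate; I would first invoke \Cref{lem:apx_cc_reduction} to pass to an approximately $(\lambda,\wh P_\bfx)$-constant-composition subcode $\cC'$ of positive rate for some $\wh P_\bfx\in\lambda_\bfx$. Next, applying the unique-decoding analogue of \Cref{lem:apply_turan} (the Plotkin-type bound with $L=2$), a constant fraction of ordered pairs in $\cC'$ carry joint type within $\eps$ of some $\cp$-distribution $\wh P_{\bfx,\bfx'}\in\cp(\wh P_\bfx)$. By the symmetrizability hypothesis applied to this $\cp$-pair, there exists a joint law $P_{\bfx,\bfx',\bfs,\bfs',\bfy}$ verifying both conditions of \Cref{def:omni_symm_constr}. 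James then executes the following attack: having observed the transmitted $\vbfx$, he samples a ``spoof'' codeword $\vbfx'\in\cC'$ uniformly among those whose joint type with $\vbfx$ is within $\eps$ of $\wh P_{\bfx,\bfx'}$, and then draws $\vbfs$ componentwise from $[P]_{\bfs|\bfx,\bfx'}^{\otimes n}$. Condition (ii) guarantees $[P]_\bfs\in\lambda_\bfs$ so (up to deviations controlled by Chebyshev) the realized $\vbfs$ lies in $\Lambda_\bfs$; condition (i) forces the channel law induced by $(\vbfx,\vbfs)$ to statistically match that induced by $(\vbfx',\vbfs')$ with $\vbfs'\sim[P]_{\bfs'|\bfx,\bfx'}^{\otimes n}$. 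Hence from Bob's viewpoint, the two scenarios ``$\vbfx$ transmitted, $\vbfs$ jammed'' and ``$\vbfx'$ transmitted, $\vbfs'$ jammed'' are indistinguishable, and a constant-fraction error follows.

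For the \emph{achievability direction}, suppose some $P_\bfx\in\lambda_\bfx$ is not omnisciently symmetrizable: there exists $P_{\bfx,\bfx'}\in\cp(P_\bfx)$ for which no $P_{\bfx,\bfx',\bfs,\bfs',\bfy}$ satisfying the two conditions exists. I would generate codewords i.i.d.\ from $P_\bfx^{\otimes n}$ (or from the $\cp$-decomposition of $P_\bfx$ with time-sharing, trivially $P_\bfx$ itself) and decode by, say, minimum-divergence or joint-typicality decoding adapted to adversarial state constraints. The non-symmetrizability hypothesis should be shown to imply a \emph{uniform} positive separation: for every $\cp$-pair $P_{\bfx,\bfx'}$ and every candidate $(P_\bfs,P_{\bfs'})$ satisfying $\lambda_\bfs$ with the consistency of condition (i), the output distributions $[P]_{\bfy|\bfx,\bfs}$ and $[P]_{\bfy|\bfx',\bfs'}$ differ by a strictly positive amount. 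A standard expurgation combined with a type-counting argument (in the same spirit as the achievability in \Cref{sec:achievability}) then guarantees a positive achievable rate.

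The \textbf{main obstacle} is cleanly the converse direction, specifically two technical points. First, the generalized Plotkin bound itself: one must show that \emph{any} sufficiently large constant-composition code necessarily contains a constant fraction of pairs whose joint type approximates some \emph{$\cp$} distribution; this requires a nontrivial argument (essentially hypergraph Tur\'an applied to the Voronoi cells of a net of $\cp(P_\bfx)$), and is exactly the place where the unconstrained Kiefer--Wolfowitz argument breaks down, since symmetrization symbol-by-symbol may violate $\lambda_\bfs$. Second, one must carefully handle the tension between (a) James needing the spoof $\vbfx'$ to exist for essentially every $\vbfx$, (b) the induced $\vbfs$ concentrating on its intended marginal $[P]_\bfs\in\lambda_\bfs$, and (c) the output-equivalence condition holding as an approximate identity rather than an exact one. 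I expect the cleanest way to bridge (a)--(c) is to first establish that a constant fraction of $\vbfx\in\cC'$ have a constant fraction of valid spoofs within $\cC'$ (a double-counting / Markov argument on the fraction of $\cp$-pairs), and then to invoke concentration of empirical conditional types for i.i.d.\ draws.
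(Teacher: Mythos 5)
First, a point of reference: the paper never proves this statement — it is imported verbatim from \cite{wbbj-2019-gen_plotkin}, and, as the paper's own prior-work discussion notes, the proof there is given for channels whose transition law $W_{\bfy|\bfx,\bfs}$ is a $0$-$1$ law, with the general stochastic case explicitly flagged as not yet formally written up. So there is no in-paper proof to match your sketch against; judged against the intended argument, your converse architecture is essentially the right one: constant-composition reduction, extraction of approximately $\cp$ pairs via the generalized Plotkin bound plus Tur\'an and quantization, then a pairwise symmetrization attack in which the omniscient James picks a spoofing partner and draws $\vbfs$ componentwise from the symmetrizing conditional. The wrinkles you flag (concentration of the state type strictly inside $\lambda_\bfs$, approximate rather than exact output-equivalence, and making James's rule consistent under swapping $\bfx$ and $\bfx'$, e.g.\ by conditioning on the unordered pair) are exactly the ones the paper handles for its oblivious $\cp$-symmetrization in \Cref{app:converse_rate_zero}, so that direction is sound modulo routine detail. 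One correction of emphasis: the hard core of the Plotkin step is not ``hypergraph Tur\'an applied to Voronoi cells of a net''; Tur\'an only converts a clique bound into an edge-density bound, while the clique bound itself comes from a double-counting argument against a copositive witness furnished by $\cp$/copositive cone duality, as sketched in \Cref{app:robust_plotkin}.

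The genuine gap is in your achievability direction. Non-symmetrizability of $P_\bfx$ (\Cref{def:omni_symm_constr}) means some self-coupling $P_{\bfx,\bfx'}\in\cp(P_\bfx)$ admits no feasible symmetrizing joint law, and this witnessing coupling is in general \emph{not} the product $P_\bfx P_\bfx^\top$. Your construction ``i.i.d.\ from $P_\bfx^{\otimes n}$ (or from the $\cp$-decomposition of $P_\bfx$ \ldots trivially $P_\bfx$ itself)'' misses this: the $\cp$-decomposition that matters is that of the joint coupling, not of $P_\bfx$, and i.i.d.\ sampling makes pairwise joint types concentrate near the product coupling, which may well lie inside the confusability set even when $P_\bfx$ is non-symmetrizable — in which case the omniscient James can confuse pairs of your codewords and the code fails. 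You must sample codewords according to a $\cp$-decomposition $\sum_i\lambda_i P_iP_i^\top$ of the \emph{witnessing} coupling (the time-sharing construction of \Cref{sec:achievability}), so that pairwise types concentrate near $P_{\bfx,\bfx'}$ and hence stay $\eps$-separated from the closed confusability set. Relatedly, ``moderate modifications of \csiszar--Narayan-style arguments'' is the wrong frame: against an omniscient adversary the jamming sequence depends on the transmitted codeword, so the oblivious typicality analysis does not carry over; for $0$-$1$ channels the correct tool is a Gilbert--Varshamov-type expurgation/packing argument, and for general stochastic $W_{\bfy|\bfx,\bfs}$ the paper itself states that a formal proof is not available, so your sketch overclaims there.
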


Though having a seemingly different form, \Cref{def:omni_symm_constr} actually has an intimate relation with \Cref{def:myop_symm}.
We will massage \Cref{def:omni_symm_constr} and argue that it is a special case of \Cref{def:myop_symm}. 
The condition in \Cref{itm:omni_symm_cond1} can be rewritten as: for all $ (x,x',s,s',y)\in\cX^2\times\cS^2\times\cY $,
\begin{align}
P_{\bfx,\bfx',\bfs,\bfs',\bfy}(x,x',s,s',y) =& P_{\bfx,\bfx'}(x,x')P_{\bfs,\bfs'|\bfx,\bfx'}(s,s'|x,x') W_{\bfy|\bfx,\bfs}(y|x,s) \label{eqn:omni_symm_lhs} \\
=& P_{\bfx,\bfx'}(x,x') P_{\bfs,\bfs'|\bfx,\bfx'}(s,s'|x,x')W_{\bfy|\bfx,\bfs}(y|x',s'). \label{eqn:omni_symm_rhs} 
\end{align}
For any $ (x,x')\in\cX^2 $,
\Cref{eqn:omni_symm_lhs} and \Cref{eqn:omni_symm_rhs} are equal when marginalized out $ \bfs,\bfs' $, i.e.,
\begin{align}
\sum_{s\in\cS}P_{\bfs|\bfx,\bfx'}(s|x,x')W_{\bfy|\bfx,\bfs}(y|x,s) =& \sum_{(s,s')\in\cS^2} P_{\bfs,\bfs'|\bfx,\bfx'}(s,s'|x,x') W_{\bfy|\bfx,\bfs}(y|x,s) \notag \\
=& \sum_{(s,s')\in\cS^2} P_{\bfs,\bfs'|\bfx,\bfx'}(s,s'|x,x') W_{\bfy|\bfx,\bfs}(y|x,s') = \sum_{s'\in\cS}P_{\bfs'|\bfx,\bfx'}(s'|x,x') W_{\bfy|\bfy,\bfs}(y|x',s'). \notag 
\end{align}
It is worth noting that the above equation is precisely the one in Kiefer--Wolfowitz's notion of omniscient symmetrizability (\Cref{def:omni_symm_unconstr}).
Note also that \Cref{itm:itm:omni_symm_cond2} is equivalent to $ \sqrbrkt{ P_{\bfx,\bfx'}P_{\bfs,\bfs'|\bfx,\bfx'} }_\bfs = \sqrbrkt{ P_{\bfx,\bfx'}P_{\bfs|\bfx,\bfx'} }_\bfs\in\lambda_\bfs $ and $ \sqrbrkt{ P_{\bfx,\bfx'}P_{\bfs,\bfs'|\bfx,\bfx'} }_{\bfs'} = \sqrbrkt{ P_{\bfx,\bfx'}P_{\bfs'|\bfx,\bfx'} }_{\bfs'}\in\lambda_\bfs $.
Therefore, taking $ U_{\bfs|\bfx,\bfx'}\coloneqq \sqrbrkt{P_{\bfs,\bfs'|\bfx,\bfx'}}_\bfs $ and $ U_{\bfs'|\bfx,\bfx'}\coloneqq\sqrbrkt{P_{\bfs,\bfs'|\bfx,\bfx'}}_{\bfs'} $, we reduce \Cref{def:omni_symm_constr} to the following definition. 
\begin{definition}[Omniscient symmetrizability, two-distribution version]
\label{def:omni_symm_aux}
Given an omniscient AVC $ \cA_\omni $, an input distribution $ P_\bfx\in\lambda_\bfx $ is called \emph{omnisciently symmetrizable} if for every $ P_{\bfx,\bfx'}\in\cp(P_\bfx) $, there are $ U_{\bfs|\bfx,\bfx'},U_{\bfs'|\bfx,\bfx'} \in\Delta(\cS|\cX^2) $ satisfying \Cref{eqn:omni_symm_id_two_distr} such that $ \sqrbrkt{P_{\bfx,\bfx'}U_{\bfs|\bfx,\bfx'}}_\bfs\in\lambda_\bfs $ and $ \sqrbrkt{P_{\bfx,\bfx'}U_{\bfs'|\bfx,\bfx'}}_{\bfs'}\in\lambda_\bfs $. 
\end{definition}
However, by the observation right after \Cref{def:omni_symm_unconstr}, without loss of generality, we may as well restrict to a single symmetrizing distribution. 
The one-distribution version of \Cref{eqn:omni_symm_id_two_distr} is precisely the definition of $ \cU_{\omni,\symm} $ (\Cref{def:omni_symm_unconstr}).
We hence arrive at the following equivalent definition.
\begin{definition}[Omniscient symmetrizability, one-distribution version]
\label{def:omni_symm_myop_special}
Given an omniscient AVC $ \cA_\omni $, an input distribution $ P_\bfs\in\lambda_\bfx $ is called \emph{omnisciently symmetrizable} if for every $ P_{\bfx,\bfx'}\in\cp(P_\bfx) $, there is a $ U_{\bfs|\bfx,\bfx'}\in\cU_{\omni,\symm} $ such that $ \sqrbrkt{P_{\bfx,\bfx'}U_{\bfs|\bfx,\bfx'}}_\bfs\in\lambda_\bfs $. 
\end{definition}
It is now easy to see that \Cref{def:omni_symm_myop_special} (which is an equivalent formulation of \Cref{def:omni_symm_constr}) is a special case of \Cref{def:myop_symm}. 
Indeed, under the omniscient assumption $ \bfz = \bfx $, the channel $ W_{\bfz|\bfx} $ from Alice to James is noiseless, i.e., $ W_{\bfz|\bfx}(z|x) = \indicator{z = x} $, and \Cref{def:myop_symm_distr_set} collapses to \Cref{def:omni_symm_distr_set}.

\subsection{A seemingly more natural jamming strategy for capacity upper bound}
\label{sec:seemingly_natural}
Given the definition of symmetrizing distributions (\Cref{def:def_symm_distr}), careful readers might expect it more natural to use distributions of the form $ U_{\bfs|\bfu,\bfx_1,\cdots,\bfx_L} $ (rather than $ U_{\bfs|\bfu} $) in \Cref{sec:converse_rate_ub}. 
Specifically, they maybe would like to equip James with the following jamming strategy which seems more ``compatible'' with the definition of $ \cU_{\obli,\symml{L}} $. 
James uses the same strategy as in \Cref{sec:rate_ub_strategy} except that in Step 2) he further picks arbitrarily an $L$-list of codewords $ \vbfx_{i_1},\cdots,\vbfx_{i_L} $ ($ i_1<\cdots<i_L $) whose  type falls in the Voronoi cell of $ \wt P_{\bfx_1,\cdots,\bfx_L} $. 
Then in Step 4) he samples $ \vbfs $ from the following distribution which takes more information into account: $ U_{\vbfs|\vu, \vx_{i_1},\cdots,\vx_{i_L}} \coloneqq \prod_{j = 1}^n U_{\bfs|\bfu = \vu(j), \bfx_1 = \vx_{i_1}(j),\cdots,\bfx_L = \vx_{i_L}(j)} $. 

We now argue that the above seemingly more judicious jamming strategy will never result in a stronger (i.e., smaller) upper bound on the $L$-list-decoding capacity.
Hence the original jamming strategy we provided in \Cref{sec:rate_ub_strategy} is optimal. 
Via  analysis (which we omit) similar to that in \Cref{sec:rate_ub_analysis}, the strategy described in the last paragraph gives the following capacity upper bound:
\begin{align}
C_L(\cA_\obli)\le& \max_{ \substack{P_\bfx\in\lambda_\bfx\colon \\ P_\bfx\;\mathrm{non}\hyphen L\hyphen \mathrm{symmetrizable}} } \min_{ \substack{U_{\bfs|\bfu,\bfx_1,\cdots,\bfx_L}\in\Delta(\cS|\cU\times\cX^L)\colon \\ \sqrbrkt{ P_\bfu P_{\bfx|\bfu}^\tl U_{\bfs|\bfu,\bfx_1,\cdots,\bfx_L} }_\bfs\in\lambda_\bfs} } I(\bfx;\bfy|\bfu,\bfx_1,\cdots,\bfx_L). \label{eqn:cap_obli_avc_list_dec_seemingly} 
\end{align}
By elementary information inequalities, the above expression (\Cref{eqn:cap_obli_avc_list_dec_seemingly}) is no larger than \Cref{eqn:cap_obli_avc_list_dec}. 
Indeed,
\begin{align}
I(\bfx;\bfy|\bfu,\bfx_1,\cdots,\bfx_L) =& 
H(\bfx|\bfu,\bfx_1,\cdots,\bfx_L) - H(\bfx|\bfu,\bfx_1,\cdots,\bfx_L,\bfy) \label{eqn:def_mutual_info} \\
=& H(\bfx|\bfu) - H(\bfx|\bfu,\bfx_1,\cdots,\bfx_L,\bfy) \label{eqn:cond_indep} \\
=& I(\bfx;\bfx_1,\cdots,\bfx_L,\bfy|\bfu) \notag \\
=& I(\bfx;\bfy|\bfu) + I(\bfx;\bfx_1,\cdots,\bfx_L|\bfu,\bfy) \label{eqn:chain_rule} \\
\ge& I(\bfx;\bfy|\bfu). \label{eqn:nonnegative}
\end{align}
\Cref{eqn:def_mutual_info} is by the definition of mutual information (\Cref{def:info_mes}).
In \Cref{eqn:cond_indep}, we used the assumption  $ P_{\bfx,\bfu, \bfx_1,\cdots,\bfx_L} = P_\bfx P_\bfu P_{\bfx|\bfu}^\tl $ which implies that $ \bfx $ and $ \bfx_1,\cdots,\bfx_L $ are conditionally independent given $ \bfu $. 
More specifically, as explained in the last section (\Cref{sec:myop_obli_symm}), in the oblivious setting, we do not assume  complete positivity of the joint distribution of the transmitted signal $\bfx$ and the spoofing list $ \bfx_1,\cdots,\bfx_L $. 
In the definition of $L$-symmetrizability (\Cref{def:cp_symm}) and also the list-decoding capacity expression (\Cref{eqn:cap_obli_avc_list_dec}),  they are effectively independent. 
\Cref{eqn:chain_rule} follows from chain rule for mutual information. 
\Cref{eqn:nonnegative} is by nonnegativity of mutual information.

\section{Canonical constructions of oblivious AVCs}
\label{sec:canonical_constr}
In this section, we introduce a machinery, which we call the \emph{canonical constructions} of oblivious AVCs, for generating concrete oblivious channels with prescribed sets of $L$-symmetrizable input distributions. 
Besides being interesting in its own right, this will  help us generate examples for which $ L_\strong^* $, $ L_\cp^* $ and $ L_\weak^* $ are \emph{strictly} different. 
To warm up,  we first introduce canonical constructions of oblivious AVCs under {unique-decoding} in \Cref{sec:canonical_constr_unique_dec}. 
They will then be generalized to list-decoding in \Cref{sec:canonical_constr_list_dec}. 
Finally, we use these constructions with carefully chosen components to separate different notions of list-symmetrizability due to \cite{sarwate-gastpar-2012-list-dec-avc-state-constr} and us. 

\subsection{Canonical constructions of oblivious AVCs}
\label{sec:canonical_constr_unique_dec}
Let $ \cP\subseteq\Delta(\cX) $ be a convex subset of distributions.
Our goal is to construct an oblivious AVC $ \obliavc $ whose \emph{symmetrizability set} is precisely $ \cP $.
Here the symmetrizability set $\cK_\obli(\cA_\obli)$ of $\cA_\obli$ is defined as the set of (obliviously) symmetrizable input distributions: $$ \cK_\obli(\cA_\obli) \coloneqq \curbrkt{P_\bfx\in\lambda_\bfx\colon P_\bfx\mathrm{\ is\ symmetrizable}} .$$
Here we use \csiszar--Narayan's \cite{csiszar-narayan-it1988-obliviousavc} notion of symmetrizability (\Cref{def:cn_obli_symm}).

The construction (which we refer to as the \emph{canonical construction}) of $ \cA_{\obli,\cano} $ is as follows.
\begin{enumerate}
	\item $ \cX $ is the same as the alphabet of $\cP$.
	\item $ \cS = \cX $.
	\item $ \cY = \curbrkt{y_{x_1,x_2}}_{(x_1,x_2)\in\cX^2} $ where $ (x_1,x_2) $ is an \emph{unordered}\footnote{That is, $ y_{x_1,x_2} = y_{x_2,x_1} $} (possibly repeated) pair of input symbols. 
	Note that $ \cardY = \cardX + \binom{\cardX}{2} = \frac{1}{2}\cardX\paren{\cardX+1} = \binom{\cardX+1}{2} $. 
	\item $ \lambda_\bfx = \Delta(\cX) $\footnote{In fact any $ \lambda_\bfx $ satisfying $ \cP\subseteq\lambda_\bfx\subseteq\Delta(\cX) $ works.}.
	\item $ \lambda_\bfs = \cP $.
	\item $ W_{\bfy|\bfx,\bfs} $ is a deterministic channel which we write as a (deterministic) function $ W\colon \cX\times\cS\to\cY $ defined as $ W(x,s) = y_{x,s} $. 
\end{enumerate}

We now compute $ \cK_\obli(\cA_{\obli,\cano}) $. 
To this end, we first compute $ \cU_{\obli,\symm}$.
For $ x=x' $, the symmetrization identity in the definition of $ \cU_{\obli,\symm}$ (\Cref{def:obli_symm_distr_set}) apparently holds.
Fix $ x\ne x' $. 
The symmetrization identity gives us the following constraints on $ U_{\bfs|\bfx} $. 
\begin{enumerate}
	\item If $ y = y_{x,x} $: $ \lhs = U_{\bfs|\bfx}(x|x') = 0 = \rhs $. 
	Similarly the case where $ y = y_{x',x'} $ gives $  0 = U_{\bfs|\bfx}(x'|x)  $. 
	\item If $ y = y_{x,s^*} $ for some $ s^*\ne x,x' $: $ \lhs = U_{\bfs|\bfx}(s^*|x') = 0 = \rhs $. 
	Similarly the case where $ y = y_{x',s^*} $ for some $ s^* \ne x,x' $ gives $  0 = U_{\bfs|\bfx}(s^*|x)  $. 
	\item If $ y = y_{x,x'} $: $ \lhs = U_{\bfs|\bfx}(x'|x') = U_{\bfs|\bfx}(x|x) = \rhs $. 
\end{enumerate}
Since the above conditions hold for all $ x\ne x' $, we have: if  $ U_{\bfs|\bfx} $ is written as an $ \cardX\times\cardS $ matrix, then all off-diagonal entries are 0 and all diagonal entries are equal.
Therefore, $ U_{\bfs|\bfx} $ has to be an identity matrix (denoted by $ \id_{\bfs|\bfu} $, defined as $ \id_{\bfs|\bfx}(s|x) = \indicator{ x = s } $)
and $ \cU_{\obli,\symm}$ is a singleton set: $ \cU_{\obli,\symm}= \curbrkt{\id_{\bfs|\bfx}} $. 

We now show $ \cK_\obli(\cA_{\obli,\cano}) = \cP $. 

The direction $ \cK_\obli(\cA_{\obli,\cano}) \supseteq\cP $ is easy. 
Take any $ P_\bfx\in\cP $. 
We want to show that $ P_\bfx $ is symmetrizable. 
Since $ \cU_{\obli,\symm}$ only contains the identity distribution, we just need to evaluate $ \sqrbrkt{P_\bfx U_{\bfs|\bfx}}_{\bfs} = \paren{\diag(P_\bfx)\bfI_{\cardX\times\cardS}}^\top\vone_{\cardS} = P_\bfx\in\cP = \lambda_\bfs $, where $ \vone_\cardS $ denotes the all-one vector of length-$ \cardS $. 
Therefore, $P_\bfs $ is symmetrized by $ U_{\bfs|\bfx} = \bfI_{\cardX\times\cardS} $. 

We then show the other direction $ \cK_\obli(\cA_{\obli,\cano})\subseteq\cP $ which is in fact similar.
Take any symmetrizable input distribution $ P_\bfx $.
We want to show  $ P_\bfx\in\cP $. 
Since $ P_\bfx $ is symmetrizable and there is only one possible symmetrizing distribution, we have
$ \sqrbrkt{ P_{\bfx}U_{\bfs|\bfx} }_\bfs = P_\bfx\in\lambda_\bfs = \cP $. 

\begin{remark}
Operationally, the identity symmetrizing distribution corresponds to the following jamming strategy.
James simply uniformly samples a codeword  $ \vbfx' $ from the codebook and transmits it. 
\end{remark}

\begin{remark}
The above construction works for \emph{any} subset $\cP$ of input distributions, even for a \emph{non-convex} subset. 
However, in this paper, we only focus on AVCs whose $ \lambda_\bfs $ is convex to exclude  exotic behaviours. 
\end{remark}

\subsection{Canonical construction of oblivious AVCs under list-decoding}
\label{sec:canonical_constr_list_dec}
The above construction can be generalized to $L$-list-decoding in a natural way.
For notational brevity, we take the $L=2$ case as an example. 
Given an arbitrary subset $ \cP\subseteq\Delta(\cX) $, the goal is to construct an oblivious AVC $ \obliavc $ whose \emph{2-symmetrizability set} is equal to $\cP$. 
The  2-symmetrizability set of $\cA$ is defined as: $$ \cK_{\obli,2}(\cA_\obli) \coloneqq\curbrkt{ P_\bfx\in\lambda_\bfx\colon P_\bfx\mathrm{\ is\ 2\text{-}symmetrizable} } .$$ 
Recall that  2-symmetrizability is defined as follows, which is a special case of \Cref{def:def_symm_distr}.
\begin{definition}
The set of \emph{2-symmetrizing distributions} is defined as
\begin{align}
\cU_{\obli,\symml{2}} \coloneqq& \curbrkt{ U_{\bfs|\bfu,\bfx_1,\bfx_2}\in\Delta(\cS|\cU\times\cX^2)\colon 
\begin{array}{rl}
&\forall u\in\cU,\forall (x_0,x_1,x_2,y)\in\cX^3\times\cY,\forall\pi\in S_3, \\
&\displaystyle \sum_{s\in\cS}U_{\bfs|\bfu = u,\bfx_1,\bfx_2}(s|x_1,x_2)W_{\bfy|\bfx,\bfs}(y|x_0,s) \\
=&\displaystyle \sum_{s\in\cS}U_{\bfs|\bfu = u,\bfx_1,\bfx_2}(s|x_{\pi(1)},x_{\pi(2)})W_{\bfy|\bfx,\bfs}(y|x_{\pi(0)},s)
\end{array}
 }. \notag
\end{align}
Note that $  \cU_{\obli,\symml{1}} = \cU_{\obli,\symm}$.
\end{definition}

The following definition is a special case of \Cref{def:cp_symm}. 
We list it below for the readers' convenience. 
\begin{definition}
Given an oblivious AVC $\cA_\obli$, an input distribution $ P_\bfx\in\lambda_\bfx $ is called \emph{2-symmetrizable} if for every $ P_{\bfx_1,\bfx_2}\in\cp(P_\bfx) $ and every  $\cp$-decomposition $ (P_\bfu,P_{\bfx|\bfu}) $ of $ P_{\bfx_1,\bfx_2} $, there is a $ U_{\bfs|\bfu,\bfx_1,\bfx_2}\in\cU_{\obli,\symml{2}} $ such that $ \sqrbrkt{ P_\bfu P_{\bfx|\bfu}^{\otimes2}U_{\bfs|\bfu,\bfx_1,\bfx_2} }_\bfs\in\lambda_\bfs $. 
\end{definition}

The canonical construction $ \cA_{\obli,\canol{2}} $ is as follows.
\begin{enumerate}
	\item $ \cX $ is the same as the alphabet of $ \cP $.
	\item $ \cS = \cX^2 $.
	\item $ \cY = \curbrkt{ y_{x_0,x_1,x_2} }_{(x_0,x_1,x_2)\in\cX^3} $ where $ (x_0,x_1,x_2) $ is an unordered\footnote{That is, $ y_{x_0,x_1,x_2} = y_{x_{\pi(0)},x_{\pi(1)},x_{\pi(2)}} $ for all $ \pi\in S_3 $.} (possibly not all distinct) triple of input symbols.
	Note that $ \cardY = \cardX + 2\binom{\cardX}{2} + \binom{\cardX}{3} $.
	\item $ \lambda_\bfx = \Delta(\cX) $\footnote{Again, any $\lambda_\bfx $ such that $ \cP\subseteq\lambda_\bfx\subseteq\Delta(\cX) $ works. Indeed, We will use canonical constructions with a different $ \lambda_\bfx $ to separate different notions of list-symmetrizability.}. 
	\item\label{itm:def_lambda_s_list_dec} $ \lambda_\bfs = \bigsqcup_{P_\bfx\in\cP}\cp(P_\bfx) $.
	Note that $ \lambda_\bfs $ is convex if $\cP$ is convex. 
	To see this, take a $ \cp $ $ P $-self-coupling $ P_{\bfx_1,\bfx_2} = \sum_{i = 1}^k\lambda_i P_iP_i^\top $ and a $\cp$ $Q$-self-coupling $ P_{\bfx_1,\bfx_2}' = \sum_{j = 1}^\ell\mu_jQ_jQ_j^\top $ for some $ P,Q\in\cP $, respectively. 
	By self-coupledness, $ \sqrbrkt{P_{\bfx_1,\bfx_2}}_{\bfx_1} = \sqrbrkt{P_{\bfx_1,\bfx_2}}_{\bfx_2} = \sum_{i = 1}^k\lambda_iP_i = P $ and 
	$ \sqrbrkt{P_{\bfx_1,\bfx_2}'}_{\bfx_1} = \sqrbrkt{P_{\bfx_1,\bfx_2}'}_{\bfx_2} = \sum_{j = 1}^\ell\mu_jQ_j =Q $. 
	We would like to show that for any $ \alpha\in[0,1] $, $ R_{\bfx_1,\bfx_2}\coloneqq\alpha P_{\bfx_1,\bfx_2} + (1-\alpha)P_{\bfx_1,\bfx_2}' $ is in $ \lambda_\bfs $.
	First, $ R_{\bfx_1,\bfx_2} = \sum_{i = 1}^k\alpha\lambda_iP_iP_i^\top + \sum_{j = 1}^\ell(1-\alpha)\mu_jQ_jQ_j^\top $ is apparently $\cp$.
	Second, 
	\begin{align}
	\sqrbrkt{R_{\bfx_1,\bfx_2}}_{\bfx_1}(x_1) =& \sum_{x_2\in\cX} \paren{ \alpha\sum_{i = 1}^k\lambda_iP_i(x_1)P_i(x_2) + (1-\alpha)\sum_{j = 1}^\ell\mu_jQ_j(x_1)Q_j(x_2) } \notag \\
	=& \alpha\sum_{i = 1}^k\lambda_iP_i(x_1) + (1-\alpha)\sum_{j=1}^\ell\mu_jQ_j(x_1)\notag \\
	=& \alpha P(x_1) + (1-\alpha)Q(x_1). \notag 
	\end{align}
	The above identity also holds for the $ \bfx_2 $ marginal.
	Therefore $ R_{\bfx_1,\bfx_2} $ is an $ \paren{\alpha P+(1-\alpha)Q} $-self-coupling. 
	By convexity of $ \cP $, $ \paren{\alpha P+(1-\alpha)Q}\in\cP $. 
	Combined with complete positivity, we showed that $ R_{\bfx_1,\bfx_2}\in\lambda_\bfs $. 
	\item $ W_{\bfy|\bfx,\bfs} $ is a deterministic channel: $ W(x_0, (x_1,x_2)) = y_{x_0,x_1,x_2} $. 
\end{enumerate}

We claim that for any $ \cU $ (the alphabet of the time-sharing variable $\bfu$), $ \cU_{\obli,\symml{2}} = \curbrkt{ \id_{\bfs|\bfu,\bfx_1,\bfx_2} } $ is a singleton set.
Here $\id_{\bfs|\bfu,\bfx_1,\bfx_2}$ is defined according to: for any $ u\in\cU $, $ \id_{\bfs|\bfu = u, \bfx_1,\bfx_2}(s|x_1,x_2) = \indicator{ s = (x_1,x_2) } $. 
It is obvious that $ \id_{\bfs|\bfu,\bfx_1,\bfx_2}\in\cU_{\obli,\symml{2}} $.
We now show that any  $ U_{\bfs|\bfu,\bfx_1,\bfx_2}$ in $\cU_{\obli,\symml{2}} $ must be $ \id_{\bfs|\bfu,\bfx_1,\bfx_2} $. 
Fix any $ u\in\cU $ and $(x_1,x_2)\in\cX^{2}$ ($x_1,x_2$ can be the same). 
It suffices to show that $ U_{\bfs|\bfu = u,\bfx_1,\bfx_2}((x_1',x_2')|x_1,x_2) = 0 $ for all $ (x_1',x_2')\in\cS $ such that $ \curbrkt{x_1',x_2'} \ne\curbrkt{x_1,x_2} $. 
The condition $ \curbrkt{x_1',x_2'} \ne\curbrkt{x_1,x_2} $ implies $ x_1',x_2'\ne x_1 $ or $ x_1',x_2'\ne x_2 $.
This is more easily seen by looking at the contrapositive:
\begin{align}
& \paren{\curbrkt{x_1' = x_1\mathrm{\ or\ } x_2' = x_1}\cap\curbrkt{x_1' = x_2\mathrm{\ or\ }x_2' = x_2}} \notag \\
\implies& \paren{\curbrkt{ x_1' = x_1,x_2' = x_2 }\cup\curbrkt{x_1' = x_2,x_2' = x_1}} \equiv \paren{\curbrkt{x_1',x_2'} = \curbrkt{x_1,x_2}}. \notag 
\end{align}
Assume $ x_1',x_2'\ne x_1 $.
The other case where $ x_1',x_2' \ne x_2 $ is similar and we omit it.
Consider $ x_0,x_1,x_2\in\cX^3,y_{x_0,x_1',x_2'}\in\cY $  and $ \pi = \paren{\begin{matrix}
0&1&2 \\
1&0&2
\end{matrix}}\in S_3 $.
The symmetrization identity is specialized to: $$ \lhs = U_{\bfs|
\bfu = u,\bfx_1,\bfx_2}((x_1',x_2')|x_1,x_2 ) = 0 = \rhs .$$ 
This finishes the proof of the claim. 

We now show that $ \cK_{\obli,2}(\cA) = \cP $.

First, we show $ \cK_{\obli,2}(\cA_{\obli,\canol{2}}) \supseteq \cP $.
Take any $ P_\bfx\in\cP $. 
We claim that it is 2-symmetrizable. 
Indeed, for any $ P_{\bfx_1,\bfx_2}\in\cp(P_\bfx) $ and any  $\cp$-decomposition $ (P_\bfu,P_{\bfx|\bfu}) $ of $ P_{\bfx_1,\bfx_2} $, $ \sqrbrkt{ P_\bfu P_{\bfx|\bfu}^{\otimes2}\id_{\bfs|\bfu,\bfx_1,\bfx_2} }_\bfs = P_{\bfx_1,\bfx_2} \in\cp(P_\bfx)\subseteq \lambda_\bfs $.
The  equality is because: for any $ s = (x_1^*, x_2^*)\in\cS $, 
\begin{align}
P_\bfs((x_1^*,x_2^*) ) =& \sum_{(u,x_1,x_2 )\in\cU\times\cX^2} P_\bfu(u)P_{\bfx|\bfu}(x_1|u)P_{\bfx|\bfu}(x_2|u) \id_{\bfs|\bfu,\bfx_1,\bfx_2}(s|u,x_1,x_2) \notag \\
=& \sum_{u\in\cU} P_\bfu(u) P_{\bfx|\bfu}(x_1^*|u)P_{\bfx|\bfu}(x_2^*|u) \notag \\
=& P_{\bfx_1,\bfx_2}(x_1^*,x_2^*). \notag
\end{align}
Therefore, $ P_\bfx $ is 2-symmetrized by $ \id_{\bfs|\bfu,\bfx_1,\bfx_2}\in\cU_{\obli,\symml{2}} $. 

Second, we show $ \cK_{\obli,2}(\cA_{\obli,\canol{2}})\subseteq\cP $. 
Take any 2-symmetrizable input distribution $ P_\bfx $. 
We want to show $ P_\bfx\in\cP $. 
By 2-symmetrizability, there is a $\cp$ extension $ P_{\bfx_1,\bfx_2}\in\cp(P_\bfx) $ and a $\cp$-decomposition $ (P_\bfu,P_{\bfx|\bfu}) $ of $ P_{\bfx_1,\bfx_2} $ such that $ \sqrbrkt{ P_\bfu P_{\bfx|\bfu}^{\otimes2}\id_{\bfs|\bfu,\bfx_1,\bfx_2} }_\bfs = P_{\bfx_1,\bfx_2}\in\lambda_\bfs $. 
Since $ P_{\bfx_1,\bfx_2}\in\cp(P_\bfx) $, by the definition of $ \lambda_\bfs $ (\Cref{itm:def_lambda_s_list_dec}), we have  $ P_\bfx\in\cP $. 

\subsection{Separating $ L_\strong^*, L_\cp^*, L_\weak^* $ via canonical constructions}
\label{sec:separation}
Fix any convex set $ \cP\subseteq\Delta(\cX) $. 
Take a canonical construction $ \cA_{\obli,\canol{L}} $ with $ \lambda_\bfx = \cP $. 
We will argue that for this example: $ L_\strong^*<L_\cp^* = L_\weak^* = L $. 

We claim that $ L_\cp^* = L $. 
It is easy to check that $ L_\cp^*\ge L $ since all input distributions $ P_\bfx\in\lambda_\bfx = \cP $ are $\cp$-$L$-symmetrizable. 
Furthermore, $ \cA_{\obli,\canol{L}} $ is \emph{not} $\cp$-$(L+1)$-symmetrizable.
Indeed, $ \cU_{\obli,\symml{(L+1)}} $ is empty. 
To see this, one  evaluates the symmetrization identity 
and can prove that for any $ u\in\cU $, $$ U_{\bfs|\bfu = u,\bfx_1,\cdots,\bfx_L,\bfx_{L+1}}((x_1',\cdots,x_L')|x_0,x_1,\cdots,x_L ) = 0 $$ for all $ (x_1',\cdots,x_L')\in\cS = \cX^L $. 
We give a proof for the $ L=2 $ case and the general case is similar.
Fix any $ u\in\cU $. 
In the $L=2$ case, we want to show $ U_{\bfs|\bfu = u,\bfx_1,\bfx_2,\bfx_3}((x_1',x_2')|x_1,x_2,x_3 ) = 0 $ for all $ (x_1',x_2')\in\cS = \cX^2 $ and $ (x_1,x_2,x_3)\in\cX^3 $. 
For the symmetrization identity to be nontrivial, we had better take $ x_0,x_1,x_2,x_3 $ to be not all the same. 
Take $ (x_1,x_2,x_3)\in\cX^3 $ and $ x_0\in\cX\setminus\curbrkt{x_1,x_2,x_3} $. 
Take  $ y_{x_0,x_1',x_2'}\in\cY $ for any $ (x_1',x_2')\in\cS = \cX^2 $ and $\pi = \paren{\begin{matrix}
0&1&2&3 \\
1&0&2&3
\end{matrix}}\in S_4 $. 
Note that at least one of the following cases hold: 
$ x_1\notin\curbrkt{x_1',x_2'}$ or $x_2\notin\curbrkt{x_1',x_2'}$ or $x_3\notin\curbrkt{x_1',x_2'} $.
Assume WLOG $ x_1\notin\curbrkt{x_1',x_2'} $. 
Other two cases are similar.
Since $ x_0\ne x_1 $, we have $ \lhs = U_{\bfs|\bfu = u,\bfx_1,\bfx_2,\bfx_3}((x_1',x_2')|x_1,x_2,x_3 ) = 0 = \rhs $. 
Therefore, $ \cU_{\obli,\symml{3}} = \emptyset $.

We claim that $ L_\strong^*<L $. 
Apparently, by definition, $ L^*_\strong \le L_\cp^* $. 
It is left to show that $ \cA_{\obli,\canol{L}} $ is not strongly-$L$-symmetrizable.
In fact, \emph{no} input distribution $ P_\bfx\in\lambda_\bfx $ is strongly-$L$-symmetrizable. 
This is because: for any $ P_\bfx\in\lambda_\bfx $ and any $ P_{\bfx_1,\bfx_2}\in\cJ(P_\bfx)\setminus\cp(P_\bfx) $\footnote{Note that $ \cJ(P_\bfx)\setminus\cp(P_\bfx)\ne\emptyset $.}, 
$ \sqrbrkt{ P_{\bfx_1,\bfx_2}\id_{\bfs|\bfx_1,\bfx_2} }_\bfs = P_{\bfx_1,\bfx_2}\notin\cp(P_\bfx) $ and hence $ \sqrbrkt{ P_{\bfx_1,\bfx_2}\id_{\bfs|\bfx_1,\bfx_2} }_\bfs\notin\lambda_\bfs = \bigsqcup_{P_\bfx\in\cP}\cp(P_\bfx) $. 

We claim that $ L_\weak^* = L $. 
By definition, $ L_\weak^*\ge L_\cp^* = L $.
Furthermore, since $ \cU_{\obli,\symml{(L+1)}} = \emptyset $, $ L_\weak^*<L+1 $. 
This proves the claim. 

We now give an example for which $ L_\cp^*<L_\weak^* = L $. 
Let $ \cP = \curbrkt{P_\bfx} $ be a singleton set for some $ P_\bfx\in\Delta(\cX) $. 
Let $ \cA_{\obli,\canol{L}} $ be a canonical construction of oblivious AVC with $ \lambda_\bfx = \curbrkt{P_\bfx} $ and $ \lambda_\bfs = \curbrkt{P_\bfx^\tl} $. 
By the same argument as before, $ L_\weak^* = L $. 
However, for non-product $\cp$ distributions $ P_{\bfx_1,\cdots,\bfx_L}\in\cp^\tl(P_\bfx)\setminus\curbrkt{P_\bfx^\tl} $\footnote{Note that $ \cp^\tl(P_\bfx)\setminus\curbrkt{P_\bfx^\tl}\ne\emptyset $. }, no matter which $\cp$-decomposition we take,  $ \sqrbrkt{P_{\bfu,\bfx_1,\cdots,\bfx_L}\id_{\bfs|\bfu,\bfx_1,\cdots,\bfx_L}}_\bfs = P_{\bfx_1,\cdots,\bfx_L} \notin\curbrkt{P_\bfx^\tl} = \lambda_\bfs $. 
Therefore $ L_\cp^*<L $.

\section{Concluding remarks and open problems}
In this paper, we revisited the classical problem of list-decoding for oblivious Arbitrarily Varying Channels.
We pinned down the exact threshold of list-size that determines the positivity of list-decoding capacity.
The proof utilized a machinery recently developed in \cite{wbbj-2019-gen_plotkin} and \cite{zbj-2019-generalized-ld}. 
A natural capacity lower bound was proved.
However, our capacity upper bound is conditioned on \Cref{conj:comb_conj}. 

We list several open problems for future research.
\begin{enumerate}
	\item The most obvious open question is to obtain a tight characterization of the list-decoding capacity.
	This amounts to proving/disproving/bypassing \Cref{conj:comb_conj}. 
	We do believe that \Cref{conj:comb_conj} is true and our lower bound (\Cref{eqn:cap_obli_avc_list_dec}) is tight, though proving it requires new ideas.
	Otherwise, one could try to develop other jamming strategies whose analysis bypasses this conjecture.
	\item 
	In our capacity bounds (\Cref{eqn:cap_obli_avc_list_dec}), the maximization (implicitly) includes searching over \emph{all} $\cp$-decompositions $ (P_\bfu,P_{\bfx|\bfu}) $ of a $\cp$-distribution $ P_{\bfx_{[L]}}\in\cp^\tl(P_\bfx) $.
	As mentioned in \Cref{rk:cp-decomp-non-unique}, a $\cp$-distribution can have multiple $\cp$-decompositions. 
	Among all decompositions, the smallest induced $ \cardU $  is called the $\cp$-rank. 
	We do not know whether decompositions with $ \cardU>\cprk(P_{\bfx_{[L]}}) $ will ever be maximizers.
	Also, there are $\cp$-distributions whose $\cp$-decompositions can have arbitrarily large $ \cardU $. 
	For the purpose of maximizing the mutual information in \Cref{eqn:cap_obli_avc_list_dec}, we  do not have a cardinality bound on $ \cU $ of the $\cp$-decomposition. 
	Therefore, \Cref{eqn:cap_obli_avc_list_dec} is not computable for general AVCs. 
	On the other hand, it may be the case that for some channels, to approach the maximum of \Cref{eqn:cap_obli_avc_list_dec}, we cannot put an upper bound on $ \cardU $. 
	We leave this issue for future exploration. 
	\item We only found (via the canonical constructions) examples of oblivious AVCs  for which $ L_\strong^*<L_\cp^* = L_\weak^* $ and other examples of oblivious AVCs for which $ L_\cp^*<L_\weak^* $. 
	We are curious to see a single example for which $ L_\strong^*<L_\cp^*<L_\weak^* $. 
	\item It is possible to extend our results to multiuser channels.
	Arguably the most well-understood multiuser channels would be the Multiple Access Channels (MACs). 
	A two-user MAC consists of two transmitters who transmit their encodings simultaneously and a single receiver who wants to decode \emph{both} messages. 
	In the presence of an oblivious adversary, the fundamental limits of MACs are not fully understood until recently \cite{jahn-1981-avmac,ahlswedecai-1999-obli-avmac-no-constr,pereg-steinberg-2019-mac}. 
	For the list-decoding variant, Cai \cite{cai-2016-list-dec-obli-avmac} gave the right notion of list-symmetrizability for MACs and  completely determined the list-decoding capacity of \emph{unconstrained} MACs.
	To complete the picture along this line of research, it is desirable to extend Cai's results to the \emph{constrained} case.
	If one follows the techniques in this paper, this will likely require us to first prove a Plotkin-type converse for \emph{omniscient} MACs, which itself is highly nontrivial and intriguing. 
	Given the obstacles even in the point-to-point case, a complete characterization of the list-decoding capacity of MACs seems challenging. 
	\item Guruswami and Smith \cite{guruswami-smith-2016-explicit-bounded} constructed explicit codes equipped with (stochastic) encoders and decoders that run in \emph{polynomial} time for the oblivious bitflip channels. 
	Towards explicit fast encodable/decodable code constructions for general oblivious AVCs, it is worth exploring the limits to Guruswami--Smith's techniques. 
\end{enumerate}

\bibliographystyle{alpha}
\bibliography{IEEEabrv,ref}

\appendices

\section{Robust generalized Plotkin bound}
\label[app]{app:robust_plotkin}
In this section, we give a proof sketch of \Cref{thm:robust_plotkin}.
It is essentially a corollary of \cite{zbj-2019-generalized-ld} with a twist that the code can be \emph{approximately} $ \wh P_\bfx $-constant-composition rather than \emph{exactly} $ \wh P_\bfx $-constant-composition.
We prove that the generalized Plotkin bound in \cite{zbj-2019-generalized-ld} still holds for approximate constant-composition codes. 
The proof is basically the same with more slack factors to take care of. 

We will give a proof sketch to a stronger theorem.
To state the stronger theorem, we need several definitions.

\begin{definition}[Approximate self-coupling]
\label{def:apx_self_coupling}
Let $ \lambda>0 $ be a constant and $ \wh P_\bfx\in\Delta(\cX) $. 
A joint distribution $ P_{\bfx_1,\cdots,\bfx_{ L }}\in\Delta(\cX^{ L }) $ is called a \emph{$ (\lambda,\wh P_\bfx) $-self-coupling} if
\begin{align}
d\paren{ \sqrbrkt{ P_{\bfx_1,\cdots,\bfx_{ L }}}_{\bfx_i}, \wh P_\bfx }\le\lambda \notag
\end{align}
for every $ 1\le i\le  L  $. When $\lambda=0$, we say that $ P_{\bfx_1,\cdots,\bfx_{ L }} $ is a $ \wh P_\bfx $-self-coupling which agrees with  \Cref{def:def_self_coupling}.
The set of $ (\lambda,\wh P_\bfx) $-self-couplings of order-$L$ is denoted by $ \cJ_\lambda^\tl(\wh P_\bfx) $. 
When $ \lambda = 0 $, $ \cJ_0^\tl(\wh P_\bfx) = \cJ^\tl(\wh P_\bfx) $.
\end{definition}

For every approximate self-coupling, there is always an exact self-coupling close by. 
\begin{lemma}[Distribution approximation \cite{sarwate-gastpar-2012-list-dec-avc-state-constr}]\label{lem:distr_apx}
Suppose a joint distribution $ \wh P_{\bfx_1,\cdots,\bfx_{ L }} $ is a $ (\lambda,\wh P_\bfx) $-self-coupling for some  $ \wh P_\bfx\in\Delta(\cX) $. 
Then there exists a $ \wh P_\bfx $-self-coupling $ \wt P_{\bfx_1,\cdots,\bfx_{ L }} $ such that 
$ d\paren{\wt P_{\bfx_1,\cdots,\bfx_L}, \wh P_{\bfx_1,\cdots,\bfx_L}}\le f(\lambda) $ for some $ f(\lambda)>0 $ such that $ f(\lambda)\xrightarrow{\lambda\to0}0 $. 
\end{lemma}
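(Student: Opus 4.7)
The plan is to establish this lemma via a compactness/continuity argument, since the statement is qualitative (existence of a function $f(\lambda) \to 0$) rather than requiring an explicit quantitative rate. The key observation is that $\cJ^{\otimes L}(\wh P_\bfx) \subseteq \Delta(\cX^L)$ is a compact, convex polytope (it is the intersection of the simplex with the finitely many affine hyperplanes imposed by the marginal constraints $[P]_{\bfx_i} = \wh P_\bfx$ for $i \in [L]$), and $\cJ_\lambda^{\otimes L}(\wh P_\bfx)$ is a natural ``thickening'' of it obtained by relaxing those equality constraints to $\ell^1$-inequalities of width $\lambda$.

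Concretely, I would first introduce the marginal map $\Phi \colon \Delta(\cX^L) \to \Delta(\cX)^L$ defined by $\Phi(P) = ([P]_{\bfx_1},\dots,[P]_{\bfx_L})$, which is linear (hence continuous) in $P$. Under this notation, $\cJ^{\otimes L}(\wh P_\bfx) = \Phi^{-1}(\wh P_\bfx,\dots,\wh P_\bfx)$, and $\cJ_\lambda^{\otimes L}(\wh P_\bfx)$ consists of all $\wh P$ such that $\Phi(\wh P)$ lies in the $\lambda$-neighborhood (in each coordinate) of $(\wh P_\bfx,\dots,\wh P_\bfx)$. I would then define
\[ f(\lambda) \;\coloneqq\; \sup_{\wh P \in \cJ_\lambda^{\otimes L}(\wh P_\bfx)} d\bigl(\wh P,\,\cJ^{\otimes L}(\wh P_\bfx)\bigr), \]
where the infimum inside $d(\cdot, \cJ^{\otimes L}(\wh P_\bfx))$ is attained because $\cJ^{\otimes L}(\wh P_\bfx)$ is compact. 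The supremum is over a compact set as well, so $f(\lambda)$ is finite for every $\lambda \ge 0$.

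The only thing left to verify is the limit $f(\lambda) \to 0$ as $\lambda \to 0$. Suppose not; then there exist $\eps > 0$ and a sequence $\lambda_n \to 0$ together with $\wh P^{(n)} \in \cJ_{\lambda_n}^{\otimes L}(\wh P_\bfx)$ such that $d(\wh P^{(n)}, \cJ^{\otimes L}(\wh P_\bfx)) \ge \eps$ for all $n$. Since $\Delta(\cX^L)$ is compact, pass to a subsequence with $\wh P^{(n_k)} \to P^* \in \Delta(\cX^L)$. By continuity of $\Phi$, each marginal $[P^*]_{\bfx_i}$ is the limit of distributions within $\lambda_{n_k}$ of $\wh P_\bfx$, hence equals $\wh P_\bfx$; so $P^* \in \cJ^{\otimes L}(\wh P_\bfx)$. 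But then $d(\wh P^{(n_k)}, \cJ^{\otimes L}(\wh P_\bfx)) \le d(\wh P^{(n_k)}, P^*) \to 0$, contradicting $d(\wh P^{(n_k)}, \cJ^{\otimes L}(\wh P_\bfx)) \ge \eps$. Thus $f(\lambda) \to 0$ as $\lambda \to 0$, and taking $\wt P_{\bfx_1,\cdots,\bfx_L}$ to be any minimizer of $d(\wh P_{\bfx_1,\cdots,\bfx_L},\,\cJ^{\otimes L}(\wh P_\bfx))$ finishes the proof.

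The main (minor) obstacle is conceptual rather than technical: if one wants an explicit rate $f(\lambda)$ (e.g.\ $f(\lambda) = C_{L,|\cX|}\,\lambda$), one instead proceeds by an explicit ``marginal-correction'' construction, sequentially modifying $\wh P$ to fix each marginal by transporting total mass at most $O(\lambda)$ per coordinate while keeping the other marginals intact; by Hall's theorem or a simple LP feasibility argument this is possible when $\lambda$ is sufficiently small (and one needs to track that the resulting signed measure remains a probability distribution, which costs another factor of the alphabet size). However, for the present paper a qualitative $f(\lambda) \to 0$ suffices, so the compactness argument above is the cleaner route.
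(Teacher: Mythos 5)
Your compactness argument is sound for the lemma as literally stated with $\wh P_\bfx$ fixed: $\cJ^{\tl}(\wh P_\bfx)$ is nonempty (it contains $\wh P_\bfx^{\tl}$), compact and convex, the marginal map is linear, and your subsequence argument correctly shows that $\sup_{\wh P\in\cJ_\lambda^{\tl}(\wh P_\bfx)} d\paren{\wh P,\cJ^{\tl}(\wh P_\bfx)}\to0$ as $\lambda\to0$. Note, however, that the paper does not prove this lemma at all --- it imports it from \cite{sarwate-gastpar-2012-list-dec-avc-state-constr}, where the bound is obtained by an explicit construction and is linear in $\lambda$ with a constant depending only on $L$ and $\cardX$, in particular \emph{uniform} in $\wh P_\bfx$; the paper's only added content is the remark converting between $\ell^\infty$ and $\ell^1$.

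That uniformity is the one place where your route, as written, falls short of what the lemma is used for. In \Cref{thm:robust_plotkin_stronger} the slack $\lambda'(\lambda)$ and the final bound $K(\lambda,\eps,L,\cardX)$ must not depend on $\wh P_\bfx$ (the composition $\wh P_\bfx$ varies with the code), whereas your $f(\lambda)$ is defined for a fixed $\wh P_\bfx$ and the compactness argument gives no control on how it varies with $\wh P_\bfx$. Patching this by running compactness jointly over $(\wh P_\bfx,\wh P)\in\Delta(\cX)\times\Delta(\cX^L)$ requires an extra ingredient (lower semicontinuity of the set-valued map $\wh P_\bfx\mapsto\cJ^{\tl}(\wh P_\bfx)$), at which point the explicit ``marginal-correction'' you sketch at the end is actually the cleaner route. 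A concrete version: for each $i\in[L]$ take a maximal coupling $\pi_i$ of the $i$-th marginal $Q_i\coloneqq\sqrbrkt{\wh P_{\bfx_{[L]}}}_{\bfx_i}$ with $\wh P_\bfx$, so that the two coordinates of $\pi_i$ disagree with probability $\frac12 d(Q_i,\wh P_\bfx)\le\lambda/2$, and let $\wt P_{\bfx_{[L]}}$ be the pushforward of $\wh P_{\bfx_{[L]}}$ through the product of the conditional channels $\pi_i(x_i'|x_i)$ applied independently per coordinate. Each marginal of $\wt P_{\bfx_{[L]}}$ is then exactly $\wh P_\bfx$, and a union bound over the $L$ coordinates gives $d\paren{\wt P_{\bfx_{[L]}},\wh P_{\bfx_{[L]}}}\le L\lambda$, i.e., $f(\lambda)=L\lambda$ uniformly in $\wh P_\bfx$ --- which is both explicit and exactly what the downstream arguments need.
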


\begin{remark}
In the original paper \cite{sarwate-gastpar-2012-list-dec-avc-state-constr}, Sarwate and Gastpar chose to use the $ \ell^\infty $-norm, i.e., $ d_{\ell^\infty}\paren{P,Q}\coloneqq\max_{x\in\cX}\abs{P(x) - Q(x)} $ for $ P,Q\in\Delta(\cX) $. 
Despite their choice, $ \ell^1 $-norm (which we use in this paper) only differs from $ \ell^\infty $-norm by a multiplicative constant factor. 
Indeed, for any constant-sized alphabet $ \cX $ and any constant list-size $ L\in\bZ_{\ge1} $, for any $ P_{\bfx_{[L]}},Q_{\bfx_{[L]}}\in\Delta(\cX^L) $, $ d_{\ell^\infty}\paren{P_{\bfx_{[L]}},Q_{\bfx_{[L]}}} \le d_{\ell^1}\paren{P_{\bfx_{[L]}},Q_{\bfx_{[L]}}}\le \cardX^L d_{\ell^\infty}\paren{P_{\bfx_{[L]}},Q_{\bfx_{[L]}}} $. 
Therefore, up to multiplicative constant factors, \Cref{lem:distr_apx} holds under $ d_{\ell^1} $ as well. 
\end{remark}

\begin{definition}[Robust confusability set]
Let $ \omniavc $ be an omniscient channel with a 0-1 channel transition distribution $ W_{\bfy|\bfx,\bfs} $ and convex input \& state constraints $ \lambda_\bfx\subseteq\Delta(\cX) $ and $ \lambda_\bfs\subseteq\Delta(\cS) $, respectively. 
Let $ \lambda>0 $ be a constant and $ P_\bfx\in\lambda_\bfx $. 
Define the \emph{robust confusability set} associated to this channel as
\begin{align}
\cK_\lambda^\tl(P_\bfx)\coloneqq& \curbrkt{ P_{\bfx_{[L]}}\in\cJ_\lambda^\tl(P_\bfx)  \colon\begin{array}{rl}
\exists P_{\bfx_{[L]}, \bfs_{[L]}, \bfy}\in\Delta(\cX^L\times\cS^L\times\cY),\;&\suchthat\\
\sqrbrkt{P_{\bfx_{[L]}, \bfs_{[L]}, \bfy}}_{\bfx_{[L]}} =&P_{\bfx_{[L]}}\\
\forall i\in[L],\;\sqrbrkt{P_{\bfx_{[L]}, \bfs_{[L]}, \bfy}}_{\bfs_i}\in& \lambda_\bfs \\
\forall i\in[L],\;\sqrbrkt{P_{\bfx_{[L]}, \bfs_{[L]}, \bfy}}_{\bfx_i,\bfs_i,\bfy} =& P_{\bfx_i,\bfs_i} W_{\bfy|\bfx,\bfs}
\end{array}  }. \notag 
\end{align}
When $ \lambda = 0 $, we write $ \cK^\tl(P_\bfx) $ for $ \cK_\lambda^\tl(P_\bfx) $. 
\end{definition}

\begin{theorem}[Robust generalized Plotkin bound, general form]
\label{thm:robust_plotkin_stronger}
Assume that
\begin{align}
d\paren{ \cJ^\tl(P_\bfx)\setminus\cK^\tl(P_\bfx), \cp^\tl(P_\bfx) }\ge\eps
\label{eqn:cond_robust_plotkin_stronger}
\end{align}
 for some constant $ \eps>0 $.
Let $ \cC $ be a $ (\lambda,P_\bfx) $-constant-composition code satisfying
 $\Gamma^\tl(\cC)\subset \cK_\lambda^\tl(P_\bfx) $. 
Then there is a constant $K = K(\eps, \lambda,L,\cardX)>0$ such that $ \cardC\le K $.
\end{theorem}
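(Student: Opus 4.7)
The plan is to reduce \Cref{thm:robust_plotkin_stronger} to the exact version of the generalized Plotkin bound for list-decoding proved in \cite{zbj-2019-generalized-ld}, which is the special case where the code is exactly $P_\bfx$-constant-composition and its $L$-tuple joint types lie in the exact confusability set $\cK^\tl(P_\bfx)$. The robust version in our statement relaxes both of these: the code is only $(\lambda, P_\bfx)$-constant-composition, and the joint types are allowed to lie in the enlarged set $\cK_\lambda^\tl(P_\bfx)$ of approximate confusability. The task is therefore to absorb both layers of slack into the gap $\eps$ provided by the assumption $d(\cJ^\tl(P_\bfx)\setminus\cK^\tl(P_\bfx),\cp^\tl(P_\bfx))\ge\eps$ in \Cref{eqn:cond_robust_plotkin_stronger}, so that the conclusion of the exact version carries over with a slightly larger constant $K$.

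The steps I plan to carry out are as follows. First, fix any $(\lambda,P_\bfx)$-constant-composition code $\cC$ with $\Gamma^\tl(\cC)\subset\cK_\lambda^\tl(P_\bfx)$ and consider an arbitrary $L$-tuple joint type $\tau_{\vx_\cL}\in\cJ_\lambda^\tl(P_\bfx)$. By \Cref{lem:distr_apx}, there is a genuine $P_\bfx$-self-coupling $\wt P_{\bfx_{[L]}}\in\cJ^\tl(P_\bfx)$ with $d(\tau_{\vx_\cL}, \wt P_{\bfx_{[L]}}) \le f(\lambda)$ for a function $f(\lambda) \to 0$ as $\lambda\to 0$. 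Second, I will argue that the witnessing joint distribution $P_{\bfx_{[L]}, \bfs_{[L]}, \bfy}$ certifying $\tau_{\vx_\cL}\in\cK_\lambda^\tl(P_\bfx)$ can be perturbed (by adjusting its $\bfx_{[L]}$ marginal to $\wt P_{\bfx_{[L]}}$ and tracking the induced distortion on the $\bfs_i$ marginals and on the channel-consistency constraint $P_{\bfx_i,\bfs_i,\bfy} = P_{\bfx_i,\bfs_i}W_{\bfy|\bfx,\bfs}$) into a distribution certifying that $\wt P_{\bfx_{[L]}}$ lies within an $O(f(\lambda)+\lambda)$-neighborhood of the exact confusability set $\cK^\tl(P_\bfx)$.

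Third, the key structural consequence of the assumption is that the $\eps$-neighborhood of $\cp^\tl(P_\bfx)$ inside $\cJ^\tl(P_\bfx)$ is contained in $\cK^\tl(P_\bfx)$; contrapositively, being $\eps$-far from $\cp^\tl(P_\bfx)$ while remaining in $\cJ^\tl$ is equivalent to lying in $\cJ^\tl\setminus\cK^\tl$. Therefore, by choosing $\lambda$ small enough that $f(\lambda) + \lambda \le \eps/3$ (the precise threshold depending only on $\eps$, $L$, $|\cX|$, $|\cS|$, $|\cY|$), every projected self-coupling $\wt P_{\bfx_{[L]}}$ is at distance at most $\eps/3$ from $\cK^\tl(P_\bfx)$, which combined with the hypothesis $d(\cJ^\tl\setminus\cK^\tl,\cp^\tl)\ge\eps$ places each $\tau_{\vx_\cL}$ within a distance $\eps/3$ of the exact confusability set while still being at least $2\eps/3$ from any non-confusable self-coupling. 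At this point I invoke the exact generalized Plotkin bound of \cite{zbj-2019-generalized-ld}, applied with gap $\eps' = 2\eps/3$, to an approximating exactly-constant-composition subcode obtained by the usual quantization argument (\Cref{lem:apx_cc_reduction}), concluding a bound $K = K(\eps, \lambda, L, |\cX|)$ on $|\cC|$.

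The main obstacle I anticipate is the second step: translating the approximate-confusability witness into an exact-confusability witness while simultaneously shifting the $\bfx_{[L]}$ marginal from $\tau_{\vx_\cL}$ to $\wt P_{\bfx_{[L]}}$. The channel-consistency requirement $\sqrbrkt{P_{\bfx_{[L]},\bfs_{[L]},\bfy}}_{\bfx_i,\bfs_i,\bfy} = \sqrbrkt{P_{\bfx_{[L]},\bfs_{[L]},\bfy}}_{\bfx_i,\bfs_i}W_{\bfy|\bfx,\bfs}$ is a multilinear constraint coupling many marginals, so the perturbation must be done delicately — likely by first adjusting the conditional kernel $P_{\bfs_{[L]},\bfy|\bfx_{[L]}}$ to respect the new input marginal and then verifying that the induced state marginals $\sqrbrkt{\cdot}_{\bfs_i}$ remain in $\lambda_\bfs$ up to an $O(\lambda)$ relaxation, which is absorbed by the state-constraint slack already built into $\cK_\lambda^\tl$. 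The rest of the argument is routine bookkeeping of constants, and the qualitative conclusion will follow exactly as in the exact case, echoing the outline already given in the proof sketch of \Cref{thm:robust_plotkin} in the main text.
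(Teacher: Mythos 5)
Your proposal is not the paper's route, and as written it contains a genuine gap. The paper does not deduce \Cref{thm:robust_plotkin_stronger} by a black-box reduction to the exact bound of \cite{zbj-2019-generalized-ld}; it re-runs that proof with extra slack: a hypergraph-Ramsey/net step extracts a subcode all of whose $L$-tuple types are $\eta$-close to one element of a net of $\cJ_\lambda^\tl(P_\bfx)\setminus\cK_\lambda^\tl(P_\bfx)$, \Cref{lem:distr_apx} projects that element to an exact self-coupling which, by \Cref{eqn:cond_robust_plotkin_stronger}, is still roughly $\eps$-far from $\cp^\tl(P_\bfx)$, and the size bound then comes from the asymmetric case (\cite{komlos-1990-strange_pigeonhole}, \cite{bondaschi-dalai-2019-revisiting}) and, in the symmetric case, from a copositive witness $Q_{\bfx_{[L]}}$ with $\inprod{\ol{P}_{\bfx_{[L]}}}{Q_{\bfx_{[L]}}}\le-\eps'$ fed into a column-wise double-counting argument. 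The engine throughout is that the code's tuple types are bounded away from $\cp^\tl(P_\bfx)$; note in particular that the paper's proof uses the tuple types as lying \emph{outside} the (approximate) confusability set, which is what its net over $\cJ_\lambda^\tl\setminus\cK_\lambda^\tl$ presupposes, whereas you build your argument around the types being (nearly) confusable.

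That directional choice is where your reduction breaks. The contrapositive of \Cref{eqn:cond_robust_plotkin_stronger} gives only one implication (membership in $\cJ^\tl\setminus\cK^\tl$ forces $\eps$-distance from $\cp^\tl$); it is not an equivalence, since $\cK^\tl(P_\bfx)$ may well contain points far from $\cp^\tl(P_\bfx)$. Likewise, ``within $\eps/3$ of $\cK^\tl$'' does not yield ``at least $2\eps/3$ from $\cJ^\tl\setminus\cK^\tl$'': the hypothesis separates $\cJ^\tl\setminus\cK^\tl$ from $\cp^\tl$, not from $\cK^\tl$, and the two sets share a boundary. Even granting these claims, neither condition is the hypothesis of the exact Plotkin bound, which needs the tuple types to be far from $\cp^\tl(P_\bfx)$; your argument never establishes this, and it cannot be bypassed, because codes whose $L$-tuple types are all close to a completely positive distribution (e.g.\ to $P_\bfx^{\otimes L}$) can be exponentially large --- this is precisely the phenomenon \Cref{lem:apply_turan} exploits --- so no size bound follows from near-confusability alone. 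Consequently your step 2, perturbing the confusability witnesses $P_{\bfx_{[L]},\bfs_{[L]},\bfy}$, pushes in the wrong direction and plays no role in the paper's proof. Finally, the closing appeal to the exact theorem is also unavailable as stated: \Cref{lem:apx_cc_reduction} gives an approximately, not exactly, constant-composition subcode, and an exactly constant-composition subcode has composition $P'_\bfx\ne P_\bfx$, so applying the exact bound would require relating $\cJ^\tl$, $\cK^\tl$, $\cp^\tl$ at $P'_\bfx$ to those at $P_\bfx$ --- which is exactly the robustness the theorem is meant to provide, so the reduction is circular at that point.
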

\begin{remark}
Note that $K$ does \emph{not} depend on $n$ --  the blocklength of $\cC$.
\end{remark}
\begin{remark}
Apparently, the condition given by  \Cref{eqn:cond_robust_plotkin_stronger} implies the condition given by \Cref{eqn:cond_robust_plotkin} in \Cref{thm:robust_plotkin}. 
Hence \Cref{thm:robust_plotkin_stronger} implies \Cref{thm:robust_plotkin}. 
\end{remark}

\begin{proof}[Proof sketch of \Cref{thm:robust_plotkin_stronger}.]
Take an $ \eta $-net $\cN$ of $ \cJ_\lambda^\tl(P_\bfx)\setminus\cK_\lambda^\tl(P_\bfx) $. 
Note that $ \card{\cN} $ is a constant. 
Use hypergraph Ramsey's theorem to find a subcode $ \cC'\subset\cC $ such that $ d\paren{\Gamma^\tl(\cC'),  P_{\bfx_{[L]}}'}\le \eta $ for some $ P_{\bfx_{[L]}}'\in\cN $. 
The subcode is guaranteed to satisfy $ \card{\cC'}\xrightarrow{\cardC\to\infty}\infty $. 
Note that $ P_{\bfx_{[L]}}'\in\cN\subset\cJ_\lambda^\tl(P_\bfx) $ may  not be exactly a $ P_\bfx $-self-coupling.
However, by \Cref{lem:distr_apx}, there is a constant $ \lambda' = \lambda'(\lambda)>0 $ and a $ P_\bfx $-self-coupling $ P_{\bfx_{[L]}}\in\cJ^\tl(P_\bfx) $ such that $ d\paren{ P_{\bfx_{[L]}}, P_{\bfx_{[L]}}' }\le\lambda' $. 
We point out that $ P_{\bfx_{[L]}} $ may be inside $ \cK_\lambda^\tl(P_\bfx) $.
Nevertheless, as long as  $ \lambda\ll \eps $, $ P_{\bfx_{[L]}} $ is still outside $ \cp^\tl(P_\bfx) $ since $ \cp^\tl(P_\bfx) $ is a proper subset of $ \cK^\tl(P_\bfx) $ and they are $\eps$-separated. 
Indeed, 
\begin{align}
d\paren{ P_{\bfx_{[L]}}, \cp^\tl(P_\bfx) } \ge&  d\paren{ P_{\bfx_{[L]}}', \cp^\tl(P_\bfx) } - d\paren{ P_{\bfx_{[L]}}',P_{\bfx_{[L]}} } \ge d\paren{ \cJ^\tl(P_\bfx)\setminus\cK^\tl(P_\bfx), \cp^\tl(P_\bfx) } -  \lambda' \ge \eps - \lambda'. \notag 
\end{align}
See \Cref{fig:robust_plotkin} for the geometry of various sets of distributions that show up in the proof. 
\begin{figure}[htbp]
	\centering
	\includegraphics[width=0.5\textwidth]{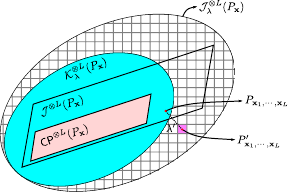}
	\caption{The geometry of various sets of distributions in the proof of the robust generalized Plotkin bound (\Cref{thm:robust_plotkin_stronger}).}
	\label{fig:robust_plotkin}
\end{figure}

Define the \emph{asymmetry} of a distribution $ P_{\bfx_{[L]}}\in\Delta(\cX^L) $ as 
\begin{align}
\asymm(P_{\bfx_{[L]}}) \coloneqq& \max_{\pi\in S_L} \max_{x_{[L]}\in\cX^L} \abs{ P_{\bfx_{[L]}}(x_{[L]}) - P_{\bfx_{[L]}}(x_{\pi([L])}) }. \notag 
\end{align}
Given $ P_{\bfx_{[L]}}\in\cJ^\tl(P_\bfx) $, we now consider two cases where  $ \asymm(P_{\bfx_{[L]}})\ge\alpha $ and $ \asymm(P_{\bfx_{[L]}})<\alpha $. 
Let $ M'\coloneqq\card{\cC'} $ and $ M\coloneqq\cardC $. 
Write $ \cC'=\curbrkt{\vx_i}_{i\in[M']} $. 
\paragraph{Asymmetric case}

If $ P_{\bfx_{[L]}} $ has asymmetry at least $ \alpha $, the code $ \cC' $ has to be small by a result of \cite{komlos-1990-strange_pigeonhole} and its list-decoding extension by \cite{bondaschi-dalai-2019-revisiting}. 
Their results do not have to do with the self-coupledness of $ P_{\bfx_{[L]}} $  and hence are directly reusable. 
\paragraph{Symmetric case}
In this case we assume $ P_{\bfx_{[L]}} $ has asymmetry less than $ \alpha $. 
We first project $ P_{\bfx_{[L]}} $ to $ \sym^\tl(P_\bfx) $ to get an \emph{exactly} symmetric distribution. 
The projection is nothing but the symmetrization of $ P_{\bfx_{[L]}} $: 
\begin{align}
\ol{P}_{\bfx_{[L]}} \coloneqq& \frac{1}{L!} \sum_{\pi\in S_L} P_{\bfx_{\pi([L])}}, \notag 
\end{align}
which is apparently symmetric. 
Using equicoupledness of $ P_{\bfx_{[L]}} $, one can easily check the equicoupledness of $ \ol{P}_{\bfx_{[L]}} $. 
Indeed, for any $ i\in[L] $, 
\begin{align}
\sqrbrkt{\ol{P}_{\bfx_{[L]}}}_{\bfx_i} =& \sqrbrkt{ \frac{1}{L!} \sum_{\pi\in S_L} P_{\bfx_{\pi([L])}} }_{\bfx_i} = \frac{1}{L!} \sum_{\pi\in S_L} \sqrbrkt{P_{\bfx_{\pi([L])}}}_{\bfx_i}= \frac{1}{L!} \sum_{\pi\in S_L} \sqrbrkt{P_{\bfx_{[L]}}}_{\bfx_{\pi^{-1}(i)}} = \frac{1}{L!} \sum_{\pi\in S_L} P_\bfx= P_\bfx. \notag 
\end{align}
Therefore $ \ol{P}_{\bfx_{[L]}} \in\sym^\tl(P_\bfx) $. 
We also note that $ \ol{P}_{\bfx_{[L]}} $ is not too far from $ P_{\bfx_{[L]}} $:
\begin{align}
d\paren{P_{\bfx_{[L]}}, \ol{P}_{\bfx_{[L]}} } =& \sum_{x_{[L]}\in\cX^L} \abs{ P_{\bfx_{[L]}}(x_{[L]}) - \ol{P}_{\bfx_{[L]}}(x_{[L]}) } \notag \\
=& \sum_{x_{[L]}\in\cX^L}\abs{ \frac{1}{L!} \sum_{\pi\in S_L}\paren{ P_{\bfx_{[L]}}(x_{[L]}) - P_{\bfx_{\pi([L])}}(x_{[L]}) } }\notag \\
\le& \frac{1}{L!} \sum_{\pi\in S_L} \sum_{x_{[L]}\in\cX^L} \abs{ P_{\bfx_{[L]}}(x_{[L]}) - P_{\bfx_{\pi([L])}}(x_{[L]}) } \notag \\
\le & \paren{\cardX^L - \cardX}\alpha \eqqcolon \alpha'. \notag 
\end{align}
By taking $ \alpha' $ to be sufficiently small, we can ensure $ \ol{P}_{\bfx_{[L]}}\notin\cp(P_\bfx) $. 
Indeed, 
\begin{align}
d\paren{ \ol{P}_{\bfx_{[L]}}, \cp^\tl(P_\bfx) } \ge& d\paren{ P_{\bfx_{[L]}}, \cp^\tl(P_\bfx) } - d\paren{P_{\bfx_{[L]}}, \ol{P}_{\bfx_{[L]}}} \ge \eps - \lambda' - \alpha'. \notag 
\end{align}
Since $ \ol{P}_{\bfx_{[L]}} \in\sym^\tl(P_\bfx)\setminus\cp^\tl(P_\bfx) $,
the duality between completely positively tensor cone and \emph{copositive} tensor cone (denoted by $ \cop^\tl(P_\bfx) $)\footnote{A distribution $ P_{\bfx_{[L]}}\in\sym^\tl(P_\bfx) $ is called \emph{$ P_\bfx $-copositive} if $ \inprod{P_{\bfx_{[L]}}}{Q_\bfx^\tl}\ge0 $ for all $ Q_\bfx\in\Delta(\cX) $. The set of copositive distributions form a convex cone, denoted by $ \cop^\tl(P_\bfx) $. It turns out that $ \cp^\tl $ and $ \cop^\tl $ are dual cones of each other.} guarantees the existence of a witness $ Q_{\bfx_{[L]}}\in\cop^\tl(P_\bfx) $ of non-complete positivity.
The witness satisfies that  $ \inprod{\ol{P}_{\bfx_{[L]}}}{Q_{\bfx_{[L]}}}\le-\eps' $ for some $ \eps' = \eps'(\eps,\lambda')>0 $. 
To get an upper bound on $ M' $ (and hence an upper bound $ M $), we bound the following quantity from above and below:
$\sum_{ \cL\in[M']^L }\inprod{ \tau_{\vx_\cL} }{Q_{\bfx_{[L]}}} $.

On the one hand, via the method of types,  it is easy to show that the above quantity is nonnegative. 
Indeed, it is precisely equal to  
\begin{align}
\frac{M'^L}{n}\sum_{j = 1}^n\inprod{\paren{P_\bfx^{(j)}}^\tl}{Q_{\bfx_{[L]}}} \ge0 , \label[ineq]{eqn:double_counting_lb}
\end{align}
where $ P_{\bfx}^{(j)} $ is the type of the $j$-the \emph{column} of $ \cC'\in\cX^{M'\times n} $.
\Cref{eqn:double_counting_lb} follows since $ \paren{P_{\bfx}^{(j)}}^\tl\in\cp $ for each $ j\in[n] $. 
See \cite{zbj-2019-generalized-ld} for details. 

On the other hand, 
\begin{align}
\sum_{ \cL\in[M']^L }\inprod{ \tau_{\vx_\cL} }{Q_{\bfx_{[L]}}} =& \sum_{\cL\in \binom{[M']}{L}}\inprod{ \tau_{\vx_\cL} }{Q_{\bfx_{[L]}}} + \sum_{\cL\in[M']^L \setminus\binom{[M']}{L}}\inprod{ \tau_{\vx_\cL} }{Q_{\bfx_{[L]}}}. \notag
\end{align}
It turns out that the second term is a lower order term and it suffices to use a trivial bound: 
\begin{align}
\sum_{\cL\in \binom{[M']}{L}}\inprod{ \tau_{\vx_\cL} }{Q_{\bfx_{[L]}}}\le M'^L - \binom{M'}{L}. \label[ineq]{eqn:not_all_distinct}
\end{align}

We then focus on the first term. 
Before proceeding, we first observe that without loss of generality, we can assume $ \cL = \curbrkt{i_1,\cdots,i_L} $ is an ordered list, i.e.,  $ i_1<\cdots<i_L $. 
For a list $\cL$ that is not in ascending order, 
there is a $ \sigma_\cL\in S_L $ such that $ \sigma_\cL(\cL) $ is in ascending order. 
The observation then follows since 
\begin{align}
\inprod{\tau_{\vx_\cL}}{Q_{\bfx_{[L]}}} = \inprod{\tau_{\vx_{\sigma_\cL(\cL)}}}{Q_{\bfx_{\sigma_\cL([L])}}} = \inprod{ \tau_{\vx_{\sigma_\cL(\cL)}} }{Q_{\bfx_{[L]}}}, \notag 
\end{align}
where the last equality is by the symmetry of $ Q_{\bfx_{[L]}}\in\cop^\tl(P_\bfx)\subset \sym^\tl(P_\bfx) $. 

We now bound the first term.
\begin{align}
\sum_{\cL\in \binom{[M']}{L}}\inprod{ \tau_{\vx_\cL} }{Q_{\bfx_{[L]}}} =& \sum_{\cL\in \binom{[M']}{L} } \inprod{ \tau_{\vx_\cL} - \ol{P}_{\bfx_{[L]}} }{Q_{\bfx_{[L]}}} + \sum_{\cL\in \binom{[M']}{L}}\inprod{  \ol{P}_{\bfx_{[L]}} }{Q_{\bfx_{[L]}}} \notag \\
\le& \sum_{\cL\in \binom{[M']}{L} }  \normone{\tau_{\vx_\cL} - \ol{P}_{\bfx_{[L]}}} \norminf{Q_{\bfx_{[L]}}} + \sum_{\cL\in \binom{[M']}{L} } (-\eps') \notag \\
\le& \sum_{\cL\in \binom{[M']}{L} } {  \normone{\tau_{\vx_\cL} - \ol{P}_{\bfx_{[L]}}} } - \binom{M'}{L} \eps' \notag \\
\le& \sum_{\cL\in \binom{[M']}{L} } \paren{ d\paren{\tau_{\vx_\cL}, P_{\bfx_{[L]}}'} + d\paren{ P_{\bfx_{[L]}}', P_{\bfx_{[L]}}} + d\paren{P_{\bfx_{[L]}} , \ol{P}_{\bfx_{[L]}}} } - \binom{M'}{L} \eps'\notag \\
\le& \binom{M'}{L}\paren{ \eta + \lambda' + \alpha' - \eps' }. \label[ineq]{eqn:all_distinct}
\end{align}
Combining \Cref{eqn:double_counting_lb,eqn:not_all_distinct,eqn:all_distinct} allows us to establish \Cref{thm:robust_plotkin_stronger}. 
\end{proof}
\begin{remark}
Though the robust version of the generalized Plotkin bound allows the input distributions to be slightly perturbed from a fixed composition, in the actual double counting argument, we still made the effort to preprocess $ P_{\bfx_{[L]}}'\in\cJ_\lambda^\tl(P_\bfx) $ to get $ P_{\bfx_{[L]}}\in\cJ^\tl(P_\bfx) $ and then $ \ol P_{\bfx_{[L]}}\in\sym^\tl(P_\bfx) $. 
This process created several  slack factors that ultimately turned out to be negligible.
The reason why we desire a symmetric self-coupling is at its root  that the notion of complete positivity only makes sense for symmetric tensors whose marginals are all the same. 
After all, $\cp$-tensors are defined to take the form $ \sum_i\lambda_iP^\tl $ rather than $ \sum_i\lambda_iP_i^\tl $ where $ P_i $'s can be  different. 
\end{remark}

\section{$\cp$-symmetrization converse}
\label[app]{app:converse_rate_zero}
We assume that $ P_\bfs = \sqrbrkt{ \wt P_\bfu \wt P_{\bfx|\bfu}^\tl U_{\bfs|\bfu,\bfx_1,\cdots,\bfx_{ L }} }_\bfs $ (where $ (\wt P_\bfu, \wt P_{\bfx|\bfu}) $ was constructed in \Cref{sec:cp_symm}) is \emph{strictly} inside $ \lambda_\bfs $ in the sense that $ P_\bfs $ is in the $\vdelta$-interior of $ \lambda_\bfs $ for some entry-wise positive vector $ \vdelta = (\delta_1,\cdots,\delta_\beta)\in\bR_{>0}^{\beta} $. 
More precisely, we assume \Cref{eqn:cost_assumption} (which we recall below) holds: 
$\cost_i((P_\bfu,P_{\bfx|\bfu}), U_{\bfs|\bfu,\bfx_{[L]}} ) \le\Lambda_i - \delta_i$ for all $ i\in[\beta] $. 
See \Cref{fig:state_constr}.
\begin{figure}[htbp]
	\centering
	\includegraphics[width=0.4\textwidth]{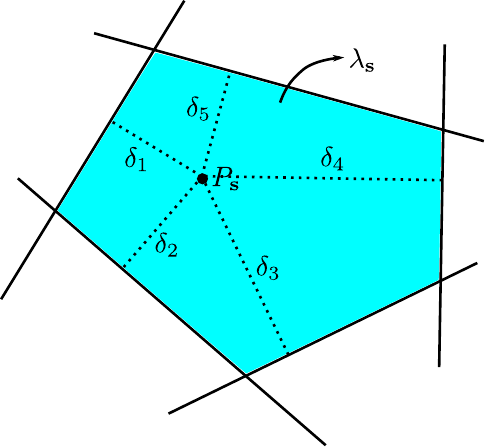}
	\caption{We assume in $\cp$-symmetrization that $ P_\bfs $ is in the $ \vdelta $-interior of $ \lambda_\bfs $.}
	\label{fig:state_constr}
\end{figure}

Under the jamming strategy described in \Cref{sec:cp_symm}, we lower bound the expected average error probability over the list selection and the jamming sequence generation. 

Define a good event that the list chosen by James has a type that is $(\eta + \eps)$-close to a certain $\cp$-distribution: 
\begin{align}
\cG\coloneqq\curbrkt{ d\paren{\tau_{\vbfx_\cL},\wt P_{\bfx_{[L]}}}\le\eta + \eps }. 
\label{eqn:def_g}
\end{align}
Note that $\cG$ only depends on the choice of $\cL$ but not $\vbfs$. 
Then we have
\begin{align}
\exptover{\cL,\vbfs}{P_{\e,\avg}(\vbfs)} =&  \exptover{\cL}{ \exptover{\vbfs}{P_{\e,\avg}(\vbfs)\condon\cL} } 
\ge \exptover{\cL}{ \exptover{\vbfs}{P_{\e,\avg}(\vbfs)\one_{\cG}\condon\cL} } 
=\exptover{\cL}{ \one_{\cG}\exptover{\vbfs}{P_{\e,\avg}(\vbfs)\condon\cL} } \notag \\
=& \probover{\cL}{\cG}\exptover{\cL} {\exptover{\vbfs}{P_{\e,\avg}(\vbfs)\condon\cL}\condon\cG}
= \probover{\cL}{\cG}\exptover{\cL}{\exptover{\vbfs}{P_{\e,\avg}(\vbfs)\condon\cL,\cG}}. \label{eqn:pe_decomp} 
\end{align}
The expectations and probabilities are over $ \cL\sim\binom{[M]}{L} $ and $ \vbfs\sim\prod_{j = 1}^nU_{\bfs|\bfu,\bfx_{[L]} = \vbfx_\cL(j)} $. 

We first argue that $ \prob{\cG} $ is bounded away from zero, $ \prob{\cG}\ge c $.
Let $ \cF\coloneqq\curbrkt{ d\paren{\tau_{\vbfx_\cL}, \wh P_{\bfx_{[L]}}}\le\eta } $.
As we have shown in \Cref{eqn:cp_tuple_extraction}, $ \prob{\cF}\ge c $. 
Note that  by \Cref{eqn:dist_hat_tilde}, $ \cF $ implies
\begin{align}
d\paren{\tau_{\vbfx_\cL},\wt P_{\bfx_{[L]}}}\le d\paren{\tau_{\vx_\cL}, \wh P_{\bfx_{[L]}} } + d\paren{ \wh P_{\bfx_{[L]}}, \wt P_{\bfx_{[L]}}} \le \eta + \eps. \notag
\end{align}
Hence $ \prob{\cG}\ge c $. 

It remains to bound $ \expt{\expt{P_{\e,\avg}(\vbfs)\condon\cL,\cG}} $. 

Fix any realization $ 1\le i_1<\cdots<i_L\le M $ of the random list $\cL$ that satisfies $ \cG $. 
All expectations and probabilities in the following analysis are over $ \vbfs $. 
For any $ i\in[\beta] $, the expected cost of $ \vbfs $ given $ \cL $ is 
\begin{align}
&\expt{B_i(\vbfs)\condon\cL,\cG} \notag \\
=& \frac{1}{n}\sum_{j = 1}^n\expt{B_i(\vbfs(j))\condon\cL,\cG} \notag \\
=& \frac{1}{n}\sum_{j = 1}^n \sum_{s\in\cS}  U_{\bfs|\bfu,\bfx_{[L]}}(s|\vu(j),\vx_{i_1}(j),\cdots,\vx_{i_L}(j))B_i(s)\notag \\
=& \frac{1}{n}\sum_{j = 1}^n\sum_{s\in\cS} \sum_{u\in\cU} \sum_{(x_1,\cdots,x_L)\in\cX^L} \indicator{\vu(j) = u} \indicator{\vx_{i_1}(j) = x_1,\cdots,\vx_{i_L}(j) = x_L}  U_{\bfs|\bfu,\bfx_{[L]}}(s|u,x_1,\cdots,x_L )B_i(s) \notag \\
=& \sum_{s\in\cS}\sum_{u\in\cU} \sum_{(x_1,\cdots,x_L)\in\cX^L}U_{\bfs|\bfu,\bfx_{[L]}}(s|u,x_1,\cdots,x_L )B_i(s)\frac{1}{n}\sum_{j = 1}^n\indicator{\vu(j) = u} \indicator{\vx_{i_1}(j) = x_1,\cdots,\vx_{i_L}(j) = x_L} \notag \\
=& \sum_{s\in\cS}\sum_{u\in\cU}\sum_{(x_1,\cdots,x_L)\in\cX^L}U_{\bfs|\bfu,\bfx_{[L]}}(s|u,x_1,\cdots,x_L )B_i(s)\tau_{\vu, \vx_{i_1},\cdots,\vx_{i_L}}(u,x_1,\cdots,x_L)  \label{eqn:expected_cost} 
\end{align}
The above jamming cost does not differ much from the cost computed using the joint distribution $ \wt P_\bfu \wt P_{\bfx|\bfu} U_{\bfs|\bfu,\bfx_{[L]}} $, given the fact that $ \tau_{\vu,\vx_{i_1},\cdots,\vx_{i_L}} $ is $ \eta $-close to $ \wh P_{\bfx_1,\cdots,\bfx_L} $ which is in turn $ \eps $-close to $ \wt P_{\bfx_1,\cdots,\bfx_L}\in\cp^\tl(\wh P_\bfx) $. 
Indeed, 
\begin{align}
& \abs{\sum_{s\in\cS}\sum_{u\in\cU}\sum_{x_\cL\in\cX^L}U_{\bfs|\bfu,\bfx_{[L]}}(s|u,x_\cL )B_i(s)\tau_{\vu, \vx_{i_1},\cdots,\vx_{i_L}}(u,x_\cL) - \sum_{s\in\cS}\sum_{u\in\cU}\sum_{x_\cL\in\cX^L}U_{\bfs|\bfu,\bfx_{[L]}}(s|u,x_\cL)B_i(s)\wt P_{\bfu}(u) \wt P_{\bfx|\bfu}^\tl(x_\cL|u) } \notag \\
=& \abs{ \sum_{s\in\cS}\sum_{x_\cL\in\cX^L}\tau_{\vx_{i_1},\cdots,\vx_{i_L}}(x_\cL) U_{\bfs|\bfx_{[L]}}(s|x_\cL)B_i(s) - \sum_{s\in\cS}\sum_{x_\cL\in\cX^L}\wt P_{\bfx_{[L]}}(x_\cL)U_{\bfs|\bfx_{[L]}}(s|x_\cL)B_i(s) } \notag \\
=& \sum_{s\in\cS}\sum_{x_\cL\in\cX^L}\abs{U_{\bfs|\bfx_{[L]}}(s|x_\cL)}\abs{B_i(s)} \abs{ \tau_{\vx_{i_1},\cdots,\vx_{i_L}}(x_\cL) - \wt P_{\bfx_{[L]}}(x_\cL) }\notag \\
\le& \sum_{s\in\cS}\sum_{x_\cL\in\cX^L}\abs{B_i(s)} \abs{ \tau_{\vx_{i_1},\cdots,\vx_{i_L}}(x_\cL) - \wt P_{\bfx_{[L]}}(x_\cL) }\notag \\
\le& \card{\cS} B_i^*d\paren{ \tau_{\vx_{i_1},\cdots,\vx_{i_L}} , \wt P_{\bfx_{[L]}} } \label{eqn:def_b_i_star} \\
\le& \card{\cS} B_i^*(\eta + \eps). \label{eqn:cost_apx} 
\end{align}
In \Cref{eqn:def_b_i_star}, $ N_i^* $ is defined as $ B_i^* \coloneqq \max_{s\in\cS}\abs{ B_i(s) } $.

Combining \Cref{eqn:expected_cost}, \Cref{eqn:cost_apx} and \Cref{eqn:cost_assumption}, we get
\begin{align}
\expt{B_i(\vbfs)\condon\cL,\cG} \le& \cost_i((\wt P_\bfu,\wt P_{\bfx|\bfu}) ,U_{\bfs|\bfu,\bfx_{[L]}}) + \card{\cS} B_i^*(\eta + \eps) \le \Lambda_i - \delta_i + \card{\cS} B_i^*(\eta + \eps) \le \Lambda_i - \delta_i/2 < \Lambda_i. \notag 
\end{align}
In the last inequality we assume $ \eta\le\frac{\delta_i}{4\cardS B_i^*} $ and $ \eps\le\frac{\delta_i}{4\cardS B_i^*} $ for all $ i\in[\beta] $.

We then bound the (conditional) variance of $ B_i(\vbfs) $.
\begin{align}
\var{B_i(\vbfs)\condon\cL,\cG} =& \var{ \frac{1}{n}\sum_{j = 1}^nB_i(\vbfs(j))\condon\cL,\cG } \notag \\
=& \frac{1}{n^2}\sum_{j = 1}^n\var{ B_i(\vbfs(j))\condon\cL,\cG } \label{eqn:var_indep}  \\
\le& \frac{1}{n^2}\sum_{j = 1}^n\expt{B_i(\vbfs(j))^2\condon\cL,\cG} \notag \\
\le& \frac{1}{n^2}n(B_{i}^*)^2 \notag \\
=& (B_{i}^*)^2/n, \notag 
\end{align}
where \Cref{eqn:var_indep} is because each component of $\vbfs$ is sampled independently.
Now by Chebyshev's inequality (\Cref{lem:chebyshev}),
\begin{align}
\prob{ B_i(\vbfs)>\Lambda_i\condon\cL,\cG } \le \prob{ \abs{B_i(\vbfs) - \paren{\Lambda_i - \frac{\delta_i}{2}}} > \frac{\delta_i}{2} \condon\cL,\cG } \le \frac{(B_i^*)^2}{n(\delta_i/2)^2} = \frac{4(B_i^*)^2}{n\delta_i^2}. \notag 
\end{align}
Therefore, taking a union bound over all type constraints on the jamming sequence, we have
\begin{align}
\prob{ \tau_\bfs\notin\lambda_\bfs\condon\cL,\cG } = \prob{\exists i\in[\beta],\; B_i(\vbfs)>\Lambda_i\condon\cL,\cG }\le \sum_{i\in[\beta]}\prob{B_i(\vbfs)>\Lambda_i\condon\cL,\cG }\le \sum_{i\in[\beta]}\frac{4(B_i^*)^2}{n\delta_i^*}. \notag 
\end{align}

We then make several observations.
For any $ i\in[M] $ and $ i'\in\cL $, we have
\begin{align}
\exptover{\vbfs\sim\prod_{j=1}^n U_{\bfs|\bfu =\vu(j) ,\bfx_{[L]} = \vbfx_\cL(j)} }{W^{\tn}( \vy|\vx_i,\vbfs )\condon\cL,\cG} 
=& \exptover{ \substack{\vbfs(1)\sim U_{\bfs|\bfu =\vu(1) ,\bfx_{[L]} = \vbfx_{\cL}(1),}\\\cdots ,\\ \vbfs(n)\sim U_{\bfs|\bfu =\vu(n) ,\bfx_{[L]} = \vbfx_{\cL}(n)  } }}{\prod_{j = 1}^n W(\vy(j)|\vx_i(j),\vbfs(j))\condon\cL,\cG} \notag \\
=& \prod_{j = 1}^n\exptover{ \vbfs(j)\sim U_{\bfs|\bfu = \vu(j),\bfx_{[L]} = \vbfx_\cL(j)} }{W(\vy(j)|\vx_i(j),\vbfs(j))\condon\cL,\cG} \label{eqn:s_w_prod} \\
=& \prod_{j = 1}^n\sum_{s\in\cS} U_{\bfs|\bfu,\bfx_{[L]}}(s|\vu(j),\vx_\cL(j))W_{\bfy|\bfx,\bfs}(\vy(j)|\vx_i(j),s) \notag \\
=& \prod_{j= 1}^n \sum_{s\in\cS} U_{\bfs|\bfu,\bfx_{[L]}}(s|\vu(j),\vx_{(\cL\backslash\curbrkt{i'})\cup\curbrkt{i}}(j))W_{\bfy|\bfx,\bfs}(\vy(j)|\vx_{i'}(j),s) \label{eqn:u_symm} \\
=& \exptover{\vbfs\sim\prod_{j=1}^n U_{\bfs|\bfu =\vu(j) ,\bfx_{[L]} = \vbfx_\cL(j)} }{W^{\tn}(\vy|\vx_{i'},\vbfs)\condon(\cL\backslash\curbrkt{i'})\cup\curbrkt{i},\cG}. \label{eqn:rollback} 
\end{align}
In the above chain of equalities, \Cref{eqn:s_w_prod} follows since $ \vbfs $ is sampled from a product distribution and the channel is memoryless $ W_{\vbfy|\vbfx,\vbfs} = W^{\tn}_{\bfy|\bfx,\bfs} $; \Cref{eqn:u_symm}  follows since $ U_{\bfs|\bfu,\bfx_{[L]}} $ is a symmetrizing distribution (\Cref{def:def_symm_distr}); \Cref{eqn:rollback}  follows by rolling back the above chain of equalities.

Similar to $\cG$, define  
\begin{align}
\cG'\coloneqq\curbrkt{ d\paren{\tau_{\vbfx_{\cL'}},\wt P_{\bfx_{[L+1]}}}\le\eta' } \label{eqn:def_gprime}
\end{align}
where $ \cL'\sim\binom{[M]}{L+1} $ and $ \wt P_{\bfx_{[L+1]}} \coloneqq \wt P_{\bfx_1,\cdots,\bfx_{L+1}}\in\cp^{\otimes(L+1)}(\wh P_{\bfx}) $ satisfies 
\begin{align}
&  \sqrbrkt{\wt P_{\bfx_1,\cdots,\bfx_{L+1}}}_{\bfx_1,\cdots,\bfx_L} = \wt P_{\bfx_1,\cdots,\bfx_L} = \wt P_{\bfx_{[L]}}. \notag
\end{align} 
Define the set of good $L$-sized lists and good $(L+1)$-sized lists, respectively, as
\begin{align}
\cH\coloneqq\curbrkt{\cL\in\binom{[M]}{L}\colon \cG\holds}, \quad\cH'\coloneqq\curbrkt{\cL'\in\binom{[M]}{L+1}\colon\cG'\holds}. \notag
\end{align}
Note that $ \cG = \curbrkt{\cL\in\cH} $ and $ \cG'=\curbrkt{\cL'\in\cH'} $. 
Before proceeding, let us first prove the following lemma concerning the relation between $ \cG $ and $ \cG' $.
\begin{lemma}\label{lem:relation_g_gprime}
Suppose that $ \cG' $ is defined w.r.t. a certain distribution $ \wt P_{\bfx_{[L+1]}}\in\cp^{\otimes(L+1)}(\wh P_\bfx) $. 
Let $ \eta'>0 $ in the definition of $ \cG' $ (\Cref{eqn:def_gprime}) be such that $ \eta'\le\eta+\eps $ and $ \probover{\cL'}{\cG'} \ge c'\ge c(L+1) $. 
Then $ \probover{\cL}{\cG}\ge c $.
\end{lemma}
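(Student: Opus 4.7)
The plan is to use a straightforward double-counting argument between good $(L+1)$-lists and good $L$-lists. First I would observe two structural facts about $\wt P_{\bfx_{[L+1]}}$. Since $\wt P_{\bfx_{[L+1]}}\in\cp^{\otimes(L+1)}(\wh P_\bfx)$, it is a symmetric tensor (by the very definition of complete positivity, \Cref{def:cp}), so marginalizing it onto \emph{any} $L$ of its $L+1$ coordinates yields the same distribution, which by hypothesis equals $\wt P_{\bfx_{[L]}}$. Moreover, the $\ell^1$ distance is non-increasing under marginalization, a standard triangle-inequality argument.

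Next I would show that every $L$-sublist of a good $(L+1)$-list is itself good, i.e., that $\cL'\in\cH'$ implies $\cL\in\cH$ for every $\cL\in\binom{\cL'}{L}$. Indeed, writing $\cL'=\{i_1<\cdots<i_{L+1}\}$ and $\cL=\cL'\setminus\{i_k\}$, the type $\tau_{\vbfx_\cL}$ is precisely the marginal of $\tau_{\vbfx_{\cL'}}$ obtained by dropping the $k$-th coordinate. By the two facts above,
\[
d\paren{\tau_{\vbfx_\cL},\wt P_{\bfx_{[L]}}}\;\le\;d\paren{\tau_{\vbfx_{\cL'}},\wt P_{\bfx_{[L+1]}}}\;\le\;\eta',
\]
and the hypothesis $\eta'\le\eta+\eps$ puts $\cL$ in $\cH$.

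The remaining step is the double count. Each $\cL'\in\cH'$ contributes exactly $L+1$ good $L$-sublists (all of which belong to $\cH$ by the previous step), and each $\cL\in\binom{[M]}{L}$ is contained in at most $M-L$ different $(L+1)$-supersets. Therefore
\[
(L+1)\,\card{\cH'}\;\le\;(M-L)\,\card{\cH}.
\]
Using the elementary identity $(L+1)\binom{M}{L+1}=(M-L)\binom{M}{L}$ and the hypothesis $\card{\cH'}\ge c'\binom{M}{L+1}$, this yields $\card{\cH}\ge c'\binom{M}{L}$, so $\probover{\cL}{\cG}\ge c'\ge c(L+1)\ge c$, which proves the lemma.

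There is essentially no obstacle here: the argument is purely combinatorial once the symmetry of the $\cp$-distribution is exploited. The slack factor of $L+1$ in the hypothesis $c'\ge c(L+1)$ is actually wasteful for this pure counting step (the bound $\prob{\cG}\ge c'$ suffices), but it may be useful elsewhere in the converse analysis to absorb additional losses (for instance if one needs to restrict $\cH'$ further to sublists whose $L$-marginals are genuinely $\cp$-typical rather than merely close to $\wt P_{\bfx_{[L]}}$).
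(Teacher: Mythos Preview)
Your proposal is correct and follows essentially the same route as the paper: both arguments first exploit the symmetry of $\wt P_{\bfx_{[L+1]}}\in\cp^{\otimes(L+1)}(\wh P_\bfx)$ together with the contraction of $\ell^1$ distance under marginalization to show that every $L$-sublist of a good $(L+1)$-list is good, and then finish with a double count relating $\card{\cH'}$ to $\card{\cH}$. Your double count is in fact slightly sharper than the paper's---you retain the factor $L+1$ (each $\cL'\in\cH'$ spawns $L+1$ good $L$-sublists) and obtain $\prob{\cG}\ge c'$ directly, whereas the paper's \Cref{fact:lprime_generate_l} drops that factor and only yields $\prob{\cG}\ge c'/(L+1)$; your observation that the hypothesis $c'\ge c(L+1)$ is wasteful here is therefore exactly right.
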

\begin{proof}
First, we claim that, if the distribution of  $ \bfx_{[L+1]} $ has $\cp$-decomposition $ \sum_{i\in[k]}\lambda_i P_{\bfx_i}^{\otimes(L+1)} $, then  the marginal distribution of   $ \bfx_{[L+1]\setminus\curbrkt{j}} $  is the same for every $ j\in[L+1] $ and has $\cp$-decomposition $ \sum_{i\in[k]}\lambda_i P_{\bfx_i}^\tl $.
Indeed, for any $ j\in[L+1] $, 
\begin{align}
\sqrbrkt{ \wt P_{\bfx_{[L+1]}} }_{\bfx_{[L+1]\setminus\curbrkt{j}}} (x_{[L+1]\setminus\curbrkt{j}}) =& \sum_{x_j\in\cX} \sum_{i\in[k]} \lambda_i P_{\bfx_i}^{\otimes(L+1)} (x_{[L+1]}) \notag \\
=& \sum_{i\in[k]}\lambda_i\sum_{x_j\in\cX} P_{\bfx_i}(x_1)\cdots P_{\bfx_i}(x_{[L+1]}) \notag \\
=& \sum_{i\in[k]}\lambda_i P_{\bfx_i}^\tl (x_{[L+1]\setminus\curbrkt{j}})\sum_{x_j\in\cX}P_{\bfx_i}(x_j)  \notag \\
=& \sum_{i\in[k]} \lambda_i P_{\bfx_i}^\tl(x_{[L+1]\setminus\curbrkt{j}})  \notag \\
=& \wt P_{\bfx_{[L]}}(x_{[L+1]\setminus\curbrkt{j}}) . \label{eqn:relation_pl_plprime}
\end{align}

Recall the definition of $ \cG $ and $ \cG' $ in \Cref{eqn:def_g} and \Cref{eqn:def_gprime}, respectively. 
We claim that if $ \cG' $ holds for some $ \cL'\in\binom{[M]}{L+1} $ then for any $ j\in\cL' $, $ \cG $ holds, where $ \cL = \cL'\setminus\curbrkt{j} $. 
Indeed, 
\begin{align}
 d\paren{\tau_{\vx_\cL}, \wt P_{\bfx_{[L]}}} =& \sum_{x_\cL\in\cX^L} \abs{ \tau_{\vx_\cL}(x_\cL) - \wt P_{\bfx_{[L]}}(x_\cL) } \notag \\
=& \sum_{x_\cL\in\cX^L}\abs{ \sum_{x_j\in\cX} \tau_{\vx_{\cL'}}(x_{\cL'}) - \wt P_{\bfx_{[L+1]}}(x_{\cL'}) } \label{eqn:apply_relation_pl_plprime} \\
\le& \sum_{x_{\cL'}\in\cX^{L+1}}\abs{\tau_{\vx_{\cL'}}(x_{\cL'}) - \wt P_{\bfx_{[L+1]}}(x_{\cL'})} = d\paren{ \tau_{\vx_{\cL'}}, \wt P_{\bfx_{[L+1]}} } \le\eta' \notag \\
 \le& \eta+\eps . \label{eqn:apply_assumption_etaprime}
\end{align}
\Cref{eqn:apply_relation_pl_plprime} follows from the previous claim given by \Cref{eqn:relation_pl_plprime}. 
\Cref{eqn:apply_assumption_etaprime} is by one of the assumptions $ \eta'\le\eta+\eps $.

By another assumption,
\begin{align}
\probover{\cL'}{\cG'} =& \probover{ \cL' }{ \cL'\in\cH' } = \frac{\card{\cH'}}{\binom{M}{L+1}} = \frac{1}{\binom{M}{L+1}}{\card{\curbrkt{\cL'\in\binom{[M]}{L+1}\colon d\paren{\tau_{\vx_{\cL'}}, \wt P_{\bfx_{[L+1]}}}\le\eta'}}}\ge c'. \label{eqn:bound_gprime} 
\end{align}
Now we bound $ \probover{\cL}{\cG} $.
\begin{align}
\probover{\cL}{\cG} =& \frac{1}{\binom{M}{L}}\card{\curbrkt{\cL\in\binom{[M]}{L}\colon d\paren{\tau_{\vx_\cL}, \wt P_{\bfx_{[L]}}}\le\eta + \eps }} \notag \\
\ge& \frac{1}{\binom{M}{L}}\card{\curbrkt{\cL\in\binom{[M]}{L}\colon \exists\cL'\in\cH',\;\cL\in\binom{\cL'}{L},\;d\paren{\tau_{\vx_\cL},\wt P_{\bfx_{[L]}}}\le\eta+\eps  }} \notag \\
\ge& \frac{1}{\binom{M}{L}}\card{\curbrkt{\cL\in\binom{[M]}{L}\colon \exists\cL'\in\cH',\;\cL\in\binom{\cL'}{L}  }} \label{eqn:gprime_imiplies_g} \\
\ge& \frac{1}{\binom{M}{L}}\frac{\card{\cH'}}{M-L} \label{eqn:lprime_generate_l} \\
\ge& \frac{c'\binom{M}{L+1}}{\binom{M}{L}(M-L)} \notag \\
=& \frac{c'}{L+1} \label{eqn:apply_bound_gprime} \\
\ge& c \label{eqn:relation_c_cprime}
\end{align}
In \Cref{eqn:gprime_imiplies_g}, given the condition $ \cL'\in\cH' $, we could drop the condition $ d\paren{\tau_{\vx_\cL}, \wt P_{\bfx_{[L]}}}\le\eta+\eps $ since $ \cG' $ for $ \cL' $ implies $ \cG $ for $ \cL\in\binom{\cL'}{L} $, as we argued just now. 
\Cref{eqn:lprime_generate_l} follows from the following fact by taking $ \sL \coloneqq \cH' $. 
\begin{fact}\label{fact:lprime_generate_l}
Let $ \sL'\subseteq\binom{[M]}{L+1} $ be a collection of $ (L+1) $-sized lists.
Let $ \sL\coloneqq\curbrkt{\cL\in\binom{[M]}{L}\colon \exists\cL'\in\sL',\;\cL\in\binom{\cL'}{L}} $.
Then $ \card{\sL}(M-L)\ge\card{\sL'} $. 
\end{fact}
\Cref{eqn:apply_bound_gprime} follows from \Cref{eqn:bound_gprime}.
\Cref{eqn:relation_c_cprime} is by the assumption $ \frac{c'}{L+1}\ge c $. 
\end{proof}

We make another observation. 
For any $ \cL'\in\cH' $ and $ i_0\in\cL' $,  
we have
\begin{align}
\sum_{i\in\cL'}\expt{P_{\e,\avg}(i,\vbfs)\condon\cL'\backslash\curbrkt{i},\cL'\backslash\curbrkt{i}\in\cH} =& \sum_{i\in\cL'}\paren{ 1 - \sum_{\vy\in\cY^n\colon\psi(\vy)\ni i}\expt{ W^{\tn}(\vy|\vx_i,\vbfs)\condon\cL'\backslash\curbrkt{i},\cL'\backslash\curbrkt{i}\in\cH } } \notag \\
=& \sum_{i\in\cL'} \paren{ 1 - \sum_{\vy\in\cY^n\colon\psi(\vy)\ni i}\expt{ W^{\tn}(\vy|\vx_{i_0},\vbfs)\condon\cL'\backslash\curbrkt{{i_0}},\cL'\backslash\curbrkt{{i_0}}\in\cH } } \label{eqn:perm_symm} \\
=& (L+1) - \sum_{i\in\cL'\cap\psi(\vy)}\sum_{\vy\in\cY^n}\expt{ W^{\tn}(\vy|\vx_{i_0},\vbfs)\condon\cL'\backslash\curbrkt{{i_0}},\cL'\backslash\curbrkt{{i_0}}\in\cH } \notag \\
=& (L+1) - \sum_{i\in\cL'\cap\psi(\vy)}1 \label{eqn:inner_one} \\
\ge& (L+1)-L  \label{eqn:list_size_l} \\
=&  1. \label{eqn:sum_over_i_bd}
\end{align}
\Cref{eqn:perm_symm} is by \Cref{eqn:rollback}; \Cref{eqn:inner_one} follows since the inner summation equals 1; \Cref{eqn:list_size_l} follows because $ \card{\psi(\vy)}\le L $ for any $ \vy\in\cY^n $. 

Given the above observations, we can lower bound $ \expt{\expt{P_{\e,\avg}(\vbfs)\condon\cL,\cG}} $ as follows. 
\begin{align}
\expt{\expt{P_{\e,\avg}(\vbfs)}\condon\cL,\cG} =& \frac{1}{\binom{M}{L}}\frac{1}{M}\sum_{\cL\in\binom{[M]}{L}}\sum_{i\in[M]}\expt{P_{\e,\avg}(i,\vbfs)\condon\cL} \indicator{\cL\in\cH} \notag \\
\ge& \frac{1}{\binom{M}{L}}\frac{1}{M}\sum_{\cL\in\binom{[M]}{L}}\sum_{i\in[M]\colon \cL\cup\curbrkt{i}\in\cH'}\expt{P_{\e,\avg}(i,\vbfs)\condon\cL} \indicator{\cL\in\cH} \notag \\
\ge& \frac{1}{\binom{M}{L}}\frac{1}{M}\sum_{\cL'\in\cH'}\sum_{i\in\cL'}\expt{P_{\e,\avg}(i,\vbfs)\condon\cL'\backslash\curbrkt{i}} \indicator{\cL'\backslash\curbrkt{i}\in\cH} \label{eqn:drop_ind} \\
=& \frac{1}{\binom{M}{L}}\frac{1}{M}\sum_{\cL'\in\cH'}\sum_{i\in\cL'}\expt{P_{\e,\avg}(i,\vbfs)\condon\cL'\backslash\curbrkt{i}} \label{eqn:drop_condition_l} \\
\ge& \frac{1}{\binom{M}{L}}\frac{1}{M}\card{\cH'} \label{eqn:apply_sum_over_i_bd} \\
\ge& \frac{c'\binom{M}{L+1}}{\binom{M}{L}M} \label{eqn:apply_bd_on_gprime} \\
=& \frac{M-L}{L+1}\frac{c'}{M} \notag \\
=& c'\paren{\frac{1}{L+1} - \frac{L}{(L+1)M} } \notag \\
\ge& \frac{c'}{2(L+1)}. \label{eqn:bound_pe_given}
\end{align}
Since $ \cL'\backslash\curbrkt{i}\in\cH $ is always true given the choice of $ \cL' $ and $ i $, we can drop the indicator in \Cref{eqn:drop_ind}. 
In \Cref{eqn:drop_condition_l}, we drop the indicator of the event $ \cL'\setminus\curbrkt{i}\in\cH $ since $ \cG' = \curbrkt{\cL'\in\cH'} $ automatically implies $ \cG = \curbrkt{\cL'\setminus\curbrkt{i}\in\cH} $ by \Cref{lem:relation_g_gprime}.
\Cref{eqn:apply_sum_over_i_bd} follows from \Cref{eqn:sum_over_i_bd}. 
\Cref{eqn:apply_bd_on_gprime} follows from a bound similar  to $ \prob{\cG}\ge c $. 
Indeed, by replacing $ L $ with $ L+1 $, we have 
\begin{align}
\probover{\cL'\sim\binom{[M]}{L+1}}{\cG'} = \probover{\cL'\sim\binom{[M]}{L+1}}{\cL'\in\cH'} = \frac{\card{\cH'}}{\binom{M}{L+1}} \ge c' \notag
\end{align}
for some constant $ c'>0 $. 
In \Cref{eqn:bound_pe_given}, we assume $ M\ge2L $.

Finally, putting \Cref{eqn:bound_pe_given} back to \Cref{eqn:pe_decomp}, we have
\begin{align}
\exptover{\cL,\vbfs}{P_{\e,\avg}(\vbfs)} \ge& \frac{cc'}{2(L+1)}. \label{eqn:pe_final_bound}
\end{align}
\Cref{eqn:pe_final_bound} finishes the proof of the symmetrization converse (\Cref{thm:symm_converse}).

\section{Codeword selection}
\label[app]{app:cw_select}
Fix rate $R$ and a time-sharing sequence $ \vu\in\cU^n $ of type $ P_\bfu $.
We sample $ M= L2^{nR} $ codewords $ \vbfx_1,\cdots,\vbfx_M $ independently using the following distribution.
For each $ i\in[M] $, let $ \vbfx_i^{(j)} $ denote the $j$-th ($j\in\cU$) subvector  of $ \vbfx_i $, i.e., 
$\vbfx_i^{(j)} \coloneqq( \vbfx_i(k_1),\cdots,\vbfx_i(k_{nP_\bfu(j)}) )\in\cX^{nP_\bfu(j)} $
where $ (k_1,\cdots,k_{nP_\bfu(j)}) $ satisfies $\vbfu(k_{\ell}) = j $ for all $ \ell\in[nP_\bfu(j)] $. 
We sample $\vbfx_i^{(j)}$  uniformly from all  $ \cX^{nP_\bfu(j)} $-valued vectors of type $ P_{\bfx|\bfu = j} $. 

We will show that a codebook $ \cC $ sampled as above simultaneously satisfies all desired properties in \Cref{lem:cw_select} with high probability.
To this end, we need the following lemma whose proof appeared in \cite{csiszar-narayan-it1988-obliviousavc}.
\begin{lemma}
\label{lem:cn_conc}
Let $ X_1,\cdots,X_M $ be (not necessarily identically distributed, possibly dependent) random variables.
For all $ i\in[M] $, let $ f_i(X_1,\cdots,X_i) $ be  a function such that $ f_i\in[0,1] $.
If $ \expt{ f_i(X_1,\cdots,X_i)\condon X_1,\cdots,X_{i-1} }\le a $ a.s. for all $ i\in[M] $,
then 
\begin{align}
\prob{ \frac{1}{M}\sum_{i = 1}^Mf_i(X_1,\cdots,X_i) > t }\le& 2^{ -M(t - a\log e) }. \notag 
\end{align}
\end{lemma}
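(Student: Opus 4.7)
The plan is to prove this Chernoff-type tail bound by the exponential-moment method, peeling off the dependencies one index at a time via the tower property of conditional expectation and the conditional-expectation hypothesis $\expt{f_i \mid X_1,\dots,X_{i-1}} \le a$. The key observation that makes the bound as clean as $2^{-M(t - a \log e)}$ is that we can simply take the ``Chernoff parameter'' equal to $1$ (i.e., work directly with base-$2$ exponentials of $f_i$), which matches the $\log e$ factor appearing in the exponent.

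First I would apply Markov's inequality to the exponential moment: for any $\lambda > 0$,
\begin{align}
\prob{\tfrac{1}{M}\sum_{i=1}^M f_i(X_1,\dots,X_i) > t}
\le 2^{-\lambda M t}\,\expt{2^{\lambda \sum_{i=1}^M f_i(X_1,\dots,X_i)}}. \notag
\end{align}
Next I would bound the moment-generating factor by iterated conditioning. Since $f_i \in [0,1]$ and the function $y \mapsto 2^{\lambda y}$ is convex on $[0,1]$, the chord inequality gives $2^{\lambda y} \le 1 + (2^\lambda - 1)y$ for every $y \in [0,1]$. Applying this with $y = f_i(X_1,\dots,X_i)$ and using the hypothesis,
\begin{align}
\expt{2^{\lambda f_i(X_1,\dots,X_i)} \mid X_1,\dots,X_{i-1}}
\le 1 + (2^\lambda - 1)\expt{f_i \mid X_1,\dots,X_{i-1}}
\le 1 + (2^\lambda - 1)a. \notag
\end{align}

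Iterating this from $i = M$ down to $i = 1$ via the tower property yields
\begin{align}
\expt{2^{\lambda \sum_{i=1}^M f_i(X_1,\dots,X_i)}}
\le \bigl(1 + (2^\lambda - 1)a\bigr)^M
\le 2^{M(2^\lambda - 1)a \log e}, \notag
\end{align}
where the last step uses $1 + x \le e^x$ written in base $2$. Combining with the Markov step,
\begin{align}
\prob{\tfrac{1}{M}\sum_{i=1}^M f_i > t}
\le 2^{ M\bigl[(2^\lambda - 1)a\log e - \lambda t\bigr] }. \notag
\end{align}
The specific form of the target bound is now obtained by choosing $\lambda = 1$: the exponent collapses to $M[a \log e - t] = -M(t - a\log e)$, giving exactly the claimed inequality. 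There is no genuine obstacle here; the only point worth flagging is the choice of $\lambda$, which is dictated by wanting $2^\lambda - 1 = \lambda$ to leave $a \log e - t$ in the exponent, so a sharper optimization over $\lambda$ is not pursued since the stated bound suffices for the applications in this paper.
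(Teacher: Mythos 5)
Your proof is correct, and it is essentially the standard argument: the paper does not reproduce a proof but cites Csiszár--Narayan, whose proof is exactly this exponential-moment computation (Markov on $2^{\sum_i f_i}$, the chord bound $2^{y}\le 1+y$ for $y\in[0,1]$, iterated conditioning, and $1+x\le e^x$). Your only cosmetic addition is carrying a general parameter $\lambda$ before setting $\lambda=1$, which changes nothing of substance.
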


\begin{proof}[Proof of \Cref{lem:cw_select}.]
Fix $ \vu\in\cU^n ,\vx\in\cX^n,\vs\in\cS^n,P_{\bfu}\in\Delta(\cU), P_{\bfx|\bfu}\in\Delta(\cX|\cU), P_{\bfu,\bfx,\bfx_{[L]},\bfs}\in\Delta(\cU\times\cX\times\cX^L\times\cS) $.
To avoid trivialities, assume $ P_{\bfu,\bfx,\bfs} = \tau_{\vu,\vx,\vs}, P_{\bfu,\bfx} = P_{\bfu,\bfx_k} = P_\bfu P_{\bfx|\bfu} $ for all $ k\in[L] $. 
Let $ \vbfx_1,\cdots,\vbfx_M $ be i.i.d. random vectors following the distribution specified in the beginning of \Cref{app:cw_select}. 
\paragraph{Proof of \Cref{eqn:property_3}}
Fix any $ k\in[L] $. 
Let $ f_i(\vbfx_1,\cdots,\vbfx_i) = \indicator{ \tau_{\vu,\vx,\vbfx_i,\vs} = P_{\bfu,\bfx,\bfx_k,\bfs} }  ,\;i\in[M] $. 
Note that each $ f_i $ in fact only depends on $ \vbfx_i $. 
\begin{align}
a =& \expt{ f_i(\vbfx_1,\cdots,\vbfx_i)\condon\vbfx_1,\cdots,\vbfx_{i - 1} }\notag \\
=& \expt{ f_i(\vbfx_1,\cdots,\vbfx_i) } \notag \\
=& \prob{ \tau_{\vu,\vx,\vbfx_i,\vs} = P_{\bfu,\bfx,\bfx_k,\bfs} } \notag \\
=& \frac{ \card{\curbrkt{ \vx'\in\cX^n\colon \tau_{\vu,\vx,\vx',\vs} = P_{\bfu,\bfx,\bfx_k,\bfs} }} }{\card{\curbrkt{ \vx'\in\cX^n\colon \tau_{\vu,\vx' = P_{\bfu,\bfx}} }}} \notag \\
\doteq& \frac{2^{nH(\bfx_k|\bfu,\bfx,\bfs)}}{2^{nH(\bfx|\bfu)}} \notag \\
=& 2^{-n(H(\bfx_k|\bfu) -  H(\bfx_k|\bfu,\bfx,\bfs))} \label{eqn:eq_ent} \\
=& 2^{-nI(\bfx_k;\bfx,\bfs|\bfu)}, \notag 
\end{align}
where \Cref{eqn:eq_ent} follows since $ P_{\bfu,\bfx} = P_{\bfu,\bfx_k} $. 

Let $ t = \frac{1}{M}2^{n\paren{\sqrbrkt{R - I(\bfx_k;\bfx,\bfs|\bfu)}^++\eps}} $. 
Under these choices of parameters ($t$ and $a$),
\begin{align}
M(t - a\log e) \doteq& 2^{n\paren{\sqrbrkt{R - I(\bfx_k;\bfx,\bfs|\bfu)}^++\eps}} - L2^{nR}\log e2^{-I(\bfx_k;\bfx,\bfs|\bfu)} \notag \\
=& \begin{cases}
2^{n(R - I(\bfx_k;\bfx,\bfs|\bfu) + \eps)} - 2^{R - I(\bfx_k;\bfx,\bfs|\bfu)}L\log e, & R\ge I(\bfx_k;\bfx,\bfs|\bfu) \\
2^{n\eps} - 2^{n(R - I(\bfx_k;\bfx,\bfs|\bfu))}, & R< I(\bfx_k;\bfx,\bfs|\bfu) 
\end{cases} \notag \\
\dotge& 2^{n\eps} \notag
\end{align}
for sufficiently large $n$. 

Now by \Cref{lem:cn_conc}, 
\begin{align}
\prob{ \sum_{i = 1}^Mf_i(\vbfx_1,\cdots,\vbfx_i)>t }  =&\prob{ \card{\curbrkt{i\in[M]\colon \tau_{\vu,\vx,\vbfx_i,\vs} = P_{\bfu,\bfx,\bfx_k,\bfs}}}>2^{n\paren{\sqrbrkt{R - I(\bfx_k;\bfx,\bfs|\bfu)}^++\eps}} } \notag \\
\le& 2^{-M(a - t\log e)} \notag \\
\le& 2^{-2^{n\eps}}. \label{eqn:property_3_proved} 
\end{align}



\paragraph{Proof of \Cref{eqn:property_1}}
Following the same procedures as in the previous paragraph, we have
\begin{align}
\prob{ \card{\curbrkt{i\in[M]\colon \tau_{\vu,\vbfx_i,\vs} = P_{\bfu,\bfx,\bfs}}} > 2^{n\paren{\sqrbrkt{R - I(\bfx;\bfs|\bfu)}^++\paren{\eps/2 + \frac{1}{n}\log L}}} } \le&
2^{-2^{n\paren{\eps/2 + \frac{1}{n}\log L}}} = 2^{-L2^{n\eps/2}} . \notag 
\end{align}
Recall that $ R\ge \eps $.
Under the assumption that $ I(\bfx;\bfs|\bfu)\ge\eps $, we have
\begin{align}
\sqrbrkt{R - I(\bfx;\bfs|\bfu)}^+=& R - \min\curbrkt{R, I(\bfx;\bfs|\bfu)} \le R-\eps. \label{eqn:positive_part} 
\end{align}
Hence
\begin{align}
2^{-L2^{n\eps/2}}\ge&
\prob{ \card{\curbrkt{i\in[M]\colon \tau_{\vu,\vbfx_i,\vs} = P_{\bfu,\bfx,\bfs}}}>L2^{n\paren{\sqrbrkt{R - I(\bfx;\bfs|\bfu)}^+ + \eps/2} } } \notag \\
\ge& \prob{ \card{\curbrkt{i\in[M]\colon \tau_{\vu,\vbfx_i,\vs} = P_{\bfu,\bfx,\bfs}}}> L2^{n(R - \eps/2 )} } \label{eqn:simp_positive_part} \\
=& \prob{\frac{1}{M} \card{\curbrkt{i\in[M]\colon \tau_{\vu,\vbfx_i,\vs} = P_{\bfu,\bfx,\bfs}}}\dotg 2^{-n\eps/2 } }, \label{eqn:bound_property_1} 
\end{align}
where \Cref{eqn:simp_positive_part} is by \Cref{eqn:positive_part}.

\paragraph{Proof of \Cref{eqn:property_33}}
First observe that 
\begin{align}
\curbrkt{\cL\in\binom{[M]}{L}\colon \tau_{\vu,\vx,\vbfx_\cL,\vs} = P_{\bfu,\bfx,\bfx_{[L]},\bfs}}
\subseteq& \curbrkt{(i_1,\cdots,i_L)\in\binom{[M]}{L}\colon \tau_{\vu,\vx,\vbfx_{i_1},\vs} = P_{\bfu,\bfx,\bfx_1,\bfs},\cdots,\tau_{\vu,\vx,\bfx_{i_L},\bfs} = P_{\bfu,\bfx,\bfx_L,\bfs}} \notag \\
\subseteq& \curbrkt{(i_1,\cdots,i_L)\in[M]^L\colon \tau_{\vu,\vx,\vbfx_{i_1},\vs} = P_{\bfu,\bfx,\bfx_1,\bfs},\cdots,\tau_{\vu,\vx,\bfx_{i_L},\bfs} = P_{\bfu,\bfx,\bfx_L,\bfs}} \notag \\
=& \bigtimes_{k = 1}^L\curbrkt{i\in[M]\colon \tau_{\vu,\vx,\vbfx_{i},\vs} = P_{\bfu,\bfx,\bfx_k,\bfs}}. \notag 
\end{align}
Therefore,
\begin{align}
\prob{ \card{\curbrkt{\cL\in\binom{[M]}{L}\colon \tau_{\vu,\vx,\vbfx_\cL,\vs} = P_{\bfu,\bfx,\bfx_{[L]},\bfs}}}>2^{n\eps} } \le&
\prob{\prod_{k\in[L]}\card{\curbrkt{ i\in[M]\colon \tau_{\vu,\vx,\vbfx_{i},\vs} = P_{\bfu,\bfx,\bfx_k,\bfs} }} >2^{n\eps}} \notag \\
\le& \prob{ \exists k\in[L],\;\card{\curbrkt{ i\in[M]\colon \tau_{\vu,\vx,\vbfx_{i},\vs} = P_{\bfu,\bfx,\bfx_k,\bfs} }} >2^{n\eps/L} } \notag \\
\le& \sum_{k\in[L]}\prob{\card{\curbrkt{ i\in[M]\colon \tau_{\vu,\vx,\vbfx_{i},\vs} = P_{\bfu,\bfx,\bfx_k,\bfs} }} >2^{n\eps/L}} \notag \\
\le& L2^{-2^{n\eps/L}}, \label{eqn:three_prime}
\end{align}
where the last inequality follows from \Cref{eqn:property_3_proved} since $ R<\min_{k\in[L]}I(\bfx_k;\bfs|\bfu)\le\min_{k\in[L]}I(\bfx_k;\bfx,\bfs|\bfu) $.

\paragraph{Proof of \Cref{eqn:property_22}}
We first make several definitions. 
\begin{align}
\sL_i' \coloneqq& \curbrkt{ \cL\in\binom{[i-1]}{L}\colon \tau_{\vu,\vbfx_\cL,\vs} = P_{\bfu,\bfx_{[L]},\bfs} }, \notag \\
\sL_i \coloneqq& \begin{cases}
\sL_i',& |\sL_i'|\le2^{n\eps_1} \\
\emptyset, & |\sL_i'|> 2^{n\eps_1}
\end{cases}, \notag \\
f_i(\vbfx_1,\cdots,\vbfx_i) \coloneqq& \indicator{ \exists\cL\in\sL_i,\;\tau_{\vu,\vbfx_i,\vbfx_\cL,\vs} = P_{\bfu,\bfx,\bfx_{[L]},\bfs} }, \notag 
\end{align}
where $ \eps_1>0 $ is a small constant to be specified later. 

Observe that if
\begin{align}
\max_{i\in[M]}|\sL'_i|\le& \card{\curbrkt{ \cL\in\binom{[M]}{L}\colon\tau_{\vu,\vbfx_\cL,\vs} = P_{\bfu,\bfx_{[L]},\bfs} }}\le 2^{n\eps_1}, \label{eqn:condition_l_lprime}
\end{align}
then $ \sL_i = \sL_i' $ for all $ i\in[M] $.
Furthermore, under the condition in \Cref{eqn:condition_l_lprime}, we have
\begin{align}
\sum_{i\in[M]}f_i(\vbfx_1,\cdots,\vbfx_i) =& \sum_{i\in[M]}\indicator{ \exists\cL\in\sL_i,\;\tau_{\vu,\vbfx_i,\vbfx_\cL,\vs} = P_{\bfu,\bfx,\bfx_{[L]},\bfs} } \notag \\
=& \sum_{i\in[M]}\indicator{ \exists\cL\in\sL_i',\;\tau_{\vu,\vbfx_i,\vbfx_\cL,\vs} = P_{\bfu,\bfx,\bfx_{[L]},\bfs} } \notag \\
=&\sum_{i\in[M]} \indicator{ \exists\cL\in\binom{[i-1]}{L},\;\tau_{\vu,\vbfx_\cL,\vs} = P_{\bfu,\bfx_{[L]},\bfs},\;\tau_{\vu,\vbfx_i,\vbfx_\cL,\vbfs} = P_{\bfu,\bfx,\bfx_{[L]},\bfs} } \notag \\
=& \sum_{i\in[M]}\indicator{ \exists\cL\in\binom{[i-1]}{L},\;\tau_{\vu,\vbfx_i,\vbfx_\cL,\vbfs} = P_{\bfu,\bfx,\bfx_{[L]},\bfs} } \notag \\
=& \card{\curbrkt{ i\in[M]\colon \exists\cL\in\binom{[i-1]}{L},\; \tau_{\vu,\vbfx_i,\vbfx_\cL,\vs} = P_{\bfu,\bfx,\bfx_{[L]},\bfs} }}. \label{eqn:card_equals_sum}
\end{align}

We can bound the probability that both sides of \Cref{eqn:card_equals_sum} differ 
using \Cref{eqn:three_prime}. 
\begin{align}
&\prob{ \card{\curbrkt{ i\in[M]\colon \exists\cL\in\binom{[i-1]}{L},\; \tau_{\vu,\vbfx_i,\vbfx_\cL,\vs} = P_{\bfu,\bfx,\bfx_{[L]},\bfs} }} \ne \sum_{i\in[M]}f_i(\vbfx_1,\cdots,\vbfx_i) } \label{eqn:prob_sum_diff_card_term} \\
=& \prob{\exists i\in[M],\;\sL_i\ne\sL_i'} \notag \\
=&\prob{\exists i\in[M],\;\card{\curbrkt{ \cL\in\binom{[i-1]}{L}\colon \tau_{\vu,\vbfx_\cL,\vs} = P_{\bfu,\bfx_{[L]},\bfs} }}>2^{n\eps_1}} \notag \\
\le&\prob{\exists i\in[M],\;\card{\curbrkt{ \cL\in\binom{[M]}{L}\colon \tau_{\vu,\vbfx_\cL,\vs} = P_{\bfu,\bfx_{[L]},\bfs} }}>2^{n\eps_1}} \notag \\
\le& L2^{-2^{n\eps_1/L}}. \label{eqn:prob_sum_diff_card}
\end{align}

We then concentrate $\sum_{i\in[M]}f_i(\vbfx_1,\cdots,\vbfx_i) $ using \Cref{lem:cn_conc}. 
To this end, let us compute 
\begin{align}
\expt{f_i(\bfx_1,\cdots,\bfx_i)\condon\bfx_1,\cdots,\bfx_{i-1}} =& \prob{ \exists\cL\in\sL_i,\;\tau_{\vu,\vbfx_i,\vbfx_\cL,\vs} = P_{\bfu,\bfx,\bfx_{[L]},\bfs}\condon\vbfx_1,\cdots,\vbfx_{i-1} } \notag \\
\stackrel{\cdot}{\le}& |\sL_i|2^{nH(\bfx|\bfu,\bfx_{[L]},\bfs)}/2^{nH(\bfx|\bfu)}\notag \\
\le& 2^{-n(I(\bfx;\bfx_{[L]},\bfs|\bfu) - \eps_1)} \notag \\
\le& 2^{-n(\eps - \eps/4)} = 2^{-\frac{3}{4}n\eps}, \label{eqn:fix_eps1}
\end{align}
where \Cref{eqn:fix_eps1} follows from the assumption $ I(\bfx;\bfx_{[L]},\bfs|\bfu)\ge\eps $ and the choice of parameter $ \eps_1 = \eps/4 $. 
By \Cref{lem:cn_conc}, we have 
\begin{align}
\prob{\frac{1}{M}\sum_{i\in[M]}f_i(\bfx_1,\cdots,\bfx_i)> 2^{-\frac{2}{3}n\eps}} \le& 2^{-M\paren{ 2^{-\frac{2}{3}n\eps} - 2^{-\frac{3}{4}n\eps}\log e }}\le 2^{-\frac{M}{2}2^{-\frac{2}{3}n\eps}}\le 2^{-\frac{L}{2}2^{n\eps}2^{-\frac{2}{3}n\eps}}= 2^{-\frac{L}{2}2^{n\eps/3}}, \label{eqn:sum_bound} 
\end{align}
where we used the assumption $R = \frac{1}{n}\log\frac{M}{L}\ge\eps$. 

Combining \Cref{eqn:prob_sum_diff_card} and \Cref{eqn:sum_bound}, we get
\begin{align}
\prob{\frac{1}{M}\card{\curbrkt{i\in[M]\colon\exists\cL\in\binom{[i-1]}{L},\;\tau_{\vu,\vbfx_i,\vbfx_\cL,\vs} = P_{\bfu,\bfx,\bfx_{[L]},\bfs}}} > 2^{-\frac{2}{3}n\eps}}
\le&  \Cref{eqn:prob_sum_diff_card_term} + \mathrm{LHS\ of\ } \Cref{eqn:sum_bound} \notag \\
\le& L2^{-2^{\frac{n\eps}{4L}}} +   2^{-\frac{L}{2}2^{n\eps/3}} \le (L+1)2^{-2^{\frac{n\eps}{4L}}}. \label{eqn:bound_unperm}
\end{align}

Take any $ \pi\in S_M $. 
\Cref{eqn:bound_unperm} remains true when $ \binom{[i-1]}{L} $ is replaced by
\begin{align}
\pi\binom{[i-1]}{L}\coloneqq& \binom{\pi^{-1}[\pi(i) - 1]}{L} =\curbrkt{L\in\binom{[M]\setminus\curbrkt{i}}{L}\colon\forall j\in\cL,\;\pi(j)\le\pi(i) - 1}. \notag 
\end{align}
Indeed, the proof follows by replacing $ \vbfx_i $ with $ \vbfx_{\pi(i)} $. 
Let $ \pi_1,\cdots,\pi_K $ be permutations given by the following lemma.
\begin{lemma}[Lemma A2, \cite{hughes-1997-list-avc}]
For  $n\ge\log(2L)$, there exist $ K $ ($ K\le n(L+1)^2(\log\cardX+1) $) permutations $ \pi_1,\cdots,\pi_K\in S_M $ ($ M = L2^{nR} $) such that for all $i\in[M]$,
\begin{align}
\binom{[M]\setminus\curbrkt{i}}{L} =& \bigcup_{k\in[K]} \pi_k\binom{[i-1]}{L}. \notag 
\end{align}
\end{lemma}

Note that for sufficiently large $n$, $ K\le2^{n\eps/6} $. 
We are finally ready to prove \Cref{eqn:property_22}.
\begin{align}
& \prob{\frac{1}{M}\card{\curbrkt{i\in[M]\colon \exists\cL\in\binom{[M]\setminus\curbrkt{i}}{L},\;\tau_{\vu,\vbfx_i,\vbfx_\cL,\vs} = P_{\bfu,\bfx,\bfx_{[L]},\bfs}}}> 2^{-n\eps/2}} \notag \\
=& \prob{\frac{1}{M}\card{\curbrkt{i\in[M]\colon \exists\cL\in\bigcup_{k\in[K]}\pi_k\binom{[i-1]}{L},\;\tau_{\vu,\vbfx_i,\vbfx_\cL,\vs} = P_{\bfu,\bfx,\bfx_{[L]},\bfs}}}> 2^{-n\eps/2}} \notag \\
=& \prob{\frac{1}{M}\card{\curbrkt{i\in[M]\colon\exists k\in[K],\; \exists\cL\in\pi_k\binom{[i-1]}{L},\;\tau_{\vu,\vbfx_i,\vbfx_\cL,\vs} = P_{\bfu,\bfx,\bfx_{[L]},\bfs}}}> 2^{-n\eps/2}} \notag \\
=&\prob{\frac{1}{M}\card{\bigcup_{k\in[K]}\curbrkt{i\in[M]\colon \exists\cL\in\pi_k\binom{[i-1]}{L},\;\tau_{\vu,\vbfx_i,\vbfx_\cL,\vs} = P_{\bfu,\bfx,\bfx_{[L]},\bfs}}}> 2^{-n\eps/2}} \notag \\
\le& \prob{\frac{1}{M}\sum_{k\in[K]}\card{\curbrkt{i\in[M]\colon \exists\cL\in\pi_k\binom{[i-1]}{L},\;\tau_{\vu,\vbfx_i,\vbfx_\cL,\vs} = P_{\bfu,\bfx,\bfx_{[L]},\bfs}}}> 2^{-n\eps/2}} \notag \\
\le& \sum_{k\in[K]}\prob{\frac{1}{M}\card{\curbrkt{i\in[M]\colon \exists\cL\in\pi_k\binom{[i-1]}{L},\;\tau_{\vu,\vbfx_i,\vbfx_\cL,\vs} = P_{\bfu,\bfx,\bfx_{[L]},\bfs}}}> 2^{-n\eps/2}/K} \notag \\
\le& 2^{n\eps/6}\prob{\frac{1}{M}\card{\curbrkt{i\in[M]\colon \exists\cL\in\binom{[i-1]}{L},\;\tau_{\vu,\vbfx_i,\vbfx_\cL,\vs} = P_{\bfu,\bfx,\bfx_{[L]},\bfs}}}> 2^{-\frac{2}{3}n\eps}} \notag \\
\le& (L+1)2^{-2^{\frac{n\eps}{4L}}+{n\eps/6}}. \label{eqn:to_be_rep}
\end{align}
The last inequality is by \Cref{eqn:bound_unperm}.

\paragraph{Proof of \Cref{eqn:property_2}}
Under the condition $ I(\bfx;\bfx_k,\bfs|\bfu) - \sqrbrkt{R-I(\bfx_k;\bfs|\bfu)}^+\ge\eps $,  \Cref{eqn:property_2} follows by setting $L=1$ and $ \eps_1 = \sqrbrkt{R -I(\bfx_k;\bfs|\bfu)}^++\eps/4 $. 
Specifically, by repeating the proof of \Cref{eqn:property_22}, we have  the following bound similar to \Cref{eqn:to_be_rep}.
\begin{align}
\prob{\frac{1}{M}\card{\curbrkt{i\in[M]\colon\exists j\in[M]\setminus\curbrkt{i},\;\tau_{\vu,\vbfx_i,\vbfx_j,\vs} = P_{\bfu,\bfx,\bfx_k,\bfs}}}> 2^{-n\eps/2}} \le& 2\cdot2^{-\frac{1}{2}2^{n\eps/4}+n\eps/6}. \label{eqn:bound_property_2} 
\end{align}

\paragraph{Finishing up the proof of \Cref{lem:cw_select}}
Finally, \Cref{lem:cw_select} follows by taking a union bound over $ \vx\in\cX^n $ such that $ \tau_{\vu,\vx} = P_{\bfu,\bfx} $, $ \vs\in\cS^n $ and joint types $ P_{\bfu,\bfx,\bfx_{[L]},\bfs} $. 
There are in total exponentially many of them and the concentration bounds in 
\Cref{eqn:property_3_proved},
\Cref{eqn:bound_property_1},
\Cref{eqn:three_prime},
\Cref{eqn:to_be_rep}
are all doubly exponentially small. 
We thus have shown that with probability doubly exponentially close to 1, a random code consisting of codewords $ \vbfx_1,\cdots,\vbfx_M\in\cX^n $ each of type $ \tau_{\vu,\vbfx_i} = P_{\bfu,\bfx} $ ($i\in[M]$) simultaneously satisfies \Cref{eqn:property_1}, \Cref{eqn:property_2}, \Cref{eqn:property_3}, \Cref{eqn:property_22} and \Cref{eqn:property_33}  for all $ \vx,\vs,P_{\bfx,\bfx_{[L]},\bfs} $. 
\end{proof}

\section{Unambiguity of decoding rules (proof of \Cref{lem:unambiguity})}
\label[app]{app:unambiguity_pf}
The proof is by contradiction. 
Suppose there does exist a joint distribution $ P_{\bfu,\bfx_{[L+1]},\bfs_{[L+1]},\bfy} $ satisfying the conditions in \Cref{eqn:unambiguity_cond}.
Observe that for any $ i\in[L+1] $, 
\begin{align}
2\eta \ge& D\paren{P_{\bfu,\bfx_i,\bfs_i,\bfy}\middle\| P_\bfu P_{\bfx_i|\bfu}P_{\bfs_i}W_{\bfy|\bfx,\bfs}} + I\paren{\bfx_i,\bfy;\bfx_{[L+1]\setminus\curbrkt{i}}\middle|\bfu,\bfs_i} = \kl{P_{\bfu,\bfx_i,\bfx_{[L+1]\setminus\curbrkt{i}},\bfs_i,\bfy}}{P_\bfu P_{\bfx_i|\bfu}P_{\bfx_{[L+1]\setminus\curbrkt{i}},\bfs_i|\bfu}W_{\bfy|\bfx,\bfs}} \notag \\
\ge& \kl{P_{\bfu,\bfx_i,\bfx_{[L+1]\setminus\curbrkt{i}},\bfy}}{P_\bfu P_{\bfx_i|\bfu}V_{\bfx_{[L+1]\setminus\curbrkt{i}},\bfy|\bfu,\bfx_i} } . \notag
\end{align}
where 
\begin{align}
V_{\bfx_{[L+1]\setminus\curbrkt{i}},\bfy|\bfu,\bfx_i} \coloneqq& \sqrbrkt{ P_{\bfx_{[L+1]\setminus\curbrkt{i}},\bfs_i|\bfu}W_{\bfy|\bfx,\bfs}}_{\bfx_{[L+1]\setminus\curbrkt{i}},\bfy|\bfu,\bfx_i}. \notag
\end{align}
By Pinsker's inequality (\Cref{lem:pinsker}), 
\begin{align}
2\sqrt{\ln2}\sqrt{\eta}\ge \normone{P_{\bfu,\bfx_i,\bfx_{[L+1]\setminus\curbrkt{i}},\bfy} - P_\bfu P_{\bfx_i|\bfu} V_{\bfx_{[L+1]\setminus\curbrkt{i}},\bfy|\bfu,\bfx_i}  }. \label{eqn:bound_i}
\end{align}
The same bound as \Cref{eqn:bound_i} with $i$ replaced by $i'\in[L+1]\setminus\curbrkt{i} $ still holds. 
Adding up both sides of these two bounds and applying triangle inequality on the RHS, we obtain
\begin{align}
&4\sqrt{\ln2}\sqrt{\eta} \ge \normone{P_\bfu P_{\bfx_i|\bfu} V_{\bfx_{[L+1]\setminus\curbrkt{i}},\bfy|\bfu,\bfx_i}  - P_\bfu P_{\bfx_{i'}|\bfu} V_{\bfx_{[L+1]\setminus\curbrkt{i'}},\bfy|\bfu,\bfx_{i'}} } \notag \\
\ge& p_u^*\cardU \normone{P_{\bfx_i|\bfu = u^*} V_{\bfx_{[L+1]\setminus\curbrkt{i}},\bfy|\bfu = u^*,\bfx_i} - P_{\bfx_{i'}|\bfu = u^*} V_{\bfx_{[L+1]\setminus\curbrkt{i'}},\bfy|\bfu = u^*,\bfx_{i'}}} \label{eqn:def_star} \\
=& p_u^*\cardU \sum_{x_{[L+1]}\in\cX^{L+1}}\sum_{y\in\cY} \left| P_{\bfx|\bfu}(x_i|u^*)V_{\bfx_{[L+1]\setminus\curbrkt{i}},\bfy|\bfu,\bfx_i}\paren{ x_{[L+1]\setminus\curbrkt{i}},y\condon u^*,x_i } \right. \notag \\
 & - \left.  P_{\bfx|\bfu}(x_{i'}|u^*) V_{\bfx_{[L+1]\setminus\curbrkt{i'}},\bfy|\bfu,\bfx_{i'}} \paren{ x_{[L+1]\setminus\curbrkt{i'}}, y\condon u^*,x_{i'} } \right| \label{eqn:same_marginal} \\
=& p_u^*\cardU \frac{1}{(L+1)!}\sum_{\pi\in S_{L+1}} \sum_{x_{[L+1]}\in\cX^{L+1}}\sum_{y\in\cY} \left| P_{\bfx|\bfu}(x_i|u^*)V_{\bfx_{{[L+1]} \setminus \curbrkt{\pi(i)}},\bfy|\bfu,\bfx_{\pi(i)}} \paren{x_{\pi\paren{[L+1]} \setminus\curbrkt{i}},y\condon u^*,x_i} \right. \notag \\
&  - \left. P_{\bfx|\bfu}(x_{i'}|u^*)V_{\bfx_{[L+1]\setminus\curbrkt{\pi(i')}}, \bfy|\bfu,\bfx_{\pi(i')}}\paren{ x_{\pi\paren{[L+1]}\setminus\curbrkt{i'}},y\condon u^*,x_{i'} } \right| \label{eqn:perm} \\
=& p_u^*\cardU \frac{1}{(L+1)!}\sum_{\pi\in S_{L+1}} \sum_{x_{[L+1]}\in\cX^{L+1}}\sum_{y\in\cY}
 \left| \sum_{s\in\cS}\left( P_{\bfx|\bfu}(x_i|u^*) P_{\bfx_{[L+1]\setminus\curbrkt{\pi(i)}}, \bfs_{\pi(i)}| \bfu}\paren{x_{\pi\paren{[L+1]}\setminus\curbrkt{i}}, s\condon u^*} W_{\bfy|\bfx,\bfs}(y|x_i,s)  \right.  \right. \notag \\
 & -\left. \left. P_{\bfx|\bfu}(x_{i'}|u^*) P_{\bfx_{[L+1]\setminus\curbrkt{\pi(i')}}, \bfs_{\pi(i')}| \bfu}\paren{x_{\pi\paren{[L+1]}\setminus\curbrkt{i'}}, s\condon u^*} W_{\bfy|\bfx,\bfs}(y|x_{i'},s) \right) \right| \notag \\
\ge&  p_u^*\cardU \sum_{x_{[L+1]}\in\cX^{L+1}}\sum_{y\in\cY}
 \left| \sum_{s\in\cS}\left( P_{\bfx|\bfu}(x_i|u^*) Q_{\bfx_{[L]},\bfs |\bfu }\paren{x_{{[L+1]}\setminus\curbrkt{i}}, s\condon u^*} W_{\bfy|\bfx,\bfs}(y|x_i,s)  \right.  \right. \notag \\
 & -\left. \left. P_{\bfx|\bfu}(x_{i'}|u^*) Q_{\bfx_{[L]},\bfs |\bfu }\paren{x_{{[L+1]}\setminus\curbrkt{i'}}, s\condon u^*} W_{\bfy|\bfx,\bfs}(y|x_{i'},s) \right) \right| \label{eqn:def_q}
\end{align}
where in \Cref{eqn:def_star} we use the following notation
\begin{align}
p_u^* \coloneqq& \min_{u\in\cU}P_\bfu(u), \;
u^* \coloneqq \argmin{u\in\cU} \normone{  P_{\bfx|\bfu = u} V_{\bfx_{[L+1]\setminus\curbrkt{i}},\bfy|\bfu = u,\bfx_i} -  P_{\bfx|\bfu = u} V_{\bfx_{[L+1]\setminus\curbrkt{i'}},\bfy|\bfu = u,\bfx_{i'}} }. \notag 
\end{align}
In \Cref{eqn:def_q} we define
\begin{align}
Q_{\bfx_{[L]},\bfs |\bfu } \paren{x_{[L]},s\condon u} \coloneqq& \frac{1}{(L+1)!} \sum_{\pi'\in S_L} \sum_{j\in[L+1]}  P_{\bfx_{[L+1] \setminus \curbrkt{j}},\bfs_{j} |\bfu } \paren{ x_{\pi'\paren{[L]}}, s\condon u } , \notag
\end{align}
which is due to the following identity:
\begin{align}
Q_{\bfx_{[L]}, \bfs|\bfu} \paren{ x_{{[L+1]}\setminus\curbrkt{i}}, s\condon u^* } =& \frac{1}{(L+1)!} \sum_{ \pi\in S_{L+1} } P_{\bfx_{[L+1]\setminus\curbrkt{\pi(i)}}, \bfs_{\pi(i)}|\bfu} \paren{ x_{\pi\paren{[L+1]}\setminus\curbrkt{i}}, s\condon u^* } \notag \\
=& \frac{1}{(L+1)!} \sum_{\pi'\in S_L}\sum_{j\in[L+1]}  P_{\bfx_{[L+1] \setminus \curbrkt{j}},\bfs_{j} |\bfu } \paren{ x_{\pi'\paren{[L+1]\setminus\curbrkt{i}}}, s\condon u^* }. \notag 
\end{align}
\Cref{eqn:same_marginal} follows since $ P_{\bfu,\bfx_i} = P_{\bfu,\bfx} $ for all $ i\in[L+1] $. 
\Cref{eqn:perm} follows since \Cref{eqn:same_marginal} is invariant under any permutation $ \pi\in S_{L+1} $. 
\Cref{eqn:def_q} follows from triangle inequality. 

We observe that $ Q_{\bfx_{[L]},\bfs |\bfu } $ is symmetric in $ \bfx_{[L]} $. 
Indeed, for any $ \sigma\in S_L $, 
\begin{align}
Q_{\bfx_{[L]},\bfs |\bfu } \paren{x_{\sigma\paren{[L]}},s\condon u}=& \frac{1}{(L+1)!} \sum_{\pi'\in S_L} \sum_{j\in[L+1]}  P_{\bfx_{[L+1] \setminus \curbrkt{j}},\bfs_{j} |\bfu } \paren{ x_{\sigma\paren{\pi'\paren{[L]}}}, s\condon u } \notag \\
=& \frac{1}{(L+1)!} \sum_{\pi'\in S_L} \sum_{j\in[L+1]}  P_{\bfx_{[L+1] \setminus \curbrkt{j}},\bfs_{j} |\bfu } \paren{ x_{\paren{\sigma\pi'}\paren{[L]}}, s\condon u } \notag \\
=& Q_{\bfx_{[L]},\bfs |\bfu } \paren{x_{[L]},s\condon u}. \notag
\end{align}

We claim that \Cref{eqn:def_q} must be strictly positive. 
Otherwise,
\begin{align}
&&  &\sum_{s\in\cS} P_{\bfx|\bfu}(x_i|u^*) Q_{\bfx_{[L]},\bfs |\bfu }\paren{x_{{[L+1]}\setminus\curbrkt{i}}, s\condon u^*} W_{\bfy|\bfx,\bfs}(y|x_i,s)  \notag \\
&& =&\sum_{s\in\cS} P_{\bfx|\bfu}(x_{i'}|u^*) Q_{\bfx_{[L]},\bfs |\bfu }\paren{x_{{[L+1]}\setminus\curbrkt{i'}}, s\condon u^*} W_{\bfy|\bfx,\bfs}(y|x_{i'},s)   \label{eqn:contradiction} \\
&\implies & &\sum_{s\in\cS} P_{\bfx|\bfu}(x_i|u^*) Q_{\bfx_{[L]},\bfs |\bfu }\paren{x_{{[L+1]}\setminus\curbrkt{i}}, s\condon u^*}   =\sum_{s\in\cS} P_{\bfx|\bfu}(x_{i'}|u^*) Q_{\bfx_{[L]},\bfs |\bfu }\paren{x_{{[L+1]}\setminus\curbrkt{i'}}, s\condon u^*}  \notag \\
&\implies & & P_{\bfx|\bfu}(x_i|u^*) Q_{\bfx_{[L]} |\bfu }\paren{x_{{[L+1]}\setminus\curbrkt{i}} \condon u^*}   = P_{\bfx|\bfu}(x_{i'}|u^*) Q_{\bfx_{[L]} |\bfu }\paren{x_{{[L+1]}\setminus\curbrkt{i'}} \condon u^*} 
\label{eqn:prod_distr}
\end{align}
In fact, $Q$ satisfying \Cref{eqn:prod_distr} must be a product distribution $ Q_{\bfx_{[L]}|\bfu = u^* } = P_{\bfx|\bfu = u^* }^{\tl} $ and is obviously symmetric. 
This is follows from  \Cref{lem:prod_distr} which is stated at the end of this section.
A proof can be found in \cite{hughes-1997-list-avc}. 
Substituting $Q$ back to \Cref{eqn:contradiction}, we get
\begin{align}
& & &\sum_{s\in\cS} P_{\bfx|\bfu}(x_i|u^*)  Q_{\bfx_{[L]} |\bfu }\paren{x_{{[L+1]}\setminus\curbrkt{i}}\condon u^*} Q_{\bfs|\bfx_{[L]},\bfu }\paren{s\condon x_{{[L+1]}\setminus\curbrkt{i}}, u^*} W_{\bfy|\bfx,\bfs}(y|x_i,s)  \notag \\
&& =&\sum_{s\in\cS} P_{\bfx|\bfu}(x_{i'}|u^*) Q_{\bfx_{[L]} |\bfu }\paren{x_{{[L+1]}\setminus\curbrkt{i'}} \condon u^*} Q_{\bfs|\bfx_{[L]},\bfu }\paren{s\condon x_{{[L+1]}\setminus\curbrkt{i'}},  u^*} W_{\bfy|\bfx,\bfs}(y|x_{i'},s) \notag \\
&\implies & & \sum_{s\in\cS} P_{\bfx_{i}|\bfu }(x_i|u^*) P_{\bfx|\bfu }^{\tl}\paren{x_{[L+1]\setminus\curbrkt{i}}\condon u^*}  Q_{\bfs |\bfx_{[L]},\bfu }\paren{s\condon x_{[L+1]\setminus\curbrkt{i}}, u^*} W_{\bfy|\bfx,\bfs }(y|x_i,s) \notag \\
&&=& \sum_{s\in\cS} P_{\bfx_{i'}|\bfu }(x_{i'}|u^*) P_{\bfx|\bfu }^{\tl}\paren{x_{[L+1]\setminus\curbrkt{i'}}\condon u^*} Q_{\bfs |\bfx_{[L]},\bfu }\paren{s\condon x_{[L+1]\setminus\curbrkt{i'}}, u^*} W_{\bfy|\bfx,\bfs }(y|x_{i'},s) \notag \\
&\implies & & \sum_{s\in\cS} P_{\bfx|\bfu }^{\otimes(L+1)}(x_{[L+1]}|u^*) Q_{\bfs |\bfx_{[L]},\bfu }\paren{s\condon x_{[L+1]\setminus\curbrkt{i}}, u^*} W_{\bfy|\bfx,\bfs }(y|x_i,s) \notag \\
&&=& \sum_{s\in\cS} P_{\bfx|\bfu }^{\otimes(L+1)}(x_{[L+1]}|u^*) Q_{\bfs |\bfx_{[L]},\bfu }\paren{s\condon x_{[L+1]\setminus\curbrkt{i'}}, u^*} W_{\bfy|\bfx,\bfs }(y|x_{i'},s) \notag \\
&\implies & & \sum_{s\in\cS}  Q_{\bfs |\bfx_{[L]},\bfu }\paren{s\condon x_{[L+1]\setminus\curbrkt{i}}, u^*} W_{\bfy|\bfx,\bfs }(y|x_i,s) = \sum_{s\in\cS}  Q_{\bfs |\bfx_{[L]},\bfu }\paren{s\condon x_{[L+1]\setminus\curbrkt{i'}}, u^*} W_{\bfy|\bfx,\bfs }(y|x_{i'},s). \label{eqn:symm_id}
\end{align}
Note that
$Q_{\bfs |\bfx_{[L]},\bfu } = {Q_{\bfs ,\bfx_{[L]},\bfu }}/{Q_{\bfx_{[L]},\bfu }}$, 
and both $ Q_{\bfs ,\bfx_{[L]},\bfu }, Q_{\bfx_{[L]},\bfu }$ are symmetric in $ \bfx_{[L]} $.
Therefore $ Q_{\bfs |\bfx_{[L]},\bfu } $ is also symmetric in $ \bfx_{[L]} $.
Combining this observation with \Cref{eqn:symm_id}, we know that $Q_{\bfs |\bfx_{[L]},\bfu }\in\cU_{\obli,\symml{L}} $.

\begin{lemma}\label{lem:prod_distr}
Let $ L\in\bZ_{\ge2} $. 
If $ P_{\bfx}\in\Delta(\cX) $ and $ Q_{\bfx_{[L]}}\in\Delta(\cX^L) $ satisfy
\begin{align}
P_{\bfx}(x_i) Q_{\bfx_{[L]}  }\paren{x_{{[L+1]}\setminus\curbrkt{i}} }   = P_{\bfx}(x_{i'}|u) Q_{\bfx_{[L]}  }\paren{x_{{[L+1]}\setminus\curbrkt{i'}} } \notag 
\end{align}
for all $ i\ne i'\in[L] $ and $ x_{[L]}\in\cX^L $, then
$ Q_{\bfx_{[L]}} = P_{\bfx}^\tl $. 
\end{lemma}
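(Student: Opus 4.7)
The approach is to show that the density ratio $R \coloneqq Q_{\bfx_{[L]}}/P_{\bfx}^{\otimes L}$ is constant on $\supp(P_\bfx)^L$ and then normalize it via the fact that both $Q_{\bfx_{[L]}}$ and $P_\bfx^{\otimes L}$ sum to $1$. Note that the index set in the hypothesis must be read as $[L+1]$ (not $[L]$, which would not make sense given the notation $x_{[L+1]\setminus\{i\}}$).

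First I would treat the case where $P_\bfx(x) > 0$ for every $x \in \cX$. Define
\[
R(y_1,\ldots,y_L) \coloneqq \frac{Q_{\bfx_{[L]}}(y_1,\ldots,y_L)}{\prod_{j=1}^L P_\bfx(y_j)}.
\]
For each $k \in [L]$, apply the hypothesis with $(i,i')=(k,k+1)$ and divide both sides by $\prod_{j=1}^{L+1} P_\bfx(x_j)$ to obtain
\[
R\bigl(x_1,\ldots,x_{k-1},x_{k+1},x_{k+2},\ldots,x_{L+1}\bigr) \;=\; R\bigl(x_1,\ldots,x_{k-1},x_{k},x_{k+2},\ldots,x_{L+1}\bigr).
\]
The two $L$-tuples appearing on either side agree everywhere except at the $k$-th position, where the values $x_{k+1}$ and $x_k$ are independent free parameters in $\cX$. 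Hence $R$ does not depend on its $k$-th argument, and since $k$ was arbitrary in $[L]$, $R$ is a constant $c$. Summing the identity $Q_{\bfx_{[L]}}(x_{[L]}) = c \prod_{j=1}^L P_\bfx(x_j)$ over $\cX^L$ forces $1 = c$, so $Q_{\bfx_{[L]}} = P_\bfx^{\otimes L}$.

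To remove the positivity assumption, I would argue separately that $\supp(Q_{\bfx_{[L]}}) \subseteq \supp(P_\bfx)^L$. If $P_\bfx(x_0)=0$, choose any $(L+1)$-tuple with some $x_i = x_0$ and some $x_{i'} \in \supp(P_\bfx)$ with $i' \ne i$; the hypothesis then reduces to $0 = P_\bfx(x_{i'})\,Q_{\bfx_{[L]}}(x_{[L+1]\setminus\{i'\}})$, forcing $Q_{\bfx_{[L]}}$ to vanish on every $L$-tuple containing $x_0$. The previous step then applies verbatim on the effective alphabet $\supp(P_\bfx)$, and $Q_{\bfx_{[L]}} = P_\bfx^{\otimes L}$ holds on all of $\cX^L$.

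The main obstacle here is essentially bookkeeping rather than substance: the key is to select the index pair so that the two reindexed $L$-tuples in the hypothesis differ in exactly one coordinate. The choice $(i,i')=(k,k+1)$ achieves this because the ``gap'' created by removing one of two adjacent indices shifts only a single entry; other natural choices such as $(k,L{+}1)$ produce cyclic shifts of a whole block of coordinates and, while they carry the same information, require an extra inductive argument to conclude that $R$ is constant.
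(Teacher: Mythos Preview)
Your proof is correct. The paper does not actually prove this lemma but defers to \cite{hughes-1997-list-avc}; your density-ratio argument (showing $R\coloneqq Q/P_\bfx^{\otimes L}$ is constant by choosing adjacent indices $(k,k+1)$ so that the two reindexed $L$-tuples differ in exactly one coordinate) is the standard route, and your handling of the zero-mass case is clean. Your correction of the quantifier range to $i,i'\in[L+1]$ is not merely cosmetic but necessary: with $i,i'\in[L]$ only, the conclusion is false, since then any $Q(y_1,\ldots,y_L)=g(y_L)\prod_{j}P_\bfx(y_j)$ with $\sum_y g(y)P_\bfx(y)=1$ satisfies all the hypotheses, and this reading is consistent with how the lemma is actually applied in \Cref{app:unambiguity_pf}.
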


\section{Strong converse for fading DMCs with approximate constant-composition codes and list-decoding (proof of \Cref{thm:strong_conv_fading})}
\label[app]{app:strong_conv_fading_pf}
Let $ \cC $ be a code of rate $ R = C(W_{\bfy|\bfx,\bfu})+\delta $ and let $L\in\bZ_{\ge1} $ be the list-size. 
Let $ P_\bfu = \tau_\vu $. 
Suppose that for some $ P_{\bfx|\bfu} $, for all $ \vx\in\cC $, $ d\paren{\tau_{\vu,\vx},P_\bfu P_{\bfx|\bfu}}\le\lambda $ where $ 0<\lambda\ll\delta $ is a constant. 
Let $ \psi\colon \cY^n\to\binom{[M]}{\le L} $ be the optimal list-decoder of $\cC$ used over $ W_{\bfy|\bfx,\bfu} $.

Let $\eps>0$ be a sufficiently small constant to be determined later. 
Define the $ \eps $-typical set as
\begin{align}
 \cA_{\vbfy|\vu}^\eps(P_{\bfy|\bfu}) \coloneqq& \curbrkt{\vy\in\cY^n\colon\forall u\in\cU,\forall y\in\cY,\; \frac{\tau_{\vy|\vu}(y|u)}{P_{\bfy|\bfu}(y|u)}\in[1-\eps,1+\eps] }, \notag 
\end{align}
where $ P_{\bfy|\bfu} = \sqrbrkt{P_{\bfx|\bfu}W_{\bfy|\bfx,\bfu}}_{\bfy|\bfu} $. 
Note that by the asymptotic equipartition property (\Cref{lem:aep})
\begin{align}
\card{\cA_{\vbfy|\vu}^\eps(P_{\bfy|\bfu})}\le2^{n(H(\bfy|\bfu) + f_2(\lambda,\eps))}, \notag
\end{align}
for some $ f_2(\lambda,\eps)>0 $.

We now lower bound the average error probability. 
\begin{align}
1-P_{\e,\avg}(\cC) =& \frac{1}{M}\sum_{i\in[M]}\prob{\psi(\vbfy)\ni i\condon \bfm = i,\vbfu = \vu} \notag \\
=& \frac{1}{M}\sum_{i\in[M]}\sum_{\vy\in\cY^n}W^\tn_{\bfy|\bfx,\bfu}(\vy|\vx_i,\vu)\indicator{\psi(\vy)\ni i} \notag \\
\le& \frac{1}{M}\sum_{i\in[M]}\sum_{\vy\in\cA_{\vbfy|\vu}^\eps(P_{\bfy|\bfu})}W^\tn_{\bfy|\bfx,\bfu}(\vy|\vx_i,\vu)\indicator{\psi(\vy)\ni i} + \frac{1}{M}\sum_{i\in[M]}\sum_{\vy\notin\cA_{\vbfy|\vu}^\eps(P_{\bfy|\bfu})} W^\tn_{\bfy|\bfx,\bfu}(\vy|\vx_i,\vu). \label{eqn:strong_two_terms}
\end{align}
We claim that $ \vx_i\in\cC $ satisfies $ d\paren{\tau_{\vx_i|u}, P_{\bfx|\bfu = u}}\le\lambda' $ for all $ u\in\cU $, where $ \lambda' = \frac{\lambda}{\cardU p_u^*} $.
This is guaranteed by approximate constant-composition of $\cC$.
Indeed for any $ \vx\in\cC $,
\begin{align}
\lambda\ge& d\paren{\tau_{\vu,\vx}, P_\bfu P_{\bfx|\bfu}} \notag \\
=& \sum_{(u,x)\in\cU\times\cX}\abs{ \tau_{\vu,\vx}(u,x) - P_\bfu P_{\bfx|\bfu}(u,x) } \notag \\
=& \sum_{(u,x)\in\cU\times\cX} \abs{ \tau_\vu(u) \tau_{\vx|\vu}(x|u) - P_\bfu(u) P_{\bfx|\bfu}(x|u) } \notag \\
=& \sum_{(u,x)\in\cU\times\cX} P_\bfu(u) \abs{  \tau_{\vx|\vu}(x|u) -  P_{\bfx|\bfu}(x|u) } \notag \\
=& \sum_{u\in\cU} P_\bfu(u)\sum_{x\in\cX}\abs{ \tau_{\vx|u}(x) - P_{\bfx|\bfu = u}(x) } \notag \\
=& \sum_{u\in\cU} P_\bfu(u) d\paren{\tau_{\vx|u}, P_{\bfx|\bfu = u}} \notag \\
\ge& \cardU p_u^* d\paren{\tau_{\vx|u}, P_{\bfx|\bfu = u}}, \label{eqn:def_pu_min}
\end{align}
where \Cref{eqn:def_pu_min} holds for any $u\in\cU$ and $ p_u^* $ is defined as $ p_u^*\coloneqq\min_{u\in\cU} P_\bfu(u)>0 $ (since $ P_\bfu = \tau_\vu $ has no zero atoms).

Hence for any $ \vy\in\cA_{\vbfy|\vu}^\eps(P_{\bfy|\bfu}) $, 
\begin{align}
W^\tn_{\bfy|\bfx,\bfu}(\vy|\vx_i,\vu)\le& 2^{-n(H(\bfy|\bfx,\bfu) - f_1(\lambda,\eps))}, \notag 
\end{align}
for some $ f_1(\lambda,\eps)>0 $. 
Thus the first term in \Cref{eqn:strong_two_terms} is at most
\begin{align}
\frac{1}{M}2^{-n(H(\bfy|\bfx,\bfu) - f_1(\lambda,\eps))}\sum_{\vy\in \cA_{\vbfy|\vu}^\eps(P_{\bfy|\bfu})}\sum_{i\in[M]}\indicator{\psi(\vy)\ni i} 
\le& \frac{1}{M}2^{-n(H(\bfy|\bfx,\bfu) - f_1(\lambda,\eps))}\card{\cA_{\vbfy|\vu}^\eps(P_{\bfy|\bfu})}L \notag \\ 
\le& L^{-1}2^{-nR}2^{-n(H(\bfy|\bfx,\bfu) - f_1(\lambda,\eps))}2^{n(H(\bfy|\bfu) + f_2(\lambda,\eps))}L \notag \\
\le& 2^{n(-C +I(\bfx;\bfy|\bfu) -\delta + f_1(\lambda,\eps )+ f_2(\lambda,\eps))} \notag \\
\le& 2^{-n(\delta - f_1(\lambda,\eps) - f_2(\lambda,\eps))}. 
\label{eqn:first_term}
\end{align}
Let $ \eps $ be sufficiently small so that $ \delta > f_1(\lambda,\eps) + f_2(\lambda,\eps) $.
Then the  bound in \Cref{eqn:first_term} is exponentially decaying. 

As for the second term in \Cref{eqn:strong_two_terms}, by the large deviation principle, we have that for any $ \vx_i $ satisfying $ d\paren{\tau_{\vx_i|u}, P_{\bfx|\bfu = u}}\le\lambda' = \frac{\lambda}{\cardU p_u^*} $ for all $ u\in\cU $ (which is guaranteed by approximate constant-composition of $\cC$), 
\begin{align}
\prob{\vbfy\notin \cA_{\vbfy|\vu}^\eps(P_{\bfy|\bfu})\condon\bfm = i,\vbfu = \vu} \le& 2^{-nf_3(\lambda,\eps)}, \label{eqn:second_term} 
\end{align}
for some $ f_3(\lambda,\eps)>0 $. 

Putting \Cref{eqn:first_term} and \Cref{eqn:second_term} back to \Cref{eqn:strong_two_terms}, we obtain
\begin{align}
P_{\e,\avg}(\cC) \ge& 1 - 2^{-n(\delta - f_1(\lambda,\eps) - f_2(\lambda,\eps))} -  2^{-nf_3(\lambda,\eps)}\xrightarrow{n\to\infty}1, \notag 
\end{align}
which finishes the proof of \Cref{thm:strong_conv_fading}.

\end{document}